\documentclass[11pt]{article}
\usepackage[T1]{fontenc}
\usepackage[latin9]{inputenc}
\usepackage{geometry}
\geometry{verbose,tmargin=1in,bmargin=1in,lmargin=1in,rmargin=1in}
\usepackage{color}
\usepackage{prettyref}
\usepackage{amsmath}
\usepackage{amsthm}
\usepackage{amssymb}
\usepackage{setspace}
\usepackage[authoryear]{natbib}
\onehalfspacing

\makeatletter
\theoremstyle{plain}
\newtheorem{assumption}{\protect\assumptionname}
\theoremstyle{plain}
\newtheorem{lem}{\protect\lemmaname}[section]
\theoremstyle{plain}
\newtheorem{thm}{\protect\theoremname}[section]
\theoremstyle{remark}
\newtheorem{rem}{\protect\remarkname}[section]
\theoremstyle{definition}
\newtheorem{condition}{\protect\conditionname}

\usepackage{lmodern}
\usepackage{authblk}
\usepackage{threeparttable} 
\usepackage{tabu}
\usepackage{tabularx}
\usepackage{bbm}
\usepackage{amsmath}
\usepackage{lscape}
\usepackage{float}

\usepackage{array}
\newcolumntype{C}{>{\centering\let\newline\\\arraybackslash\hspace{0pt}}X}
\newcolumntype{R}{>{\raggedleft\let\newline\\\arraybackslash\hspace{0pt}}X}

\usepackage[titletoc,title]{appendix}

\makeatother

\providecommand{\assumptionname}{Assumption}
\providecommand{\conditionname}{Condition}
\providecommand{\lemmaname}{Lemma}
\providecommand{\remarkname}{Remark}
\providecommand{\theoremname}{Theorem}

\begin{document}
\title{Efficient Peer Effects Estimators with Group Effects}
\author{Guido M. Kuersteiner, Ingmar R. Prucha, and Ying Zeng\thanks{Kuersteiner: Department of Economics, University of Maryland, College
Park, MD 20742, United States (gkuerste@umd.edu); Prucha: Department
of Economics, University of Maryland, College Park, MD 20742, United
States (prucha@umd.edu); Zeng: Department of Public Finance, School
of Economics, Xiamen University, Xiamen 361005, China (zengying17@xmu.edu.cn).}}
\maketitle
\begin{abstract}
We study linear peer effects models where peers interact in groups
and individual's outcomes are linear in the group mean outcome and
characteristics. We allow for unobserved random group effects as well
as observed fixed group effects. The specification is in part motivated
by the moment conditions imposed in \citet{graham_identifying_2008}.
We show that these moment conditions can be cast in terms of a linear
random group effects model and that they lead to a class of GMM estimators
with parameters generally identified as long as there is sufficient
variation in group size or group types. We also show that our class
of GMM estimators contains a Quasi Maximum Likelihood estimator (QMLE)
for the random group effects model, as well as the Wald estimator
of \citet{graham_identifying_2008} and the within estimator of \citet{lee_identification_2007}
as special cases. Our identification results extend insights in \citet{graham_identifying_2008}
that show how assumptions about random group effects, variation in
group size and certain forms of heteroscedasticity can be used to
overcome the reflection problem in identifying peer effects. Our QMLE
and GMM estimators accommodate additional covariates and are valid
in situations with a large but finite number of different group sizes
or types. Because our estimators are general moment based procedures,
using instruments other than binary group indicators in estimation
is straight forward. Our QMLE estimator accommodates group level covariates
in the spirit of Mundlak and Chamberlain and offers an alternative
to fixed effects specifications. This model feature significantly
extends the applicability of Graham's identification strategy to situations
where group assignment may not be random but correlation of group
level effects with peer effects can be controlled for with observable
group level characteristics. Monte-Carlo simulations show that the
bias of the QMLE estimator decreases with the number of groups and
the variation in group size, and increases with group size. We also
prove the consistency and asymptotic normality of the estimator under
reasonable assumptions.

\newpage{}
\end{abstract}

\section{Introduction}

\global\long\def\diag{\operatorname{diag}}%
\global\long\def\tr{\operatorname{tr}}%
\global\long\def\vecd{\operatorname{vec_{D}}}%
\global\long\def\plim{\operatorname{plim}}%
\global\long\def\cov{\operatorname{Cov}}%
\global\long\def\var{\operatorname{Var}}%
Peer effects are of great interest to empirical researchers and policy
makers. The idea that individuals are affected by their peers motivates
policies that try to manipulate peer composition for better outcomes.
Peer effects are often confounded by group level effects. An example
are teacher effects in a class room setting. Identifying peer effects
is notoriously challenging due to the reflection problem \citep{manski_identification_1993,angrist_perils_2014}
as well as due to spurious peer effects originating from group level
effects. Random group allocation may be one way to overcome these
identification problems. With groups formed at random, a random effects
specification for group level characteristics can be adopted. An alternative
approach consists in postulating that, conditional on observed group
level characteristics, group level effects can be viewed as randomly
assigned. Regression control techniques based on observed group characteristics
then lead to a similar random effects specification, but without the
need to appeal to random group assignment. We propose estimators that
can accommodate both scenarios.

Random group assignment plays a prominent role in the empirical peer
effects literature in a number of fields including education, labor,
firm, finance and development studies. Recent examples from this literature
include \textcolor{black}{\citet{sacerdote_peer_2001,duflo_role_2003,zimmerman_peer_2003,stinebrickner_what_2006,kang_classroom_2007,graham_identifying_2008,guryan_peer_2009,carrell_does_2009,carrell_natural_2013,duflo_peer_2011,sojourner_identification_2013,booij_ability_2017,garlick_academic_2018,fafchamps_networks_2018,cai_interfirm_2018,frijters_heterogeneity_2019}.}
Assuming group effects to be independent of observed individual and
group characteristics is plausible when groups are formed at random.
Ignoring group effects or assuming fixed group effects \citep{lee_identification_2007}
leads to consistent but less efficient estimators. Random group effects
themselves have important empirical interpretations. For example,
researchers in education policy often treat random class effects as
unobserved teacher effects (e.g., \citealt{nye_how_2004,rivkin_teachers_2005,chetty_how_2011}).
Absent random group assignment, the estimators we propose can accommodate
observed group level effects that can come from information about
group characteristics such as the training and experience of teachers,
or averages of individual group member characteristics. Group level
characteristics can be interpreted as parametrizations of group effects
in the spirit of \citet{mundlak_pooling_1978} and \citet{chamberlain_analysis_1980}.
The choice between a random effects or fixed effects estimator then
depends less on random group assignment but more on whether group
specific effects are believed to be observable or not. In some cases
there may be independent interest in the effects of group specific
covariates. An example is the effect of teacher training on student
performance. In such cases a random effects estimator is the preferred
choice because fixed effects estimators are often unable to identify
these types of group level effects. 

Our analysis extends insights in \citet{graham_identifying_2008}
that show how assumptions about random group effects, variation in
group size and certain forms of heteroscedasticity can be used to
overcome the reflection problem in identifying peer effects. We give
an interpretation of the conditional variance estimator (CVE) of \citet{graham_identifying_2008}
in terms of a GMM estimator based on moment conditions for the within-group
variance and between-group variance. We show that the moment conditions
underlying \citet{graham_identifying_2008} are the score function
of a quasi maximum likelihood estimator (QMLE) for a random group
effects model. The QMLE can be shown to be the best GMM estimator
in the class of estimators using moment conditions for the within
and between variances of outcomes individually, rather than combining
them into a single moment condition as is the case for the CV estimator,
or focusing only on the within variation as is the case of the CMLE
of \citet{lee_identification_2007}. 

One limitation of the conditional variance estimator proposed by Graham
is the fact that it amounts to a difference in difference identification
strategy for the variances that requires groups to fall into two size
categories. As shown in \citet{graham_identifying_2008} the resulting
procedure takes the form of a Wald estimator for a set of binary instruments.
This setting is restrictive in applications where groups may not be
easily separated into two categories or where a more general set of
instruments needs to be considered. The estimators that we propose
are general GMM based procedures that accommodate additional covariates
as well as offer flexibility in terms of the instruments and the number
of moment conditions that are being used. We illustrate these points
by explicitly considering moment based estimators that exploit exogenous
variation in group size as well as general group level heteroscedasticity
as instruments. In contrast to the CMLE of \citet{lee_identification_2007},
which is a member of the class of GMM estimators we consider, our
QMLE uses both the within and between variance. This leads to efficiency
gains under correct specification but comes at the cost of potential
miss-specification bias if the assumption of observed fixed and unobserved
random group effects is incorrect. The trade-offs are similar to related
results for fixed and random effects in the panel literature.

Our work is also related to the literature in spatial econometrics
started by the work of \citet{cliff_spatial_1973,cliff_spatial_1981}
and \citet{anselin_spatial_1988}.\footnote{\citet{anselin_thirty_2010} offers a brief review of the development
of spatial econometrics literature over the past thirty years.} Recently, there is a growing number of studies using spatial methods
to model social network effects, e.g., \citet{lee_identification_2007},
\citet{bramoulle_identification_2009}, and \citet{kuersteiner_dynamic_2020}.
The strength of social links can be characterized by proximity in
the social network space. We extend \citet{kelejian_estimation_2006}
and \citet{lee_identification_2007} by considering a random group
effects specification. Spatial models were traditionally estimated
with maximum likelihood (ML), e.g., \citet{ord_estimation_1975}.
\citet{kelejian_generalized_1998,kelejian_generalized_1999} develop
generalized method of moments (GMM) estimators based on linear and
quadratic moments. While this paper utilizes a quasi-maximum likelihood
estimation method, the score function depends on linear quadratic
forms of the error terms. Properties of quadratic moment conditions
were introduced by \citet{kelejian_generalized_1998,kelejian_generalized_1999}
in the cross section case, and \citet{kapoor_panel_2007} and \citet{kuersteiner_dynamic_2020}
in a panel setting. Moreover, \citet{kelejian_asymptotic_2001} and
\citet{kelejian_specification_2010} develop a central limit theorem
for linear quadratic forms, which is the basis for the asymptotic
analysis in this paper.

The linear-in-means peer effect model in \citet{manski_identification_1993}
is a special case of a spatial model with group-wise equal dependence,
see \citet{kelejian_2sls_2002} and \citet{kelejian_estimation_2006}.
\citet{kelejian_2sls_2002} were the first to study the group-wise
equal dependence spatial model. They show that if there is one group
in a single cross section and the model has equal spatial weights,
two-stage least squares (2SLS), GMM and QMLE methods all yield inconsistent
estimators, although consistent estimation with 2SLS and GMM is possible
for panel data. However, \citet{kelejian_estimation_2006} point out
that if group fixed effects are incorporated and the panel is balanced,
the estimators are inconsistent. The results in \citet{kelejian_estimation_2006}
show the importance of variation in group size in identification of
spatial models with blocks of equal weights. The QMLE developed in
this paper and the conditional maximum likelihood estimator in \citet{lee_identification_2007}
both rely on group size variation for identification although we show
that identification exploiting heteroscedastic errors is also possible.
Extensions include \citet{lee_specification_2010} who allow for specific
social structure within each group and \citet{liu_gmm_2010} and \citet{liu_endogenous_2014}
who allow for non-row normalized weight matrices. The linear spatial
model has also been applied to the empirical evaluation of peer effects
by \citet{lin_identifying_2010} and \citet{boucher_peers_2014}.
\citet{bramoulle_identification_2009} study a broader range of social
interaction models and give conditions for identification.

The paper is organized as follows. In Section \ref{sec:graham} we
consider identification of endogenous peer effects in a simple setting
without covariates for the CV, CML and QML estimators. Section \ref{sec:General-Model}
presents the full model that allows for covariates and general variation
in group size. Section \ref{sec:Theoretical-Results} summarizes the
technical conditions we impose and presents theoretical results for
the QMLE. Section \ref{sec:Monte_Carlo} contains a small Monte Carlo
experiment. Proofs are collected in an appendix.

\section{Peer Effects with Random Group Effects\label{sec:graham}}

We start the discussion by presenting a simple model without covariates,
to introduce and discuss basic features of our new quasi-maximum likelihood
estimator (QMLE), and connect it to the conditional variance (CV)
estimator in \citet{graham_identifying_2008} and the conditional
maximum likelihood (CMLE) estimator in \citet{lee_identification_2007}.
The model decomposes variation in outcomes of a cross-section of individuals
into idiosyncratic noise, group level random effects and correlation
that is due to group level interaction. Quadratic moment conditions
implied by this random effects specification lead to efficient GMM,
quasi maximum likelihood, and under additional distributional assumptions,
maximum likelihood estimators. Estimators based on these moment conditions
include the CV estimator of \citet{graham_identifying_2008}, the
QMLE as well as the CMLE of \citet{lee_identification_2007} as special
cases.

Let $y_{ir}$ be an observed outcome of individual $i$ in group $r$
which has $m_{r}$ members, let $\alpha_{r}$ be an unobserved group
level effect and let $\epsilon_{ir}$ be unobserved individual specific
characteristics. We observe data for $R$ groups as well as a categorical
variable $D_{r}$ which determines group type. An example is when
there are three group sizes such that $D_{r}\in\left\{ 'small','medium','large'\right\} .$
However, $D_{r}$ could be a characteristic that is not necessarily
related to group size. An example is when groups are defined by classrooms
of schools in urban, suburban or rural districts and $D_{r}$ is used
to denote urbanicity. Classes could also be categorized by sociodemographic
composition such as whether English or other languages are the native
language spoken by students in the class. We allow for type-dependent
heteroscedasticity. Types add flexibility to the specification by
relaxing the constraints the model imposes on the relationship between
group variance and group size. In some cases type specific heteroscedasticity
provides identifying variation that is separate from group size variation.

The peer effects model is stated in terms of a structural equation
\begin{equation}
y_{ir}=\lambda\bar{y}_{\left(-i\right)r}+\alpha_{r}+\epsilon_{ir},\label{eq:Graham_Mod_L-M-O}
\end{equation}
where $\bar{y}_{\left(-i\right)r}=\frac{1}{m_{r}-1}\sum_{j\neq i}^{m_{r}}y_{jr}$
is the leave-out-mean of the outcome variable. The parameter $\lambda$
captures the endogenous peer effects, see \citet{manski_identification_1993}.
The structural form emphasizes the decomposition of $y_{ir}$ into
a social interaction term $\lambda\bar{y}_{(-i)r}$, a group level
effect $\alpha_{r}$ and an idiosyncratic error term $\epsilon_{ir}.$
For example, when $y_{ir}$ is a measure of student performance and
$r$ is a class-room index then $\alpha_{r}$ can be interpreted as
a class-room or teacher effect while $\epsilon_{ir}$ are unobserved
student characteristics for student $i$ in classroom $r.$ Cross-sectional
independence of $\epsilon_{ir}$ can be justified by random group
assignment such as in the application of Graham (2008). The assumptions
we impose on $\epsilon_{ir}$ and $\alpha_{r}$ are in line with the
random effects panel literature where group level dependence of unobservables
is modeled with the common factor $\alpha_{r}.$ We leave possible
generalizations of this framework to cases where $\epsilon_{ir}$
is allowed to be dependent for future work.

Following Graham (2008) who emphasizes random assignments of individuals
to groups, we assume that $\alpha_{r}$ is a random effect independent
of $\epsilon_{ir}.$ As shown by Graham (2008) for a slightly different
model based on full rather than leave-out means, the random effects
nature of the model leads to a set of quadratic moment conditions
that can be exploited for identification. We expand on these ideas
by showing that the implied moment conditions are related to the moment
conditions of a random effects pseudo likelihood estimator. Transformations
of these moments turn out to coincide with moments used by \citet{graham_identifying_2008}
as well as \citet{lee_identification_2007} who considers a fixed
effects version of the model. \citet{lee_identification_2007} focuses
on identification of $\lambda$ based on group size variation. Here
we emphasize a random effects specification where identification is
driven by heterogeneity at the group level that could result from
sources including but not limited to class size variation. A literature
on linear instrumental variables methods gives conditions under which
$\lambda$ can be identified in models that have additional exogenous
covariates $Z_{r}$, e.g., \citet{angrist_perils_2014} or \citet{bramoulle_identification_2009}.\footnote{The leave-out-mean $\bar{y}_{\left(-i\right)r}$ can be viewed as
a special case of a Cliff-Ord-type (\citealt{cliff_spatial_1973,cliff_spatial_1981})
spatial lag. \citet{kelejian_generalized_1998} give an early basic
condition for identification by IV.} Besides the conventional instrumental variables strategies, alternative
strategies are also available, see \citet{lee_identification_2007},
\citet{graham_identifying_2008} for a modified model or \citet{kuersteiner_dynamic_2020}.
Letting $Y_{r}=\left(y_{1r},...,y_{m_{r}r}\right)^{'}$, $\epsilon_{r}=\left(\epsilon_{1r},...,\epsilon_{m_{r}r}\right)^{'}$,
$\iota_{m_{r}}=\left(1,...,1\right)^{\prime}$ and $W_{m_{r}}=\frac{1}{m_{r}-1}(\iota_{m_{r}}\iota_{m_{r}}^{\prime}-I_{m_{r}})$,
the model can be written in matrix notation as 
\begin{equation}
Y_{r}=\lambda W_{m_{r}}Y_{r}+\alpha_{r}\iota_{m_{r}}+\epsilon_{r}.\label{eq:Model_Graham_Modified}
\end{equation}
To isolate or identify the social interaction effect, we impose the
following restrictions on unobservables. 
\begin{assumption}
\label{assume:epsilon} For $r=1,\ldots,R$ the $r$-th group is associated
with a categorical variable $D_{r}\in\{1,2,...,J\}$ with $J\geqslant1$
being fixed and finite, and for each category $j\in\{1,2,...,J\}$
there is at least one group $r$ with $D_{r}=j$. For $r=1,...,R$
and $i=1,...,m_{r}$ the disturbance terms $\epsilon_{ir}$ are independently
distributed across all $i$ and $r$, with $E\left[\epsilon_{ir}|D_{r},m_{r}\right]=0$
and $E\left[\epsilon_{ir}^{2}|D_{r},m_{r}\right]=\sigma_{\epsilon0,D_{r}}^{2}$
, $0<\underbar{\ensuremath{a}}_{\epsilon}\leqslant\sigma_{\epsilon0,D_{r}}^{2}\leqslant\overline{\text{\ensuremath{a}}}_{\epsilon}<\infty$
and where $\sigma_{\epsilon0,D_{r}}^{2}$ is a function only of $D_{r}$.
There exists some $\eta_{\epsilon}>0$ such that $E[|\epsilon_{ir}|^{4+\eta_{\epsilon}}]<\infty$.
\end{assumption}
Note that the variance $E\left[\epsilon_{ir}^{2}|D_{r},m_{r}\right]=\sigma_{\epsilon0,D_{r}}^{2}$
has the representation $\sigma_{\epsilon0,D_{r}}^{2}=\sigma_{\epsilon0,1}^{2}1\left\{ D_{r}=1\right\} +...+\sigma_{\epsilon0,J}^{2}1\left\{ D_{r}=J\right\} $
where $\sigma_{\epsilon0,1}^{2},....,\sigma_{\epsilon0,J}^{2}$ are
fixed parameters to be estimated.
\begin{assumption}
\label{assume:alpha}For $r=1,...,R$, the group effects $\alpha_{r}$
are independently and identically distributed, with $E\left[\alpha_{r}|D_{r},m_{r}\right]=0$
and $E\left[\alpha_{r}^{2}|D_{r},m_{r}\right]=\sigma_{\alpha0}^{2}$,
where $\text{\ensuremath{0\leq}}\sigma_{\alpha0}^{2}\leq\overline{\text{\ensuremath{a}}}_{\alpha}<\infty$.
There exists some $\eta_{\alpha}>0$ such that $E\left[|\alpha_{r}|^{4+\eta_{\alpha}}\right]<\infty$.
Also, $\{\alpha_{r}:r=1,...,R\}$ are independent of $\{\epsilon_{ir}:i=1,...,m_{r};r=1,...,R\}$.
\end{assumption}
Assumption \ref{assume:epsilon} implies in particular that individuals
do not self select into groups based on unobserved characteristics
and Assumption \ref{assume:alpha} suggests that there is no matching
between group characteristics and individual characteristics. This
no sorting or matching assumption can sometimes be motivated by specific
empirical designs. For example, in the Project STAR experiment that
\citet{graham_identifying_2008} considers, kindergarten students
and teachers are randomly assigned to classrooms. This random assignment
mechanism justifies interpreting $\alpha_{r}$ as the classroom or
teacher effect. It also justifies assuming that $\alpha_{r}$ and
$\epsilon_{ir}$ are mutually independent random variables, see Graham
(2008) Assumption 1.1. Assumption \ref{assume:epsilon} allows $\epsilon_{ir}$
to be homoscedastic across all groups when $J=1$ or heteroscedastic
across different categories of $D_{r}$ when $J\geqslant2$. This
formulation contains the case considered by \citet{graham_identifying_2008}
where $J=2$ as a special case.

Assumptions \ref{assume:epsilon} and \ref{assume:alpha} above imply
moment conditions. These moment conditions take the form of restrictions
on the within and between group variance. As discussed in more detail
below, these moment conditions are fundamental to the ML estimator.
In particular, we show that the score of the ML estimator is a weighted
average of those fundamental moment conditions.

To derive the moment conditions, define the composite error term $U_{r}=\alpha_{r}\iota_{m_{r}}+\epsilon_{r}$
where $U_{r}$ is an $m_{r}\times1$ vector with elements $u_{ir}=\alpha_{r}+\epsilon_{ir}.$
Let $\bar{u}_{r}$ and $\bar{\epsilon}_{r}$ be the mean of $u_{ir}$
and $\epsilon_{ir}$ in group $r$. Let $\ddot{U}_{r}=U_{r}-\bar{u}_{r}\iota_{m_{r}}$
be the vector of within-group deviations from the mean of $U_{r}$
and let $\ddot{Y}_{r}$ and $\ddot{\epsilon}_{r}$ be defined in a
similar manner. It can be shown that $\bar{y}_{r}=\bar{u}_{r}/(1-\lambda)=(\alpha_{r}+\bar{\epsilon}_{r})/(1-\lambda)$
with $\bar{u}_{r}=\alpha_{r}+\bar{\epsilon}_{r}$, and $\ddot{Y}_{r}=\frac{m_{r}-1}{m_{r}-1+\lambda}\ddot{U}_{r}=\frac{m_{r}-1}{m_{r}-1+\lambda}\ddot{\epsilon}_{r}$.
Two conditional moment conditions, one for the within-group variance,
the other for the between group variance, arise from the model in
(\ref{eq:Model_Graham_Modified}) under Assumptions \ref{assume:epsilon}
and \ref{assume:alpha}. The expected value of the within-group and
between-group squares of group $r$ are

\begin{equation}
var_{r}^{w}=E\left[\frac{\ddot{Y}_{r}^{\prime}\ddot{Y}_{r}}{m_{r}-1}|m_{r},D_{r}\right]=E\left[\frac{(m_{r}-1)\ddot{U}_{r}^{\prime}\ddot{U}_{r}}{(m_{r}-1+\lambda)^{2}}|m_{r},D_{r}\right]=\frac{\left(m_{r}-1\right)^{2}}{\left(m_{r}-1+\lambda\right)^{2}}\sigma_{\epsilon,D_{r}}^{2},\label{eq:varw}
\end{equation}
\begin{equation}
var_{r}^{b}=E\left[\bar{y}_{r}^{2}|m_{r},D_{r}\right]=E\left[\frac{\bar{u}_{r}^{2}}{(1-\lambda)^{2}}|m_{r},D_{r}\right]=\frac{1}{(1-\lambda)^{2}}\left(\sigma_{\alpha}^{2}+\frac{\sigma_{\epsilon,D_{r}}^{2}}{m_{r}}\right),\label{eq:varb}
\end{equation}
where $\sigma_{\epsilon,D_{r}}^{2}=\sigma_{\epsilon,1}^{2}1\left\{ D_{r}=1\right\} +...+\sigma_{\epsilon,J}^{2}1\left\{ D_{r}=J\right\} .$

To see how these moment conditions can achieve the identification
of $\lambda$ consider the case where $\sigma_{\epsilon,D_{r}}^{2}=\sigma_{\epsilon,D_{s}}^{2}$
but $m_{r}\neq m_{s}$. Then, Equation \eqref{eq:varw} implies that
\begin{equation}
(\frac{m_{r}-1+\lambda}{m_{s}-1+\lambda})^{2}=\frac{(m_{r}-1)^{3}}{(m_{s}-1)^{3}}\frac{E\left[\ddot{Y}_{s}^{\prime}\ddot{Y}_{s}|m_{s},D_{s}\right]}{E\left[\ddot{Y}_{r}^{\prime}\ddot{Y}_{r}|m_{r},D_{r}\right]}.\label{eq:Wald_within}
\end{equation}
Alternatively consider the case where $m_{r}=m_{s}=m$ and $\sigma_{\epsilon,D_{r}}^{2}\neq\sigma_{\epsilon,D_{s}}^{2}$,
then combining \eqref{eq:varw} and \eqref{eq:varb} gives
\begin{equation}
\frac{(m-1+\lambda)^{2}}{(1-\lambda)^{2}}=\frac{E\left[\bar{y}_{r}^{2}|m,D_{r}\right]-E\left[\bar{y}_{s}^{2}|m,D_{s}\right]}{E\left[\ddot{Y}_{r}^{\prime}\ddot{Y}_{r}/[m(m-1)^{3}]|m,D_{r}\right]-E\left[\ddot{Y}_{s}^{\prime}\ddot{Y}_{s}/[m(m-1)^{3}]|m,D_{s}\right]}.\label{eq:Wald_L-O-M}
\end{equation}
Expressions on the left hand side of both \eqref{eq:Wald_within}
and \eqref{eq:Wald_L-O-M} in principle can be solved for $\lambda$
if we restrict $\lambda\in(-1,1)$ and $m_{r}\geqslant2$, as both
expressions are monotonic functions of $\lambda$. Equation \eqref{eq:Wald_L-O-M}
is a modified version of Equation (9) in \citet{graham_identifying_2008}
that accounts for the leave-out-mean specification we consider. The
numerator differences out the variance of $\alpha_{r}$ which is assumed
constant across types. This restriction is also imposed by \citet{graham_identifying_2008}
in his Assumption 1.2. In Lemma \ref{lem:ID_1} below we outline the
exact conditions under which identification is possible.

The discussion above shows that under additional assumptions on $\lambda$
and group size, identification of $\lambda$ is possible through moment
conditions related to within and between variance when there is variation
in either group size $m_{r}$ or idiosyncratic error variance $\sigma_{\epsilon,D_{r}}^{2}$.
We now formalize the discussion into Lemma \ref{lem:ID_1} below.
Let the parameter vector be $\theta=\left(\lambda,\sigma_{\alpha}^{2},\sigma_{\epsilon,1}^{2},...,\sigma_{\epsilon,J}^{2}\right)^{\prime}$
and, for clarity, let the true parameter vector be denoted by $\theta_{0}=\left(\lambda_{0},\sigma_{\alpha0}^{2},\sigma_{\epsilon0,1}^{2},...,\sigma_{\epsilon0,J}^{2}\right)^{\prime}$.
For identification, we further assume that group size $m_{r}\geqslant2$
and impose the following assumption on $\lambda$.
\begin{assumption}
\label{assume:lambda} The parameter of the endogenous peer effects
$\lambda_{0}\in\Lambda$, where $\Lambda$ is a compact subset of
$(-1,1)$. Assume that $\theta_{0}\in\Theta$ with $\Theta=\Lambda\times\text{\ensuremath{[0,\overline{\text{\ensuremath{a}}}_{\alpha}]\times[\underbar{\ensuremath{a}}_{\epsilon},\overline{\text{\ensuremath{a}}}_{\epsilon}]\times\ldots\times[\underbar{\ensuremath{a}}_{\epsilon},\overline{\text{\ensuremath{a}}}_{\epsilon}]}}$
compact. 
\end{assumption}
The estimation procedures we propose in this paper can be implemented
with the availability of a general set of valid instruments and are
valid for cases where $J\geqslant1$ as long as $J$ is fixed and
finite. In the simple model without covariates the available instruments
are group size $m_{r}$ and categorical variable $D_{r}$. These instruments
are valid if assignment to groups is random in a way that generates
random variation in group size or category. Utilizing Equation \eqref{eq:varw}
and \eqref{eq:varb}, and using group size $m_{r}$ and the categorical
variable $D_{r}$ as instruments yields the following conditional
moment restriction $E[\chi_{r}(\theta_{0})|m_{r},D_{r}]=0$ with 
\begin{equation}
\chi_{r}(\theta)=\left[\begin{array}{c}
\chi_{r}^{w}(\theta)\\
\chi_{r}^{b}(\theta)
\end{array}\right]=\left[\begin{array}{c}
\frac{(m_{r}-1+\lambda)^{2}\ddot{Y}_{r}^{\prime}\ddot{Y}_{r}}{(m_{r}-1)^{2}}-(m_{r}-1)\sigma_{\epsilon,D_{r}}^{2}\\
(1-\lambda)^{2}\bar{y}_{r}^{2}-\sigma_{\alpha}^{2}-\frac{\sigma_{\epsilon,D_{r}}^{2}}{m_{r}}.
\end{array}\right].\label{eq:mwb}
\end{equation}

Identification of the parameter $\theta$ is possible with variation
in group size for a given category or variation in the idiosyncratic
variance over categories for the same group size. This is summarized
in the following lemma. The proof of the lemma is given in Appendix
\ref{sec:prooflemma2}.
\begin{lem}
\label{lem:ID_1}Suppose Assumptions \ref{assume:epsilon}-\ref{assume:lambda}
hold. Then the parameter $\theta_{0}$ is identified under the following
two scenarios:

(i) There are two groups $r$ and $s$ such that $m_{r}\neq m_{s}$
and $D_{r}=D_{s}$, and therefore $\sigma_{\epsilon0,D_{r}}^{2}=\sigma_{\epsilon0,D_{s}}^{2}$.
Then the parameter $\theta_{0}$ is identified in $\Theta.$ In particular,
the moment conditions $E[\chi_{q}^{w}(\theta)|m_{q},D_{q}]=0$ and
$E[\chi_{q}^{b}(\theta)|m_{q},D_{q}]=0$ for $q=r,s$ with $\chi_{r}^{w}(\theta)$
and $\chi_{r}^{b}(\theta)$ defined in (\ref{eq:mwb}) identify $\lambda_{0}$,
$\sigma_{\epsilon0,D_{r}}^{2}$,and $\sigma_{\alpha0}^{2}$. The remaining
parameters $\sigma_{\epsilon0,j}^{2}$ are identified by $E(\chi_{q}^{w}(\theta)|m_{q},D_{q})=0$
for $q\neq r$ or $s.$ 

(ii) There are two groups $r$ and $s$, such that $m_{r}=m_{s}$
and $\sigma_{\epsilon0,D_{r}}^{2}\neq\sigma_{\epsilon0,D_{s}}^{2}$.
Then the parameter $\theta_{0}$ is identified in $\Theta.$ In particular,
the moment condition $E[\nu_{q}(\theta)|m_{q},D_{q}]=0$, $q=r,s$
uniquely identifies $\lambda_{0}$ and $\sigma_{\alpha0}^{2}$, where
\begin{align}
\nu_{q}(\theta) & =\chi_{q}^{b}(\theta)-\frac{\chi_{q}^{w}(\theta)}{m_{q}(m_{q}-1)}=(1-\lambda)^{2}\bar{y}_{q}^{2}-\sigma_{\alpha}^{2}-\frac{(m_{q}-1+\lambda)^{2}\ddot{Y}_{q}^{\prime}\ddot{Y}_{q}}{m_{q}(m_{q}-1)^{3}}\label{eq:nu}
\end{align}
with $\chi_{q}^{w}(\theta)$ and $\chi_{q}^{b}(\theta)$ defined in
(\ref{eq:mwb}). The remaining parameters $\sigma_{\epsilon0,j}^{2}$
are identified by $E(\chi_{q}^{w}(\theta)|m_{q},D_{q})=0$.
\end{lem}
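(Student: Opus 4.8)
The plan is to show, in each of the two scenarios, that $\theta_0$ is the \emph{unique} point of $\Theta$ at which all the relevant conditional moment functions vanish. In both scenarios the ``existence'' half is immediate: Assumptions \ref{assume:epsilon}--\ref{assume:alpha} together with \eqref{eq:varw}--\eqref{eq:varb} show that $E[\chi_r^w(\theta_0)|m_r,D_r]=0$ and $E[\chi_r^b(\theta_0)|m_r,D_r]=0$, and hence also $E[\nu_q(\theta_0)|m_q,D_q]=0$, since $\nu_q$ is by construction the linear combination $\chi_q^b(\theta)-\chi_q^w(\theta)/[m_q(m_q-1)]$ in which the $\sigma^2_{\epsilon,D_q}$ terms cancel. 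All the work is therefore in the uniqueness argument, and the decisive step in both scenarios is a monotonicity property of a rational function of $\lambda$ on $(-1,1)$.

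For scenario (i) I would start from the two within-group moment equalities $E[\chi_q^w(\theta)|m_q,D_q]=0$ for $q=r,s$. Because $D_r=D_s$ these share a common $\sigma^2_{\epsilon0,D_r}$, which I eliminate by taking the ratio of the two equations; this reproduces \eqref{eq:Wald_within}, i.e. $\big((m_r-1+\lambda)/(m_s-1+\lambda)\big)^2$ must equal a constant that, when the true expectations are inserted, is exactly $\big((m_r-1+\lambda_0)/(m_s-1+\lambda_0)\big)^2$. The key observation is that $\lambda\mapsto (m_r-1+\lambda)/(m_s-1+\lambda)$ is strictly monotone on $(-1,1)$ — its derivative is $(m_s-m_r)/(m_s-1+\lambda)^2$, of constant sign since $m_r\neq m_s$ — and strictly positive there because $m_r-1+\lambda>0$ and $m_s-1+\lambda>0$ for $\lambda>-1$, $m_r,m_s\geq2$; hence its square is injective on $\Lambda\subset(-1,1)$ and forces $\lambda=\lambda_0$. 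With $\lambda_0$ in hand, $E[\chi_r^w(\theta)|m_r,D_r]=0$ determines $\sigma^2_{\epsilon0,D_r}$ linearly, $E[\chi_r^b(\theta)|m_r,D_r]=0$ then determines $\sigma^2_{\alpha0}$ linearly, and for any group $q$ in a different category $j$ the equality $E[\chi_q^w(\theta)|m_q,D_q]=0$ evaluated at $\lambda_0$ delivers $\sigma^2_{\epsilon0,j}$; Assumption \ref{assume:epsilon} ensures every category is represented, so all $J$ idiosyncratic variances are recovered.

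For scenario (ii) I would use the two equalities $E[\nu_q(\theta)|m,D_q]=0$, $q=r,s$, for the equal-sized groups. Substituting \eqref{eq:varw}--\eqref{eq:varb} and differencing the $q=r$ and $q=s$ versions cancels everything that does not depend on $D_q$ — in particular the term $\tfrac{(1-\lambda)^2}{(1-\lambda_0)^2}\sigma^2_{\alpha0}-\sigma^2_\alpha$ — and, after dividing by $\sigma^2_{\epsilon0,D_r}-\sigma^2_{\epsilon0,D_s}\neq0$, leaves precisely $(1-\lambda)^2/(m-1+\lambda)^2=(1-\lambda_0)^2/(m-1+\lambda_0)^2$, the content of \eqref{eq:Wald_L-O-M}. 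Again the crux is monotonicity: $\lambda\mapsto(1-\lambda)/(m-1+\lambda)$ has derivative $-m/(m-1+\lambda)^2<0$ and is strictly positive on $(-1,1)$ for $m\geq2$, so its square is strictly decreasing, hence injective on $\Lambda$, and $\lambda=\lambda_0$ follows. Feeding $\lambda_0$ back into $E[\nu_r(\theta)|m,D_r]=0$ yields $\sigma^2_\alpha=\sigma^2_{\alpha0}$, and for each category present $E[\chi_q^w(\theta)|m_q,D_q]=0$ evaluated at $\lambda_0$ gives $\sigma^2_{\epsilon,D_q}=\sigma^2_{\epsilon0,D_q}$.

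The main obstacle I expect is exactly the injectivity of these rational functions of $\lambda$ on $\Lambda$: once that is settled, the remaining parameters drop out by linear back-substitution. This is where Assumption \ref{assume:lambda} (that $\Lambda$ is contained in the open interval $(-1,1)$) together with $m_r\geq2$ is indispensable — outside that interval the squares $\big((m_r-1+\lambda)/(m_s-1+\lambda)\big)^2$ and $\big((1-\lambda)/(m-1+\lambda)\big)^2$ need not be one-to-one, so the restriction on $\lambda$ cannot be weakened. I would also be mindful that the differencing step in scenario (ii) is legitimate only because $m_r=m_s$ makes the coefficient multiplying $\sigma^2_{\epsilon0,D_r}-\sigma^2_{\epsilon0,D_s}$ the same in the two equations, and that, as noted, existence of a solution is automatic since $\theta_0$ satisfies every equation.
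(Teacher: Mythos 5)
Your proposal is correct and follows essentially the same route as the paper's own proof: in scenario (i) you eliminate the common $\sigma^2_{\epsilon,D_r}$ by taking the ratio of the two within-group moment equations and invoke strict monotonicity of the (positive) ratio $(m_r-1+\lambda)/(m_s-1+\lambda)$ on $(-1,1)$, and in scenario (ii) you difference $E[\nu_q(\theta)|m,D_q]$ across the two categories, divide by $\sigma^2_{\epsilon0,D_r}-\sigma^2_{\epsilon0,D_s}\neq0$, and use strict monotonicity of $(1-\lambda)/(m-1+\lambda)$, exactly mirroring the paper's derivative computations, before recovering the variance parameters by linear back-substitution. The only differences are cosmetic (you differentiate the unsquared ratios and use their positivity, whereas the paper differentiates the squared ratios directly), so no further comment is needed.
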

Full identification is achieved in Scenario (i) with group size variation
in at least one category. As an example, consider types that describe
urbanicity such that $D_{r}=D_{s}=1$ denotes two classrooms $r$
and $s$ that are both located in an urban school but where $m_{r}\neq m_{s}$
such that the classrooms differ in size, while the remaining categories
$d=2,...,J$ may have the same group sizes. In this setting $\theta_{0}$
is identified without any further constraints on the variances $\sigma_{\epsilon,j}^{2}$.
If the number of distinct group sizes exceeds the number of categories
$J$ then it automatically must be the case that there exist some
category that is associated with at least two distinct group sizes.
Note that the result holds irrespective of whether the constraint
of homoscedastic errors $\sigma_{\epsilon0,D_{r}}^{2}=\sigma_{\epsilon0,D_{s}}^{2}$
is imposed on the model or not. From Scenario (i) we see that variation
in group size alone can provide variation that is sufficient for identification.
Furthermore, in the homoscedastic case where only a common variance
parameter $\sigma_{\epsilon}^{2}$ is specified, two distinct group
sizes are sufficient for identification by the result in Scenario
(i). This corresponds to the identification result of the conditional
maximum likelihood estimator (CMLE) in \citet{lee_identification_2007},
the score function of which can be written as $\varphi(m_{r})\chi_{r}^{w}(\theta)$,
where $\varphi(m_{r})$ is a function of $m_{r}$.

While variation in group size serves as the source of identification
in Scenario (i), identification based on the moment condition $E[\chi_{r}(\theta)|m_{r},D_{r}]=0$
is also possible without group size variation as long as there is
some other form of group heterogeneity. As is shown in the proof for
Scenario (ii) of Lemma \ref{lem:ID_1}, utilizing $m_{q}=m$ and $E\left[\nu_{q}(\theta)|m_{q},D_{q}\right]=0$
for $q=r,s$ yields \eqref{eq:Wald_L-O-M}. From \eqref{eq:Wald_L-O-M}
we see that the endogenous peer effect parameter $\lambda$ is identified
if there is heteroscedasticity across groups of the same size for
at least one size, and that $\lambda$ can be estimated from the sample
analog of \eqref{eq:Wald_L-O-M}. The intuition of identification
in Scenario (ii) echoes that of the conditional variance (CV) estimator
of \citet{graham_identifying_2008}. Similar to Graham (2008), \eqref{eq:nu}
is based on the relationship between the within-group and between-group
variance as captured by $\nu_{r}(\theta)$ , and can be used to construct
a Wald type moment condition like in \eqref{eq:Wald_L-O-M} using
the categorical variable as the instrument.

The above discussion focused on identification based on the moment
vector $\chi_{r}(\theta)$. We next discuss the importance of these
moment conditions for efficient estimation, and their relationship
to the score of the Gaussian ML estimator. The optimal moment function
corresponding to $\chi_{r}(\theta)$ is given by $\chi_{r}^{\ast}(\theta)=\varphi^{*}(m_{r},D_{r})\chi_{r}(\theta)$
where, focusing on the case with $J=2$ for exposition, \footnote{See our Online Appendix for details. The derivation uses Lemma \ref{lemma:moment}
and the special properties of matrices $\Omega(\theta)$, $I-\lambda W$
and $W$ described in Appendix \ref{subsec:matrix properties}. In
the Online Appendix we also give an explicit expression for the variance
covariance matrix of $\chi_{r}(\delta).$}
\begin{align}
\varphi^{*}\left(m_{r},D_{r}\right) & =E[\frac{\partial}{\partial\theta^{\prime}}\chi_{r}(\theta_{0})|m_{r},D_{r}]\left(E[\chi_{r}(\theta_{0})\chi_{r}(\theta_{0})^{\prime}|m_{r},D_{r}]\right)^{-1}\nonumber \\
 & =\left(\begin{array}{cc}
\frac{1}{\left(m_{r}-1+\lambda\right)\sigma_{\epsilon0,D_{r}}^{2}} & -\frac{m_{r}}{(1-\lambda)(\sigma_{\epsilon0,D_{r}}^{2}+m_{r}\sigma_{\alpha0}^{2})}\\
0 & -\frac{m_{r}^{2}}{2(\sigma_{\epsilon0,D_{r}}^{2}+m_{r}\sigma_{\alpha0}^{2})^{2}}\\
-\frac{1\left\{ D_{r}=1\right\} }{2\sigma_{\epsilon0,1}^{4}} & -\frac{1\left\{ D_{r}=1\right\} m_{r}}{2(\sigma_{\epsilon0,1}^{2}+m_{r}\sigma_{\alpha0}^{2})^{2}}\\
-\frac{1\left\{ D_{r}=2\right\} }{2\sigma_{\epsilon0,2}^{4}} & -\frac{1\left\{ D_{r}=2\right\} m_{r}}{2(\sigma_{\epsilon0,2}^{2}+m_{r}\sigma_{\alpha0}^{2})^{2}}
\end{array}\right).\label{eq:varphi-1}
\end{align}
Clearly, it follows that $E[\chi_{r}^{\ast}(\theta_{0})]=0$ by iterated
expectations. We note that the moment condition in (\ref{eq:Wald_L-O-M})
underlying the CV estimator is based on a linear transformation of
$\varphi^{*}\left(m_{r},D_{r}\right).$ Furthermore, as we shall see
in the next section, under the additional assumption that $\alpha$
and $\epsilon$ follow a Gaussian distribution, the score function
of the log likelihood for group $r$ is exactly the negative of $\chi_{r}^{\ast}(\theta)$,
that is 
\[
\partial lnL_{r}(\theta_{0})/\partial\theta=-\chi_{r}^{\ast}(\theta_{0}),
\]
where $\ln L_{r}(\theta)$ denotes the log likelihood function for
group $r$ conditional on $\left(m_{1},...,m_{R},D_{1},...,D_{R}\right)$.
From these observations we see that the matrices $\varphi^{\ast}\left(m_{r},D_{r}\right)$
can be viewed to provide the optimal weighting for the basic moment
functions $\chi_{r}(\delta)$; compare also the corresponding discussion
for the general model for more details.

The result that $\partial\ln L_{r}(\theta_{0})/\partial\theta=-\chi_{r}^{\ast}(\theta_{0})$
for the score function under Gaussianity establishes the asymptotic
efficiency of the GMM estimator based on $E\left[\chi_{r}(\theta)|m_{r},D_{r}\right]=0$
under the assumption of Gaussian distributions for the unobservables.
When the unobservables are not Gaussian then the GMM estimator has
the interpretation of a quasi maximum likelihood estimator (QMLE).
Similarly, in \citet{lee_identification_2007} the score function
of the conditional maximum likelihood estimator (CMLE) for group $r$
is the optimal moment function corresponding to $E\left[\chi_{r}^{w}(\theta)|m_{r}\right]=0$
under the assumption of homoscedastic and normally distributed errors
$\epsilon_{ir}$. While the CMLE of \citet{lee_identification_2007}
is not efficient under the assumptions we postulate in this paper,
it shares robustness properties of within group panel estimators in
cases where the group effects are possibly correlated with covariates
in the model. Under those circumstances, random effects quasi maximum
likelihood estimators are generally not expected to be consistent.

Our discussion so far highlights variance as the source of identification,
with variation in either size $m_{r}$ or variance of the idiosyncratic
error terms $\sigma_{\epsilon,D_{r}}^{2}$ across groups as conditions.
We show that variation in group size and error term variance is a
source of identification in the QMLE, CVE and CMLE. In all, the CMLE
utilizes how the within-group variance changes with $\lambda$ and
size when error terms are homoscedastic, while the CVE exploits the
relationship between the within-group variance and between-group variance
in relation to $\lambda$ and size when there is either variation
in group size or heteroscedasticity across groups. Our QMLE uses both
pieces of information. All three estimators remain valid without covariates,
and may achieve identification as long as there are at least two different
group sizes in the limit in the case of homoscedasticity. This complements
other results in the literature. For example, Proposition 4 in \citet{bramoulle_identification_2009}
states that in the setting of \citet{lee_identification_2007}, $\lambda$
is identified by instrumenting $(I-W)WY$ with $(I-W)W^{2}Z$, $(I-W)W^{3}Z$,
etc., in line with the spatial literature on the estimation of Cliff-Ord
type models. Their result is due to the fact that they only exploit
restrictions for the conditional mean of $\epsilon$. In \citet{graham_identifying_2008}
as well as in this paper additional constraints on the distribution
of $\alpha$ and $\epsilon$ are imposed and shown to be useful in
the identification of peer effects. Under these conditions including
$Z$ offers additional sources of variation, but identification is
possible with or without it.

Adding covariates is critically important in empirical applications.
Consider adding the covariate matrix $Z$. This leads to two additional
moment conditions $E\left[\ddot{Z}_{r}^{\prime}\ddot{U}_{r}|m_{r},D_{r}\right]=0$
and $E\left[\bar{z}_{r}^{\prime}\bar{u}_{r}|m_{r},D_{r}\right]=0$,
where $\bar{z}_{r}=\iota_{m_{r}}^{\prime}Z_{r}/m_{r}$ is the group
mean of $Z_{r}$ and $\ddot{Z}_{r}=Z_{r}-\iota_{m_{r}}\bar{z}_{r}$
is the deviation from group mean. Moreover, $\ddot{Y}_{r}$ and $\bar{y}_{r}$
now need to be replaced by $\ddot{Y}_{r}-\frac{m_{r}-1}{m_{r}-1+\lambda}\ddot{Z}_{r}\beta$
and $\bar{y}_{r}-\frac{\bar{z}_{r}\beta}{1-\lambda}$ respectively.
The score function of the QMLE then is the same as the moment conditions
of the best GMM corresponding to these two moment functions in addition
to the moments $E\left[\chi_{r}(\theta)|m_{r},D_{r}\right]=0$. In
the same way, in the presence of covariates and assuming homoscedasticity
of $\epsilon$, Lee's CMLE estimator is based on $E\left[\ddot{Z}_{r}^{\prime}\ddot{U}_{r}\right]=0$
in addition to $E\left[\chi_{r}^{w}(\theta)|m_{r}\right]=0$ and the
relative efficiency considerations discussed in this section continue
to apply to the situation with covariates.

\section{General Model\label{sec:General-Model}}

In this section we generalize the model to allow for individual characteristics,
average individual characteristics of peers and group level covariates.
We assume that we have access to observations on $R$ groups belonging
to $J$ categories, where $1\leqslant J<\infty$ is fixed. We consider
asymptotics where the number of groups $R$ tends to infinity and
where the number of group sizes is finite. For the asymptotic identification
of $\lambda_{0}$ and $\sigma_{\alpha0}^{2}$ this setup assumes that
in the limit we observe infinitely many groups for at least two group
sizes or two categories, echoing the requirement of variation in either
group sizes or categories for identification in Section \ref{sec:graham}.
In designs that allow for heteroscedasticity, we also need infinitely
many groups for each category $j\in\left\{ 1,...,J\right\} $ to identify
the remaining variance parameters $\sigma_{\epsilon0,j}^{2}$. Let
$r=1,...,R$ denote the group index, let $D_{r}$ denote the category
of group $r$, and let $m_{r}$ denote the size of group $r$. The
total sample size is then given by $N=\sum_{r=1}^{R}m_{r}$. Suppose
further that interactions occur within each group, but not across
groups, and that peer effects work through the mean outcome and mean
characteristics of peers in the same group. The linear-in-means peer
effects model that includes endogenous as well as exogenous peer effects
then is given by 
\begin{equation}
y_{ir}=\beta_{1}+\lambda\bar{y}_{(-i)r}+x_{1,ir}\beta_{2}+\bar{x}_{2,(-i)r}\beta_{3}+x_{3,r}\beta_{4}+\alpha_{r}+\epsilon_{ir},\label{eq:main_orig}
\end{equation}
where $y_{ir}$ is the outcome variable of individual $i$ in group
$r$, $\bar{y}_{(-i)r}=\frac{1}{m_{r}-1}\sum_{j\neq i}^{m_{r}}y_{jr}$
is the average outcome of $i$'s peers, $x_{1,ir}$ and $x_{2,ir}$
are both row vectors of predetermined characteristics of individual
$i$ in group $r$, $\bar{x}_{2,(-i)r}=\frac{1}{m_{r}-1}\sum_{j\neq i}^{m_{r}}x_{2,jr}$
is a vector of average characteristics of $i$'s peers, $x_{3,r}$
is a vector of observed group characteristics. The variables in $x_{1,ir}$
and $x_{2,ir}$ can be non-overlapping, partially overlapping or totally
overlapping. The error term consists of two components, the group
effect $\alpha_{r}$ and the disturbance term $\epsilon_{ir}$. We
treat $x_{1,ir}$, $x_{2,ir}$, $x_{3,r}$, $D_{r}$ and $m_{r}$
as non-stochastic, while noting that at the expense of more complex
notation we could also think of the analysis as being conditional
on these variables. In this model, peer effects work through the mean
peer outcome $\bar{y}_{(-i)r}$ and mean peer characteristics $\bar{x}_{2,(-i)r}$.
The two terms are also known as the leave-out-mean of $y$ and $x_{2}$,
as they are means of the group leaving out oneself. In Manski's terminology,
$\lambda\bar{y}_{(-i)r}$ in (\ref{eq:main_orig}) reflects endogenous
peer effects, and $\bar{x}_{2,(-i)r}\beta_{3}$ is the exogenous peer
effect, also referred to as contextual peer effects. The covariates
$\bar{x}_{2,(-i)r}$ and $x_{3,r}$ contain group level information
and can be interpreted as parametrizations of group level fixed effects
in the spirit of Mundlak (1978) and Chamberlain (1980). For example,
$x_{3,r}$ can contain full group averages of individual characteristics
or be composed of other characteristics that only vary at the group
level. The CMLE, as in the conventional panel case, cannot account
for this group level information. This can be a limitation in cases
where the effects of group level characteristics are of independent
interest in the analysis. An example is the effects of teacher education
and training on class test scores. 

Let $z_{ir}=(1,x_{1,ir},\bar{x}_{2,(-i)r},x_{3,r})$ be the row vector
of all exogenous variables, let $\beta=(\beta_{1},\beta_{2}^{\prime},\beta_{3}^{\prime},\beta_{4}^{\prime})^{\prime}$
be the corresponding coefficients vector, and let $k_{Z}$ denote
the number of columns in $z_{ir}$. A compact form of model (\ref{eq:main_orig})
is 
\begin{equation}
y_{ir}=\lambda\bar{y}_{(-i)r}+z_{ir}\beta+\alpha_{r}+\epsilon_{ir}.\label{eq:main}
\end{equation}
The model can be further written as a Cliff-Ord type spatial model.
To see this let $I_{m}$ denote the $m-$dimensional identity matrix,
let $\iota_{m}$ denote the $m-$dimensional column vector of ones,
and define the weight matrix $W_{m_{r}}$ for group $r$ as $W_{m_{r}}=\frac{1}{m_{r}-1}(\iota_{m_{r}}\iota_{m_{r}}^{\prime}-I_{m_{r}})$.
The off-diagonal elements of this matrix are all equal to $\frac{1}{m_{r}-1}$
and diagonal elements are 0. Let $Y_{r}=(y_{1r},...,y_{m_{r}r})^{\prime}$,
$Z_{r}=(z_{1r}^{\prime},...,z_{m_{r}r}^{\prime})^{\prime}$, $\epsilon_{r}=(\epsilon_{1r},...,\epsilon_{m_{r}r})^{\prime}$,
then the model for group $r$ can be expressed in matrix form as 
\begin{equation}
Y_{r}=\lambda W_{m_{r}}Y_{r}+Z_{r}\beta+U_{r},\label{eq:main_matrix}
\end{equation}
where $U_{r}=\alpha_{r}\iota_{m_{r}}+\epsilon_{r}.$ Let $Y=[Y_{1}^{\prime},Y_{2}^{\prime},...,Y_{R}^{\prime}]^{\prime}$,
$Z=[Z_{1}^{\prime},Z_{2}^{\prime},...,Z_{R}^{\prime}]^{\prime}$,
$U=[U_{1}^{\prime},U_{2}^{\prime},...,U_{R}^{\prime}]^{\prime}$,
and $W=diag_{r=1}^{R}\{W_{m_{r}}\}$ such that the model for the whole
sample is given by 
\begin{equation}
Y=\lambda WY+Z\beta+U.\label{eq:main_matrix_2}
\end{equation}
In the spatial literature $W$ is referred to as a spatial weight
matrix and $WY$ as a spatial lag. In analyzing the model in (\ref{eq:main_matrix_2})
we maintain the random effects specification detailed in Assumptions
1 and 2 of Section \ref{sec:graham}, which imply that $\alpha_{r}\sim(0,\sigma_{\alpha}^{2})$
and $\epsilon_{ir}\sim(0,\sigma_{\epsilon,D_{r}}^{2})$, where $D_{r}\in\{1,...,J\}$
with $J\geqslant1$ fixed and finite. The specification allows for
heteroscedasticity at the group level as long as there are only a
finite number of different parameters. For example, we could allow
for $\sigma_{\epsilon,D_{r}}^{2}$ to be different for small and large
groups, or more generally for all groups of a certain size $m_{r}$.
On the other hand we do not cover the case where $\sigma_{\epsilon,D_{r}}^{2}$
differs for each individual group $r,$ as this would lead to an infinite
dimensional parameter space.

The parameters of interest are $\lambda,\sigma_{\alpha}^{2},\sigma_{\epsilon,1}^{2},...,\sigma_{\epsilon,J}^{2}$
and $\beta$. Their respective true values are $\lambda_{0},\sigma_{\alpha0}^{2},\sigma_{\epsilon0,1}^{2},...,\sigma_{\epsilon0,J}^{2}$
and $\beta_{0}$. In analyzing the model it will be convenient to
concentrate the log-likelihood function with respect to $\beta$ for
given values of $\theta=(\lambda,\sigma_{\alpha}^{2},\sigma_{\epsilon1}^{2},...,\sigma_{\epsilon J}^{2})^{\prime}$.
Let $\Theta$ denote the parameter space for $\theta$ , and let $\delta=(\theta^{\prime},\beta^{\prime})^{\prime}$
denote the vector of all parameters. Under Assumptions 1 and 2 the
expression for the variance covariance matrix $\Omega_{0}$ of $U$
is
\[
\Omega_{0}=\Omega(\theta_{0})=\diag_{r=1}^{R}\Omega_{r}(\theta_{0})=\diag_{r=1}^{R}\{\sigma_{\epsilon0,D_{r}}^{2}I_{m_{r}}+\sigma_{\alpha0}^{2}\iota_{m_{r}}\iota_{m_{r}}^{\prime}\}.
\]

To define the quasi-maximum likelihood estimator (QMLE) for the peer
effects model in (\ref{eq:main}) note that solving $Y$ from (\ref{eq:main_matrix_2})
yields the reduced from:

\begin{equation}
Y=(I-\lambda W)^{-1}Z\beta+(I-\lambda W)^{-1}U.\label{eq:y_sol_2}
\end{equation}
If $\alpha_{r}$ and $\epsilon_{ir}$ follow normal distributions,
\begin{equation}
Y\sim N((I-\lambda W)^{-1}Z\beta,(I-\lambda W)^{-1}\Omega(\theta)(I-\lambda W^{\prime})^{-1}).\label{eq:y_dis}
\end{equation}
The corresponding log likelihood function is 
\begin{align}
\ln L_{N}(\theta,\beta) & =-\frac{N}{2}\ln(2\pi)+\frac{1}{2}\ln|(I-\lambda W)^{2}\Omega(\theta)^{-1}|\nonumber \\
 & -\frac{1}{2}(Y-\lambda WY-Z\beta)^{\prime}\Omega(\theta)^{-1}(Y-\lambda WY-Z\beta).\label{eq:logL}
\end{align}
and the corresponding QMLE is given by 
\begin{equation}
\hat{\delta}_{N}=(\hat{\theta}_{N}^{\prime},\hat{\beta}_{N}^{\prime})^{\prime}=\textrm{argmax}_{\theta,\beta}\ln L_{N}(\theta,\beta).\label{eq:Def_QMLE}
\end{equation}
It is convenient to concentrate out $\beta$ and to obtain the QMLE
for $\theta$ first. The first order condition for $\beta$ is 
\begin{equation}
\frac{\partial\ln L_{N}(\theta,\beta)}{\partial\beta}=(Y-\lambda WY-Z\beta)^{\prime}\Omega(\theta)^{-1}Z=0,\label{eq:focdelta}
\end{equation}
which leads to 
\begin{equation}
\hat{\beta}_{N}(\theta)=(Z^{\prime}\Omega(\theta)^{-1}Z)^{-1}Z^{\prime}\Omega(\theta)^{-1}(I-\lambda W)Y.\label{eq:gammahat}
\end{equation}
Plugging $\hat{\beta}_{N}(\theta)$ back into (\ref{eq:logL}) yields
the following concentrated log likelihood function,
\begin{align}
Q_{N}(\theta) & =\frac{1}{N}\ln L_{N}(\theta,\hat{\beta}_{N}(\theta))\nonumber \\
 & =-\frac{\ln(2\pi)}{2}+\frac{1}{2N}\ln|(I-\lambda W)^{2}\Omega(\theta)^{-1}|-\frac{1}{2N}Y^{\prime}(I-\lambda W)^{\prime}M_{Z}(\theta)(I-\lambda W)Y,\label{eq:QN}
\end{align}
where 
\begin{equation}
M_{Z}(\theta)=\Omega(\theta)^{-1}-\Omega(\theta)^{-1}Z(Z^{\prime}\Omega(\theta)^{-1}Z)^{-1}Z^{\prime}\Omega(\theta)^{-1}.\label{eq:Mz}
\end{equation}
Then the QMLE for $\theta,$ $\hat{\theta}_{N}=(\hat{\lambda}_{N},\hat{\sigma}_{\alpha,N}^{2},\hat{\sigma}_{\epsilon1,N}^{2},...,\hat{\sigma}_{\epsilon J,N}^{2})^{\prime}$
is given by
\begin{equation}
\hat{\theta}_{N}=\textrm{argmax}_{\theta}Q_{N}(\theta).\label{eq:varthetahat}
\end{equation}
Plugging $\hat{\theta}_{N}$ back into \eqref{eq:gammahat}, the QMLE
for $\beta$ is

\begin{equation}
\hat{\beta}_{N}=\hat{\beta}_{N}(\hat{\theta}_{N})=(Z^{\prime}\Omega(\hat{\theta}_{N})^{-1}Z)^{-1}Z^{\prime}\Omega(\hat{\theta}_{N})^{-1}(I-\hat{\lambda}_{N}W)Y.\label{eq:deltahathat1}
\end{equation}

A formal result regarding the asymptotic identification of the model
parameters is given in the next section. We next provide some intuition
for that result, by extending our earlier discussion of identification
for the canonical model without covariates to our model (\ref{eq:main_matrix_2})
with covariates. Let $\left\Vert .\right\Vert $ be the Euclidean
norm on $\mathbb{R^{\text{\ensuremath{k}}}}.$ Using the relationships
$\ddot{U}_{r}=\frac{\left(m_{r}-1+\lambda_{0}\right)}{m_{r}-1}\ddot{Y}_{r}-\ddot{Z}_{r}\beta_{0}$
and $\bar{u}_{r}=(1-\lambda_{0})\bar{y}_{r}-\bar{z}_{r}\beta_{0}$,
the moment functions related to the full model can be written as follows
\[
\chi_{r}(\theta)=\left[\begin{array}{c}
\chi_{r}^{w}(\theta)\\
\chi_{r}^{b}(\theta)\\
\chi_{r}^{zw}\left(\delta\right)\\
\chi_{r}^{zb}\left(\delta\right)
\end{array}\right]=\left[\begin{array}{c}
\left\Vert \frac{m_{r}-1+\lambda}{m_{r}-1}\ddot{Y}_{r}-\ddot{Z}_{r}\beta\right\Vert ^{2}-(m_{r}-1)\sigma_{\epsilon,D_{r}}^{2}\\{}
[(1-\lambda)\bar{y}_{r}-\bar{z}_{r}\beta]^{2}-\sigma_{\alpha}^{2}-\frac{\sigma_{\epsilon,D_{r}}^{2}}{m_{r}}\\
\ddot{Z}_{r}^{\prime}\left(\frac{m_{r}-1+\lambda}{m_{r}-1}\ddot{Y}_{r}-\ddot{Z}_{r}\beta\right)\\
\bar{z}_{r}^{\prime}\left((1-\lambda)\bar{y}_{r}-\bar{z}_{r}\beta\right)
\end{array}\right]
\]
where $\chi_{r}^{w}(\delta)$ and $\chi_{r}^{b}(\delta)$ summarize
the restrictions on the unobservables, and are natural extensions
of the moment conditions considered before in \eqref{eq:mwb} for
the model without covariates. The additional moment restrictions $\chi_{r}^{zw}\left(\delta\right)$
and $\chi_{r}^{zb}\left(\delta\right)$ relate to the exogeneity of
$Z_{r}$ relative to $\epsilon_{r}$ and $\alpha_{r}$. A formal asymptotic
identification result will be given in the next section. Intuitively,
for given $\lambda$ the last two moment conditions identify $\beta$,
while the first two identify $\lambda,\sigma_{\alpha}^{2},\sigma_{\varepsilon,j}^{2},j=1,...,J$
in an analogous manner as described in the discussion of Lemma \ref{lem:ID_1}
for the model without covariates.

As for the model without covariates there is a representation of the
score of the log-likelihood in terms of the fundamental moment conditions.
To describe the relationship between moments and the score we define
the matrix

\begin{align}
\varphi(m_{r},D_{r}) & =\left(\begin{array}{cccc}
\frac{1}{(m_{r}-1+\lambda_{0})\sigma_{\epsilon0,D_{r}}^{2}} & -\frac{m_{r}}{(1-\lambda_{0})(\sigma_{\epsilon0,D_{r}}^{2}+m_{r}\sigma_{\alpha0}^{2})} & \frac{1}{(m_{r}-1+\lambda_{0})\sigma_{\epsilon0,D_{r}}^{2}}\beta_{0}^{\prime} & -\frac{m_{r}}{(1-\lambda_{0})(\sigma_{\epsilon0,D_{r}}^{2}+m_{r}\sigma_{\alpha0}^{2})}\beta_{0}^{\prime}\\
0 & -\frac{m_{r}^{2}}{2(\sigma_{\epsilon0,D_{r}}^{2}+m_{r}\sigma_{\alpha0}^{2})^{2}} & 0 & 0\\
-\frac{1(D_{r}=1)}{2\sigma_{\epsilon0,1}^{4}} & -\frac{m_{r}1(D_{r}=1)}{2(\sigma_{\epsilon0,1}^{2}+m_{r}\sigma_{\alpha0}^{2})^{2}} & 0 & 0\\
\vdots & \vdots & \vdots & \vdots\\
-\frac{1(D_{r}=J)}{2\sigma_{\epsilon0,J}^{4}} & -\frac{m_{r}1(D_{r}=J)}{2(\sigma_{\epsilon0,J}^{2}+m_{r}\sigma_{\alpha0}^{2})^{2}} & 0 & 0\\
0 & 0 & -\frac{1}{\sigma_{\epsilon0,D_{r}}^{2}}I_{k_{Z}} & -\frac{m_{r}}{\sigma_{\epsilon0,D_{r}}^{2}+m_{r}\sigma_{\alpha0}^{2}}I_{k_{Z}}
\end{array}\right).\label{eq:varphim}
\end{align}
Furthermore observe that the log-likelihood function can be written
as $\ln L_{N}(\delta)=-\frac{N}{2}\ln(2\pi)+\sum_{r=1}^{R}\ln L_{r}(\delta)$
where 
\begin{align*}
\ln L_{r}(\delta) & =\frac{1}{2}\ln|(I_{m_{r}}-\lambda W_{m_{r}})^{2}\Omega_{r}(\theta)^{-1}|\\
 & -\frac{1}{2}(Y_{r}-\lambda W_{m_{r}}Y_{r}-Z_{r}\beta)^{\prime}\Omega_{r}(\theta)^{-1}(Y_{r}-\lambda W_{m_{r}}Y-Z_{r}\beta)
\end{align*}
is the log-likelihood function for group $r$. Then it can be shown
that\footnote{See our Online Appendix for details. The derivation uses Lemma \ref{lemma:moment}
and the special properties of matrices $\Omega(\theta)$, $I-\lambda W$
and $W$ described in Appendix \ref{subsec:matrix properties}. In
the Online Appendix we also give an explicit expression for the variance
covariance matrix of $\chi_{r}(\delta).$}
\begin{eqnarray*}
\frac{\partial\ln L_{r}(\delta_{0})}{\partial\delta} & = & -\chi_{r}^{\ast}(\delta_{0})=-\varphi(m_{r},D_{r})\chi_{r}\left(\delta_{0}\right).
\end{eqnarray*}
 As is well known, the score of the log-likelihood function, $S(\delta)=-\sum_{r=1}^{R}\frac{\partial\ln L_{r}(\delta)}{\partial\delta}$
can be interpreted as a moment function corresponding to the moments
$E\left[S(\delta_{0})\right]=-\sum_{r=1}^{R}E\left[\frac{\partial\ln L_{r}(\delta_{0})}{\partial\delta}\right]=0$.
Furthermore, under a Gaussian assumption the score is an optimal moment
function.\footnote{Observe that 
\begin{align*}
E\left[\frac{\partial S(\delta_{0})}{\partial\delta'}\right]\left[Var\left(S(\delta_{0})\right)\right]^{-1}S(\delta_{0}) & =\left(-\sum_{r=1}^{R}E\left[\frac{\partial^{2}lnL_{r}(\delta_{0})}{\partial\delta\partial\delta'}\right]\right)\left(\sum_{r=1}^{R}E\left[\frac{\partial lnL_{r}(\delta_{0})}{\partial\delta}\frac{\partial lnL_{r}(\delta_{0})}{\partial\delta'}\right]\right)^{-1}S(\delta_{0})\\
 & =S(\delta_{0}).
\end{align*}
in light of the information matrix equality.} From this we see that the matrices $\varphi(m_{r},D_{r})$ can be
viewed to provide the optimal weighting for the basic moment functions
$\chi_{r}(\delta)$. Under Gaussian assumptions the optimal GMM estimator
coincides with the maximum likelihood estimator and is asymptotically
efficient under the stated assumptions.

\section{Theoretical Results\label{sec:Theoretical-Results}}

We next state our assumptions for the general model. We maintain Assumptions
\ref{assume:epsilon}-\ref{assume:lambda} on $\epsilon$, $\alpha$
and $\lambda$. In the following we add assumptions regarding the
exogenous variables, and the sizes and relative magnitudes of groups
in the sample. Let $\mathcal{I}_{m,j}\subset\{1,...,R\}$ be the index
set of all groups in category $j$ with size equal to $m$. Thus if
$r\in\mathcal{I}_{m,j}$, then $D_{r}=j$ and $m_{r}=m$. Let $R_{m,j}$
be the cardinality of $\mathcal{I}_{m,j}$, in other words $R_{m,j}$
is the number of groups in category $j$ with size equal to $m$,
and let $R_{j}$ be the number of groups in category $j$, that is
$R_{j}=\sum_{r=1}^{R}1(D_{r}=j)=\sum_{m=2}^{\bar{M}}R_{m,j}$, where
the upper bound\textcolor{red}{{} }\textcolor{black}{$\bar{M}$} on
the group size is specified in the next assumption below. Furthermore
let $\omega_{m,j}=R_{m,j}/R$ denote the share of groups in category
$j$ with size equal to $m$, and let $\omega_{j}=R_{j}/R=\sum_{m=2}^{\bar{M}}\omega_{m,j}$
be the share of groups in category $j$. Below we maintain the following
assumption regarding the group sizes and their relative magnitudes.
\begin{assumption}
\textcolor{black}{\label{assume:n}(a) The sample size $N$ goes to
infinity; (b) The group size is bounded in the sense that there exists
some positive constant $\bar{M}$ such that $2\leqslant m_{r}\leqslant\bar{M}<\infty$
}\textup{for}\textcolor{black}{{} $r=1,2,...,R$; (c) The limit $\omega_{m,j}^{\ast}=\lim_{N\rightarrow\infty}\omega_{m,j}$
exists and $\omega_{m,j}^{\ast}<1$ for all $2\leqslant m\leqslant\bar{M}$
and $j$, and $\omega_{j}^{\ast}=\lim_{N\rightarrow\infty}\omega_{j}=\sum_{m=2}^{\bar{M}}\omega_{m,j}^{*}>0$
for all $j$.}
\end{assumption}
The restriction that the minimal group size is 2 rules out singleton
groups. A member of such a group has no peers. Assumption \ref{assume:n}(b)
imposes a fixed upper bound on group size. In many applications this
is not a serious constraint. The assumption is more restrictive than
Lee (2007) who allows for group size to grow with sample size. It
is worth pointing out that increasing group sizes generally reduce
the convergence rates for estimators of peer effects parameters, and
as demonstrated by \citet{kelejian_2sls_2002} in some cases lead
to inconsistency of these estimators. 

Assumption \ref{assume:n}(c) states that asymptotically, no single
type-group size combination can dominate the sample by requiring that
\textcolor{black}{$\omega_{m,j}^{\ast}<1$ for all $2\leqslant m\leqslant\bar{M}$
and $j$}. In addition, all types $j$ occur in the sample in an asymptotically
non-negligible way because \textcolor{black}{$\omega_{j}^{\ast}>0$
for all $j$. On the other hand, we do allow that for certain combinations
of $j$ and $m$ the limit }$\omega_{m,j}^{\ast}$ is zero, allowing
for some group sizes of type $j$ to occur infrequently or not at
all in the sample.

Observe that $N=\sum_{r=1}^{R}m_{r}=\sum_{j=1}^{J}\sum_{m=2}^{\bar{M}}mR_{m,j}$.
Since group size is bounded, the number of groups $R$ goes to infinity
as $N$ goes to infinity. Since $\sum_{j=1}^{J}\sum_{m=2}^{\bar{M}}R_{m,j}=R$,
we have $\sum_{j=1}^{J}\sum_{m=2}^{\bar{M}}\omega_{m,j}=\sum_{j=1}^{J}\omega_{j}=1$
and thus $\sum_{j=1}^{J}\sum_{m=2}^{\bar{M}}\omega_{m,j}^{\ast}=\sum_{j=1}^{J}\omega_{j}^{\ast}=1$.
Since $\omega_{j}^{\ast}>0$ \textcolor{black}{by As}sumption \ref{assume:n}(c)
it follows that also $R_{j}$ goes to infinity, which is needed to
facilitate the consistent estimation of \textcolor{black}{$\sigma_{\epsilon,j}^{2}$}.
Assumption \ref{assume:n}(c) implies that the limit of the average
group size is given by
\begin{align}
m^{\ast} & =\lim_{N\rightarrow\infty}\frac{N}{R}=\lim_{N\rightarrow\infty}\sum_{j=1}^{J}\sum_{m=2}^{\bar{M}}\frac{R_{m,j}}{R}m=\sum_{j=1}^{J}\sum_{m=2}^{\bar{M}}\omega_{m,j}^{\ast}m.\label{eq:nstar-1}
\end{align}
Clearly $2\leqslant m^{\ast}\leqslant\bar{M}$, since $2\leqslant m_{r}\leqslant\bar{M}$.

Observe that in light of Assumptions \ref{assume:epsilon}, \ref{assume:alpha},
and \ref{assume:lambda} the parameter space $\Theta$ for $\theta=\left(\lambda,\sigma_{\alpha}^{2}\sigma_{\epsilon,1}^{2},...,\sigma_{\epsilon,J}^{2}\right)^{\prime}$
is a compact subset of the Euclidean space $\mathbb{R}^{2+J}$. Observe
further that 
\begin{align}
I_{m_{r}}-\lambda W_{m_{r}} & =(1+\frac{\lambda}{m_{r}-1})I_{m_{r}}^{\ast}+(1-\lambda)J_{m_{r}}^{\ast},
\end{align}
where $I_{m_{r}}^{\ast}=I_{m_{r}}-\iota_{m_{r}}\iota_{m_{r}}^{\prime}/m_{r}$
and $J_{m_{r}}^{\ast}=\iota_{m_{r}}\iota_{m_{r}}^{\prime}/m_{r}$
are symmetric, idempotent, orthogonal, and sum to the identity matrix.
Furthermore from the results in Appendix \ref{subsec:matrix properties}
we have $|I_{m_{r}}-\lambda W_{m_{r}}|=[1+\lambda/(m_{r}-1)]^{m_{r}-1}(1-\lambda)$.\footnote{In Appendix \ref{subsec:matrix properties} we review additional properties
of matrices of the form $pI_{m}^{\ast}+sJ_{m}^{\ast}$ , which will
be used repeatedly in this paper. In particular, their multiplication
is commutative. The products of such matrices are also of the form
of $pI_{m}^{\ast}+sJ_{m}^{\ast}$, and $|pI_{m}^{\ast}+sJ_{m}^{\ast}|=p^{m-1}s$,
$(pI_{m}^{\ast}+sJ_{m}^{\ast})^{-1}=\frac{1}{p}I_{m}^{\ast}+\frac{1}{s}J_{m}^{\ast}$.} Thus the matrix $I_{m_{r}}-\lambda W_{m_{r}}$ is nonsingular if
$1+\lambda/(m_{r}-1)\neq0$ and $1-\lambda\neq0$. Assumption \ref{assume:lambda}
ensures the non-singularity of $I_{m_{r}}-\lambda W_{m_{r}}$, and
hence the non-singularity of $I-\lambda W=\diag_{r=1}^{R}\{I_{m_{r}}-\lambda W_{m_{r}}\}$,
since for $m_{r}\geqslant2$ and $\lambda<1$ we have $1+\lambda/(m_{r}-1)>0$
and $1-\lambda>0$.

Let $\bar{z}_{r}=\frac{1}{m_{r}}\iota_{m_{r}}^{\prime}Z_{r}$ be the
row vector of column means of $Z_{r}$, and let $\ddot{Z}_{r}=Z_{r}-\iota_{m_{r}}\bar{z}_{r}$
be the deviations from the column means. Then $Z_{r}^{\prime}I_{m_{r}}^{\ast}Z_{r}=\ddot{Z}_{r}^{\prime}\ddot{Z}_{r}$,
$Z_{r}^{\prime}J_{m_{r}}^{\ast}Z_{r}=m_{r}\bar{z}_{r}^{\prime}\bar{z}_{r}$.
\begin{assumption}
\label{assume:z}(a) The $N\times k_{Z}$ matrix $Z$ is non-stochastic,
with $rank(Z)=k_{Z}>0$ for $N$ sufficiently large. The elements
of $Z$ are uniformly bounded in absolute value.

(b)For $2\leqslant m\leqslant\bar{M}$, and $1\leqslant j\leqslant J$
the following limits exist:
\[
\lim_{N\rightarrow\infty}N^{-1}\sum_{r\in\mathcal{I}_{m,j}}\ddot{Z}_{r}^{\prime}\ddot{Z}_{r}=\ddot{\varkappa}_{m,j},
\]
\[
\lim_{N\rightarrow\infty}N^{-1}\sum_{r\in\mathcal{I}_{m,j}}m\bar{z}_{r}^{\prime}\bar{z}_{r}=\bar{\varkappa}_{m,j},
\]
\[
\lim_{N\rightarrow\infty}N^{-1}\sum_{r\in\mathcal{I}_{m,j}}\bar{z}_{r}=\bar{z}_{m,j}.
\]

(c) For at least one pair of $(m,j)$ such that $\omega_{m,j}^{\ast}>0$,
and N sufficiently large, the smallest eigenvalues of $N^{-1}\sum_{r\in\mathcal{I}_{m,j}}Z_{r}^{\prime}Z_{r}=N^{-1}\sum_{r\in\mathcal{I}_{m,j}}\ddot{Z}_{r}^{\prime}\ddot{Z}_{r}+N^{-1}\sum_{r\in\mathcal{I}_{m,j}}m\bar{z}_{r}^{\prime}\bar{z}_{r}$
are bounded away from zero, uniformly in N, by some finite constant
$\underline{\xi}_{Z}>0$.
\end{assumption}
Suppose we have some $N\times N$ matrix $A_{N}(\theta)=diag_{r=1}^{R}\{p(m_{r},D_{r},\theta)I_{m_{r}}^{\ast}+s(m_{r},D_{r},\theta)J_{m_{r}}^{\ast}\}$,
where $p(m_{r},D_{r},\theta)$ and $s(m_{r},D_{r},\theta)$ are positive,
uniformly continuous and bounded on $\Theta$. An example of an expression
of this form is $\Omega(\theta)^{-1}$ which is obtained in closed
form in Equation \eqref{AppCon1} in Appendix \ref{subsec:matrix properties}.
Then under Assumption \ref{assume:z}(b), the limiting matrix of $N^{-1}Z^{\prime}A_{N}(\theta)Z$
always exists, is continuous in $\theta$ and takes the form 
\[
\lim_{N\rightarrow\infty}\frac{1}{N}Z^{\prime}A_{N}(\theta)Z=\sum_{j=1}^{J}\sum_{m=2}^{\bar{M}}[p(m,j,\theta)\ddot{\varkappa}_{m,j}+s(m,j,\theta)\bar{\varkappa}_{m,j}].
\]
Furthermore, $N^{-1}Z^{\prime}A_{N}(\theta)Z$ converges to its limiting
matrix uniformly on $\Theta$. With $p(m_{r},D_{r},\theta)>0$ and
$s(m_{r},D_{r},\theta)>0$, Assumption \ref{assume:z}(a) ensures
that $N^{-1}Z^{\prime}A_{N}(\theta)Z$ and its limiting matrix are
invertible, with the elements of the inverse matrix uniformly bounded
in absolute value. In the special case when $A_{N}(\theta)$ is the
identity matrix, $lim_{N\rightarrow\infty}\frac{1}{N}Z^{\prime}Z=\sum_{j=1}^{J}\sum_{m=2}^{\bar{M}}[\ddot{\varkappa}_{m,j}+\bar{\varkappa}_{m,j}]$,
which has the smallest eigenvalue bounded above zero by some finite
constant $\text{\ensuremath{\underbar{\ensuremath{\xi}}}}_{Z}>0$.
See Lemma \ref{lem:Aux1} for details and a proof.

As shown by Lemma \ref{lem:ID_1} in Section \ref{sec:graham}, identification
of $\lambda$ and $\sigma_{\alpha}^{2}$ requires variation in the
group size or variance of the error terms. The following assumption
ensures this so that in the limit we have non-negligible samples for
at least two different group sizes or two different categories with
different variances of the idiosyncratic errors $\epsilon_{ir}$.
\begin{assumption}
\label{assume:id}For some sizes $m$ and $m^{\prime}$, and some
categories $j$ and $j^{\prime}$ we have $\omega_{m,j}^{\ast}>0$
and $\omega_{m^{\prime},j^{\prime}}^{\ast}>0$, and either of the
following two scenarios hold,

(a) $m\neq m^{\prime}$, and $\sigma_{\epsilon0,j}^{2}=\sigma_{\epsilon0,j^{\prime}}^{2}$
for some $j,j^{\prime}\in\left\{ 1,...,J\right\} $.

(b) $m=m^{\prime}$, and $\sigma_{\epsilon0,j}^{2}\neq\sigma_{\epsilon0,j^{\prime}}^{2}$
for some $j,j^{\prime}\in\left\{ 1,...,J\right\} $ with $j\neq j^{\prime}.$
\end{assumption}
The conditions in Assumption \ref{assume:id} are the asymptotic analogs
of identification conditions imposed in Lemma \ref{lem:ID_1}. Assumption
\ref{assume:n} by itself is not sufficient for identification because
it only implies that no single pair $\left(m,j\right)$ asymptotically
dominates the sample. Assumption \ref{assume:n} alone does not guarantee
that there is enough variation in the underlying group sizes $m$
or the variances $\sigma_{\epsilon0,j}^{2}$. For example, it is possible
under Assumption \ref{assume:n} that all groups are of the same size
and that all variances $\sigma_{\epsilon0,j}^{2}$ are the same. Assumption
\ref{assume:id} rules out such cases. Assumption \ref{assume:id}(a)
is related to Assumption 6.1 and Footnote 9 of \citet{lee_identification_2007}
which requires group size variation to achieve identification for
the case where group sizes are bounded, the only case we consider.
Assumption \ref{assume:id}(b) has no analog in Lee (2007) because
of his Assumption 1 which imposes homoscedasticity on the errors $\epsilon_{ir}.$
We show that identification is possible purely based on group level
heteroscedasticity even if all group sizes are the same. This insight
also extends the analysis of \citet{graham_identifying_2008} where
types and class sizes are linked.

Below we give results on the consistency and asymptotic normality
of the QMLE $\hat{\delta}_{N}=(\hat{\theta}_{N}^{\prime},\hat{\beta}_{N}^{\prime})^{\prime}$
defined in (\ref{eq:Def_QMLE}).
\begin{thm}
\label{theorem:Consistency}Suppose Assumptions 1-6 hold, then 

(a) The parameter $\delta_{0}$ is asymptotically identified in the
sense that it is the unique maximizer of the criterion $\bar{R}(\theta,\beta)=\lim_{N\rightarrow\infty}E\left[\frac{1}{N}\textrm{ln}L(\theta,\beta)\right]$.

(b) The QMLE $\widehat{\delta}_{N}$ is consistent, i.e., $\widehat{\delta}_{N}\overset{p}{\rightarrow}\delta_{0}$
as $N\rightarrow\infty$.
\end{thm}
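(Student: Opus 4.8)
The plan is to establish part (a) first, since consistency in part (b) then follows from standard extremum-estimator arguments (uniform convergence of the criterion plus a well-separated maximum). For part (a), the natural strategy is to work with the concentrated criterion $Q_N(\theta)$ from \eqref{eq:QN} rather than $\ln L_N(\theta,\beta)$ directly, because $\beta$ enters linearly and can be profiled out exactly via \eqref{eq:gammahat}. The first step is to compute the limit $\bar{Q}(\theta)=\lim_{N\to\infty}E[Q_N(\theta)]$ in closed form. Using the block-diagonal structure of $W$, $\Omega(\theta)$, and $M_Z(\theta)$, together with the spectral decomposition $I_{m_r}-\lambda W_{m_r}=(1+\lambda/(m_r-1))I_{m_r}^{\ast}+(1-\lambda)J_{m_r}^{\ast}$ and the matrix identities from Appendix \ref{subsec:matrix properties}, each term in $Q_N(\theta)$ decomposes into a "within" piece living in the range of $I_{m_r}^{\ast}$ and a "between" piece living in the range of $J_{m_r}^{\ast}$. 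Taking expectations under Assumptions \ref{assume:epsilon}--\ref{assume:alpha} and grouping by $(m,j)$ via Assumptions \ref{assume:n} and \ref{assume:z}(b), the limit $\bar{Q}(\theta)$ becomes a finite sum over $m\in\{2,\dots,\bar M\}$ and $j\in\{1,\dots,J\}$ with weights $\omega_{m,j}^{\ast}$ and the limiting moment matrices $\ddot\varkappa_{m,j},\bar\varkappa_{m,j},\bar z_{m,j}$.

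The second step is to verify that $\delta_0$ — equivalently, that $\theta_0$ maximizes $\bar{Q}(\theta)$ and that $\beta_0$ is recovered by the profiled-out formula evaluated at $\theta_0$ — is the unique maximizer. By the information inequality, $\bar{Q}(\theta)\le\bar{Q}(\theta_0)$ always holds (the Gaussian pseudo-likelihood has its population maximum at the true second-moment structure), so the work is in the \emph{uniqueness}. This is where the identification assumptions enter, and it is the analog at the population level of Lemma \ref{lem:ID_1}: one shows that if $\bar{Q}(\theta)=\bar{Q}(\theta_0)$ then all the "within" and "between" variance restrictions $var_r^w$ and $var_r^b$ of \eqref{eq:varw}--\eqref{eq:varb} must match at the $(m,j)$-combinations with $\omega_{m,j}^{\ast}>0$, and the exogeneity conditions force $\beta=\beta_0$. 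Under Assumption \ref{assume:id}(a) two distinct group sizes sharing a common $\sigma_{\epsilon0,j}^2$ pin down $\lambda_0$ and $\sigma_{\alpha0}^2$ exactly as in \eqref{eq:Wald_within}; under \ref{assume:id}(b) two categories of equal size with distinct error variances do the same via \eqref{eq:Wald_L-O-M}; and Assumption \ref{assume:n}(c) guarantees $\omega_j^{\ast}>0$ so every remaining $\sigma_{\epsilon0,j}^2$ is identified from its own within-moment. The monotonicity of the relevant maps in $\lambda$ over $\Lambda\subset(-1,1)$, already noted after \eqref{eq:Wald_L-O-M}, delivers a unique root. Care is needed to handle the boundary case $\sigma_{\alpha0}^2=0$, which Assumption \ref{assume:alpha} permits.

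For part (b), the plan is the textbook route: (i) uniform convergence $\sup_{\theta\in\Theta}|Q_N(\theta)-\bar{Q}(\theta)|\overset{p}{\to}0$, which follows from continuity of $Q_N$ in $\theta$ on the compact set $\Theta$, the uniform boundedness of $Z$ and of $\Omega(\theta)^{-1}$ (via $p,s$ bounded and bounded away from zero on $\Theta$), the uniform convergence of quadratic forms $N^{-1}Z'A_N(\theta)Z$ noted in the text preceding Assumption \ref{assume:id}, and a law of large numbers for the quadratic forms in $U$ — here the block-diagonal structure makes the summands independent across $r$ with uniformly bounded $(2+\eta)$-moments by Assumptions \ref{assume:epsilon} and \ref{assume:alpha}, so a Markov-type argument or a standard LLN for independent non-identically-distributed arrays applies; (ii) combine with the well-separated maximum from part (a) to conclude $\hat\theta_N\overset{p}{\to}\theta_0$; (iii) since $\hat\beta_N=\hat\beta_N(\hat\theta_N)$ is a continuous function of $\hat\theta_N$ and of $N^{-1}Z'\Omega(\theta)^{-1}Z$ and $N^{-1}Z'\Omega(\theta)^{-1}(I-\lambda W)Y$, which converge in probability uniformly, the continuous mapping theorem gives $\hat\beta_N\overset{p}{\to}\beta_0$.

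I expect the \textbf{main obstacle} to be step two of part (a): translating the finitely-many population moment matches into uniqueness of $\theta_0$ and $\beta_0$ over the full compact parameter space, rather than merely local identification. The delicate points are (1) showing the exogeneity/between-moment conditions for $Z$ are compatible with only Assumption \ref{assume:z}(c)'s rank condition holding at a \emph{single} $(m,j)$ pair — one must argue that pooling that pair with the variance restrictions still separates $\beta$ from the remaining parameters; (2) ensuring the argument does not secretly require $\sigma_{\alpha0}^2>0$; and (3) verifying that the weighting by $\omega_{m,j}^{\ast}$ does not accidentally annihilate the identifying variation supplied by Assumption \ref{assume:id} — i.e., that the two pairs guaranteed there genuinely survive in the limit. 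Everything else is bookkeeping with the $pI_m^{\ast}+sJ_m^{\ast}$ calculus.
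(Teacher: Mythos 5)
Your proposal follows essentially the same route as the paper: concentrate out $\beta$, compute $\lim_N E[Q_N(\theta)]$ in closed form via the $pI_m^{\ast}+sJ_m^{\ast}$ calculus grouped by $(m,j)$, obtain the population inequality from the Gaussian pseudo-likelihood (the paper's $\ln|G|-\tr G\leqslant -m$ argument), reduce uniqueness to the moment conditions of Lemma \ref{lem:ID_1} under Assumption \ref{assume:id}, and then get consistency from uniform convergence of the linear-quadratic forms plus identifiable uniqueness. The obstacles you flag (single-pair rank condition in Assumption \ref{assume:z}(c), the $\sigma_{\alpha0}^{2}=0$ boundary, survival of the identifying $(m,j)$ pairs in the limit) are exactly the points the paper's Appendix D and Lemmas \ref{lem:Aux1} and \ref{lem:ID_1} dispose of, so the plan is sound.
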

A detailed proof of the theorem is given in Appendices \ref{sec:Proof-of-cons}
and \ref{sec:Proof-of-cons-2}. As can be seen from the proof, the
argumentation that ensures part (a) of the theorem is analogous to
the argumentation used in establishing Lemma 2.1. Here is a sketch
of the proof to provide some intuition. The limiting expected value
of the concentrated log likelihood function $Q_{N}(\theta)$ is 
\[
\bar{Q}^{\ast}(\theta)=C^{\ast}+\frac{1}{2m^{\ast}}\sum_{j=1}^{J}\sum_{m=2}^{\bar{M}}\omega_{m,j}^{\ast}g(m,j,\theta)+Q^{(2)\ast}(\theta),
\]
where $C^{\ast}$ is a constant term, $g(m,j,\theta)=\ln|G(m,j,\theta)|-\tr G(m,j,\theta)$
with 
\[
G(m,j,\theta)=\frac{\sigma_{\epsilon0,j}^{2}}{\sigma_{\epsilon,j}^{2}}\left(\frac{m-1+\lambda}{m-1+\lambda_{0}}\right)^{2}I_{m}^{\ast}+\frac{(\sigma_{\epsilon0,j}^{2}+m\sigma_{\alpha0}^{2})}{(\sigma_{\epsilon,j}^{2}+m\sigma_{\alpha}^{2})}\left(\frac{1-\lambda}{1-\lambda_{0}}\right)^{2}J_{m}^{\ast},
\]
 and $Q^{(2)\ast}(\theta)=\lim_{N\rightarrow\infty}\bar{Q}_{N}^{(2)}(\theta)$
where $\bar{Q}_{N}^{(2)}(\theta)=-\frac{1}{2N}\tilde{\eta}_{Z}(\theta)^{\prime}\tilde{M}_{Z}(\theta)\tilde{\eta}_{Z}(\theta)$
with $\tilde{M}_{Z}(\theta)=I-\Omega(\theta)^{-1/2}Z(Z^{\prime}\Omega(\theta)^{-1}Z)^{-1}Z^{\prime}\Omega(\theta)^{-1/2}$
and 
\[
\tilde{\eta}_{Z}(\theta)=\Omega(\theta)^{-1/2}(I-\lambda W)(I-\lambda_{0}W)^{-1}Z\beta_{0}.
\]
 It is easy to see that $\theta_{0}$ is a global maximizer of $Q^{(2)\ast}(\theta)$,
given that $-\bar{Q}_{N}^{(2)}\left(\theta\right)$ is the quadratic
form of an idempotent and thus positive semi-definite matrix, and
$Q^{(2)\ast}(\theta_{0})=0$. However, this does not ensure that $\theta_{0}$
is a unique global maximizer. Identification thus comes from $\sum_{j=1}^{J}\sum_{m=2}^{\bar{M}}\omega_{m,j}^{\ast}g(m,j,\theta)$.
Note that for any symmetric positive definite $m\times m$ matrix
$A$, $\ln|A|-\tr(A)\leqslant-m$ with equality if and only if $A$
is an identity matrix.\footnote{\label{footnote:max}To see this, note that under the maintained assumptions
the eigenvalues of $A$, say, $\lambda_{i}$, are positive and $\ln\left\vert A\right\vert -\tr\left[A\right]=\sum_{i=1}^{m}\left[\ln(\lambda_{i})-\lambda_{i}\right]$.
The claim is seen to hold by observing that the function $f(x)=\ln(x)-x\leq-1$
for $x\in(0,\infty)$ with a unique maximum at $x=1$, and observing
that $A=I_{m}$ if and only if $\lambda_{i}=1$ for $i=1,\ldots,m$.} For any $m$ and $j$, $g(m,j,\theta)$ is maximized if and only
if $G(m,j,\theta)=I_{m}$, which is equivalent to $E\left[\chi_{r}(\theta)|m_{r}=m,D_{r}=j\right]=0$
with $\chi_{r}(\theta)=(\chi_{r}^{w}(\theta),\chi_{r}^{b}(\theta))$
defined in \eqref{eq:mwb}. It now follows from an asymptotic analogue
of Lemma \ref{lem:ID_1} that in either case (i) or (ii) of Assumption
\ref{assume:id}, $\theta_{0}$ is the only solution to $E\left[\chi_{r}(\theta)|m_{r}=m,D_{r}=j\right]=0$
and $E\left[\chi_{r}(\theta)|m_{r}=m^{\prime},D_{r}=j^{\prime}\right]=0$.
Thus for any $\theta\neq\theta_{0}$,
\[
\min\left(g(m,j,\theta_{0})-g(m,j,\theta),g(m^{\prime},j^{\prime},\theta_{0})-g(m^{\prime},j^{\prime},\theta)\right)>0.
\]
 As a result, $\theta_{0}$ is the unique global maximizer of $\bar{Q}^{\ast}(\theta)$
when one of the two scenarios holds true for some $\omega_{m,j}^{\ast}>0$
and $\omega_{m^{\prime},j^{\prime}}^{\ast}>0$.

To study the asymptotic distribution of the estimator, first note
that under Assumptions \ref{assume:epsilon} and \ref{assume:alpha},
the third and fourth moments of $\epsilon_{ir}$ and $\alpha_{r}$
exist. Let $E\left[\epsilon_{ir}^{3}|D_{r}=j\right]=\mu_{\epsilon0,j}^{(3)}$,
$E\left[\epsilon_{ir}^{4}|D_{r}=j\right]=\mu_{\epsilon0,j}^{(4)}$,
$E\left[\alpha_{r}^{3}\right]=\mu_{\alpha0}^{(3)}$ and $E\left[\alpha_{r}^{4}\right]=\mu_{\alpha0}^{(4)}$.
Also, define $\Gamma_{0}$ and $\Upsilon_{0}$ as
\begin{eqnarray*}
\Gamma_{0} & = & \lim_{N\rightarrow\infty}N^{-1}E\left[-\frac{\partial^{2}lnL_{N}(\delta_{0})}{\partial\delta\partial\delta^{\prime}}\right],\\
\Upsilon_{0} & = & \lim_{N\rightarrow\infty}N^{-1}E\left[\frac{\partial lnL_{N}(\delta_{0})}{\partial\delta}\frac{\partial lnL_{N}(\delta_{0})}{\partial\delta^{\prime}}\right].
\end{eqnarray*}
As shown in Appendix \ref{sec:Proof-of-asydis}, the two limiting
matrices exist. Specific expressions are given in Appendix \ref{sec:Variance-Covariance-Matrix}.
When $\epsilon_{ir}$ and $\alpha_{r}$ both follow normal distributions,
$\Upsilon_{0}=\Gamma_{0}$.

The next lemma shows that $\Gamma_{0}$ is p.d. under the maintained
assumptions. The lemma also provides a sufficient condition on the
moments of $\varepsilon$ under which $\Upsilon_{0}$ is p.d..
\begin{lem}
\label{lem:nd} Suppose Assumptions \ref{assume:epsilon}-\ref{assume:id}
hold, then $\Gamma_{0}$ is positive definite. Under the additional
assumption that $\mu_{\varepsilon0,j}^{(4)}-\sigma_{\varepsilon0,j}^{4}>(\mu_{\epsilon0,j}^{(3)})^{2}/\sigma_{\varepsilon0,j}^{2}$
for all $j\in\{1,...,J\}$, $\Upsilon_{0}$ is also positive definite.
\end{lem}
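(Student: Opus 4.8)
The plan is to exploit the score representation $\partial\ln L_r(\delta_0)/\partial\delta=-\varphi(m_r,D_r)\chi_r(\delta_0)$ established above. Since $E[\chi_r(\delta_0)]=0$, the moment vectors are independent across groups, and the $(m_r,D_r)$ are non-stochastic, this gives at once $\Upsilon_0=\lim_{N\to\infty}N^{-1}\sum_{r=1}^{R}\varphi(m_r,D_r)\,\var[\chi_r(\delta_0)]\,\varphi(m_r,D_r)^{\prime}$. For $\Gamma_0$ I would first observe that the Hessian of the Gaussian log-likelihood $\ln L_r(\delta)$ is a polynomial of degree at most two in $Y_r$, so its expectation under the true law depends only on $E[U_r]=0$ and $E[U_rU_r^{\prime}]=\Omega_r(\theta_0)$; hence $\Gamma_0$ equals the \emph{Gaussian} information matrix, and by the information matrix equality (valid under Gaussianity) $\Gamma_0=\lim_{N\to\infty}N^{-1}\sum_{r=1}^{R}\varphi(m_r,D_r)\,\Sigma_r^{G}\,\varphi(m_r,D_r)^{\prime}$, where $\Sigma_r^{G}$ is the covariance of $\chi_r(\delta_0)$ computed as though $\alpha_r,\epsilon_r$ were Gaussian (and so involves second moments only). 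Both matrices are positive semidefinite, and the task reduces to showing that a vector $c$ with $\varphi(m_r,D_r)^{\prime}c$ in the null space of the relevant covariance for all groups of positive limiting frequency must be zero.

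For $\Gamma_0$ I would compute $\Sigma_r^{G}$, which under Gaussianity is block diagonal in the partition $(\chi_r^{w},\chi_r^{b},\chi_r^{zw},\chi_r^{zb})$: odd-moment cross terms vanish and the orthogonality of $I_{m_r}^{\ast},J_{m_r}^{\ast}$ decouples the two $Z$-blocks, with $\var(\chi_r^{w})=2(m_r-1)\sigma_{\epsilon0,D_r}^{4}>0$, $\var(\chi_r^{b})=2(\sigma_{\alpha0}^{2}+\sigma_{\epsilon0,D_r}^{2}/m_r)^{2}>0$, $\var(\chi_r^{zw})=\sigma_{\epsilon0,D_r}^{2}\ddot{Z}_r^{\prime}\ddot{Z}_r$ and $\var(\chi_r^{zb})=(\sigma_{\alpha0}^{2}+\sigma_{\epsilon0,D_r}^{2}/m_r)\,\bar{z}_r^{\prime}\bar{z}_r$. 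Grouping the sum by $(m,j)$ with Assumptions \ref{assume:n}(c) and \ref{assume:z}(b), $c^{\prime}\Gamma_0c=0$ forces, for every $(m,j)$ with $\omega_{m,j}^{\ast}>0$, that the $\chi^{w}$- and $\chi^{b}$-entries of $\varphi(m,j)^{\prime}c$ vanish and the $\chi^{zw}$-, $\chi^{zb}$-entries lie in $\ker\ddot{\varkappa}_{m,j}$, $\ker\bar{\varkappa}_{m,j}$. From the rows of $\varphi(m_r,D_r)$ in \eqref{eq:varphim}, the first two are two linear equations in $(c_{\lambda},c_{\sigma_{\alpha}^{2}},c_{\sigma_{\epsilon,j}^{2}})$ and the last two involve only $c_{\lambda}$ and $c_{\beta}$. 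Now invoke Assumption \ref{assume:id}: in case (a) (a common variance but $m\ne m^{\prime}$), using the $\chi^{w}$-equations to eliminate the $c_{\sigma_{\epsilon,\cdot}^{2}}$'s turns the two $\chi^{b}$-equations into a $2\times2$ system in $(c_{\lambda},c_{\sigma_{\alpha}^{2}})$ whose determinant is a nonzero multiple of $m^{\prime}-m$, because $\lambda_0\in(-1,1)$ and $m,m^{\prime}\ge2$ keep every relevant denominator strictly positive; in case (b) ($m=m^{\prime}$, different variances) subtracting the two $\chi^{b}$-equations, after inserting the corresponding $\chi^{w}$-relations, forces $c_{\lambda}=0$ at once. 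Either way $c_{\lambda}=c_{\sigma_{\alpha}^{2}}=0$, hence $c_{\sigma_{\epsilon,j}^{2}}=0$ for every $j$ (each $j$ has an $m$ with $\omega_{m,j}^{\ast}>0$ by Assumption \ref{assume:n}(c)); with $c_{\lambda}=0$ the remaining conditions read $c_{\beta}\in\ker\ddot{\varkappa}_{m,j}\cap\ker\bar{\varkappa}_{m,j}$, and Assumption \ref{assume:z}(c) supplies a pair with $\ddot{\varkappa}_{m,j}+\bar{\varkappa}_{m,j}$ positive definite, forcing $c_{\beta}=0$. Thus $\Gamma_0\succ0$.

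For $\Upsilon_0$ the same scheme is run with the true covariance $\var[\chi_r(\delta_0)]$, which is no longer block diagonal: $\chi_r^{w}$ correlates with $\chi_r^{b}$ through $\mu_{\epsilon0,D_r}^{(4)}-3\sigma_{\epsilon0,D_r}^{4}$, both correlate with $\chi_r^{zb}$ through the third moments $\mu_{\epsilon0,D_r}^{(3)}$ and $\mu_{\alpha0}^{(3)}$, while $\chi_r^{zw}$ remains orthogonal to the rest. The only way the extra terms could enlarge $\ker\var[\chi_r(\delta_0)]$ beyond its Gaussian form is for a nonzero quadratic-in-$\epsilon$ piece of $v^{\prime}\chi_r(\delta_0)$ to be offset by linear pieces — which happens exactly when $\epsilon_{ir}^{2}$ is an affine function of $\epsilon_{ir}$, i.e.\ when $\var(\epsilon_{ir}^{2}-b\,\epsilon_{ir})=0$ for some $b$; minimizing over $b$, this is ruled out precisely by $\mu_{\epsilon0,j}^{(4)}-\sigma_{\epsilon0,j}^{4}>(\mu_{\epsilon0,j}^{(3)})^{2}/\sigma_{\epsilon0,j}^{2}$ (the strict form of the Cauchy--Schwarz bound $(\mu_{\epsilon0,j}^{(3)})^{2}\le\sigma_{\epsilon0,j}^{2}(\mu_{\epsilon0,j}^{(4)}-\sigma_{\epsilon0,j}^{4})$). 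Granting this nondegeneracy, $\ker\var[\chi_r(\delta_0)]$ again consists only of $\chi^{zw}$-, $\chi^{zb}$-directions tied to degenerate $\ddot{Z}_r$, $\bar{z}_r$, and the elimination argument of the previous paragraph carries over verbatim to yield $c=0$ and hence $\Upsilon_0\succ0$.

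The main obstacle I expect is the $\Upsilon_0$ part: carefully dissecting the non-Gaussian $\var[\chi_r(\delta_0)]$ — the cross-dependence of $\ddot{\epsilon}_r$ and $\bar{\epsilon}_r$, the third-moment cross terms, and the edge case $\sigma_{\alpha0}^{2}=0$ — to confirm that the stated fourth-moment condition is exactly what prevents a degenerate direction from landing inside $\mathrm{range}\,\varphi(m,j)^{\prime}$. The $\Gamma_0$ bookkeeping — aligning the rows of $\varphi(m_r,D_r)$ with the two cases of Assumption \ref{assume:id} and checking the $2\times2$ determinant is nonzero using $\lambda_0\in(-1,1)$, $m_r\ge2$ — is routine once set up.
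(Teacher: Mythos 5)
Your proposal follows essentially the same route as the paper: write $\Gamma_{0}=\sum_{m,j}\varphi(m,j)\bar{\Psi}_{G}(m,j)\varphi(m,j)^{\prime}$ and $\Upsilon_{0}=\sum_{m,j}\varphi(m,j)\bar{\Psi}(m,j)\varphi(m,j)^{\prime}$, characterize the null spaces of the $\bar{\Psi}$-blocks, and use Assumptions \ref{assume:id} and \ref{assume:z} to show that no nonzero vector can have $\varphi(m,j)^{\prime}c$ in all of those null spaces for pairs with $\omega_{m,j}^{\ast}>0$; your explicit elimination, with the $2\times2$ determinant equal to $(m^{\prime}-m)$ times a strictly positive factor, is a correct and self-contained substitute for the paper's rank Lemma \ref{lem:rank}, and your Gaussian block variances match $\bar{\Psi}_{G}(m,j)$ exactly. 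The one piece you leave at sketch level --- that the strict inequality $\mu_{\epsilon0,j}^{(4)}-\sigma_{\epsilon0,j}^{4}>(\mu_{\epsilon0,j}^{(3)})^{2}/\sigma_{\epsilon0,j}^{2}$ keeps the null space of the non-Gaussian $\bar{\Psi}(m,j)$ no larger than in the Gaussian case --- is precisely the paper's Lemma \ref{lem:Psi}, whose proof the paper itself defers to the Online Appendix; your Cauchy--Schwarz/two-point-distribution mechanism is the right reason the condition suffices, but a complete argument would still have to verify that the third-moment coupling through $\bar{z}_{m,j}$ and a possibly degenerate $\alpha_{r}$ (including $\sigma_{\alpha0}^{2}=0$) cannot generate an extra null direction, as you yourself flag.
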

The proof of the lemma is in Appendix \ref{sec:Variance-Covariance-Matrix}.
Note that from Holder's inequality we have $\mu_{\varepsilon0,j}^{(4)}-\sigma_{\varepsilon0,j}^{4}\geqslant(\mu_{\epsilon0,j}^{(3)})^{2}/\sigma_{\varepsilon0,j}^{2}$.
The sufficient condition is mild in that it only postulates that the
inequality holds strongly. Of course, the condition holds, e.g., for
the Gaussian distribution.

With both $\Upsilon_{0}$ and $\Gamma_{0}$ ensured to be positive
definite, we have the following theorem.
\begin{thm}
\label{theorem:AsymptoticNormlity}Under Assumptions \ref{assume:epsilon}-\ref{assume:id},
and assuming that \textup{$\delta_{0}$ is in the interior of the
parameter space $\Theta$ defined in Assumption \ref{assume:lambda}
and that} $\mu_{\varepsilon0,j}^{(4)}-\sigma_{\varepsilon0,j}^{4}>(\mu_{\epsilon0,j}^{(3)})^{2}/\sigma_{\varepsilon0,j}^{2}$
for $j\in\{1,...,J\}$, we have $\sqrt{N}(\hat{\delta}_{N}-\delta_{0})\xrightarrow{d}N(0,\Gamma_{0}^{-1}\Upsilon_{0}\Gamma_{0}^{-1})$
as $N\rightarrow\infty$.
\end{thm}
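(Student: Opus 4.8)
The plan is to run the standard mean--value expansion argument for extremum estimators, exploiting the fact that, since interactions are confined within groups and $Z_r,m_r,D_r$ are nonstochastic, the score decomposes into a sum of \emph{independent} (across $r$) linear--quadratic forms of uniformly bounded dimension. First I would invoke Theorem \ref{theorem:Consistency} to get $\hat\delta_N\xrightarrow{p}\delta_0$; combined with the hypothesis that $\delta_0$ is interior to $\Theta$, this puts $\hat\delta_N$ in the interior with probability approaching one, so the first--order condition $\partial\ln L_N(\hat\delta_N)/\partial\delta=0$ holds eventually. A coordinatewise mean--value expansion of the score about $\delta_0$ then gives
\[
0=\frac{1}{\sqrt N}\frac{\partial\ln L_N(\delta_0)}{\partial\delta}+\left(\frac{1}{N}\frac{\partial^2\ln L_N(\bar\delta_N)}{\partial\delta\partial\delta'}\right)\sqrt N(\hat\delta_N-\delta_0),
\]
with $\bar\delta_N$ on the segment between $\hat\delta_N$ and $\delta_0$. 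From here the result follows from (i) a central limit theorem for $N^{-1/2}\partial\ln L_N(\delta_0)/\partial\delta$, (ii) a uniform law of large numbers for the Hessian together with consistency, and (iii) Slutsky's theorem, given that $\Gamma_0$ and $\Upsilon_0$ are positive definite by Lemma \ref{lem:nd} under the maintained moment condition.

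For step (i) I would use the representation $\partial\ln L_N(\delta_0)/\partial\delta=-\sum_{r=1}^R\varphi(m_r,D_r)\chi_r(\delta_0)$ from Section \ref{sec:General-Model}. Writing $\xi_r:=\varphi(m_r,D_r)\chi_r(\delta_0)$, the $\xi_r$ are independent across $r$ because $Y_r$ depends only on $(Z_r,\alpha_r,\epsilon_r)$, and $E[\xi_r]=0$ by iterated expectations from $E[\chi_r(\delta_0)\mid m_r,D_r]=0$. Each $\xi_r$ is a linear--quadratic form in $\epsilon_r,\alpha_r$, which have more than four finite moments (Assumptions \ref{assume:epsilon}--\ref{assume:alpha}), of dimension $m_r\le\bar M$ (Assumption \ref{assume:n}(b)) with coefficients uniformly bounded (boundedness of $Z$ in Assumption \ref{assume:z}(a), compactness of $\Theta$, and uniform nonsingularity of $I-\lambda W$); hence $\sup_r E\|\xi_r\|^{2+\eta}<\infty$ for some $\eta>0$, and $N^{-1}\sum_{r=1}^R\var(\xi_r)\to\Upsilon_0$, where existence of the limit uses the convergence of the type/size shares $\omega_{m,j}^{\ast}$ in Assumption \ref{assume:n}(c). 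The Lyapunov condition for a triangular array of independent mean--zero vectors is then immediate, and the Cram\'er--Wold device together with the Lyapunov CLT yields $N^{-1/2}\partial\ln L_N(\delta_0)/\partial\delta\xrightarrow{d}N(0,\Upsilon_0)$. Equivalently one could cite a CLT for linear--quadratic forms such as \citet{kelejian_asymptotic_2001,kelejian_specification_2010}, but the boundedness of group size makes the independent--array CLT sufficient.

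For step (ii), the Hessian $N^{-1}\partial^2\ln L_N(\delta)/\partial\delta\partial\delta'$ is again an average of independent terms that are smooth in $\delta$ on the compact set $\Theta$ (the log--determinant and quadratic--form pieces, with $I-\lambda W$ bounded away from singularity by Assumption \ref{assume:lambda}, and the extra terms arising from concentrating out $\beta$ controlled by Assumption \ref{assume:z}). A Jennrich--type uniform law of large numbers then gives convergence in probability, uniformly over $\Theta$, to a nonstochastic limit $\bar H(\delta)$ that is continuous at $\delta_0$ with $\bar H(\delta_0)=-\Gamma_0$; combined with $\bar\delta_N\xrightarrow{p}\delta_0$ this yields $N^{-1}\partial^2\ln L_N(\bar\delta_N)/\partial\delta\partial\delta'\xrightarrow{p}-\Gamma_0$, which is invertible by Lemma \ref{lem:nd}. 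Solving the expansion, $\sqrt N(\hat\delta_N-\delta_0)=-\bigl(N^{-1}\partial^2\ln L_N(\bar\delta_N)/\partial\delta\partial\delta'\bigr)^{-1} N^{-1/2}\partial\ln L_N(\delta_0)/\partial\delta$, and Slutsky's theorem delivers $\sqrt N(\hat\delta_N-\delta_0)\xrightarrow{d}N(0,\Gamma_0^{-1}\Upsilon_0\Gamma_0^{-1})$, the sandwich collapsing to $\Gamma_0^{-1}$ in the Gaussian case where $\Upsilon_0=\Gamma_0$.

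The hard part will be bookkeeping rather than any deep probabilistic obstacle: deriving the explicit expressions for $\Gamma_0$ and $\Upsilon_0$ (with the third-- and fourth--moment terms $\mu_{\epsilon0,j}^{(3)},\mu_{\epsilon0,j}^{(4)},\mu_{\alpha0}^{(3)},\mu_{\alpha0}^{(4)}$ entering $\Upsilon_0$), verifying term by term that $N^{-1}\sum_r\var(\xi_r)\to\Upsilon_0$ after concentrating out $\beta$, and checking the uniform convergence of the Hessian including the $\beta$--concentration terms; all of this is deferred to the appendices. The conceptual simplification relative to the general Cliff--Ord/spatial setting is that boundedness of group sizes and cross--group independence reduce the score to a sum of independent bounded--moment summands, so no genuine spatial CLT machinery is required.
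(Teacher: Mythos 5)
Your proposal is correct and shares the paper's overall skeleton (consistency plus interiority, coordinatewise mean--value expansion of the unconcentrated score, CLT for the normalized score, convergence of the Hessian at the between value, Slutsky, with nonsingularity of $\Gamma_{0}$ and $\Upsilon_{0}$ from Lemma \ref{lem:nd}), but the probabilistic implementation differs. The paper (Appendix \ref{sec:Proof-of-asydis}) never passes through the group-level representation $\partial\ln L_{r}(\delta_{0})/\partial\delta=-\varphi(m_{r},D_{r})\chi_{r}(\delta_{0})$ (which is only derived in the Online Appendix); instead it writes each score and Hessian entry as a full-sample linear--quadratic form $U^{\prime}A_{N}(\theta)U+U^{\prime}B_{N}(\theta)Z\beta_{0}+C$ with uniformly bounded row and column sums, and then invokes its purpose-built Lemmas \ref{lemma:moment}, \ref{lem:clt} and \ref{lem:Uniform-convergence} (the \citet{kelejian_specification_2010} machinery) for the CLT and the uniform convergence of the Hessian. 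Your route instead exploits block-diagonality of $W$, nonstochastic $Z$, and bounded group size to treat the score as a sum of independent, uniformly $(2+\eta)$-integrable group-level vectors and apply Lyapunov plus Cram\'er--Wold; this is more elementary and in fact mirrors how $\Upsilon_{0}$ is ultimately expressed in Appendix \ref{sec:Variance-Covariance-Matrix} as a sum over $(m,j)$ cells, whereas the paper's lemmas are more general (they would survive non-block-diagonal quadratic forms). Two small points to tighten: the limit $N^{-1}\sum_{r}\operatorname{Var}(\xi_{r})\rightarrow\Upsilon_{0}$ uses not only the share convergence in Assumption \ref{assume:n}(c) but also the limits $\ddot{\varkappa}_{m,j},\bar{\varkappa}_{m,j},\bar{z}_{m,j}$ of Assumption \ref{assume:z}(b); and your Hessian step should not appeal to uniformity ``over $\Theta$'' alone, since the Hessian also depends on $\beta$, which is not confined to a compact set --- the paper handles this by noting that $\beta$ enters the linear--quadratic forms polynomially and combining uniform convergence in $\theta$ with $\check{\beta}_{N}\overset{p}{\rightarrow}\beta_{0}$ and Slutsky (also note there are no ``$\beta$-concentration terms'' in the Hessian once you expand the unconcentrated likelihood, as your displayed expansion already does). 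Neither point is a substantive gap.
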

The proof of the theorem is given in Appendix \ref{sec:Proof-of-asydis}.
We next discuss consistent estimators for the matrices $\Gamma_{0}$
and $\Upsilon_{0}$ composing the asymptotic variance covariance matrix.
An inspection shows that $\Gamma_{0}=\Gamma(\delta_{0},s_{0})$ and
$\Upsilon_{0}=\Upsilon(\delta_{0},\mu_{\alpha0}^{(3)},\mu_{\alpha0}^{(4)},\mu_{\epsilon0,1}^{(3)},...,\mu_{\epsilon0,J}^{(3)},\mu_{\epsilon0}^{(4)}...,\mu_{\epsilon0,J}^{(4)},s_{0})$,
with $s_{0}=(s_{0,1},...,s_{0,J},m^{\ast})$ and 
\[
s_{0,j}=[\ddot{\varkappa}_{2,j},...\ddot{\varkappa}_{\bar{M},j},\bar{\varkappa}_{2,j},...,\bar{\varkappa}_{\bar{M},j},\bar{z}_{2,j},...,\bar{z}_{\bar{M},j},\omega_{2,j}^{\ast},...,\omega_{\bar{M},j}^{\ast}],
\]
and where the functions $\Gamma(.)$ and $\Upsilon(.)$ are continuous.
Since the functions $\Gamma(.)$ and $\Upsilon(.)$ are continuous,
consistent estimators for $\Gamma_{0}$ and $\Upsilon_{0}$ can be
readily obtained by replacing the arguments of those functions by
consistent estimators thereof. Let $\hat{s}_{N}$ be the sample analogue
of $s_{0}$, then clearly $\hat{s}_{N}\overset{p}{\rightarrow}s_{0}$
in light of Assumptions \ref{assume:n} and \ref{assume:z}. Recall
further that by Theorem \ref{theorem:Consistency} the QMLE estimator
$\hat{\delta}_{N}$ is consistent for $\delta_{0}$, and suppose we
have consistent estimators for $\mu_{\alpha0}^{(3)}$, $\mu_{\alpha0}^{(4)}$,
$\mu_{\epsilon0,1}^{(3)},...,\mu_{\epsilon0,J}^{(3)}$, and $\mu_{\epsilon0,1}^{(4)}...,\mu_{\epsilon0,J}^{(4)}$,
denoted as $\hat{\mu}_{\alpha}^{(3)},\hat{\mu}_{\alpha}^{(4)},\hat{\mu}_{\epsilon,1}^{(3)},...\hat{\mu}_{\epsilon,J}^{(3)},\hat{\mu}_{\epsilon,1}^{(4)},...\hat{\mu}_{\epsilon,J}^{(4)}$.
Now define $\hat{\Gamma}_{N}$ and $\hat{\Upsilon}_{N}$ as 
\begin{align}
\hat{\Gamma}_{N} & =\Gamma(\hat{\delta}_{N},\hat{s}_{N}),\label{eq:Gamma_hat}\\
\hat{\Upsilon}_{N} & =\Upsilon(\hat{\delta}_{N},\hat{\mu}_{\alpha}^{(3)},\hat{\mu}_{\alpha}^{(4)},\hat{\mu}_{\epsilon,1}^{(3)},...\hat{\mu}_{\epsilon,J}^{(3)},\hat{\mu}_{\epsilon,1}^{(4)},...\hat{\mu}_{\epsilon,J}^{(4)},\hat{s}_{N}),\label{eq:Upsilon_hat}
\end{align}
then it follows from Slutsky's theorem that $\hat{\Gamma}_{N}$ and
$\hat{\Upsilon}_{N}$ are consistent estimators for $\Gamma_{0}$
and $\Upsilon_{0}$. A consistent estimator for the variance covariance
matrix of the limiting distribution is given by $\hat{\Gamma}_{N}^{-1}\hat{\Upsilon}_{N}\hat{\Gamma}_{N}^{-1}$.

The above discussion assumed the availability of consistent estimators
for the third and fourth moment of the error components. In the following
we now define consistent estimators for $\mu_{\alpha0}^{(3)}$, $\mu_{\alpha0}^{(4)}$
and $\mu_{\epsilon0,j}^{(3)}$, $\mu_{\epsilon0,j}^{(4)}$, $j=1,...,J$.
To motivate the estimators consider the composite error term for individual
$i$ in group $r$, $u_{ir}=\alpha_{r}+\epsilon_{ir}$, and let $\bar{u}_{r}=\frac{1}{m_{r}}\sum_{i=1}^{m_{r}}u_{ir}$,
and $\ddot{u}_{ir}=u_{ir}-\bar{u}_{r}$ . Then $\bar{u}_{r}=\alpha_{r}+\bar{\epsilon}_{r}$
and $\ddot{u}_{ir}=\epsilon_{ir}-\bar{\epsilon}_{r}$, where $\bar{\epsilon}_{r}$
is the group mean of $\epsilon_{ir}$. It is readily verified that
under Assumptions \ref{assume:epsilon} and \ref{assume:alpha}, we
have 
\[
E\left[\ddot{u}_{ir}^{3}\right]=(1-\frac{3}{m_{r}}+\frac{2}{m_{r}^{2}})\mu_{\epsilon0,D_{r}}^{(3)},
\]

\[
E\left[\ddot{u}_{ir}^{2}\bar{u}_{r}\right]=\frac{(m_{r}-1)}{m_{r}^{2}}\mu_{\epsilon0,D_{r}}^{(3)},
\]

\[
E\left[\bar{u}_{r}^{3}\right]=\mu_{\alpha0}^{(3)}+\frac{\mu_{\epsilon0,D_{r}}^{(3)}}{m_{r}^{2}},
\]

\[
E\left[\ddot{u}_{ir}^{4}\right]=\frac{m_{r}^{3}-4m_{r}^{2}+6m_{r}-3}{m_{r}^{3}}\mu_{\epsilon0,D_{r}}^{(4)}+\frac{3(m_{r}-1)(2m_{r}-3)}{m_{r}^{3}}\sigma_{\epsilon0,D_{r}}^{4},
\]

\[
E\left[\bar{u}_{r}^{4}\right]=\mu_{\alpha0}^{(4)}+\frac{1}{m_{r}^{3}}\mu_{\epsilon0,D_{r}}^{(4)}+3\frac{m_{r}-1}{m_{r}^{3}}\sigma_{\epsilon0,D_{r}}^{4}+\frac{6}{m_{r}}\sigma_{\alpha0}^{2}\sigma_{\epsilon0,D_{r}}^{2}.
\]
Next define for group $r$ ,

\[
f_{\epsilon,r}^{(3)}=\begin{cases}
\frac{1}{m_{r}}\sum_{i=1}^{m_{r}}\ddot{u}_{ir}^{3}/(1-\frac{3}{m_{r}}+\frac{2}{m_{r}^{2}}) & m_{r}\geqslant3\\
\frac{1}{m_{r}}\sum_{i=1}^{m_{r}}\ddot{u}_{ir}^{2}\bar{u}_{r}/(\frac{1}{m_{r}}-\frac{1}{m_{r}^{2}}) & m_{r}=2
\end{cases},
\]

\[
f_{\alpha,r}^{(3)}=\bar{u}_{r}^{3}-f_{\epsilon,r}^{(3)}/m_{r}^{2}\text{,}
\]
\[
f_{\epsilon,r}^{(4)}=\frac{m_{r}^{3}}{m_{r}^{3}-4m_{r}^{2}+6m_{r}-3}\text{\text{[(\ensuremath{\frac{1}{m_{r}}\sum_{i=1}^{m_{r}}\ddot{u}_{ir}^{4}})-\ensuremath{\frac{3(m_{r}-1)(2m_{r}-3)}{m_{r}^{3}}\sigma_{\epsilon0,D_{r}}^{4}}]}, }
\]

\[
f_{\alpha,r}^{(4)}=\bar{u}_{r}^{4}-f_{\epsilon,r}^{(4)}/m_{r}^{3}-\frac{3(m_{r}-1)}{m_{r}^{3}}\sigma_{\epsilon0,D_{r}}^{4}-\frac{6}{m_{r}}\sigma_{\alpha0}^{2}\sigma_{\epsilon0,D_{r}}^{2}.
\]
Then $E\left[f_{\epsilon,r}^{(3)}\right]=\mu_{\epsilon0,D_{r}}^{(3)}$,
$E\left[f_{\alpha,r}^{(3)}\right]=\mu_{\alpha0}^{(3)}$, $E\left[f_{\epsilon,r}^{(4)}\right]=\mu_{\epsilon0,D_{r}}^{(4)}$,
$E\left[f_{\alpha,r}^{(4)}\right]=\mu_{\alpha0}^{(4)}$. By Lemma
\ref{lem:moment34}(a), $\frac{1}{R_{j}}\sum_{r=1}^{R}1(D_{r}=j)f_{\epsilon,r}^{(l)}\overset{p}{\rightarrow}\mu_{\epsilon0,j}^{(l)}$
and $\frac{1}{R}\sum_{r=1}^{R}f_{\alpha,r}^{(l)}\overset{p}{\rightarrow}\mu_{\alpha0}^{(l)}$
for $l=3,4$ and $j=1,...,J$ as $R$ goes to infinity.

To construct feasible counterparts of these estimates, consider the
estimated disturbances $\hat{u}_{ir}=y_{ir}-\hat{\lambda}\bar{y}_{(-i)r}-z_{ir}\hat{\beta}$,
where $\hat{\lambda}$ and $\hat{\beta}$ denote the QML estimators,
and let $\hat{\bar{u}}_{r}=\frac{1}{m_{r}}\sum_{i=1}^{m_{r}}\hat{u}_{ir}$
and $\hat{\ddot{u}}_{ir}=\hat{u}_{ir}-\hat{\bar{u}}_{ir}$. Feasible
counterparts, say, $\hat{f}_{\epsilon,r}^{(3)}$, $\hat{f}_{\alpha,r}^{(3)}$,
$\hat{f}_{\epsilon,r}^{(4)}$, $\hat{f}_{\alpha,r}^{(4)}$ of $f_{\epsilon,r}^{(3)}$,
$f_{\alpha,r}^{(3)}$, $f_{\epsilon,r}^{(4)}$, $f_{\alpha,r}^{(4)}$
can now be defined by replacing $\bar{u}_{r}$ and $\ddot{u}_{ir}$
with $\hat{\bar{u}}_{r}$ and $\hat{\ddot{u}}_{ir}$, and $\sigma_{\alpha0}^{2}$
and $\sigma_{\epsilon0,j}^{2}$ with their QML estimators. Now consider
the following estimators for the third and fourth moments of the error
components: $\hat{\mu}_{\alpha}^{(3)}=\sum_{r=1}^{R}\hat{f}_{\alpha,r}^{(3)}/R$,
$\hat{\mu}_{\alpha}^{(4)}=\sum_{r=1}^{R}\hat{f}_{\alpha,r}^{(4)}/R$,
$\hat{\mu}_{\epsilon,j}^{(3)}=\sum_{r=1}^{R}1(D_{r}=j)\hat{f}_{\epsilon,r}^{(3)}/R_{j}$,
$\hat{\mu}_{\epsilon,j}^{(4)}=\sum_{r=1}^{R}1(D_{r}=j)\hat{f}_{\epsilon,r}^{(4)}/R_{j}$,
$j=1,...,J$.

The next theorem establishes that valid inference based on standardized
statistics is possible. At the core of this result is the fact that
$\hat{\Gamma}_{N}\overset{p}{\rightarrow}\Gamma_{0}$ and $\hat{\Upsilon}_{N}\overset{p}{\rightarrow}\Upsilon_{0}$
as shown in Appendix \ref{app:Proofthm}.
\begin{thm}
\label{thm:ValidInference}Under Assumptions \ref{assume:epsilon}-\ref{assume:id},
and assuming that $\mu_{\varepsilon0,j}^{(4)}-\sigma_{\varepsilon0,j}^{4}>(\mu_{\epsilon0,j}^{(3)})^{2}/\sigma_{\varepsilon0,j}^{2}$
for $j\in\{1,...,J\}$, and $\hat{\Gamma}_{N}$, $\hat{\Upsilon}_{N}$
defined in (\ref{eq:Gamma_hat}) and (\ref{eq:Upsilon_hat}) we have
$\sqrt{N}\left(\hat{\Gamma}_{N}^{-1}\hat{\Upsilon}_{N}\hat{\Gamma}_{N}^{-1}\right)^{-1/2}(\hat{\delta}_{N}-\delta_{0})\xrightarrow{d}N(0,I)$
as $N\rightarrow\infty$.
\end{thm}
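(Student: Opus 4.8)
The plan is to obtain Theorem~\ref{thm:ValidInference} as a routine consequence of the asymptotic normality established in Theorem~\ref{theorem:AsymptoticNormlity}, the consistency of the plug-in variance estimators, and Slutsky's theorem. Writing
\[
\sqrt{N}\left(\hat{\Gamma}_{N}^{-1}\hat{\Upsilon}_{N}\hat{\Gamma}_{N}^{-1}\right)^{-1/2}(\hat{\delta}_{N}-\delta_{0})=\left(\hat{\Gamma}_{N}^{-1}\hat{\Upsilon}_{N}\hat{\Gamma}_{N}^{-1}\right)^{-1/2}\cdot\sqrt{N}(\hat{\delta}_{N}-\delta_{0}),
\]
the second factor converges in distribution to $N(0,\Gamma_{0}^{-1}\Upsilon_{0}\Gamma_{0}^{-1})$ by Theorem~\ref{theorem:AsymptoticNormlity}, so it suffices to show that the first factor converges in probability to the constant matrix $(\Gamma_{0}^{-1}\Upsilon_{0}\Gamma_{0}^{-1})^{-1/2}$. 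Slutsky's theorem then delivers the limit $N\big(0,(\Gamma_{0}^{-1}\Upsilon_{0}\Gamma_{0}^{-1})^{-1/2}(\Gamma_{0}^{-1}\Upsilon_{0}\Gamma_{0}^{-1})(\Gamma_{0}^{-1}\Upsilon_{0}\Gamma_{0}^{-1})^{-1/2}\big)=N(0,I)$.

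The substantive step is to verify that the feasible moment estimators $\hat{\mu}_{\alpha}^{(3)}$, $\hat{\mu}_{\alpha}^{(4)}$, $\hat{\mu}_{\epsilon,j}^{(3)}$, $\hat{\mu}_{\epsilon,j}^{(4)}$ are consistent for their population counterparts. First, the infeasible versions built from the true composite errors $u_{ir}$ converge in probability to $\mu_{\alpha0}^{(3)},\mu_{\alpha0}^{(4)},\mu_{\epsilon0,j}^{(3)},\mu_{\epsilon0,j}^{(4)}$ by Lemma~\ref{lem:moment34}(a), which applies because group sizes are bounded (Assumption~\ref{assume:n}) and the $\epsilon_{ir},\alpha_{r}$ have $(4+\eta)$ moments (Assumptions~\ref{assume:epsilon}--\ref{assume:alpha}). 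To pass to the feasible versions I would expand $\hat{u}_{ir}=u_{ir}-(\hat{\lambda}_{N}-\lambda_{0})\bar{y}_{(-i)r}-z_{ir}(\hat{\beta}_{N}-\beta_{0})$, use $\hat{\lambda}_{N}-\lambda_{0}=o_{p}(1)$ and $\hat{\beta}_{N}-\beta_{0}=o_{p}(1)$ from Theorem~\ref{theorem:Consistency}, and bound the cross terms arising from raising $\hat{u}_{ir}$ to the third and fourth powers: each such term is a product of an $o_{p}(1)$ estimation error and a sample average of bounded quantities (entries of $Z$ are uniformly bounded by Assumption~\ref{assume:z}(a), group sizes are bounded) or of quantities that are $O_{p}(1)$ under the maintained moment assumptions (sample averages of powers of $\bar{y}_{(-i)r}$), hence all vanish in probability. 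The $\sigma^{2}$-terms entering $f_{\epsilon,r}^{(4)}$ and $f_{\alpha,r}^{(4)}$ are replaced by $\hat{\sigma}_{\alpha,N}^{2}$ and $\hat{\sigma}_{\epsilon j,N}^{2}$, which are consistent, again by Theorem~\ref{theorem:Consistency}.

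Given this, $\hat{s}_{N}\overset{p}{\rightarrow}s_{0}$ (from Assumptions~\ref{assume:n} and \ref{assume:z}), $\hat{\delta}_{N}\overset{p}{\rightarrow}\delta_{0}$ (Theorem~\ref{theorem:Consistency}), and continuity of the maps $\Gamma(\cdot)$ and $\Upsilon(\cdot)$ together with the continuous mapping theorem yield $\hat{\Gamma}_{N}\overset{p}{\rightarrow}\Gamma_{0}$ and $\hat{\Upsilon}_{N}\overset{p}{\rightarrow}\Upsilon_{0}$ for $\hat{\Gamma}_{N},\hat{\Upsilon}_{N}$ as defined in \eqref{eq:Gamma_hat} and \eqref{eq:Upsilon_hat}. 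Since $\Gamma_{0}$ is positive definite and, under the stated fourth-moment condition, $\Upsilon_{0}$ is positive definite by Lemma~\ref{lem:nd}, matrix inversion and the matrix square root are continuous in a neighborhood of $\Gamma_{0}$ and $\Upsilon_{0}$; hence $\hat{\Gamma}_{N}^{-1}\hat{\Upsilon}_{N}\hat{\Gamma}_{N}^{-1}\overset{p}{\rightarrow}\Gamma_{0}^{-1}\Upsilon_{0}\Gamma_{0}^{-1}$, which is positive definite, and $(\hat{\Gamma}_{N}^{-1}\hat{\Upsilon}_{N}\hat{\Gamma}_{N}^{-1})^{-1/2}\overset{p}{\rightarrow}(\Gamma_{0}^{-1}\Upsilon_{0}\Gamma_{0}^{-1})^{-1/2}$, completing the argument.

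The main obstacle is the second step: because the moment estimators are nonlinear (cubes and fourth powers) in the QML residuals $\hat{u}_{ir}$, one must carefully expand and account for a moderate number of cross terms, each of which needs to be shown to be a product of an $o_{p}(1)$ factor with an averaged factor that is bounded or $O_{p}(1)$. Once that bookkeeping is done, the remainder is a standard application of the continuous mapping theorem and Slutsky's theorem.
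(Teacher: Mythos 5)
Your proposal is correct and follows essentially the same route as the paper: establish $\hat{\Gamma}_{N}\overset{p}{\rightarrow}\Gamma_{0}$ and $\hat{\Upsilon}_{N}\overset{p}{\rightarrow}\Upsilon_{0}$ by combining consistency of $\hat{\delta}_{N}$ and $\hat{s}_{N}$ with consistency of the feasible third/fourth-moment estimators (the paper's Lemma \ref{Lemma:consistency_moment34}, proved exactly as you sketch by comparing the feasible residual-based averages to their infeasible counterparts via Lemma \ref{lem:moment34} and the consistency of $\hat{\lambda}_{N},\hat{\beta}_{N}$), then invoke the positive definiteness from Lemma \ref{lem:nd} so that inversion and the matrix square root are continuous, and conclude with the continuous mapping theorem, Slutsky, and Theorem \ref{theorem:AsymptoticNormlity}. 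The only cosmetic difference is that the paper organizes the residual expansion through the within/between projections ($\hat{\ddot{u}}_{ir}=\ddot{z}_{ir}\ddot{\phi}_{r}+\ddot{\varphi}_{r}\ddot{u}_{ir}$, $\hat{\bar{u}}_{r}=\bar{z}_{r}\bar{\phi}+\bar{\varphi}\bar{u}_{r}$) rather than expanding $\hat{u}_{ir}$ directly, which is a bookkeeping choice, not a different argument.
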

The proof of the theorem is in Appendix \ref{app:Proofthm}.

\section{Monte Carlo Results\label{sec:Monte_Carlo}}

We conduct Monte-Carlo (MC) experiments to assess the finite sample
properties of the quasi-maximum likelihood (QML) estimator $\hat{\delta}_{N}$.
The data generating mechanism is determined by the main model in \eqref{eq:main_orig}.
For simplicity, $x_{1,ir}$, $x_{2,ir}$ and $x_{3,ir}$ each only
includes a scalar variable. We set the true value of the parameters
to $\lambda_{0}=0.5$, $\sigma_{\alpha0}^{2}=0.25$, $\beta_{10}=1$,
$\beta_{20}=1$,$\beta_{30}=1$, and $\beta_{40}=1$, while $\sigma_{\epsilon0}^{2}=1$
in the case of homoscedasticity. The model for the data generating
process (DGP) is thus
\begin{equation}
y_{ir}=0.5\bar{y}_{(-i)r}+1+x_{1,ir}+\bar{x}_{2,(-i)r}+x_{3,r}+\alpha_{r}+\epsilon_{ir}.\label{eq:monte_model}
\end{equation}

The inputs $x_{j,ir}$ $\alpha_{r}$ and $\epsilon_{ir}$ are generated
as follows. In the case when $x_{1}=x_{2}$, $x_{1,ir}=x_{2,ir}\sim\textrm{i.i.d.}\,N(0,1)$.
In the case when $x_{1}\neq x_{2}$, $x_{1,ir}$ and $x_{2,ir}$ are
generated mutually independently, each drawn from an $\textrm{i.i.d.}\,N(0,1)$.
We then calculate the leave-out-mean $\bar{x}_{2,(-i)r}=\frac{1}{m_{r}}\sum_{j\neq i}x_{2j,r}$.
Group characteristics are drawn as $x_{3,r}\sim\textrm{i.i.d.}\,N(0,1)$.
In the case of homoscedastic normal errors in Tables \ref{table:simtab_1_0}
to \ref{table:simtab_3_1}, the idiosyncratic error terms $\epsilon_{ir}$
are i.i.d $N(0,1)$ and group effects $\alpha_{r}$ are i.i.d $N(0,0.25)$.
Both $\epsilon_{ir}$ and $\alpha_{r}$ are drawn independently of
$x_{1,ir}$, $x_{2,ir}$, $x_{3,r}$, and of each other. The dependent
variable $y_{ir}$ is calculated using Equation \eqref{eq:y_sol_2}.
In Table \ref{table:simtab_4_1}, we use homoscedastic but nonnormal
errors. In the case of the Skew normal distribution, we set the location
parameter to 0, scale to 1 and shape to $0.9/\sqrt{1-0.9^{2}}$. Therefore,
Skewness is 0.472 and Kurtosis is 3.321. In the case of the student
distribution, degrees of freedom are set to 6. Therefore, Skewness
is 0 and Kurtosis is 6. In both cases, $\alpha_{r}$ and $\epsilon_{ir}$
are independently drawn from identical distributions and then standardized
to have mean 0 and variance 0.25 and 1 respectively. In Table \ref{table:simtab_5_1},
group effects $\alpha_{r}$ are still i.i.d $N(0,0.25)$, $\epsilon_{ir}$
follow normal distributions but are allowed to be heteroscedastic.
In the first case (Columns 1-2), we randomly select half of the groups
into category 1, with $\epsilon_{ir}$ i.i.d $N(0,0.5)$. The other
half of the groups have $\epsilon_{ir}$ i.i.d $N(0,1.5)$. In the
second case (Columns 3-4), $\epsilon_{ir}$ are i.i.d $N(0,1)$. But
we randomly divide the groups into two categories and allow for heteroscedasticity
of $\epsilon_{ir}$ between categories in estimation. In the third
case (Columns 5-6), groups are randomly divided into two categories,
with $\sigma_{\epsilon r}^{2}\in\{0.5,1.5\}$ and $\epsilon_{ir}$
i.i.d $N(0,\sigma_{\epsilon r}^{2})$. In the fourth case, groups
are randomly divided into four categories with $\sigma_{\epsilon r}^{2}\in\{0.4,0.8,1.2,1.6\}$
and $\epsilon_{ir}$ i.i.d $N(0,\sigma_{\epsilon r}^{2})$.

The number of groups $R$ is selected from the set $\{50,100,200,400,800,1600\}$.
In Tables \ref{table:simtab_1_0}, \ref{table:simtab_1_1} and \ref{table:simtab_4_1},
group size $m_{r}$ is drawn from a discrete uniform distribution
$\mathcal{U}\{2,6\}$ so that the average group size is 4. Small group
sizes are motivated by applications to college room mates, friendship
networks in the Add Health data set or golf tournaments, see \citet{sacerdote_peer_2001},
\citet{goldsmith-pinkham_social_2013} and \citet{guryan_peer_2009}.
In Tables \ref{table:simtab_2_0} and \ref{table:simtab_2_1}, group
size is drawn from $\mathcal{U}\{13,25\}$. The distribution is motivated
by Project STAR where class size ranges from 13 to 25. We also consider
the case when $m_{r}$ is drawn from $\mathcal{U}\{3,5\}$, $\mathcal{U}\{4,8\}$,
$\mathcal{U}\{8,30\}$ and $\mathcal{U}\{10,22\}$ in Table \ref{table:simtab_3_1}
to examine how the distribution of group size affects the performance
of the estimator. Note that $\mathcal{U}\{3,5\}$ has the same mean
as $\mathcal{U}\{2,6\}$ but smaller variance, $\mathcal{U}\{4,8\}$
has the same variance as $\mathcal{U}\{2,6\}$ but larger mean. Meanwhile
$\mathcal{U}\{8,30\}$ has the same mean as $\mathcal{U}\{13,25\}$
but larger variance, $\mathcal{U}\{10,22\}$ has the same variance
as $\mathcal{U}\{13,25\}$ but smaller mean.

In Tables 1-\ref{table:simtab_4_1}, we compare our QML estimator
with the conditional maximum likelihood (CML) estimator of \citet{lee_identification_2007}.
Table \ref{table:simtab_5_1} does not present CMLE estimates as it
does not allow for heteroscedasticity. \citet{lee_identification_2007}
assumes normality of the error terms. Our discussion suggests that
the CMLE is in fact consistent under nonnormal errors, as it can be
viewed as a GMM estimator based on the moment conditions from the
within equation. When group effects are in fact independent of the
observed characteristics, the CML estimator is still consistent but
less efficient than our QML estimator. The comparison thus helps to
evaluate the efficiency gain of our estimator over the CML estimator
in finite samples. The CML estimator is based on the within-group
variation hence $\sigma_{\alpha}^{2}$, $\beta_{1}$ and $\beta_{4}$
are not identified.

We generate 5000 repetitions for each of the experiments. Tables \ref{table:simtab_1_0}-\ref{table:simtab_5_1}
summarize the results of the Monte Carlo (MC) experiments. Each panel
displays the MC median, MC robust standard errors (Rob.Std.Dev), MC
sample standard deviation (Std.Dev.), MC median of the estimated standard
deviation (est.Std.Dev), and the mean rejection rate of the Wald test
with significance level 0.05 of our QMLE and Lee's CMLE across 5000
repetitions. The robust standard errors are defined as IQ/1.35, where
IQ denotes the inter-quantile range, that is $IQ=C_{0.75}-C_{0.25}$
with $C_{0.75}$ and $C_{0.25}$ being the 75th and 25th percentile
respectively. If the distribution of the estimate is normal, IQ/1.35
is (apart from rounding errors) equal to the standard deviation. The
null hypothesis for the Wald test is that the estimate equals its
true value. Critical values for the test are obtained at 5\% significance
level and are based on the asymptotic approximation in Theorem \ref{thm:ValidInference}.

Identification of our models is more challenging, the larger group
sizes are, all else equal. This follows from work of \citet{kelejian_2sls_2002}.
Identification is also more difficult when there is less variation
in group sizes, or less variation in type specific variances or both.
Finally, identification is more difficult in designs where $x_{1,ir}=x_{2,ir}$
because the implied correlation between $x_{1,ir}$ and $\bar{x}_{2,(-i)r}$
reduces the overall variation in the covariates. Standard finite sample
theory for the Gaussian regression model shows that maximum likelihood
estimators for the variance parameters are biased in finite samples.
In fixed effects panel regressions this finite sample bias can lead
to inconsistent estimates of the variance parameter due to incidental
parameter bias, as demonstrated by \citet{neyman_consistent_1948}.
In the current context, we expect the CML estimator to suffer from
such incidental parameter bias because the moment conditions that
identify $\lambda$ depend on the estimated variances. We also expect
Wald type statistics, such as the t-ratio, to perform poorly in designs
where identification is problematic, in line with insights from \citet{dufour_impossibility_1997}. 

Tables \ref{table:simtab_1_0} and \ref{table:simtab_1_1} contain
results for small groups and homoscedastic Gaussian errors. In Table
\ref{table:simtab_1_0} where $x_{1}\neq x_{2},$ both the QMLE and
CMLE perform well, with the CMLE being more biased for the parameter
$\lambda$ in sample sizes where $R$ is below 200. The QMLE is generally
less biased and significantly more precise than CMLE, demonstrating
the expected efficiency gains of QMLE. Size is better controlled for
CMLE but the size distortions for the parameters $\lambda$ and $\beta$
do not exceed 7\% in the smallest sample sizes even for the QMLE.
Size distortions for the t-ratios of the two estimated variance parameters
are somewhat larger, reaching 11.6\% for the t-ratio for $\sigma_{\alpha}^{2}$
when $R=50.$ The size distortion seems to be due both to some estimator
bias as well as standard errors that are a bit too small. Size distortions
for all parameters disappear in the larger samples. In Table \ref{table:simtab_1_1}
where $x_{1}=x_{2}$ the CMLE for $\lambda$ is even more biased in
small samples, and considerably more volatile than in the design in
Table \ref{table:simtab_1_0}. The performance of the QMLE is not
very different from the case with $x_{1}\neq x_{2}.$ The standard
deviation measured by IQ/1.35 is somewhat larger than when $x_{1}\neq x_{2}$,
as are size distortions, confirming the intuition that this design
is more difficult to identify.

Tables \ref{table:simtab_2_0} and \ref{table:simtab_2_1} differ
from Tables \ref{table:simtab_1_0} and \ref{table:simtab_1_1} in
that they consider the same designs but with larger group sizes, now
drawn from the uniform distribution on the interval $\left[13,25\right].$
In Table \ref{table:simtab_2_0} we consider the case with $x_{1}\neq x_{2}.$
The QMLE remains roughly unbiased across all sample sizes. The robust
standard deviation roughly doubles relative to the small group size
case and the size properties for t-ratios of the parameters $\lambda$
and $\beta$ deteriorate in samples where $R\leq100$ with size reaching
around 10\% in some cases. Size remains well controlled in larger
samples with $R\geq200$. The size distortions for the variance parameters
are not much affected by the larger class sizes. The CMLE is even
more biased when $R=50$ but less biased for larger sample sizes compared
to Tables \ref{table:simtab_1_0} and \ref{table:simtab_1_1}. This
is consistent with incidental parameter bias which is expected to
decrease with increasing group size. In addition the CMLE now is significantly
less precise. This is in line with results by Lee (2007). Table \ref{table:simtab_2_1}
contains results for the case $x_{1}=x_{2}$ and large group sizes.
The QMLE remains largely unbiased across all sample sizes but there
is notable loss in estimator precision as measured by IQ/1.35, indicating
the more challenging estimation environment. In line with theoretical
predictions, estimator precision increases monotonically with sample
size. Size distortions are now pronounced with empirical size reaching
more than 20\% in the smaller samples. The CMLE controls size well
across all four designs. This comes at the cost of much less precisely
and sometimes more biased estimated parameters. 

Table \ref{table:simtab_3_1} explores the effects that variation
in group size has on both estimators. The case with $\mathcal{U}\{3,5\}$
maintains the same mean group size as in Table \ref{table:simtab_1_1}
but reduces the group size variance. We only report results for $\lambda.$
The bias of the QMLE is not affected while the CMLE is somewhat less
biased. The variance of both estimators increases. For the QMLE size
distortions are somewhat larger than in Table \ref{table:simtab_1_1}.
The design with $\mathcal{U}\{4,8\}$ increases the mean while leaving
the variance of class sizes unchanged relative to Table \ref{table:simtab_1_1}.
Overall, the results for this case are quite similar to the scenario
with $\mathcal{U}\{3,5\}$. The designs with $\mathcal{U}\{8,30\}$
and $\mathcal{U}\{10,22\}$ both improve identification relative to
the design in Table \ref{table:simtab_2_1}. For the QMLE this results
in unchanged good bias properties except when $R=50$ where we now
see a small amount of bias, somewhat lower variance and slightly improved
size properties. For the CMLE bias increases while variance somewhat
improves relative to the results in Table \ref{table:simtab_2_1}
and the size properties remain similar.The larger bias for the CMLE
may be related to a larger fraction of smaller classes in both designs.
Smaller group sizes tend to amplify incidental parameter bias. 

Table \ref{table:simtab_4_1} explores the effects that non-Gaussian
error distributions have on the estimators. For the Skew Normal distribution
we see little difference to the results in Table \ref{table:simtab_1_1}
both for the QMLE and the CMLE estimator. The QMLE is also robust
to the second design which uses a t-distribution with 6 degrees of
freedom. The CMLE is more sensitive to this fat-tailed distribution.
It is somewhat more biased and has higher variance compared to the
Gaussian case. In addition, we now observe size distortions for the
t-ratio related to the parameter $\lambda.$ These size distortions
don't disappear in larger samples and seem to be due to the fact that
the standard errors show a significant downward bias. This is most
likely due to the fact that Lee (2007) bases standard errors on Gaussian
error distributions. The final set of results we discuss are in Table
\ref{table:simtab_5_1} where we examine the effects of heteroscedasticity
on the QMLE. We do not report results for the CMLE since this estimator
was designed for the homoscedastic case only. The first set of results
are based on a design where class size varies according to a $\mathcal{U}\{2,6\}$
distribution and where we maintain $x_{1}=x_{2}$. Compared to a homoscedastic
design the QMLE is somewhat less variable with no change in bias.
The size properties of the t-ratio are overall comparable between
the two cases, with slightly smaller size distortions in the heteroscedastic
case when $R=50.$ We also consider a scenario where group size is
fixed at $m=4$ while the type specific variances vary. While the
QMLE continues to be nearly unbiased it has a higher variance. The
size properties of t-ratios are slightly worse than in the homoscedastic
case. For larger sample sizes both standard errors and t-ratios are
well behaved.

\section{Conclusion}

In this paper, we show that moment conditions underlying the conditional
variance method of Graham (2008) can be related to and motivated from
a general class of linear peer effects models with random group effects.
When augmented with group specific covariates our specification of
the peer effects model is appropriate for settings where people are
randomly assigned to groups or where group level heterogeneity is
credibly controlled for with observed group level characteristics.
We show that the quasi maximum likelihood estimator (QMLE) related
to a linear Gaussian specification, as well as Graham's estimator
and the fixed effects estimator of Lee (2007) are contained in the
class of GMM estimators we consider. Under Gaussian error assumptions
the QMLE is the most efficient estimator in this class. We study conditions
of identification, extending results in Graham (2008) and Lee (2007)
for a simple model without covariates and a general model with covariates
estimated by QML. We also establish that our QMLE is asymptotically
normal and we construct consistent standard error formulas. Monte
Carlo results show that our QML estimator has good small sample properties.

\pagebreak{}

\bibliographystyle{elsarticle-harv}
\addcontentsline{toc}{section}{\refname}

\pagebreak{}

\appendix

\part*{\renewcommand{\thesection}{\Alph{section}} \numberwithin{equation}{section}}

\part*{Appendix}

\section{Monte Carlo Simulation Results}

\vspace{-5mm} \begin{table}[H] \begin{center}  \caption {Simulation Results:  $ m_r \sim \mathcal{U}\{2,6\} $, Homoscedastic Normal Errors, $ x_1\neq x_2 $} \label{table:simtab_1_0} \hspace*{-15mm} \renewcommand{\arraystretch}{0.84} \begin{threeparttable} \small \begin{tabular}{ccccccccccccc} \hline \hline  & \multicolumn{7}{c}{QMLE} & & \multicolumn{4}{c}{CMLE} \\ \cline{2-8} \cline{10-13}   & $ \lambda $ & $ \sigma^2_{\alpha} $ & $ \sigma^2_{\epsilon} $ & $\beta_1$ & $\beta_2$ & $\beta_3$ & $\beta_4$ & & $ \lambda $ & $ \sigma^2_{\epsilon} $ & $\beta_2$ & $\beta_3$  \\ \cline{1-13} \multicolumn{13}{c}{\textit{True value}} \\ &0.500&0.250&1.000&1.000&1.000&1.000&1.000&&0.500&1.000&1.000&1.000\\  \multicolumn{13}{c}{\textit{50 groups, 200 observations}}  \\  Median&0.500&0.214&0.979&0.999&0.998&0.998&1.001&&0.541&1.004&1.018&1.025\\ Rob.Std.Dev.&(0.070)&(0.177)&(0.118)&(0.173)&(0.076)&(0.162)&(0.178)&&(0.379)&(0.239)&(0.132)&(0.274)\\ Std.Dev.&[0.074]&[0.198]&[0.119]&[0.184]&[0.076]&[0.162]&[0.184]&&[0.428]&[0.275]&[0.139]&[0.294]\\ Est.Std.Dev.&0.067&0.163&0.110&0.166&0.075&0.153&0.168&&0.370&&0.131&0.272\\ Rej.&0.070&0.116&0.095&0.074&0.050&0.063&0.064&&0.036&&0.039&0.041\\  \multicolumn{13}{c}{\textit{100 groups, 400 observations}}  \\  Median&0.498&0.233&0.990&1.002&0.998&0.997&0.999&&0.524&1.007&1.008&1.012\\ Rob.Std.Dev.&(0.049)&(0.128)&(0.084)&(0.122)&(0.053)&(0.110)&(0.121)&&(0.257)&(0.162)&(0.091)&(0.190)\\ Std.Dev.&[0.050]&[0.131]&[0.085]&[0.126]&[0.053]&[0.109]&[0.125]&&[0.280]&[0.176]&[0.095]&[0.195]\\ Est.Std.Dev.&0.048&0.120&0.081&0.119&0.053&0.109&0.120&&0.256&&0.091&0.191\\ Rej.&0.057&0.099&0.074&0.059&0.052&0.051&0.059&&0.046&&0.043&0.044\\  \multicolumn{13}{c}{\textit{200 groups, 800 observations}}  \\  Median&0.500&0.241&0.995&1.000&1.000&1.001&0.998&&0.512&1.003&1.004&1.004\\ Rob.Std.Dev.&(0.035)&(0.089)&(0.060)&(0.085)&(0.037)&(0.080)&(0.087)&&(0.182)&(0.116)&(0.066)&(0.134)\\ Std.Dev.&[0.035]&[0.092]&[0.060]&[0.087]&[0.038]&[0.078]&[0.087]&&[0.188]&[0.119]&[0.066]&[0.138]\\ Est.Std.Dev.&0.034&0.087&0.059&0.085&0.038&0.077&0.085&&0.178&&0.064&0.134\\ Rej.&0.051&0.078&0.062&0.054&0.048&0.054&0.057&&0.052&&0.051&0.054\\  \multicolumn{13}{c}{\textit{400 groups, 1600 observations}}  \\  Median&0.500&0.245&0.998&1.000&1.000&0.999&1.001&&0.504&1.001&1.002&1.001\\ Rob.Std.Dev.&(0.024)&(0.061)&(0.041)&(0.059)&(0.027)&(0.056)&(0.060)&&(0.129)&(0.081)&(0.045)&(0.093)\\ Std.Dev.&[0.024]&[0.063]&[0.042]&[0.060]&[0.027]&[0.054]&[0.060]&&[0.129]&[0.082]&[0.045]&[0.094]\\ Est.Std.Dev.&0.024&0.062&0.042&0.060&0.027&0.055&0.060&&0.125&&0.045&0.094\\ Rej.&0.049&0.061&0.054&0.048&0.050&0.046&0.051&&0.046&&0.047&0.049\\  \multicolumn{13}{c}{\textit{800 groups, 3200 observations}}  \\  Median&0.500&0.247&0.999&0.999&0.999&0.999&0.999&&0.504&1.001&1.001&1.000\\ Rob.Std.Dev.&(0.017)&(0.045)&(0.031)&(0.043)&(0.019)&(0.038)&(0.043)&&(0.090)&(0.056)&(0.032)&(0.066)\\ Std.Dev.&[0.017]&[0.046]&[0.030]&[0.043]&[0.019]&[0.039]&[0.043]&&[0.088]&[0.056]&[0.031]&[0.067]\\ Est.Std.Dev.&0.017&0.044&0.030&0.043&0.019&0.039&0.043&&0.089&&0.032&0.066\\ Rej.&0.056&0.064&0.053&0.053&0.048&0.054&0.054&&0.044&&0.043&0.051\\  \multicolumn{13}{c}{\textit{1600 groups, 6400 observations}}  \\  Median&0.500&0.249&0.999&1.001&1.000&0.999&1.000&&0.503&1.002&1.001&1.002\\ Rob.Std.Dev.&(0.012)&(0.032)&(0.021)&(0.031)&(0.013)&(0.027)&(0.030)&&(0.063)&(0.040)&(0.022)&(0.046)\\ Std.Dev.&[0.012]&[0.032]&[0.021]&[0.031]&[0.013]&[0.027]&[0.031]&&[0.063]&[0.041]&[0.023]&[0.047]\\ Est.Std.Dev.&0.012&0.032&0.021&0.030&0.013&0.027&0.030&&0.063&&0.022&0.047\\ Rej.&0.055&0.057&0.052&0.054&0.047&0.046&0.057&&0.050&&0.050&0.048\\ \hline \hline \end{tabular} \begin{tablenotes} \linespread{1} \footnotesize \item 1. Median value, robust standard deviation (IQ/1.35), standard deviation, median of estimated standard deviation and mean rejection rate of the Wald test of our QMLE and Lee's CMLE across 5000 repetitions. The CMLE is based on the within-group variation hence $ \sigma_\alpha^2, \beta_1, \beta_4 $ are not estimated. Also, Lee(2007) does not offer estimate of the variance for $\sigma_{\epsilon}^2$.  \item 2. Data generating process is based on model (\ref{eq:main_orig}): $ y_{ir}=\beta_{1}+\lambda\bar{y}_{(-i)r}+x_{1,ir}\beta_{2}+\bar{x}_{2,(-i)r}\beta_{3}+x_{3,r}\beta_{4}+\alpha_{r}+\epsilon_{ir} $ , with the true parameter values given in the top panel of the table. Group size $ m_r $ is drawn from  a discrete uniform distribution $ \mathcal{U}\{2,6\} $ .  Sample is generated by: $ x_{1,ir}\sim N(0,1)$, $ x_{2,ir} \sim N(0,1) $, and $ \bar{x}_{2,(-i)r}$ is the leave out mean of $ x_{2,ir} $, $ x_{3,r} \sim N(0,1) $, $ \alpha_r \sim N(0,0.25) $, and $ \epsilon_{ir} \sim N(0,1) $. All variables are independent of each other across i and r .  \end{tablenotes} \end{threeparttable} \end{center} \end{table} 

{\tiny{}\begin{table}[htb!] \begin{center}  \caption {Simulation Results:  $ m_r \sim \mathcal{U}\{2,6\} $, Homoscedastic Normal Errors, $ x_1=x_2 $} \label{table:simtab_1_1} \hspace*{-15mm} \begin{threeparttable} \small \begin{tabular}{ccccccccccccc} \hline \hline  & \multicolumn{7}{c}{QMLE} & & \multicolumn{4}{c}{CMLE} \\ \cline{2-8} \cline{10-13}   & $ \lambda $ & $ \sigma^2_{\alpha} $ & $ \sigma^2_{\epsilon} $ & $\beta_1$ & $\beta_2$ & $\beta_3$ & $\beta_4$ & & $ \lambda $ & $ \sigma^2_{\epsilon} $ & $\beta_2$ & $\beta_3$  \\ \cline{1-13} \multicolumn{13}{c}{\textit{True value}} \\ &0.500&0.250&1.000&1.000&1.000&1.000&1.000&&0.500&1.000&1.000&1.000\\  \multicolumn{13}{c}{\textit{50 groups, 200 observations}}  \\  Median&0.501&0.202&0.976&0.992&0.999&0.994&0.989&&0.592&1.036&1.013&0.995\\ Rob.Std.Dev.&(0.114)&(0.241)&(0.130)&(0.251)&(0.122)&(0.338)&(0.246)&&(0.514)&(0.313)&(0.198)&(0.455)\\ Std.Dev.&[0.114]&[0.275]&[0.129]&[0.250]&[0.122]&[0.336]&[0.253]&&[0.632]&[0.417]&[0.222]&[0.533]\\ Est.Std.Dev.&0.106&0.230&0.116&0.236&0.115&0.313&0.235&&0.488&&0.193&0.444\\ Rej.&0.090&0.120&0.100&0.084&0.081&0.091&0.082&&0.042&&0.040&0.036\\  \multicolumn{13}{c}{\textit{100 groups, 400 observations}}  \\  Median&0.499&0.230&0.988&0.999&1.001&0.998&1.003&&0.539&1.019&1.011&1.008\\ Rob.Std.Dev.&(0.081)&(0.179)&(0.090)&(0.179)&(0.084)&(0.243)&(0.180)&&(0.335)&(0.197)&(0.136)&(0.304)\\ Std.Dev.&[0.080]&[0.190]&[0.092]&[0.177]&[0.085]&[0.239]&[0.178]&&[0.375]&[0.237]&[0.141]&[0.326]\\ Est.Std.Dev.&0.077&0.172&0.085&0.170&0.083&0.227&0.170&&0.327&&0.133&0.303\\ Rej.&0.077&0.104&0.078&0.079&0.070&0.077&0.074&&0.040&&0.044&0.044\\  \multicolumn{13}{c}{\textit{200 groups, 800 observations}}  \\  Median&0.501&0.239&0.995&0.999&0.999&0.996&0.995&&0.528&1.013&1.005&1.000\\ Rob.Std.Dev.&(0.055)&(0.128)&(0.065)&(0.119)&(0.060)&(0.164)&(0.119)&&(0.231)&(0.142)&(0.092)&(0.206)\\ Std.Dev.&[0.055]&[0.129]&[0.064]&[0.122]&[0.059]&[0.163]&[0.120]&&[0.241]&[0.147]&[0.094]&[0.213]\\ Est.Std.Dev.&0.055&0.126&0.062&0.121&0.059&0.163&0.121&&0.229&&0.093&0.208\\ Rej.&0.061&0.082&0.062&0.060&0.054&0.058&0.057&&0.042&&0.043&0.044\\  \multicolumn{13}{c}{\textit{400 groups, 1600 observations}}  \\  Median&0.500&0.243&0.999&0.998&1.000&0.998&0.998&&0.509&1.005&1.002&1.000\\ Rob.Std.Dev.&(0.040)&(0.091)&(0.045)&(0.087)&(0.042)&(0.117)&(0.085)&&(0.159)&(0.097)&(0.064)&(0.143)\\ Std.Dev.&[0.039]&[0.090]&[0.045]&[0.087]&[0.042]&[0.117]&[0.086]&&[0.163]&[0.098]&[0.065]&[0.147]\\ Est.Std.Dev.&0.039&0.090&0.044&0.086&0.042&0.117&0.086&&0.159&&0.065&0.146\\ Rej.&0.055&0.068&0.058&0.058&0.048&0.055&0.055&&0.046&&0.046&0.049\\  \multicolumn{13}{c}{\textit{800 groups, 3200 observations}}  \\  Median&0.501&0.248&0.999&0.999&0.999&0.999&0.999&&0.509&1.004&1.001&0.999\\ Rob.Std.Dev.&(0.027)&(0.062)&(0.031)&(0.060)&(0.030)&(0.080)&(0.060)&&(0.111)&(0.069)&(0.047)&(0.103)\\ Std.Dev.&[0.028]&[0.064]&[0.031]&[0.061]&[0.030]&[0.083]&[0.061]&&[0.115]&[0.070]&[0.046]&[0.104]\\ Est.Std.Dev.&0.028&0.064&0.032&0.061&0.030&0.083&0.061&&0.113&&0.046&0.103\\ Rej.&0.049&0.059&0.043&0.050&0.050&0.053&0.054&&0.049&&0.048&0.050\\  \multicolumn{13}{c}{\textit{1600 groups, 6400 observations}}  \\  Median&0.500&0.249&1.000&1.000&1.000&0.999&1.000&&0.503&1.001&1.001&1.000\\ Rob.Std.Dev.&(0.020)&(0.047)&(0.023)&(0.043)&(0.021)&(0.060)&(0.043)&&(0.078)&(0.047)&(0.032)&(0.073)\\ Std.Dev.&[0.020]&[0.046]&[0.023]&[0.043]&[0.021]&[0.059]&[0.043]&&[0.078]&[0.048]&[0.032]&[0.072]\\ Est.Std.Dev.&0.020&0.046&0.022&0.043&0.021&0.059&0.043&&0.079&&0.032&0.072\\ Rej.&0.052&0.056&0.054&0.048&0.049&0.052&0.053&&0.042&&0.048&0.048\\ \hline \hline \end{tabular} \begin{tablenotes} \footnotesize \item 1. Median value, robust standard deviation (IQ/1.35), standard deviation, median of estimated standard deviation and mean rejection rate of the Wald test of our QMLE and Lee's CMLE across 5000 repetitions. The CMLE is based on the within-group variation hence $ \sigma_\alpha^2, \beta_1, \beta_4 $ are not estimated. Also, Lee(2007) does not offer estimate of the variance for $\sigma_{\epsilon}^2$.  \item 2. Data generating process is based on model (\ref{eq:main_orig}): $ y_{ir}=\beta_{1}+\lambda\bar{y}_{(-i)r}+x_{1,ir}\beta_{2}+\bar{x}_{2,(-i)r}\beta_{3}+x_{3,r}\beta_{4}+\alpha_{r}+\epsilon_{ir} $ , with the true parameter values given in the top panel of the table. Group size $ m_r $ is drawn from  a discrete uniform distribution $ \mathcal{U}\{2,6\} $ .  Sample is generated by: $ x_{1,ir}\sim N(0,1)$, $ x_{2,ir}=x_{1,ir} $ , and $ \bar{x}_{2,(-i)r}$ is the leave out mean of $ x_{2,ir} $, $ x_{3,r} \sim N(0,1) $, $ \alpha_r \sim N(0,0.25) $, and $ \epsilon_{ir} \sim N(0,1) $. All variables are independent of each other across i and r .  \end{tablenotes} \end{threeparttable} \end{center} \end{table} }{\tiny\par}

\begin{table}[htb!] \begin{center}  \caption {Simulation Results:  $ m_r \sim \mathcal{U}\{13,25\} $, Homoscedastic Normal Errors, $ x_1\neq x_2 $} \label{table:simtab_2_0} \hspace*{-15mm} \begin{threeparttable} \small \begin{tabular}{ccccccccccccc} \hline \hline  & \multicolumn{7}{c}{QMLE} & & \multicolumn{4}{c}{CMLE} \\ \cline{2-8} \cline{10-13}   & $ \lambda $ & $ \sigma^2_{\alpha} $ & $ \sigma^2_{\epsilon} $ & $\beta_1$ & $\beta_2$ & $\beta_3$ & $\beta_4$ & & $ \lambda $ & $ \sigma^2_{\epsilon} $ & $\beta_2$ & $\beta_3$  \\ \cline{1-13} \multicolumn{13}{c}{\textit{True value}} \\ &0.500&0.250&1.000&1.000&1.000&1.000&1.000&&0.500&1.000&1.000&1.000\\  \multicolumn{13}{c}{\textit{50 groups, 950 observations}}  \\  Median&0.500&0.224&0.994&0.999&0.998&0.992&0.999&&0.599&1.007&1.005&0.991\\ Rob.Std.Dev.&(0.151)&(0.184)&(0.050)&(0.311)&(0.035)&(0.370)&(0.323)&&(1.781)&(0.198)&(0.101)&(0.597)\\ Std.Dev.&[0.255]&[0.970]&[0.054]&[0.519]&[0.035]&[0.384]&[0.518]&&[1.840]&[0.214]&[0.104]&[0.611]\\ Est.Std.Dev.&0.140&0.167&0.050&0.291&0.034&0.362&0.292&&1.732&&0.099&0.601\\ Rej.&0.097&0.166&0.054&0.096&0.054&0.080&0.094&&0.044&&0.046&0.046\\  \multicolumn{13}{c}{\textit{100 groups, 1900 observations}}  \\  Median&0.501&0.235&0.997&0.999&0.999&0.996&1.000&&0.510&0.999&1.001&0.988\\ Rob.Std.Dev.&(0.105)&(0.132)&(0.036)&(0.219)&(0.024)&(0.258)&(0.217)&&(1.216)&(0.136)&(0.071)&(0.422)\\ Std.Dev.&[0.121]&[0.185]&[0.036]&[0.250]&[0.024]&[0.265]&[0.249]&&[1.264]&[0.143]&[0.073]&[0.427]\\ Est.Std.Dev.&0.101&0.125&0.035&0.210&0.024&0.256&0.210&&1.206&&0.069&0.422\\ Rej.&0.072&0.119&0.053&0.071&0.045&0.067&0.070&&0.058&&0.057&0.049\\  \multicolumn{13}{c}{\textit{200 groups, 3800 observations}}  \\  Median&0.500&0.243&0.999&0.999&1.000&0.994&1.000&&0.517&1.001&1.002&0.997\\ Rob.Std.Dev.&(0.073)&(0.092)&(0.026)&(0.151)&(0.017)&(0.182)&(0.151)&&(0.849)&(0.094)&(0.050)&(0.302)\\ Std.Dev.&[0.077]&[0.107]&[0.025]&[0.159]&[0.017]&[0.184]&[0.160]&&[0.847]&[0.095]&[0.049]&[0.301]\\ Est.Std.Dev.&0.072&0.091&0.025&0.149&0.017&0.181&0.149&&0.853&&0.049&0.298\\ Rej.&0.060&0.084&0.049&0.056&0.051&0.060&0.057&&0.046&&0.048&0.050\\  \multicolumn{13}{c}{\textit{400 groups, 7600 observations}}  \\  Median&0.500&0.246&0.999&1.000&1.000&0.999&1.001&&0.490&0.999&0.999&1.000\\ Rob.Std.Dev.&(0.052)&(0.068)&(0.018)&(0.109)&(0.012)&(0.127)&(0.112)&&(0.606)&(0.068)&(0.034)&(0.211)\\ Std.Dev.&[0.054]&[0.071]&[0.018]&[0.111]&[0.012]&[0.130]&[0.111]&&[0.605]&[0.068]&[0.035]&[0.212]\\ Est.Std.Dev.&0.051&0.065&0.018&0.106&0.012&0.128&0.106&&0.600&&0.034&0.210\\ Rej.&0.051&0.065&0.051&0.055&0.054&0.056&0.050&&0.051&&0.050&0.052\\  \multicolumn{13}{c}{\textit{800 groups, 15200 observations}}  \\  Median&0.499&0.249&1.000&1.001&1.000&1.000&1.001&&0.499&0.999&1.000&1.003\\ Rob.Std.Dev.&(0.036)&(0.047)&(0.013)&(0.075)&(0.008)&(0.093)&(0.074)&&(0.426)&(0.047)&(0.024)&(0.152)\\ Std.Dev.&[0.037]&[0.049]&[0.012]&[0.077]&[0.008]&[0.092]&[0.077]&&[0.434]&[0.048]&[0.025]&[0.151]\\ Est.Std.Dev.&0.036&0.047&0.012&0.075&0.008&0.091&0.075&&0.424&&0.024&0.149\\ Rej.&0.056&0.062&0.050&0.054&0.052&0.052&0.055&&0.054&&0.054&0.055\\  \multicolumn{13}{c}{\textit{1600 groups, 30400 observations}}  \\  Median&0.500&0.249&1.000&1.001&1.000&1.000&1.000&&0.500&1.001&1.000&1.002\\ Rob.Std.Dev.&(0.025)&(0.033)&(0.009)&(0.052)&(0.006)&(0.063)&(0.053)&&(0.306)&(0.033)&(0.017)&(0.108)\\ Std.Dev.&[0.026]&[0.033]&[0.009]&[0.053]&[0.006]&[0.064]&[0.053]&&[0.300]&[0.033]&[0.017]&[0.106]\\ Est.Std.Dev.&0.026&0.033&0.009&0.053&0.006&0.064&0.053&&0.300&&0.017&0.105\\ Rej.&0.051&0.054&0.054&0.055&0.051&0.049&0.049&&0.047&&0.043&0.050\\ \hline \hline \end{tabular} \begin{tablenotes} \footnotesize \item 1. Median value, robust standard deviation (IQ/1.35), standard deviation, median of estimated standard deviation and mean rejection rate of the Wald test of our QMLE and Lee's CMLE across 5000 repetitions. The CMLE is based on the within-group variation hence $ \sigma_\alpha^2, \beta_1, \beta_4 $ are not estimated. Also, Lee(2007) does not offer estiamte of the variance for $\sigma_{\epsilon}^2$.  \item 2. Data generating process is based on model (\ref{eq:main_orig}): $ y_{ir}=\beta_{1}+\lambda\bar{y}_{(-i)r}+x_{1,ir}\beta_{2}+\bar{x}_{2,(-i)r}\beta_{3}+x_{3,r}\beta_{4}+\alpha_{r}+\epsilon_{ir} $ , with the true parameter values given in the top panel of the table. Group size $ m_r $ is drawn from  a discrete uniform distribution $ \mathcal{U}\{13,25\} $ .  Sample is generated by: $ x_{1,ir}\sim N(0,1)$, $ x_{2,ir} \sim N(0,1) $, and $ \bar{x}_{2,(-i)r}$ is the leave out mean of $ x_{2,ir} $, $ x_{3,r} \sim N(0,1) $, $ \alpha_r \sim N(0,0.25) $, and $ \epsilon_{ir} \sim N(0,1) $. All variables are independent of each other across i and r .  \end{tablenotes} \end{threeparttable} \end{center} \end{table} 

\begin{table}[htb!] \begin{center}  \caption {Simulation Results:  $ m_r \sim \mathcal{U}\{13,25\} $, Homoscedastic Normal Errors, $ x_1=x_2 $} \label{table:simtab_2_1} \hspace*{-15mm} \begin{threeparttable} \small \begin{tabular}{ccccccccccccc} \hline \hline  & \multicolumn{7}{c}{QMLE} & & \multicolumn{4}{c}{CMLE} \\ \cline{2-8} \cline{10-13}   & $ \lambda $ & $ \sigma^2_{\alpha} $ & $ \sigma^2_{\epsilon} $ & $\beta_1$ & $\beta_2$ & $\beta_3$ & $\beta_4$ & & $ \lambda $ & $ \sigma^2_{\epsilon} $ & $\beta_2$ & $\beta_3$  \\ \cline{1-13} \multicolumn{13}{c}{\textit{True value}} \\ &0.500&0.250&1.000&1.000&1.000&1.000&1.000&&0.500&1.000&1.000&1.000\\  \multicolumn{13}{c}{\textit{50 groups, 950 observations}}  \\  Median&0.500&0.225&0.987&1.001&1.005&0.995&0.995&&0.545&1.003&0.999&0.898\\ Rob.Std.Dev.&(0.553)&(0.788)&(0.073)&(1.121)&(0.091)&(2.123)&(1.109)&&(2.141)&(0.241)&(0.163)&(3.344)\\ Std.Dev.&[0.592]&[1.868]&[0.078]&[1.198]&[0.093]&[2.076]&[1.194]&&[2.301]&[0.272]&[0.167]&[3.482]\\ Est.Std.Dev.&0.634&0.836&0.079&1.251&0.109&2.517&1.265&&2.120&&0.162&3.336\\ Rej.&0.230&0.238&0.066&0.231&0.114&0.242&0.229&&0.045&&0.045&0.043\\  \multicolumn{13}{c}{\textit{100 groups, 1900 observations}}  \\  Median&0.486&0.253&0.991&1.032&1.004&1.055&1.025&&0.490&0.996&1.001&1.036\\ Rob.Std.Dev.&(0.439)&(0.625)&(0.055)&(0.888)&(0.070)&(1.669)&(0.870)&&(1.463)&(0.160)&(0.112)&(2.286)\\ Std.Dev.&[0.422]&[0.931]&[0.056]&[0.849]&[0.070]&[1.566]&[0.851]&&[1.512]&[0.171]&[0.116]&[2.366]\\ Est.Std.Dev.&0.473&0.599&0.059&0.940&0.082&1.883&0.931&&1.475&&0.113&2.321\\ Rej.&0.221&0.231&0.067&0.218&0.129&0.224&0.220&&0.049&&0.052&0.049\\  \multicolumn{13}{c}{\textit{200 groups, 3800 observations}}  \\  Median&0.500&0.241&0.996&0.994&1.001&0.994&0.999&&0.515&1.000&0.999&0.947\\ Rob.Std.Dev.&(0.320)&(0.414)&(0.040)&(0.636)&(0.052)&(1.218)&(0.639)&&(1.062)&(0.119)&(0.079)&(1.659)\\ Std.Dev.&[0.308]&[0.572]&[0.040]&[0.619]&[0.051]&[1.153]&[0.620]&&[1.055]&[0.118]&[0.080]&[1.643]\\ Est.Std.Dev.&0.340&0.416&0.043&0.677&0.059&1.341&0.678&&1.044&&0.079&1.638\\ Rej.&0.195&0.208&0.066&0.190&0.132&0.189&0.189&&0.046&&0.052&0.047\\  \multicolumn{13}{c}{\textit{400 groups, 7600 observations}}  \\  Median&0.495&0.252&0.997&1.011&1.002&1.019&1.009&&0.480&0.998&1.000&1.024\\ Rob.Std.Dev.&(0.248)&(0.320)&(0.031)&(0.495)&(0.041)&(0.939)&(0.498)&&(0.712)&(0.078)&(0.057)&(1.154)\\ Std.Dev.&[0.236]&[0.381]&[0.030]&[0.474]&[0.039]&[0.889]&[0.473]&&[0.735]&[0.081]&[0.057]&[1.178]\\ Est.Std.Dev.&0.249&0.307&0.031&0.494&0.043&0.968&0.497&&0.735&&0.056&1.153\\ Rej.&0.171&0.181&0.073&0.169&0.134&0.172&0.170&&0.048&&0.056&0.052\\  \multicolumn{13}{c}{\textit{800 groups, 15200 observations}}  \\  Median&0.498&0.250&0.998&1.004&1.000&1.010&1.004&&0.505&1.000&1.000&0.992\\ Rob.Std.Dev.&(0.175)&(0.219)&(0.022)&(0.351)&(0.030)&(0.672)&(0.351)&&(0.517)&(0.058)&(0.040)&(0.821)\\ Std.Dev.&[0.168]&[0.242]&[0.022]&[0.336]&[0.029]&[0.641]&[0.337]&&[0.523]&[0.058]&[0.040]&[0.826]\\ Est.Std.Dev.&0.178&0.218&0.022&0.356&0.030&0.684&0.356&&0.520&&0.040&0.817\\ Rej.&0.129&0.144&0.073&0.130&0.113&0.132&0.127&&0.052&&0.051&0.052\\  \multicolumn{13}{c}{\textit{1600 groups, 30400 observations}}  \\  Median&0.497&0.253&0.999&1.005&1.001&1.007&1.007&&0.498&1.000&1.001&1.004\\ Rob.Std.Dev.&(0.127)&(0.158)&(0.016)&(0.254)&(0.021)&(0.484)&(0.253)&&(0.369)&(0.041)&(0.028)&(0.586)\\ Std.Dev.&[0.123]&[0.165]&[0.016]&[0.246]&[0.021]&[0.468]&[0.246]&&[0.368]&[0.041]&[0.028]&[0.577]\\ Est.Std.Dev.&0.126&0.155&0.016&0.253&0.021&0.484&0.253&&0.368&&0.028&0.577\\ Rej.&0.111&0.120&0.076&0.110&0.093&0.108&0.111&&0.047&&0.050&0.050\\ \hline \hline \end{tabular} \begin{tablenotes} \footnotesize \item 1. Median value, robust standard deviation (IQ/1.35), standard deviation, median of estimated standard deviation and mean rejection rate of the Wald test of our QMLE and Lee's CMLE across 5000 repetitions. The CMLE is based on the within-group variation hence $ \sigma_\alpha^2, \beta_1, \beta_4 $ are not estimated. Also, Lee(2007) does not offer estiamte of the variance for $\sigma_{\epsilon}^2$.  \item 2. Data generating process is based on model (\ref{eq:main_orig}): $ y_{ir}=\beta_{1}+\lambda\bar{y}_{(-i)r}+x_{1,ir}\beta_{2}+\bar{x}_{2,(-i)r}\beta_{3}+x_{3,r}\beta_{4}+\alpha_{r}+\epsilon_{ir} $ , with the true parameter values given in the top panel of the table. Group size $ m_r $ is drawn from  a discrete uniform distribution $ \mathcal{U}\{13,25\} $ .  Sample is generated by: $ x_{1,ir}\sim N(0,1)$, $ x_{2,ir}=x_{1,ir} $ , and $ \bar{x}_{2,(-i)r}$ is the leave out mean of $ x_{2,ir} $, $ x_{3,r} \sim N(0,1) $, $ \alpha_r \sim N(0,0.25) $, and $ \epsilon_{ir} \sim N(0,1) $. All variables are independent of each other across i and r .  \end{tablenotes} \end{threeparttable} \end{center} \end{table} 

\begin{table}[!h] \begin{center}  \caption {Simulation Results for $ \lambda $ : Alternative Group Size Distributions, Homoscedastic Normal Errors, $ x_1=x_2 $}   \label{table:simtab_3_1}   \begin{threeparttable} \small \begin{tabular}{cccccccccccc} \hline \hline  & \multicolumn{2}{c}{$ m_r\sim \mathcal{U}\{3,5\} $} & & \multicolumn{2}{c}{$ m_r\sim \mathcal{U}\{4,8\} $} & & \multicolumn{2}{c}{$ m_r\sim \mathcal{U}\{8,30\} $ } & & \multicolumn{2}{c}{$ m_r\sim \mathcal{U}\{10,22\} $ } \\ \cline{2-3} \cline{5-6} \cline{8-9} \cline{11-12}   & QMLE & CMLE & & QMLE & CMLE & & QMLE & CMLE & & QMLE & CMLE \\ \cline{2-12}   & $ \lambda $ & $ \lambda $ & & $ \lambda $ & $ \lambda $ & & $ \lambda $ & $ \lambda $ & & $ \lambda $ & $ \lambda $ \\ \hline  \multicolumn{12}{c}{\textit{True value}} \\ &0.500&0.500&&0.500&0.500&&0.500&0.500&&0.500&0.500\\  \multicolumn{12}{c}{\textit{50 Groups}}  \\  Median&0.502&0.561&&0.494&0.575&&0.487&0.545&&0.481&0.567\\ Rob.Std.Dev.&(0.222)&(0.916)&&(0.257)&(0.993)&&(0.331)&(1.090)&&(0.457)&(1.682)\\ Std.Dev.&[0.201]&[1.580]&&[0.241]&[1.193]&&[0.340]&[1.136]&&[0.451]&[1.777]\\ Est.Std.Dev.&0.205&0.887&&0.243&0.953&&0.330&1.088&&0.483&1.603\\ Rej.&0.163&0.064&&0.175&0.058&&0.175&0.040&&0.209&0.051\\  \multicolumn{12}{c}{\textit{100 Groups}}  \\  Median&0.500&0.542&&0.497&0.549&&0.493&0.531&&0.496&0.550\\ Rob.Std.Dev.&(0.164)&(0.629)&&(0.184)&(0.667)&&(0.243)&(0.758)&&(0.329)&(1.140)\\ Std.Dev.&[0.151]&[0.734]&&[0.175]&[0.736]&&[0.237]&[0.785]&&[0.325]&[1.184]\\ Est.Std.Dev.&0.152&0.614&&0.179&0.659&&0.242&0.761&&0.348&1.129\\ Rej.&0.140&0.054&&0.145&0.049&&0.151&0.050&&0.188&0.046\\  \multicolumn{12}{c}{\textit{200 Groups}}  \\  Median&0.502&0.534&&0.503&0.518&&0.498&0.534&&0.500&0.537\\ Rob.Std.Dev.&(0.110)&(0.437)&&(0.127)&(0.459)&&(0.171)&(0.529)&&(0.249)&(0.812)\\ Std.Dev.&[0.108]&[0.464]&&[0.127]&[0.481]&&[0.169]&[0.534]&&[0.240]&[0.811]\\ Est.Std.Dev.&0.110&0.430&&0.129&0.460&&0.173&0.538&&0.254&0.793\\ Rej.&0.113&0.046&&0.119&0.049&&0.121&0.047&&0.168&0.047\\  \multicolumn{12}{c}{\textit{400 Groups}}  \\  Median&0.497&0.507&&0.500&0.517&&0.499&0.511&&0.497&0.501\\ Rob.Std.Dev.&(0.081)&(0.298)&&(0.095)&(0.333)&&(0.124)&(0.383)&&(0.183)&(0.552)\\ Std.Dev.&[0.079]&[0.312]&&[0.093]&[0.332]&&[0.124]&[0.381]&&[0.173]&[0.561]\\ Est.Std.Dev.&0.080&0.299&&0.093&0.325&&0.125&0.379&&0.185&0.557\\ Rej.&0.085&0.046&&0.093&0.050&&0.105&0.046&&0.135&0.049\\  \multicolumn{12}{c}{\textit{800 Groups}}  \\  Median&0.501&0.508&&0.500&0.505&&0.498&0.501&&0.501&0.499\\ Rob.Std.Dev.&(0.058)&(0.211)&&(0.067)&(0.229)&&(0.088)&(0.265)&&(0.129)&(0.388)\\ Std.Dev.&[0.057]&[0.216]&&[0.065]&[0.232]&&[0.087]&[0.266]&&[0.127]&[0.388]\\ Est.Std.Dev.&0.057&0.211&&0.066&0.228&&0.089&0.267&&0.130&0.394\\ Rej.&0.067&0.051&&0.072&0.049&&0.080&0.047&&0.107&0.046\\  \multicolumn{12}{c}{\textit{1600 Groups}}  \\  Median&0.499&0.506&&0.500&0.499&&0.501&0.504&&0.499&0.497\\ Rob.Std.Dev.&(0.040)&(0.150)&&(0.046)&(0.161)&&(0.062)&(0.188)&&(0.093)&(0.274)\\ Std.Dev.&[0.040]&[0.152]&&[0.047]&[0.163]&&[0.062]&[0.187]&&[0.091]&[0.278]\\ Est.Std.Dev.&0.040&0.149&&0.047&0.161&&0.063&0.189&&0.093&0.278\\ Rej.&0.059&0.048&&0.061&0.052&&0.066&0.044&&0.077&0.050\\ \hline \hline \end{tabular} \begin{tablenotes} \footnotesize \item 1. Median value, robust standard deviation (IQ/1.35), standard deviation, median of estimated standard deviation and mean rejection rate of the Wald test of our QMLE and Lee's CMLE across 5000 repetitions.  For simplicity, we only present estimates of the endogeneous peer effects ($\lambda$). \item 2. Data generating process is based on model (\ref{eq:main_orig}): $ y_{ir}=\beta_{1}+\lambda\bar{y}_{(-i)r}+x_{1,ir}\beta_{2}+\bar{x}_{2,(-i)r}\beta_{3}+x_{3,r}\beta_{4}+\alpha_{r}+\epsilon_{ir} $ , with the $\lambda=0.5$ and all $ \beta $ s being 1. Sample is generated by: $ x_{1,ir}\sim N(0,1)$, $ x_{2,ir}=x_{1,ir} $  and $ \bar{x}_{2,(-i)r}$ is the leave out mean of $ x_{2,ir} $, $ x_{3,r} \sim N(0,1) $, $ \alpha_r \sim N(0,0.25) $, and $ \epsilon_{ir} \sim N(0,1) $. All variables are independent of each other across i and r. \item 3. Group size $ m_r $ is drawn from $ \mathcal{U}\{3,5\} $ (Case 1), $ \mathcal{U}\{4,8\} $ (Case 2), $ \mathcal{U}\{8,30\} $ (Case 3), $ \mathcal{U}\{10,22\} $ (Case 4). \end{tablenotes} \end{threeparttable} \end{center} \end{table} 

\begin{table}[!h] \begin{center}  \caption {Simulation Results: Homoscedastic but Nonnormal Errors, $ m_r \sim \mathcal{U}\{2,6\} $, $ x_1=x_2 $} \label{table:simtab_4_1}  \begin{threeparttable} \small \begin{tabular}{cccccccccc} \hline \hline  & \multicolumn{4}{c}{Skew Normal} & & \multicolumn{4}{c}{Student Distribution} \\ \cline{2-5} \cline{7-10}   & \multicolumn{2}{c}{QMLE} & \multicolumn{2}{c}{CMLE} & & \multicolumn{2}{c}{QMLE} & \multicolumn{2}{c}{CMLE} \\ \cline{2-10}  & $ \lambda $ &  $\beta_3$ & $ \lambda $ &  $\beta_3$ &  & $ \lambda $ &  $\beta_3$ & $ \lambda $ &  $\beta_3$ \\ \hline \multicolumn{10}{c}{\textit{True value}} \\ &0.500&1.000&0.500&1.000&&0.500&1.000&0.500&1.000\\  \multicolumn{10}{c}{\textit{50 Groups}}  \\  Median&0.500&0.993&0.580&0.991&&0.499&1.004&0.610&1.010\\ Rob.Std.Dev.&(0.119)&(0.345)&(0.525)&(0.443)&&(0.106)&(0.303)&(0.633)&(0.386)\\ Std.Dev.&[0.114]&[0.339]&[0.634]&[0.516]&&[0.109]&[0.315]&[0.862]&[0.470]\\ Est.Std.Dev.&0.107&0.314&0.483&0.446&&0.099&0.288&0.487&0.390\\ Rej.&0.091&0.089&0.047&0.033&&0.084&0.084&0.078&0.034\\  \multicolumn{10}{c}{\textit{100 Groups}}  \\  Median&0.501&1.002&0.539&0.993&&0.500&1.001&0.553&1.007\\ Rob.Std.Dev.&(0.080)&(0.241)&(0.346)&(0.309)&&(0.074)&(0.213)&(0.422)&(0.268)\\ Std.Dev.&[0.080]&[0.236]&[0.380]&[0.325]&&[0.075]&[0.219]&[0.468]&[0.284]\\ Est.Std.Dev.&0.077&0.228&0.326&0.301&&0.070&0.207&0.328&0.262\\ Rej.&0.082&0.070&0.046&0.043&&0.070&0.071&0.085&0.042\\  \multicolumn{10}{c}{\textit{200 Groups}}  \\  Median&0.501&1.000&0.520&0.998&&0.499&1.001&0.534&1.002\\ Rob.Std.Dev.&(0.057)&(0.169)&(0.235)&(0.209)&&(0.051)&(0.152)&(0.295)&(0.186)\\ Std.Dev.&[0.058]&[0.171]&[0.254]&[0.217]&&[0.052]&[0.154]&[0.313]&[0.189]\\ Est.Std.Dev.&0.055&0.164&0.228&0.208&&0.051&0.149&0.229&0.182\\ Rej.&0.074&0.068&0.058&0.046&&0.065&0.062&0.121&0.047\\  \multicolumn{10}{c}{\textit{400 Groups}}  \\  Median&0.500&1.000&0.515&0.999&&0.501&0.998&0.516&0.998\\ Rob.Std.Dev.&(0.039)&(0.115)&(0.169)&(0.143)&&(0.036)&(0.107)&(0.204)&(0.128)\\ Std.Dev.&[0.039]&[0.116]&[0.172]&[0.148]&&[0.036]&[0.109]&[0.219]&[0.131]\\ Est.Std.Dev.&0.039&0.117&0.160&0.146&&0.036&0.106&0.159&0.127\\ Rej.&0.054&0.053&0.059&0.047&&0.056&0.056&0.131&0.052\\  \multicolumn{10}{c}{\textit{800 Groups}}  \\  Median&0.500&1.000&0.504&0.998&&0.501&0.998&0.510&0.998\\ Rob.Std.Dev.&(0.028)&(0.086)&(0.116)&(0.104)&&(0.026)&(0.076)&(0.152)&(0.091)\\ Std.Dev.&[0.028]&[0.084]&[0.119]&[0.104]&&[0.026]&[0.077]&[0.155]&[0.093]\\ Est.Std.Dev.&0.028&0.083&0.112&0.102&&0.026&0.075&0.112&0.089\\ Rej.&0.052&0.055&0.062&0.052&&0.056&0.052&0.137&0.058\\  \multicolumn{10}{c}{\textit{1600 Groups}}  \\  Median&0.500&1.000&0.504&0.999&&0.501&1.000&0.504&1.001\\ Rob.Std.Dev.&(0.019)&(0.057)&(0.084)&(0.071)&&(0.018)&(0.052)&(0.103)&(0.064)\\ Std.Dev.&[0.019]&[0.058]&[0.085]&[0.071]&&[0.018]&[0.054]&[0.106]&[0.064]\\ Est.Std.Dev.&0.020&0.059&0.079&0.072&&0.018&0.054&0.079&0.063\\ Rej.&0.045&0.045&0.068&0.043&&0.054&0.053&0.136&0.051\\ \hline \hline \end{tabular} \begin{tablenotes} \footnotesize \item 1. Median value, robust standard deviation (IQ/1.35), standard deviation, median of estimated standard deviation and mean rejection rate of the Wald test of our QMLE and Lee's CMLE across 5000 repetitions.  For simplicity, we only present estimates of the endogeneous peer effects ($\lambda$) and exogenous peer effects ($\beta_3$). \item 2. Data generating process is based on model (\ref{eq:main_orig}): $ y_{ir}=\beta_{1}+\lambda\bar{y}_{(-i)r}+x_{1,ir}\beta_{2}+\bar{x}_{2,(-i)r}\beta_{3}+x_{3,r}\beta_{4}+\alpha_{r}+\epsilon_{ir} $ , with $\lambda=0.5$ and all $ \beta $ s being 1. Group size $ m_r $ is drawn from  a discrete uniform distribution $ \mathcal{U}\{2,6\} $. Sample is generated by: $ x_{1,ir}\sim N(0,1)$, $ x_{2,ir}=x_{1,ir} $ , and $ \bar{x}_{2,(-i)r}$ is the leave out mean of $ x_{2,ir} $, $ x_{3,r} \sim N(0,1) $. All variables are independent of each other across i and r .  \item 3. In the case of Skew normal distribution, location is 0, scale is 1 and shape is $ 0.9/\sqrt{1-0.9^2} $. In the case of student distribution, degree of freedom is 6. In all cases, $\alpha_r$ and $\epsilon_{ir}$ are indepdently drawn from identical distribtion and standardized to have mean 0 and variance 0.25, 1 respectively.  \end{tablenotes} \end{threeparttable} \end{center} \end{table} 

\begin{table}[!h] \begin{center}  \caption {Simulation Results: Heteroscedastic Normal Errors, $ x_1=x_2 $} \label{table:simtab_5_1}   \begin{threeparttable} \small \begin{tabular}{cccccccccccc} \hline \hline & \multicolumn{5}{c}{$ m_r \sim \mathcal{U}\{2,6\} $} & & \multicolumn{5}{c}{$ m_r =4$ for all}  \\ \cline{2-6} \cline{8-12}  & \multicolumn{2}{c}{$ \sigma_{\epsilon }^2 \in \{0.5,1.5\} $} & & \multicolumn{2}{c}{$ \sigma_{\epsilon} =1 $ for all}  & & \multicolumn{2}{c}{$ \sigma_{\epsilon}^2 \in \{0.5,1.5\} $} & & \multicolumn{2}{c}{$ \sigma_{\epsilon}^2 \in \{0.4,0.8,1.2,1.6\} $} \\ \cline{2-3} \cline{5-6} \cline{8-9} \cline{11-12} & $ \lambda $ &  $ \beta_{3} $ & & $ \lambda $ &  $ \beta_{3}$ & & $ \lambda $ &  $ \beta_{3} $ & & $ \lambda $ &  $ \beta_{3}$ \\ \hline \multicolumn{12}{c}{\textit{True value}} \\ &0.500&1.000&&0.500&1.000&&0.500&1.000&&0.500&1.000\\  \multicolumn{12}{c}{\textit{50 Groups}}  \\  Median&0.502&0.994&&0.504&0.990&&0.485&1.047&&0.499&1.003\\ Rob.Std.Dev.&(0.097)&(0.298)&&(0.116)&(0.352)&&(0.246)&(0.794)&&(0.305)&(0.989)\\ Std.Dev.&[0.100]&[0.299]&&[0.118]&[0.352]&&[0.783]&[2.493]&&[0.969]&[3.083]\\ Est.Std.Dev.&0.090&0.265&&0.106&0.311&&0.173&0.566&&0.172&0.545\\ Rej.&0.079&0.085&&0.093&0.098&&0.104&0.103&&0.144&0.141\\  \multicolumn{12}{c}{\textit{100 Groups}}  \\  Median&0.502&0.996&&0.501&0.999&&0.500&1.003&&0.500&1.002\\ Rob.Std.Dev.&(0.067)&(0.196)&&(0.080)&(0.239)&&(0.147)&(0.479)&&(0.182)&(0.591)\\ Std.Dev.&[0.067]&[0.200]&&[0.079]&[0.235]&&[0.479]&[1.509]&&[0.639]&[2.012]\\ Est.Std.Dev.&0.064&0.191&&0.077&0.227&&0.125&0.410&&0.130&0.422\\ Rej.&0.077&0.076&&0.074&0.074&&0.085&0.081&&0.122&0.114\\  \multicolumn{12}{c}{\textit{200 Groups}}  \\  Median&0.500&1.004&&0.500&0.998&&0.496&1.011&&0.498&1.007\\ Rob.Std.Dev.&(0.046)&(0.140)&&(0.055)&(0.168)&&(0.099)&(0.321)&&(0.114)&(0.365)\\ Std.Dev.&[0.047]&[0.140]&&[0.056]&[0.166]&&[0.187]&[0.593]&&[0.327]&[1.018]\\ Est.Std.Dev.&0.045&0.136&&0.055&0.163&&0.092&0.300&&0.096&0.312\\ Rej.&0.068&0.065&&0.068&0.067&&0.070&0.067&&0.093&0.086\\  \multicolumn{12}{c}{\textit{400 Groups}}  \\  Median&0.500&1.002&&0.500&1.000&&0.498&1.002&&0.500&1.000\\ Rob.Std.Dev.&(0.033)&(0.098)&&(0.040)&(0.122)&&(0.068)&(0.221)&&(0.072)&(0.236)\\ Std.Dev.&[0.032]&[0.097]&&[0.040]&[0.119]&&[0.088]&[0.283]&&[0.128]&[0.405]\\ Est.Std.Dev.&0.032&0.097&&0.039&0.117&&0.066&0.214&&0.069&0.222\\ Rej.&0.056&0.057&&0.059&0.059&&0.057&0.055&&0.067&0.061\\  \multicolumn{12}{c}{\textit{800 Groups}}  \\  Median&0.500&0.999&&0.500&1.000&&0.500&1.002&&0.500&1.001\\ Rob.Std.Dev.&(0.023)&(0.071)&&(0.028)&(0.082)&&(0.049)&(0.161)&&(0.050)&(0.160)\\ Std.Dev.&[0.023]&[0.069]&&[0.028]&[0.083]&&[0.053]&[0.171]&&[0.056]&[0.180]\\ Est.Std.Dev.&0.023&0.069&&0.028&0.083&&0.047&0.152&&0.049&0.159\\ Rej.&0.055&0.050&&0.059&0.054&&0.050&0.050&&0.056&0.055\\  \multicolumn{12}{c}{\textit{1600 Groups}}  \\  Median&0.500&1.001&&0.500&1.001&&0.499&1.003&&0.500&0.998\\ Rob.Std.Dev.&(0.016)&(0.048)&&(0.020)&(0.060)&&(0.033)&(0.109)&&(0.035)&(0.116)\\ Std.Dev.&[0.016]&[0.049]&&[0.020]&[0.059]&&[0.034]&[0.111]&&[0.037]&[0.120]\\ Est.Std.Dev.&0.016&0.049&&0.020&0.059&&0.033&0.108&&0.035&0.113\\ Rej.&0.053&0.053&&0.055&0.059&&0.046&0.044&&0.050&0.045\\ \hline \hline \end{tabular} \begin{tablenotes} \linespread{1} \footnotesize \item 1. Median value, robust standard deviation (IQ/1.35), standard deviation, median of estimated standard deviation and mean rejection rate of the Wald test of our QMLE across 5000 repetitions.  For simplicity, we only present estimates of the endogeneous peer effects ($\lambda$) and exogenous peer effects ($\beta_3$). \item 2. Data generating process is based on model (\ref{eq:main_orig}): $ y_{ir}=\beta_{1}+\lambda\bar{y}_{(-i)r}+x_{1,ir}\beta_{2}+\bar{x}_{2,(-i)r}\beta_{3}+x_{3,r}\beta_{4}+\alpha_{r}+\epsilon_{ir} $ , with $\lambda=0.5$ and all $ \beta $ s being 1. Sample is generated by: $ x_{1,ir}\sim N(0,1)$, $ x_{2,ir}=x_{1,ir} $  and $ \bar{x}_{2,(-i)r}$ is the leave out mean of $ x_{2,ir} $, $ x_{3,r} \sim N(0,1) $, $ \alpha_r \sim N(0,0.25) $. When there are more than one category of $ \sigma_\epsilon $, groups are equally distributed into different categories. All variables are independent of each other across i and r.  \item 3. In the first case (Columns 1-2), the model has both heteroscedasticity and group size variation. In the second case (Columns 3-4),the DGP has homoscedastic $\sigma^2_{\epsilon}$ and group size variation. But the estimation process assumes two categories of $\sigma^2_{\epsilon}$. In both case 1 and case 2, group size $ m_r $ is drawn from  $ \mathcal{U}\{2,6\} $. In Cases 3 and 4, group size is 4 for all.  \end{tablenotes} \end{threeparttable} \end{center} \end{table} 

\clearpage{}

\section{Preliminaries\label{app:qudratic}}

In proving consistency and asymptotic normality of the QMLE estimator
we encounter linear quadratic forms of the form 
\begin{equation}
S_{N}(\theta)=U^{\prime}A_{N}(\theta)U+U^{\prime}a_{N}(\theta)\label{eq:Sn}
\end{equation}
where $A_{N}(\theta)$ is an $N\times N$ non-stochastic matrix, $a_{N}(\theta)$
is an $N-$dimensional non-stochastic column vector, and where $A_{N}(\theta)$
and $a_{N}(\theta)$ exhibit some special structures. In the following
we describe that structure in more detail, and collect some basic
lemmata used in proving the consistency and asymptotic normality of
the QMLE.

We adopt the following notation: Partition an $N\times N$ matrix
$A_{N}$ into $R\times R$ submatrices, with the $(r,r^{\prime})$-th
submatrix being an $m_{r}\times m_{r^{\prime}}$ matrix, $r,r^{\prime}=1,...,R$.
We then denote the $(r,r^{\prime})$-th submatrix of $A_{N}$ as $A_{(r,r^{\prime}),N}$,
and the $(i,j)$-th element of $A_{N}$ as $a_{ij,N}$, $1\leqslant i\leqslant N$,
$1\leqslant j\leqslant N$. Partition an $N\times1$ vector $a_{N}$
into $R$ subvectors, with the $r$-th subvector being an $m_{r}\times1$
vector. We then denote the $r$-th subvector of $a_{N}$ as $a_{(r),N}$
and the $i$-th element of $a_{N}$ as $a_{i,N}$. In line with \citet{kelejian_asymptotic_2001},
we call the column and row sums of an $N\times N$ matrix $A_{N}(\theta)$
uniformly bounded in absolute value if there exists some finite constant
$C$ (which does not depend on $N$ or $\theta$) such that 
\[
\begin{array}{c}
\sup_{\theta\in\Theta}\sum_{i=1}^{N}|a(\theta)_{ij,N}|\leq C,\quad\sup_{\theta\in\Theta}\sum_{j=1}^{N}|a(\theta)_{ij,N}|\leq C.\end{array}
\]
A corresponding definition applies to rectangular matrices. Of course,
if the row sums of $A_{N}(\theta)$ are uniformly bounded in absolute
value, and the elements of $a_{N}(\theta)$ are uniformly bounded
in absolute value, then the elements of $A_{N}(\theta)a_{N}(\theta)$
are uniformly bounded in absolute value. Note that if the row and
column sums of $A_{N}(\theta)$ and $B_{N}(\theta)$ are uniformly
bounded in absolute value, then $A_{N}(\theta)+B_{N}(\theta)$ and
$A_{N}(\theta)B_{N}(\theta)$ (if dimension permits addition or multiplication)
also have row and column sums uniformly bounded in absolute value.\footnote{This is readily seen by argumentation in line with \citet{kelejian_generalized_1999}.}

\subsection{Basic Properties of Matrices Forming the Log-Likelihood Function
\label{subsec:matrix properties}}

Recall that $\theta=(\theta_{1},...,\theta_{J+2})$, with $\theta_{1}=\lambda,\text{\ensuremath{\theta_{2}=\sigma_{\alpha}^{2}}, and \ensuremath{\theta_{j+2}=\sigma_{\epsilon,j}^{2}}}$
for $j=1,...,J$, and that in light of Assumptions \ref{assume:epsilon},
\ref{assume:alpha}, and \ref{assume:lambda} the parameter space
$\Theta$ is compact. An inspection of the expression of the log-likelihood
function shows that it depends on the following set of matrices: $I-\lambda W$,
$(I-\lambda W)^{-1}$, $\Omega(\theta)$, $\Omega(\theta)^{-1}$,
$W$. For generic functions $p(m_{r},D_{r},\theta)$ and $s(m_{r},D_{r},\theta)$
all these matrices are symmetric block diagonal matrices of the form
\begin{equation}
A_{N}(\theta)=\diag_{r=1}^{R}\left\{ p(m_{r},D_{r},\theta)I_{m_{r}}^{\ast}+s(m_{r},D_{r},\theta)J_{m_{r}}^{\ast}\right\} .\label{AppCon0}
\end{equation}
In particular, by replacing $p\left(.\right)$ and $s\left(.\right)$
with specific functions $\phi_{S},\phi_{\Omega},\phi_{W},\psi_{S},\psi_{\Omega}$
and $\psi_{W}$ defined below, one obtains
\begin{eqnarray}
I-\lambda W & = & \diag_{r=1}^{R}\left\{ \phi_{S}(m_{r},\theta)I_{m_{r}}^{\ast}+\psi_{S}(m_{r},\theta)J_{m_{r}}^{\ast}\right\} \label{AppCon1}\\
(I-\lambda_{0}W)^{-1} & = & \diag_{r=1}^{R}\left\{ \phi_{S}^{-1}(m_{r},\theta_{0})I_{m_{r}}^{\ast}+\psi_{S}^{-1}(m_{r},\theta_{0})J_{m_{r}}^{\ast}\right\} ,\nonumber \\
\Omega_{0} & = & \diag_{r=1}^{R}\{\phi_{\Omega}(m_{r},D_{r},\theta_{0})I_{m_{r}}^{\ast}+\psi_{\Omega}(m_{r},D_{r},\theta_{0})J_{m_{r}}^{\ast}\},\nonumber \\
\Omega(\theta)^{-1} & = & \diag_{r=1}^{R}\{\phi_{\Omega}^{-1}(m_{r},D_{r},\theta)I_{m_{r}}^{\ast}+\psi_{\Omega}^{-1}(m_{r},D_{r},\theta)J_{m_{r}}^{\ast}\},\nonumber \\
W & = & \diag_{r=1}^{R}\{\phi_{W}(m_{r},\theta)I_{m_{r}}^{\ast}+\psi_{W}(m_{r},\theta)J_{m_{r}}^{\ast}\},\nonumber 
\end{eqnarray}
where 
\begin{equation}
\begin{array}{cc}
\phi_{S}(m_{r},\theta)=\frac{m_{r}-1+\lambda}{m_{r}-1}, & \psi_{S}(m_{r},\theta)=1-\lambda,\\
\phi_{\Omega}(m_{r},D_{r},\theta)=\sigma_{\epsilon,D_{r}}^{2}, & \psi_{\Omega}(m_{r},D_{r},\theta)=\sigma_{\epsilon,D_{r}}^{2}+m_{r}\sigma_{\alpha}^{2}.\\
\phi_{W}(m_{r},\theta)=-\frac{1}{m_{r}-1}, & \psi_{W}(m_{r},\theta)=1
\end{array}\label{AppCon2}
\end{equation}

It is readily seen that there exists an open bounded set $\Theta_{o}$
such that $\Theta\subset\Theta_{o}\subset(-1,1)\times R^{J+1}$ such
that the placeholder functions $p(m_{r},D_{r},\theta)$ and $s(m_{r},D_{r},\theta)$,
explicitly defined in \eqref{AppCon2}, are continuously differentiable
on $\Theta_{o}$. Thus, by Bolzano-Weierstrass' extreme value theorem
there exists a positive constant $C$, which does not depend on $\theta$,
such that 
\begin{equation}
0\leq\left\vert p(m_{r},D_{r},\theta)\right\vert ,\left\vert s(m_{r},D_{r},\theta)\right\vert ,\left\vert \partial p(m_{r},D_{r},\theta)/\partial\theta_{i}\right\vert ,\left\vert \partial s(m_{r},D_{r},\theta)/\partial\theta_{i}\right\vert \leq C<\infty,\label{AppCon2a}
\end{equation}
for all $\theta\in\Theta.$ \footnote{Of course, since $m_{r}$ only takes on finitely many values, the
constants can also be taken such that they do not depend on $m_{r}$.
We note, although not stated explicitly, all subsequent uniformity
results also hold uniformly for $m_{r}\in\{m:2\leq m\leq\bar{M}\}$.} This implies that $p(m_{r},D_{r},\theta)$ and $s(m_{r},D_{r},\theta)$
are both uniformly continuous on $\Theta$. Observing that $\phi_{\Omega}(m_{r},D_{r},\theta)$
and $\psi_{\Omega}(m_{r},D_{r},\theta)$ are positive on $\Theta$
it follows further that there exists a positive constant $c$, which
does not depend on $\theta$, such that
\begin{equation}
0<c\leq\phi_{\Omega}(m_{r},D_{r},\theta),\psi_{\Omega}(m_{r},D_{r},\theta)\leq C<\infty.\label{AppCon2b}
\end{equation}
Since $I_{m_{r}}^{\ast}$ and $J_{m_{r}}^{\ast}$ are orthogonal and
idempotent, the multiplication of block diagonal matrices, where the
blocks are of the form $p(m_{r},D_{r},\theta)I_{m_{r}}^{\ast}+s(m_{r},D_{r},\theta)J_{m_{r}}^{\ast}$,
yields a matrix with the same structure. Furthermore the multiplication
of those matrices is commutative. More specifically, let

\[
A_{N}(\theta)=\diag_{r=1}^{R}\left\{ p(m_{r},D_{r},\theta)I_{m_{r}}^{\ast}+s(m_{r},D_{r},\theta)J_{m_{r}}^{\ast}\right\} 
\]
 and $\mathring{A}_{N}(\theta)=\diag_{r=1}^{R}\left\{ \text{\ensuremath{\mathring{p}}}(m_{r},D_{r},\theta)I_{m_{r}}^{\ast}+\mathring{s}(m_{r},D_{r},\theta)J_{m_{r}}^{\ast}\right\} $,
then
\[
A_{N}(\theta)\mathring{A}_{N}(\theta)=\diag_{r=1}^{R}\left\{ p(m_{r},D_{r},\theta)\text{\ensuremath{\mathring{p}}}(m_{r},D_{r},\theta)I_{m_{r}}^{\ast}+s(m_{r},D_{r},\theta)\mathring{s}(m_{r},D_{r},\theta)J_{m_{r}}^{\ast}\right\} .
\]
 In addition, $A_{N}(\theta)\text{ and }\mathring{A}_{N}(\theta)$
commute, $A_{N}(\theta)\mathring{A}_{N}(\theta)=\mathring{A}_{N}(\theta)A_{N}(\theta)$.
Also,
\begin{equation}
\left\vert A_{N}(\theta)\right\vert =\prod_{j=1}^{J}\prod\limits _{m=2}^{\bar{M}}\left\vert p(m,j,\theta)I_{m}^{\ast}+s(m,j,\theta)J_{m}^{\ast}\right\vert ^{R_{m,j}}=\prod_{j=1}^{J}\prod\limits _{m=2}^{\bar{M}}\left[p(m,j,\theta)^{m-1}s(m,j,\theta)\right]^{R_{m,j}},\label{AppCon2c}
\end{equation}
as is readily checked observing that $pI_{m}^{\ast}+sJ_{m}^{\ast}=p\{I_{m}+[(s-p)/(pm)]\iota_{m}\iota_{m}^{\prime}\}$
and applying Proposition 31 in \citet{dhrymes_mathematics_1978},
Section 2.7 on p. 38, to compute the determinant of the matrix in
curly brackets. Furthermore,

\[
\tr\left(A_{N}(\theta)\right)=\sum_{j=1}^{J}\sum_{m=2}^{\bar{M}}R_{m,j}\left((m-1)p(m,j,\theta)+s(m,j,\theta)\right).
\]
and 
\begin{eqnarray}
{\color{blue}}\frac{1}{N}Z^{\prime}A_{N}(\theta)Z & = & \frac{1}{N}\sum_{r=1}^{R}\left(p(m_{r},D_{r},\theta)Z_{r}^{\prime}I_{m_{r}}^{\ast}Z_{r}+s(m_{r},D_{r},\theta)Z_{r}^{\prime}J_{m_{r}}^{\ast}Z_{r}\right)\label{AppCon3}\\
 & = & \sum_{j=1}^{J}\sum_{m=2}^{\bar{M}}\left(p(m,j,\theta)\frac{1}{N}\sum_{r\in\mathcal{I}_{m,j}}\ddot{Z}_{r}^{\prime}\ddot{Z}_{r}+s(m,j,\theta)\frac{1}{N}\sum_{r\in\mathcal{I}_{m,j}}m\bar{z}_{r}^{\prime}\bar{z}_{r}\right).\nonumber 
\end{eqnarray}

We note that the row and column sums of any matrix $A_{N}(\theta)$
of the form (\ref{AppCon0}) are uniformly bounded in absolute value,
if $p(m_{r},D_{r},\theta)$ and $s(m_{r},D_{r},\theta)$ are uniformly
bounded in absolute value (observing that $m_{r}$ is bounded by Assumption
\ref{assume:n}). We note further that in light of Assumption \ref{assume:z}
the elements of $N^{-1}Z^{\prime}A_{N}(\theta)Z$ are uniformly bounded
in absolute value. If additionally $p(m_{r},D_{r},\theta)$ and $s(m_{r},D_{r},\theta)$
are positive and bounded away from zero, then also the elements of
$\left(N^{-1}Z^{\prime}A_{N}(\theta)Z\right)^{-1}$ are uniformly
bounded; see Lemma \ref{lem:Aux1}. Consequently the elements of $\left(N^{-1}Z^{\prime}\Omega(\theta)Z\right)^{-1}$
are uniformly bounded, and the row and column sums of $Z(Z^{\prime}\Omega(\theta)Z)^{-1}Z^{\prime}=N^{-1}Z\left(N^{-1}Z^{\prime}\Omega(\theta)Z\right)^{-1}Z^{\prime}$,
$\Omega(\theta)^{-1}Z\left(N^{-1}Z^{\prime}\Omega(\theta)Z\right)^{-1}Z^{\prime}\Omega(\theta)^{-1}$
and $M_{Z}(\theta)$ are uniformly bounded in absolute value. As a
result, $M_{Z}(\theta)$ and $\partial M_{Z}(\theta)/\partial\theta_{i}=-M_{Z}(\theta)\left(\partial\Omega(\theta)/\partial\theta_{i}\right)M_{Z}(\theta)$
have row and column sums uniformly bounded in absolute value.

In all, if a matrix $A_{N}(\theta)$ is the product of $I-\lambda W$,
$(I-\lambda W)^{-1}$, $\Omega(\theta)$, $\Omega(\theta)^{-1}$,
$W$, $\partial\Omega(\theta)/\partial\theta_{i}$, and $M_{Z}(\theta)$,
then both $A_{N}(\theta)$ and $\partial A_{N}(\theta)/\partial\theta_{i}$
have row and column sums uniformly bounded in absolute value, and
the elements of $A_{N}(\theta)Z\beta_{0}$ are uniformly bounded in
absolute value over $\theta\in\Theta$ and $N$.

\subsection{Limit Theorems for Linear Quadratic Forms in $U$}

The following result follows trivially from Lemma A.1 in \citet{kelejian_specification_2010},
and is only given for the convenience of the reader.
\begin{lem}
{[}Mean and Covariance{]} \label{lemma:moment}Let $A$ and $B$ be
$N\times N$ nonstochastic symmetric matrices, which are partitioned
into $R^{2}$ submatrices and let $a$ and $b$ be $N\times1$ vectors,
which are conformably partitioned into $R$ subvectors. Let $a_{ij}$
and $b_{ij}$ denote the $(i,j)$-th element of $A$ and $B$, let
$A_{(r,r^{\prime})}$ and $B_{(r.r^{\prime})}$ denote the $(r,r^{\prime})$-th
block of dimension $m_{r}\times m_{r^{\prime}}$, let $\vecd(A_{(r,r)})$
and $\vecd(B_{(r,r)})$ denote the column vectors of the diagonal
elements of $A_{(r.r)}$ and $B_{(r.r)}$, let $a_{i}$ and $b_{i}$
denote the $i$-th element of $a$ and $b$, and $a_{(r)}$ and $b_{(r)}$
denote the $r$-th subvectors of dimension $m_{r}\times1$. Let $J_{m}$
be the $m\times m$ matrix of ones. Then, under Assumptions \ref{assume:epsilon}
and \ref{assume:alpha}, 
\begin{eqnarray*}
 &  & E(U^{\prime}AU+U^{\prime}a)=tr(\Omega_{0}A),\\
 &  & \cov(U^{\prime}AU+U^{\prime}a,U^{\prime}BU+U^{\prime}b)=2\tr\left(A\Omega_{0}B\Omega_{0}\right)+a^{\prime}\Omega_{0}b\\
 &  & +\sum_{r=1}^{R}\vecd(A_{(r,r)})'\vecd(B_{(r,r)})(\mu_{\epsilon0,D_{r}}^{(4)}-3\sigma_{\epsilon0,D_{r}}^{4})+\sum_{r=1}^{R}\left[\tr(A_{(rr)}J_{m_{r}})\right]\left[\tr(B_{(rr)}J_{m_{r}})\right](\mu_{\alpha0}^{(4)}-3\sigma_{\alpha0}^{4})\\
 &  & +\sum_{r=1}^{R}\left(\vecd(A_{(r,r)})'b_{(r)}+\vecd(B_{(r,r)})'a_{(r)}\right)\mu_{\epsilon0,D_{r}}^{(3)}+\sum_{r=1}^{R}\left[\iota_{m_{r}}^{\prime}A_{(rr)}J_{m_{r}}b_{(r)}+\iota_{m_{r}}^{\prime}B_{(rr)}J_{m_{r}}a_{(r)}\right]\mu_{\alpha0}^{(3)}.
\end{eqnarray*}
\end{lem}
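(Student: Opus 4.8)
The statement is a direct specialization of Lemma A.1 in \citet{kelejian_specification_2010}, so the shortest route is to check that, under Assumptions \ref{assume:epsilon}--\ref{assume:alpha}, the composite error $U$ has exactly the independent-blocks / common-factor structure required there, and then read off the formula. The plan below instead sketches a self-contained argument, which also makes transparent where each of the six terms originates. The organizing observation is that, letting $D=\diag_{r=1}^{R}\{\iota_{m_{r}}\}$ be the $N\times R$ block-diagonal selector, $\alpha=(\alpha_{1},\dots,\alpha_{R})^{\prime}$ and $\epsilon=(\epsilon_{1}^{\prime},\dots,\epsilon_{R}^{\prime})^{\prime}$, we may write $U=D\alpha+\epsilon$; by Assumptions \ref{assume:epsilon}--\ref{assume:alpha} the vectors $\alpha$ and $\epsilon$ are mutually independent, each has independent mean-zero entries, and the relevant third and fourth moments exist thanks to the $4+\eta$ moment bounds. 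Moreover $\Omega_{0}=\Lambda_{\epsilon}+\sigma_{\alpha0}^{2}DD^{\prime}$, where $\Lambda_{\epsilon}=\diag_{r=1}^{R}\{\sigma_{\epsilon0,D_{r}}^{2}I_{m_{r}}\}$ and $DD^{\prime}=\diag_{r=1}^{R}\{J_{m_{r}}\}$.

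The mean is immediate: $E[U]=0$ and $E[UU^{\prime}]=\Omega_{0}$ give $E[U^{\prime}AU+U^{\prime}a]=E[\tr(AUU^{\prime})]=\tr(\Omega_{0}A)$. For the covariance, I would substitute $U=D\alpha+\epsilon$ and use symmetry of $A$ to write $U^{\prime}AU+U^{\prime}a=\alpha^{\prime}(D^{\prime}AD)\alpha+\alpha^{\prime}(D^{\prime}a)+2\alpha^{\prime}(D^{\prime}A)\epsilon+\epsilon^{\prime}A\epsilon+\epsilon^{\prime}a$, and likewise for $B,b$. Since $\alpha\perp\epsilon$ and both are mean zero, the covariance of the two linear-quadratic forms splits into exactly three pieces: the covariance of two linear-quadratic forms in the i.d.\ vector $\alpha$ (with ``matrices'' $D^{\prime}AD,D^{\prime}BD$ and ``vectors'' $D^{\prime}a,D^{\prime}b$), the covariance of two linear-quadratic forms in the independent heteroscedastic vector $\epsilon$ (with matrices $A,B$ and vectors $a,b$), and the covariance of the two bilinear forms $2\alpha^{\prime}D^{\prime}A\epsilon$ and $2\alpha^{\prime}D^{\prime}B\epsilon$; all remaining cross-covariances vanish because the bilinear term is linear in $\epsilon$ (resp.\ in $\alpha$) with zero mean.

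Each piece is then handled by the textbook moment identity: for a vector $\xi$ of independent mean-zero entries with variances $\omega_{i}$, third moments $\mu_{i}^{(3)}$, fourth moments $\mu_{i}^{(4)}$, and symmetric matrices $A,B$, one has $\cov(\xi^{\prime}A\xi+\xi^{\prime}a,\xi^{\prime}B\xi+\xi^{\prime}b)=2\tr(A\Lambda B\Lambda)+a^{\prime}\Lambda b+\sum_{i}a_{ii}b_{ii}(\mu_{i}^{(4)}-3\omega_{i}^{2})+\sum_{i}(a_{ii}b_{i}+b_{ii}a_{i})\mu_{i}^{(3)}$ with $\Lambda=\diag(\omega_{i})$; and the bilinear piece is computed directly as $\cov(2\alpha^{\prime}D^{\prime}A\epsilon,2\alpha^{\prime}D^{\prime}B\epsilon)=4\sigma_{\alpha0}^{2}\tr(ADD^{\prime}B\Lambda_{\epsilon})$. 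To recombine, I would expand $2\tr(A\Omega_{0}B\Omega_{0})=2\tr(A(\Lambda_{\epsilon}+\sigma_{\alpha0}^{2}DD^{\prime})B(\Lambda_{\epsilon}+\sigma_{\alpha0}^{2}DD^{\prime}))$ into its four terms: $2\tr(A\Lambda_{\epsilon}B\Lambda_{\epsilon})$ is the Gaussian part of the $\epsilon$-piece; $2\sigma_{\alpha0}^{4}\tr(ADD^{\prime}BDD^{\prime})=2\sigma_{\alpha0}^{4}\tr((D^{\prime}AD)(D^{\prime}BD))$ is the Gaussian part of the $\alpha$-piece; and the two cross terms $2\sigma_{\alpha0}^{2}\tr(A\Lambda_{\epsilon}BDD^{\prime})$ and $2\sigma_{\alpha0}^{2}\tr(ADD^{\prime}B\Lambda_{\epsilon})$ are equal by symmetry and cyclicity of the trace and together equal the bilinear piece. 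Similarly $a^{\prime}\Omega_{0}b=a^{\prime}\Lambda_{\epsilon}b+\sigma_{\alpha0}^{2}(D^{\prime}a)^{\prime}(D^{\prime}b)$ accounts for the two linear terms. What is left over are precisely the skewness and kurtosis corrections from the $\alpha$- and $\epsilon$-pieces, and these are rewritten in the stated block form using $\iota_{m_{r}}^{\prime}A_{(r,r)}\iota_{m_{r}}=\tr(A_{(r,r)}J_{m_{r}})$, $\iota_{m_{r}}^{\prime}A_{(r,r)}J_{m_{r}}b_{(r)}=[\tr(A_{(r,r)}J_{m_{r}})]\,\iota_{m_{r}}^{\prime}b_{(r)}$, and the observation that all $\epsilon$-entries in block $r$ share category $D_{r}$, so the sums over $i$ collapse into sums over $r$ with $\vecd(A_{(r,r)})$ weights.

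There is no genuinely hard step; the work is bookkeeping. The one place that requires care is matching the bilinear $\alpha$--$\epsilon$ covariance with the two off-diagonal cross terms in the expansion of $2\tr(A\Omega_{0}B\Omega_{0})$, and, relatedly, keeping straight that the fourth-moment correction in the $\epsilon$-piece is in terms of the \emph{within-group} variance $\sigma_{\epsilon0,D_{r}}^{2}$ (and the third-moment correction likewise), rather than a diagonal entry of $\Omega_{0}$, since the group-effect contribution to $\Omega_{0}$ has already been fully absorbed into the Gaussian-type terms and the $\alpha$-kurtosis/skewness terms.
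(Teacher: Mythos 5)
Your proof is correct, and it reaches the result by a somewhat different organization than the paper. The paper's proof stacks the standardized components into a single $(N+R)$-vector $\xi=(\alpha_{1}/\sigma_{\alpha0},\dots,\alpha_{R}/\sigma_{\alpha0},\epsilon_{1}^{\prime}/\sigma_{\epsilon0,D_{1}},\dots)^{\prime}$ and writes $U=H\xi$ with $H=[\sigma_{\alpha0}\diag_{r}\{\iota_{m_{r}}\},\diag_{r}\{\sigma_{\epsilon0,D_{r}}I_{m_{r}}\}]$, so that $HH^{\prime}=\Omega_{0}$ and a \emph{single} application of Lemma A.1 of Kelejian and Prucha (2010) to $\xi^{\prime}(H^{\prime}AH)\xi+\xi^{\prime}(H^{\prime}a)$ delivers the formula; the block structure of $H$ then converts the diagonal-entry sums into the stated $\vecd(A_{(r,r)})$ and $\tr(A_{(rr)}J_{m_{r}})$ terms. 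You instead decompose $U=D\alpha+\epsilon$, split the covariance into the $\alpha$-quadratic, $\epsilon$-quadratic and bilinear pieces (your argument that the remaining cross-covariances vanish is right, since the bilinear form has zero conditional mean given either component), apply the elementary identity to each piece, and then reassemble the Gaussian-type terms into $2\tr(A\Omega_{0}B\Omega_{0})+a^{\prime}\Omega_{0}b$ using $\Omega_{0}=\Lambda_{\epsilon}+\sigma_{\alpha0}^{2}DD^{\prime}$, $\tr(ADD^{\prime}B\Lambda_{\epsilon})=\tr(A\Lambda_{\epsilon}BDD^{\prime})$ and $\tr(ADD^{\prime}BDD^{\prime})=\tr((D^{\prime}AD)(D^{\prime}BD))$; I checked these matchings and the block rewritings ($\iota^{\prime}A_{(rr)}\iota=\tr(A_{(rr)}J_{m_{r}})$, etc.) and they all go through. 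The paper's route buys brevity (one invocation of a citable lemma, no cross-term bookkeeping); yours buys transparency about which moment of which error component generates each of the six terms, at the cost of the reassembly algebra. One small caveat: your opening remark that the lemma is a ``direct specialization'' of Kelejian--Prucha's Lemma A.1 is not quite accurate as stated, since $U$ itself does not have independent entries (within-group dependence through $\alpha_{r}$); either their result must be applied after the paper's $U=H\xi$ stacking or one must argue as you actually do, so the body of your proof is what carries the claim, not that remark.
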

\begin{proof}
Let $H=[\sigma_{\alpha0}\diag_{r=1}^{R}\left\{ \iota_{m_{r}}\right\} ,\diag_{r=1}^{R}\{\sigma_{\epsilon0,D_{r}}I_{m_{r}}\}]$.
Consider the $(N+R)\times1$ dimensional vector 
\[
\xi=(\alpha_{1}/\sigma_{\alpha0},....,\alpha_{R}/\sigma_{\alpha0},\epsilon_{1}^{\prime}/\sigma_{\epsilon0,D_{1}},...,\epsilon_{R}^{\prime}/\sigma_{\epsilon0,D_{R}})^{\prime},
\]
Then $U=H\xi$, and 
\begin{equation}
U^{\prime}AU+U^{\prime}a=\xi^{\prime}(H^{\prime}AH)\xi+\xi^{\prime}(H^{\prime}a).\label{eq:Sn2}
\end{equation}
Note that by Assumptions \ref{assume:epsilon} and \ref{assume:alpha},
the elements of $\xi$ are independently distributed with $E\left[\xi\right]=0_{(N+R)\times1}$,
$\var\left(\xi\right)=I_{N+R}$. Denote the $i$-th entry of $\xi$
as $\xi_{i}$, $1\leqslant i\leqslant N+R$. Denote the third and
fourth moments of $\xi_{i}$ as $\mu_{\xi_{i}}^{(3)}$ and $\mu_{\xi_{i}}^{(4)}$
respectively. Under Assumptions \ref{assume:epsilon} and \ref{assume:alpha},
when $1\leqslant i\leqslant R$, $\mu_{\xi_{i}}^{(3)}=\mu_{\alpha0}^{(3)}/\sigma_{\alpha0}^{3}$
and $\mu_{\xi_{i}}^{(4)}=\mu_{\alpha0}^{(4)}/\sigma_{\alpha0}^{4}$.
When $R+m_{1}+...+m_{r-1}+1\leqslant i\leqslant R+m_{1}+...+m_{r}$,
$\mu_{\xi_{i}}^{(3)}=\mu_{\epsilon0,D_{r}}^{(3)}/\sigma_{\epsilon0,D_{r}}^{3}$
and $\mu_{\xi_{i}}^{(4)}=\mu_{\epsilon0,D_{r}}^{(4)}/\sigma_{\epsilon0,D_{r}}^{4}$.
Furthermore, there exists some $\eta_{\xi}>0$ such that $E[|\xi_{i}|^{4+\eta_{\xi}}]<\infty$.

Using the transformation of linear quadratic forms in (\ref{eq:Sn2})
and applying Lemma A.1 in \citet{kelejian_specification_2010} yields,
\[
E\left[U^{\prime}AU+U^{\prime}a\right]=E[\xi^{\prime}(H^{\prime}AH)\xi+\xi^{\prime}(H^{\prime}a)]=\tr(H^{\prime}AH)=\tr\left(A\Omega_{0}\right),
\]
observing that 
\[
HH^{\prime}=\sigma_{\alpha0}^{2}\diag_{r=1}^{R}\left\{ \iota_{m_{r}}\iota_{m_{r}}^{\prime}\right\} +\diag_{r=1}^{R}\left\{ \sigma_{\epsilon0,D_{r}}^{2}I_{m_{r}}\right\} =\Omega_{0}.
\]
 Furthermore the variance of the linear quadratic forms in $U$ is
given by 
\begin{align*}
 & \cov(U^{\prime}AU+U^{\prime}a,U^{\prime}BU+U^{\prime}b)\\
= & \cov\left(\xi^{\prime}(H^{\prime}AH)\xi+\xi^{\prime}(H^{\prime}a),\xi^{\prime}(H^{\prime}BH)\xi+\xi^{\prime}(H^{\prime}b)\right)\\
= & 2\tr\left(H^{\prime}AHH^{\prime}BH\right)+a^{\prime}HH^{\prime}b\\
+ & \sum_{i=1}^{N+R}(H^{\prime}AH)_{ii}(H^{\prime}BH)_{ii}(\mu_{\xi_{i}}^{(4)}-3)+\sum_{i=1}^{N+R}[(H^{\prime}AH)_{ii}(H^{\prime}b)_{i}+(H^{\prime}BH)_{ii}(H^{\prime}a)_{i}]\mu_{\xi_{i}}^{(3)}\\
= & 2\tr\left(A\Omega_{0}B\Omega_{0}\right)+a^{\prime}\Omega_{0}b\\
+ & \sum_{r=1}^{R}\vecd(A_{(r,r)})'\vecd(B_{(r,r)})(\mu_{\epsilon0,D_{r}}^{(4)}-3\sigma_{\epsilon0,D_{r}}^{4})+\sum_{r=1}^{R}\left[\tr(A_{(rr)}J_{m_{r}})\right]\left[\tr(B_{(rr)}J_{m_{r}})\right](\mu_{\alpha0}^{(4)}-3\sigma_{\alpha0}^{4})\\
+ & \sum_{r=1}^{R}(\vecd(A_{(r,r)})'b_{(r)}+\vecd(B_{(r,r)})'a_{(r)})\mu_{\epsilon0,D_{r}}^{(3)}+\sum_{r=1}^{R}\left(\iota_{m_{r}}^{\prime}A_{(rr)}J_{m_{r}}b_{(r)}+\iota_{m_{r}}^{\prime}B_{(rr)}J_{m_{r}}a_{(r)}\right)\mu_{\alpha0}^{(3)}.
\end{align*}
\end{proof}
\begin{lem}
{[}Central Limit Theorem{]} \label{lem:clt} Suppose Assumptions 1-6
hold. For $l=1,\ldots,L$ let $A_{N}^{(l)}$ be non-stochastic $N\times N$
matrices where the row and column sums of the absolute elements are
uniformly bounded in $N$, and let $a_{N}^{(l)}$ be $N\times1$ non-stochastic
vectors where the absolute elements are uniformly bounded in $N$.
Let $S_{N}=[S_{N}^{(1)},S_{N}^{(2)},..,S_{N}^{(L)}]^{\prime}$ be
an $L\times1$ vector of linear quadratic forms of $U$, with 
\[
S_{N}^{(l)}=U^{\prime}A_{N}^{(l)}U+U^{\prime}a_{N}^{(l)},\quad l=1,...,L.
\]
Let $\Sigma_{S,N}$ denote variance covariance matrix of $S_{N}$,
where explicit expressions for the elements of $\Sigma_{S,N}$ are
readily obtained from Lemma \ref{lemma:moment}, and assume that $\rho_{min}(\Sigma_{S,N})\geq c$
for some constant $c>0$. Let $\Sigma_{S,N}=\Sigma_{S,N}^{1/2}\Sigma_{S,N}^{1/2}$,
then 
\[
\Sigma_{S,N}^{-1/2}(S_{N}-E\left[S_{N}\right])\xrightarrow{d}N(0,I_{L})
\]
as $N\rightarrow\infty$. \\
(Note that under Assumption \ref{assume:z} the conditions postulated
for $a_{N}^{(l)}$ hold if $a_{N}^{(l)}=B_{N}^{(l)}Z\beta_{0}$, and
the $B_{N}^{(l)}$ are non-stochastic $N\times N$ matrices where
the row and column sums of the absolute elements are uniformly bounded
in $N$.)
\end{lem}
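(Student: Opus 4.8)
The plan is to prove the multivariate convergence by the Cram\'er--Wold device, reducing it to the asymptotic normality of an arbitrary scalar linear combination, and then to invoke the central limit theorem for linear quadratic forms of \citet{kelejian_asymptotic_2001} and \citet{kelejian_specification_2010} after expressing $U$ through an underlying array of independent innovations.

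First I would fix $\gamma\in\mathbb{R}^{L}$ with $\left\Vert \gamma\right\Vert =1$ and observe that $\gamma^{\prime}S_{N}=U^{\prime}\tilde{A}_{N}U+U^{\prime}\tilde{a}_{N}$, where $\tilde{A}_{N}$ is the symmetrization $\tfrac{1}{2}\sum_{l=1}^{L}\gamma_{l}(A_{N}^{(l)}+A_{N}^{(l)\prime})$ of $\sum_{l}\gamma_{l}A_{N}^{(l)}$ and $\tilde{a}_{N}=\sum_{l=1}^{L}\gamma_{l}a_{N}^{(l)}$. Since finite linear combinations of matrices with uniformly bounded row and column sums of absolute elements retain that property, and likewise for vectors with uniformly bounded absolute elements, $\tilde{A}_{N}$ has uniformly bounded row and column sums of absolute elements and the elements of $\tilde{a}_{N}$ are uniformly bounded. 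Moreover $\var(\gamma^{\prime}S_{N})=\gamma^{\prime}\Sigma_{S,N}\gamma$, which by hypothesis is bounded below by $\rho_{\min}(\Sigma_{S,N})\geq c>0$, so the standardization is non-degenerate.

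Next I would use the representation $U=H\xi$ introduced in the proof of Lemma \ref{lemma:moment}, with $H=[\sigma_{\alpha0}\diag_{r=1}^{R}\{\iota_{m_{r}}\},\diag_{r=1}^{R}\{\sigma_{\epsilon0,D_{r}}I_{m_{r}}\}]$ and $\xi$ the corresponding vector of standardized group effects and disturbances, so that $\gamma^{\prime}S_{N}=\xi^{\prime}(H^{\prime}\tilde{A}_{N}H)\xi+\xi^{\prime}(H^{\prime}\tilde{a}_{N})$. Under Assumptions \ref{assume:epsilon} and \ref{assume:alpha} the entries of $\xi$ are independently distributed with mean zero, unit variance, and uniformly bounded $(4+\eta_{\xi})$-th absolute moments for some $\eta_{\xi}>0$. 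Because group size is bounded by $\bar{M}$ (Assumption \ref{assume:n}(b)) and the scale factors $\sigma_{\alpha0},\sigma_{\epsilon0,j}$ are bounded, $H$ has at most two nonzero, absolutely bounded entries in each row and at most $\bar{M}$ in each column, so $H$ and $H^{\prime}$ have uniformly bounded row and column sums of absolute elements; since this property is preserved under matrix multiplication, $H^{\prime}\tilde{A}_{N}H$ is symmetric with uniformly bounded row and column sums and the elements of $H^{\prime}\tilde{a}_{N}$ are uniformly bounded. Thus $\gamma^{\prime}S_{N}$ is a linear quadratic form in an independent array satisfying all the hypotheses of the central limit theorem of \citet{kelejian_specification_2010}, and applying it yields $\var(\gamma^{\prime}S_{N})^{-1/2}(\gamma^{\prime}S_{N}-E[\gamma^{\prime}S_{N}])\xrightarrow{d}N(0,1)$. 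Since $\gamma$ was arbitrary, the Cram\'er--Wold device then gives $\Sigma_{S,N}^{-1/2}(S_{N}-E[S_{N}])\xrightarrow{d}N(0,I_{L})$.

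The step that requires the most care is showing that conjugation by $H$ preserves the uniform boundedness of the row and column sums of the quadratic-form coefficient matrix; this is exactly where the fixed upper bound $\bar{M}$ on group size in Assumption \ref{assume:n}(b) is indispensable, since without it the columns of $H$ associated with the group effects would have unboundedly many nonzero entries. Verifying the variance (non-degeneracy) condition of the cited CLT, which is supplied by the lower bound on $\rho_{\min}(\Sigma_{S,N})$, and checking the moment conditions on $\xi$ are then routine given Lemma \ref{lemma:moment} and the matrix algebra collected in Appendix \ref{subsec:matrix properties}.
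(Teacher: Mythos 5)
Your core argument coincides with the paper's: both proofs write $U=H\xi$ with the $H$ and $\xi$ from the proof of Lemma \ref{lemma:moment}, use the bound $\bar{M}$ on group size to conclude that $H$ (and hence $H^{\prime}\tilde{A}_{N}H$ and $H^{\prime}\tilde{a}_{N}$) has uniformly bounded row and column sums of absolute elements, check that the entries of $\xi$ are independent with mean zero, unit variance and uniformly bounded $4+\eta$ absolute moments, and then appeal to the CLT for linear quadratic forms in \citet{kelejian_specification_2010}. The only substantive difference is the packaging at the end: the paper invokes Theorem A.1 of \citet{kelejian_specification_2010} directly, which is stated for a \emph{vector} of linear quadratic forms and yields $\Sigma_{S,N}^{-1/2}(S_{N}-E[S_{N}])\xrightarrow{d}N(0,I_{L})$ in one step, whereas you reduce to scalars via Cram\'er--Wold.

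That final reduction, as written, has a small but genuine gap. Establishing, for each fixed $\gamma$, that $\operatorname{Var}(\gamma^{\prime}S_{N})^{-1/2}(\gamma^{\prime}S_{N}-E[\gamma^{\prime}S_{N}])\xrightarrow{d}N(0,1)$ is not the same as establishing $\gamma^{\prime}\Sigma_{S,N}^{-1/2}(S_{N}-E[S_{N}])\xrightarrow{d}N(0,\Vert\gamma\Vert^{2})$, which is what the Cram\'er--Wold device requires for the standardized vector: the relevant direction is the $N$-dependent vector $c_{N}=\Sigma_{S,N}^{-1/2}\gamma$, and since $\Sigma_{S,N}$ (suitably normalized) need not converge, pointwise convergence along fixed directions does not automatically cover this moving direction. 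The repair stays within your framework: because $\Vert c_{N}\Vert\leq\rho_{\min}(\Sigma_{S,N})^{-1/2}\leq c^{-1/2}$, the combination $c_{N}^{\prime}S_{N}$ is again a linear quadratic form whose coefficient matrix $\sum_{l}c_{N,l}A_{N}^{(l)}$ and vector $\sum_{l}c_{N,l}a_{N}^{(l)}$ satisfy the same uniform boundedness conditions, with $\operatorname{Var}(c_{N}^{\prime}S_{N})=\Vert\gamma\Vert^{2}$ by construction; applying the same CLT machinery to these $N$-dependent combinations—or, more simply, citing the multivariate Theorem A.1 of \citet{kelejian_specification_2010} as the paper does—closes the argument. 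Everything else in your proposal, including the identification of Assumption \ref{assume:n}(b) as the place where conjugation by $H$ is controlled, matches the paper's reasoning.
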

\begin{proof}
Let $H$ and $\xi$ be defined as in the proof of Lemma \ref{lemma:moment},
so that $U=H\xi$. Upon substitution of this expression for $U$ we
have 
\[
S_{N}^{(l)}=\xi^{\prime}\tilde{A}_{N+R}^{(l)}\xi+\xi^{\prime}\tilde{a}_{N+R}^{(l)}
\]
where $\tilde{A}_{N+R}^{(l)}=(1/2)H^{\prime}(A_{N}^{(l)}+A_{N}^{(l)\prime})H$,
$\tilde{a}_{N+R}^{(l)}=H^{\prime}a_{N}^{(l)}$. Clearly, in light
of Assumptions \ref{assume:epsilon} and \ref{assume:alpha}, $\xi$
satisfies Assumptions A.1 and A.3 in \citet{kelejian_specification_2010}.
Furthermore, given the maintained assumptions on $A_{N}^{(l)}$ and
$a_{N}^{(l)}$, and since the row an column sums of $H$ are uniformly
bounded in absolute value, it follows that the row and column sums
of $\tilde{A}_{N+R}^{(l)}$ and the elements of $\tilde{a}_{N+R}^{(l)}$
are uniformly bounded in absolute value. This verifies that those
matrices and vectors satisfy the conditions of Assumption A.2 in \citet{kelejian_specification_2010}.
The lemma now follows from Theorem A.1 in \citet{kelejian_specification_2010}.
\end{proof}
As above, let $\Theta_{o}$ be an open bounded set with $\Theta\subset\Theta_{o}\subset(-1,1)\times R^{J+1}$.
\begin{lem}
{[}Uniform Convergence{]} \label{lem:Uniform-convergence} Let $\Theta_{0}$
be an open set containing $\Theta$. Let $A_{N}(\theta)$ and $B_{N}(\theta)$
be $N\times N$ matrices and let $S_{N}(\theta)$ be a linear-quadratic
form of $U$: 
\[
S_{N}(\theta)=U^{\prime}A_{N}(\theta)U+U^{\prime}B_{N}(\theta)Z\beta_{0}
\]
where $A_{N}(\theta)$ and $B_{N}(\theta)$ are differentiable $N\times N$
matrices defined for $\theta\in\Theta_{0}$. Suppose Assumptions 1-5
hold, and suppose the row and column sums of $A_{N}(\theta),B_{N}(\theta)$,
$\partial A_{N}(\theta)/\partial\theta_{i}$ and $\partial B_{N}(\theta)/\partial\theta_{i}$,
$i=1,...,J+2$, are bounded in absolute value uniformly in $N$ and
$\theta$. Then $N^{-1}S_{N}(\theta)-$ $N^{-1}E\left[S_{N}(\theta)\right]$
converges uniformly to zero i.p., i.e., 
\[
\underset{N\rightarrow\infty}{\plim}\sup_{\theta\in\Theta}|N^{-1}S_{N}(\theta)-N^{-1}E\left[S_{N}(\theta)\right]|=0.
\]
\end{lem}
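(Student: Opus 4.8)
The plan is to establish the uniform convergence by the familiar route of combining (i) pointwise convergence in probability of $N^{-1}S_N(\theta)-N^{-1}E[S_N(\theta)]$ with (ii) stochastic equicontinuity of $\{N^{-1}S_N(\theta)\}$ together with ordinary equicontinuity of the deterministic family $\{N^{-1}E[S_N(\theta)]\}$, and then invoking a generic uniform law of large numbers over the compact set $\Theta$ (e.g., \citealt{potscher_generic_1994} or \citealt{newey_uniform_1991}), whose hypotheses are precisely (i) and (ii). Throughout I use Lemma \ref{lemma:moment} for the mean and covariance of linear-quadratic forms in $U$, and the facts collected in Appendix \ref{subsec:matrix properties}: any product of the matrices $I-\lambda W$, $(I-\lambda W)^{-1}$, $\Omega(\theta)$, $\Omega(\theta)^{-1}$, $W$ and their $\theta$-derivatives has row and column sums bounded in absolute value uniformly in $N$ and $\theta$, $\Omega_0$ likewise, and the elements of $B_N(\theta)Z\beta_0$ are uniformly bounded under Assumption \ref{assume:z}.

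First I would prove pointwise convergence. Fix $\theta\in\Theta$. Since $E[U]=0$, the linear term does not contribute to the mean, so $E[S_N(\theta)]=\tr(A_N(\theta)\Omega_0)$ by Lemma \ref{lemma:moment}. Applying the covariance formula of Lemma \ref{lemma:moment} to $S_N(\theta)$ itself, $\var(S_N(\theta))$ is the sum of $2\tr(A_N(\theta)\Omega_0A_N(\theta)\Omega_0)$, the quadratic form $(B_N(\theta)Z\beta_0)'\Omega_0(B_N(\theta)Z\beta_0)$, and the $\mu^{(3)}$- and $\mu^{(4)}$-weighted block sums. Each term is $O(N)$: the trace term because $A_N(\theta)\Omega_0$ has uniformly bounded row and column sums and $|\tr(M^2)|\le N(\sup_i\sum_j|m_{ij}|)(\sup_j\sum_i|m_{ij}|)$; the quadratic form because $\Omega_0$ has uniformly bounded row sums and the entries of $B_N(\theta)Z\beta_0$ are uniformly bounded; and the block sums because they run over $N$ (or $R\le N/2$, since $m_r\ge2$) terms built from uniformly bounded entries. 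Hence $\var\bigl(N^{-1}S_N(\theta)\bigr)=O(N^{-1})$, and Chebyshev's inequality yields $N^{-1}S_N(\theta)-N^{-1}E[S_N(\theta)]\overset{p}{\rightarrow}0$ for each $\theta$.

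Next I would establish equicontinuity. Without loss of generality we may take $\Theta_0$ to be an open bounded convex set with $\Theta\subset\Theta_0\subset(-1,1)\times\mathbb{R}^{J+1}$. By the mean value theorem, for $\theta,\theta'\in\Theta$,
\[
\bigl|N^{-1}S_N(\theta)-N^{-1}S_N(\theta')\bigr|\le L_N\,\|\theta-\theta'\|,\qquad
L_N=\sum_{i=1}^{J+2}\sup_{\bar\theta\in\Theta_0}N^{-1}\Bigl|U'\tfrac{\partial A_N(\bar\theta)}{\partial\theta_i}U+U'\tfrac{\partial B_N(\bar\theta)}{\partial\theta_i}Z\beta_0\Bigr|.
\]
For any matrix $C$ one has $|U'CU|\le\rho_{\max}\bigl((C+C')/2\bigr)\|U\|^2$, and $\rho_{\max}$ of a matrix is bounded by the geometric mean of its maximal absolute row and column sums; together with $|U'CZ\beta_0|\le\|U\|\,\|CZ\beta_0\|$ and the uniform bounds on $\partial A_N/\partial\theta_i$, $\partial B_N/\partial\theta_i$ and on $Z\beta_0$, this gives $L_N\le c_1N^{-1}\|U\|^2+c_2N^{-1/2}\|U\|$ for constants $c_1,c_2$ not depending on $N$ or $\theta$. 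Since $E[N^{-1}\|U\|^2]=N^{-1}\tr(\Omega_0)\le\overline{\text{\ensuremath{a}}}_\epsilon+\overline{\text{\ensuremath{a}}}_\alpha$, Markov's inequality gives $N^{-1}\|U\|^2=O_p(1)$ and $N^{-1/2}\|U\|=O_p(1)$, so $L_N=O_p(1)$. The deterministic map $\bar g_N(\theta)=N^{-1}\tr(A_N(\theta)\Omega_0)$ is Lipschitz with a constant bounded uniformly in $N$ by the same trace estimate applied to $\partial A_N(\theta)/\partial\theta_i$. Covering the compact set $\Theta$ by finitely many $\eta$-balls, bounding $\sup_\theta|N^{-1}S_N(\theta)-\bar g_N(\theta)|$ by the maximum over the ball centers (which vanishes in probability by the previous step) plus $(L_N+\text{const})\eta$, and letting $\eta\downarrow0$, yields the claimed uniform convergence; equivalently, (i) and (ii) feed directly into the cited generic ULLN.

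The main obstacle is the equicontinuity step, specifically converting the assumed uniform-in-$(N,\theta)$ bounds on the derivative matrices into a single stochastic Lipschitz constant $L_N=O_p(1)$. The key device is the operator-norm bound $|U'CU|\le\rho_{\max}\bigl((C+C')/2\bigr)\|U\|^2$, which decouples the $\theta$-dependence (absorbed into a uniform deterministic constant via the row/column-sum bound) from the randomness (absorbed into $N^{-1}\|U\|^2=O_p(1)$). Everything else — the $O(N)$ order of each variance term and the Lipschitz bound on the mean — is routine bookkeeping with Lemma \ref{lemma:moment} and the matrix properties of Appendix \ref{subsec:matrix properties}.
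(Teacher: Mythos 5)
Your proposal is correct and takes essentially the same route as the paper: both arguments verify the hypotheses of Corollary 2.2 of \citet{newey_uniform_1991} (equivalently, a finite-cover argument) by combining pointwise convergence via Chebyshev's inequality and the variance formula of Lemma \ref{lemma:moment}, Lipschitz equicontinuity of the deterministic mean $N^{-1}\tr(A_{N}(\theta)\Omega_{0})$, and a stochastic Lipschitz constant that is $O_{p}(1)$. The only difference is technical bookkeeping in that last step: you bound $\left|U^{\prime}CU\right|$ by $\rho_{\max}\left((C+C^{\prime})/2\right)\left\Vert U\right\Vert ^{2}$ together with $E\left\Vert U\right\Vert ^{2}=\tr(\Omega_{0})=O(N)$, whereas the paper first writes $U=H\xi$ with standardized independent $\xi$ and bounds entrywise using $|\xi_{i}\xi_{j}|\leq(\xi_{i}^{2}+\xi_{j}^{2})/2$; both devices yield the same $O_{p}(1)$ constant and the same conclusion.
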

\begin{rem}
Given the uniform convergence in probability of $N^{-1}S_{N}(\theta)$
to its mean and the equicontinuity of $N^{-1}S_{N}(\theta)$, we have
$\underset{N\rightarrow\infty}{\plim}|N^{-1}S_{N}(\hat{\theta}_{N})-N^{-1}E\left[S_{N}(\theta_{0})\right]|$
as $N$ goes to infinity if $\hat{\theta}_{N}\rightarrow\theta_{0}$.
\end{rem}
\begin{proof}
To prove the lemma we verify that $N^{-1}S_{N}(\theta)$ and $N^{-1}\bar{S}_{N}(\theta)=N^{-1}E\left[S_{N}(\theta)\right]$
satisfy the conditions postulated by Corollary 2.2 of \citet{newey_uniform_1991};
cp., also Theorem 3.1(a) and the discussion after eq. (2.7) in \citet{potscher_generic_1994}.

The parameter space $\Theta$ is compact by assumption. We next verify
that $N^{-1}\bar{S}_{N}(\theta)$ is uniformly equicontinuous. By
Lemma \ref{lemma:moment}, $N^{-1}\bar{S}_{N}(\theta)=N^{-1}\tr(\Omega_{0}A(\theta))$.
Let $\theta,\theta^{\prime}\in\Theta$, then by the mean value theorem
\[
\tr(\Omega_{0}A_{N}(\theta))=\tr(\Omega_{0}A_{N}(\theta^{\prime}))+[\tr(\Omega_{0}\frac{\partial A_{N}(\theta^{\ast})}{\partial\theta_{1}}),...,\tr(\Omega_{0}\frac{\partial A_{N}(\theta^{\ast})}{\partial\theta_{J+2}})](\theta-\theta^{\prime}).
\]
where $\theta^{\ast}$ is a ``vector of between values\textquotedblright .
Note that the row and column sums of $A_{N}(\theta)$, $\nabla_{\theta_{i}}A_{N}(\theta)=\partial A_{N}(\theta)/\partial\theta_{i}$,
$\Omega_{0}$, and consequently the row and column sums of $\Omega_{0}A_{N}(\theta)$
and $\left(\Omega_{0}\nabla_{\theta_{i}}A_{N}(\theta)\right)$, are
uniformly (in $\theta$ and $N$) bounded in absolute value. Consequently
there exists a constant $C_{A}$ which does not depend on $\theta$,
$\theta^{\prime}$, or $N$ such that 
\begin{equation}
\left\vert N^{-1}\tr\left(\Omega_{0}A_{N}(\theta)\right)-N^{-1}\tr\left(\Omega_{0}A_{N}(\theta^{\prime})\right)\right\vert \leqslant C_{A}||\theta-\theta^{\prime}||,\label{eq:uniequi}
\end{equation}
which establishes that $N^{-1}\bar{S}_{N}(\theta)=N^{-1}\tr(\Omega_{0}A_{N}(\theta))$
is uniformly equicontinuous on $\Theta$.

We next prove point-wise convergence i.p. of $N^{-1}S_{N}(\theta)-N^{-1}E\left[S_{N}(\theta)\right]$
to zero. In light of Chebychev's inequality it suffices to show that
the variance of $N^{-1}S_{N}(\theta)$ converges to $0$ for any $\theta\in\Theta$.
Let $\overline{A}_{N}=\left(A_{N}(\theta)+A_{N}^{\prime}(\theta)\right)/2$
and $\overline{a}_{N}=B_{N}(\theta)Z\beta_{0}$, then by Lemma \ref{lemma:moment},
the variance of $S_{N}(\theta)$ is
\begin{align*}
\var\left(N^{-1/2}S_{N}(\theta)\right)= & N^{-1}\tr\left(\overline{A}_{N}\Omega_{0}\overline{A}_{N}\Omega_{0}\right)+N^{-1}\overline{a}_{N}^{\prime}\Omega_{0}\overline{a}_{N}\\
+ & N^{-1}\sum_{i=1}^{N}(\overline{a}_{ii,N})^{2}(\mu_{\epsilon0,D_{r(i)}}^{(4)}-3\sigma_{\epsilon0,D_{r(i)}}^{4})+N^{-1}\sum_{r=1}^{R}\left(\tr(\overline{A}_{(rr),N}J_{m_{r}})\right)^{2}(\mu_{\alpha}^{(4)}-3\sigma_{\alpha0}^{4})\\
+ & 2N^{-1}\sum_{i=1}^{N}\overline{a}_{ii,N}\overline{a}_{i,N}\mu_{\epsilon0,D_{r(i)}}^{(3)}+2N^{-1}\sum_{r=1}^{R}\left(\iota_{m_{r}}^{\prime}\overline{A}_{(rr),N}J_{m_{r}}\overline{a}_{(r),N}\right)\mu_{\alpha}^{(3)}.
\end{align*}
Under our assumptions the row and column sums of $\Omega_{0}$, $\overline{A}_{N}$,
and thus of $\overline{A}_{N}\Omega_{0}\overline{A}_{N}\Omega_{0}$
are uniformly bounded. Furthermore, it is readily seen that the elements
of $\overline{a}_{N}$ are uniformly bounded in absolute value. Consequently
$N^{-1}\tr\left(\overline{A}_{N}\Omega_{0}\overline{A}_{N}\Omega_{0}\right)$,
$N^{-1}\overline{a}_{N}^{\prime}\Omega_{0}\overline{a}_{N}$, and
all sums in the above expression are seen to be bounded by a finite
constant uniformly in $N$. In turn this implies that $\var(N^{-1}S_{N}(\theta))\rightarrow0$.

Finally we prove that $N^{-1}S_{N}(\theta)$ satisfies the following
Lipschitz condition:
\[
\left\vert N^{-1}S_{N}(\theta)-N^{-1}S_{N}(\theta^{\prime})\right\vert \leq C_{N}\left\Vert \theta-\theta^{\prime}\right\Vert 
\]
for all $\theta,\theta^{\prime}\in\Theta$ and some nonnegative random
variable $C_{N}$ that does not depend on $\theta,\theta^{\prime}$
and where $C_{N}=O_{p}(1)$. It prove again convenient to rewrite
as $S_{N}(\theta)$ as $S_{N}(\theta)=\xi^{\prime}\tilde{A}_{N}(\theta)\xi+\xi^{\prime}\tilde{a}_{N}(\theta)$
with $\tilde{A}_{N}(\theta)=H^{\prime}A_{N}(\theta)H$ and $\tilde{a}_{N}(\theta)=H^{\prime}B_{N}(\theta)Z\beta_{0}$,
where $H$ and $\xi$ are defined as in the proof of Lemma \ref{lemma:moment}.
Under the maintained assumptions $\tilde{A}_{N}(\theta)$ and $\tilde{a}_{N}(\theta)$
are differentiable for $\theta\in\Theta_{o}$, and the row and column
sums of $\tilde{A}_{N}(\theta)$, $\partial\tilde{A}_{N}(\theta)/\partial\theta_{i}$
and the elements of $\tilde{a}_{N}(\theta)$, $\partial\tilde{a}_{N}(\theta)/\partial\theta_{i}$
are uniformly bounded in absolute value in $\theta$ and $N$, with
$i=1,...,J+2$. Consequently, for some finite constant, say $K$,
we have $\left\vert \tilde{a}_{i,N}(\theta)\right\vert \leq K/2$
and, using the mean value theorem,
\[
\sum_{j=1}^{N+R}\left\vert \tilde{a}_{ij,N}(\theta)-\tilde{a}_{ij,N}(\theta^{\prime})\right\vert \leq\sum_{j=1}^{N+R}\left\Vert \partial\tilde{a}_{ij,N}(\theta_{\ast})/\partial\theta\right\Vert \left\Vert \theta-\theta^{\prime}\right\Vert \leq K\left\Vert \theta-\theta^{\prime}\right\Vert ,
\]
with $\theta_{\ast}$ a ``between value\textquotedblright . Observing
further that $|\xi_{i}\xi_{j}|\leqslant(\xi_{i}^{2}+\xi_{j}^{2})/2$
we have for any $\theta,\theta^{\prime}\in\Theta$ 
\begin{eqnarray*}
 &  & \left\vert N^{-1}S_{N}(\theta)-N^{-1}S_{N}(\theta^{\prime})\right\vert =\left\vert N^{-1}\xi^{\prime}\left[\tilde{A}_{N}(\theta)-\tilde{A}_{N}(\theta^{\prime})\right]\xi+N^{-1}\xi^{\prime}[\tilde{a}_{N}(\theta)-\tilde{a}_{N}(\theta^{\prime})]\right\vert \\
 & \leq & \frac{2}{N+R}\sum_{i=1}^{N+R}\sum_{j=1}^{N+R}\left\vert \tilde{a}_{ij,N}(\theta)-\tilde{a}_{ij,N}(\theta^{\prime})\right\vert (\xi_{i}^{2}+\xi_{j}^{2})/2+\frac{2}{N+R}\sum_{i=1}^{N+R}[\left\vert \tilde{a}_{i,N}(\theta)\right\vert +\left\vert \tilde{a}_{i,N}(\theta^{\prime})\right\vert ]\left\vert \xi_{i}\right\vert \\
 & \leq & \frac{1}{N+R}\sum_{i=1}^{N+R}\xi_{i}^{2}\sum_{j=1}^{N+R}\left\vert \tilde{a}_{ij,N}(\theta)-\tilde{a}_{ij,N}(\theta^{\prime})\right\vert +\frac{1}{N+R}\sum_{j=1}^{N+R}\xi_{j}^{2}\sum_{i=1}^{N+R}\left\vert \tilde{a}_{ij,N}(\theta)-\tilde{a}_{ij,N}(\theta^{\prime})\right\vert \\
 &  & +\frac{2K}{N+R}\sum_{i=1}^{N+R}\left\vert \xi_{i}\right\vert 
\end{eqnarray*}
and consequently $\left\vert N^{-1}S_{N}(\theta)-N^{-1}S_{N}(\theta^{\prime})\right\vert \leq C_{N}\left\Vert \theta-\theta^{\prime}\right\Vert $
where 
\[
C_{N}=\frac{2K}{N+R}\sum_{i=1}^{N+R}(\xi_{i}^{2}+\left\vert \xi_{i}\right\vert ).
\]
 Since $E\xi_{i}^{2}=1$, it follows that $EC_{N}\leq4KE(\xi_{i}^{2}+\left\vert \xi_{i}\right\vert )\leq8K$,
and thus $C_{N}$ is $O_{p}(1)$. Having verified all conditions of
Corollary 2.2 of \citet{newey_uniform_1991}, this concludes the proof
of the lemma.
\end{proof}
The following Lemma is helpful in proving Theorem \ref{Lemma:consistency_moment34}
later. Motivated by the proof of Theorem \ref{Lemma:consistency_moment34},
we define the following variables: $\bar{\phi}=\frac{(1-\hat{\lambda})\beta_{0}}{1-\lambda_{0}}-\hat{\beta}$,
$\bar{\varphi}=\frac{1-\hat{\lambda}}{1-\lambda_{0}}$, $\bar{z}_{r}=\frac{1}{m_{r}}\sum_{i=1}^{m_{r}}z_{ir}$,
$\ddot{\phi}_{r}=\frac{m_{r}-1+\hat{\lambda}}{m_{r}-1+\lambda_{0}}\beta_{0}-\hat{\beta}$,
$\ddot{\varphi}_{r}=1+\frac{\hat{\lambda}-\lambda_{0}}{m_{r}-1+\lambda_{0}}$,
$\ddot{z}_{ir}=z_{ir}-\bar{z}_{r}$.
\begin{lem}
\label{lem:moment34} Suppose Assumptions \ref{assume:epsilon}-\ref{assume:z}
hold, and let $\psi(.)$ be a finite positive scalar function of $m_{r}$,
$0<c_{\psi}\leqslant\psi(m_{r})\leqslant C_{\psi}<\infty$ for $r=1,...,R$.
Assume that $\hat{\lambda}\overset{p}{\rightarrow}\lambda_{0}$ and
$\hat{\beta}\overset{p}{\rightarrow}\beta_{0}$.Let $p_{1}$ and $p_{2}$
be integers such that $p_{1}\geqslant0$, $p_{2}\geqslant0$ and $p_{1}+p_{2}\leqslant4$,
then:

(a) The term $\frac{1}{R_{j}}\sum_{r=1}^{R}1(D_{r}=j)\psi(m_{r})\sum_{i=1}^{m_{r}}\ddot{u}_{ir}^{p_{1}}\bar{u}_{r}^{p_{2}}$
has a finite expected value, and its deviation from the expected value
converges in probability to zero as $R\rightarrow\infty$.

(b) For integers $0\leqslant s_{1}\leqslant p_{1}$, $0\leqslant s_{2}\leqslant p_{2}$,
and $s_{1}+s_{2}\geqslant1$, 
\[
\frac{1}{R_{j}}\sum_{r=1}^{R}1(D_{r}=j)\psi(m_{r})\sum_{i=1}^{m_{r}}(\ddot{z}_{ir}\ddot{\phi}_{r})^{s_{1}}(\ddot{\varphi}_{r}\ddot{u}_{ir})^{p_{1}-s_{1}}(\bar{z}_{r}\bar{\phi})^{s_{2}}(\bar{\varphi}\bar{u}_{r})^{p_{2}-s_{2}}\rightarrow_{p}0.
\]

(c) As $R$ goes to infinity, 
\[
\frac{1}{R_{j}}\sum_{r=1}^{R}1(D_{r}=j)\psi(m_{r})\sum_{i=1}^{m_{r}}[(\ddot{\varphi}_{r}\ddot{u}_{ir})^{p_{1}}(\bar{\varphi}\bar{u}_{r})^{p_{2}}-\ddot{u}_{ir}^{p_{1}}\bar{u}_{r}^{p_{2}}]\rightarrow_{p}0.
\]
\end{lem}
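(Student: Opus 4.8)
The plan is to establish part (a) first --- it is a weak law of large numbers for the independent array $\{\xi_{r}\}$ with $\xi_{r}=1(D_{r}=j)\psi(m_{r})\sum_{i=1}^{m_{r}}\ddot{u}_{ir}^{p_{1}}\bar{u}_{r}^{p_{2}}$ --- and then to deduce (b) and (c) from (a) together with the observation that, since $\hat{\lambda}\rightarrow_{p}\lambda_{0}$ and $\hat{\beta}\rightarrow_{p}\beta_{0}$, the estimator-dependent factors satisfy $\bar{\phi},\ddot{\phi}_{r}\rightarrow_{p}0$ and $\bar{\varphi},\ddot{\varphi}_{r}\rightarrow_{p}1$ uniformly over $r$. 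Uniformity holds because $2\le m_{r}\le\bar{M}$ by Assumption \ref{assume:n}(b) and $m_{r}-1+\lambda_{0}\ge 1+\lambda_{0}>0$ by Assumption \ref{assume:lambda}, so the denominators appearing in $\ddot{\phi}_{r},\ddot{\varphi}_{r},\bar{\phi},\bar{\varphi}$ are bounded and bounded away from zero.

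For part (a): since $D_{r}$ and $m_{r}$ are non-stochastic and $\ddot{u}_{ir}=\epsilon_{ir}-\bar{\epsilon}_{r}$, $\bar{u}_{r}=\alpha_{r}+\bar{\epsilon}_{r}$, each $\xi_{r}$ is a polynomial of total degree $p_{1}+p_{2}\le 4$ in the mutually independent block $(\alpha_{r},\epsilon_{1r},\dots,\epsilon_{m_{r}r})$, so the $\xi_{r}$ are independent across $r$. Using $E|\alpha_{r}|^{4+\eta_{\alpha}}<\infty$ and $E|\epsilon_{ir}|^{4+\eta_{\epsilon}}<\infty$ from Assumptions \ref{assume:epsilon}--\ref{assume:alpha}, the generalized H\"older inequality, $m_{r}\le\bar{M}$, $\psi\le C_{\psi}$, and the fact that $(m_{r},D_{r})$ take only finitely many values, I would show $\sup_{r}E|\xi_{r}|^{1+\delta}<\infty$ for $\delta=\tfrac{1}{4}\min(\eta_{\alpha},\eta_{\epsilon})\in(0,1]$; in particular the average in (a) has a finite expectation. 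Enumerating the indices with $D_{r}=j$ as $r^{(1)}<r^{(2)}<\cdots$, one then has $\sum_{l}l^{-(1+\delta)}E|\xi_{r^{(l)}}-E\xi_{r^{(l)}}|^{1+\delta}<\infty$, so the Kolmogorov-type strong law for independent summands under an $L^{1+\delta}$ moment bound (e.g.\ \citet{chung_course_2001}) yields $n^{-1}\sum_{l=1}^{n}(\xi_{r^{(l)}}-E\xi_{r^{(l)}})\to 0$ a.s.; since $R_{j}\to\infty$ by Assumption \ref{assume:n}(c), this is exactly $R_{j}^{-1}\sum_{r=1}^{R}(\xi_{r}-E\xi_{r})\rightarrow_{p}0$, which is (a).

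For part (b): from $\ddot{\phi}_{r}=(\beta_{0}-\hat{\beta})+\frac{\hat{\lambda}-\lambda_{0}}{m_{r}-1+\lambda_{0}}\beta_{0}$ and $\bar{\phi}=(\beta_{0}-\hat{\beta})+\frac{\lambda_{0}-\hat{\lambda}}{1-\lambda_{0}}\beta_{0}$ there is a constant $C$ such that $\|\ddot{\phi}_{r}\|\le\eta_{R}$ for all $r$ and $\|\bar{\phi}\|\le\eta_{R}$, where $\eta_{R}:=C(\|\hat{\beta}-\beta_{0}\|+|\hat{\lambda}-\lambda_{0}|)\rightarrow_{p}0$; likewise $|\ddot{\varphi}_{r}|$ and $|\bar{\varphi}|$ are at most $1+C|\hat{\lambda}-\lambda_{0}|=O_{p}(1)$. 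Bounding $\ddot{z}_{ir}\ddot{\phi}_{r}$ and $\bar{z}_{r}\bar{\phi}$ by Cauchy--Schwarz and using the uniform boundedness of the elements of $Z$ (Assumption \ref{assume:z}(a)) to bound $\|\ddot{z}_{ir}\|$ and $\|\bar{z}_{r}\|$ by a constant, the absolute value of the expression in (b) is at most
\[
C\,\eta_{R}^{\,s_{1}+s_{2}}\,O_{p}(1)\cdot\frac{1}{R_{j}}\sum_{r:D_{r}=j}\psi(m_{r})\sum_{i=1}^{m_{r}}|\ddot{u}_{ir}|^{p_{1}-s_{1}}|\bar{u}_{r}|^{p_{2}-s_{2}}.
\]
The last average has finite, uniformly bounded expectation and converges in probability to it by the argument of part (a) (now with exponents $p_{1}-s_{1},p_{2}-s_{2}$, whose sum is $\le 4$, and with absolute values in place of the signed powers), hence is $O_{p}(1)$; since $s_{1}+s_{2}\ge 1$ and $\eta_{R}\rightarrow_{p}0$ we have $\eta_{R}^{s_{1}+s_{2}}\rightarrow_{p}0$, and the product of an $o_{p}(1)$ and an $O_{p}(1)$ term is $o_{p}(1)$, proving (b). For part (c): factor $(\ddot{\varphi}_{r}\ddot{u}_{ir})^{p_{1}}(\bar{\varphi}\bar{u}_{r})^{p_{2}}-\ddot{u}_{ir}^{p_{1}}\bar{u}_{r}^{p_{2}}=(\ddot{\varphi}_{r}^{p_{1}}\bar{\varphi}^{p_{2}}-1)\,\ddot{u}_{ir}^{p_{1}}\bar{u}_{r}^{p_{2}}$; since $\ddot{\varphi}_{r}\rightarrow_{p}1$ uniformly in $r$ and $\bar{\varphi}\rightarrow_{p}1$, $\max_{r}|\ddot{\varphi}_{r}^{p_{1}}\bar{\varphi}^{p_{2}}-1|\rightarrow_{p}0$, so the left side of (c) is bounded in absolute value by $C_{\psi}\max_{r}|\ddot{\varphi}_{r}^{p_{1}}\bar{\varphi}^{p_{2}}-1|\cdot R_{j}^{-1}\sum_{r:D_{r}=j}\sum_{i=1}^{m_{r}}|\ddot{u}_{ir}|^{p_{1}}|\bar{u}_{r}|^{p_{2}}=o_{p}(1)\cdot O_{p}(1)=o_{p}(1)$.

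The step I expect to be the crux is the moment bookkeeping in part (a): confirming that the $(4+\eta)$-th moment assumptions on $\epsilon_{ir}$ and $\alpha_{r}$ yield a uniform $L^{1+\delta}$ bound on $\xi_{r}$, which is what makes a law of large numbers available; once (a) is established, parts (b) and (c) follow from it by crude deterministic bounds on the estimator-dependent factors and a Slutsky argument.
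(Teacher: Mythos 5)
Your proposal is correct and follows essentially the same route as the paper: part (a) via a uniform $L^{1+\delta}$ moment bound (from the $4+\eta$ moment assumptions, bounded group sizes and Hölder) plus independence across $r$ and the Chung-type strong law, and parts (b)--(c) via deterministic bounds on $\ddot{\phi}_{r},\bar{\phi},\ddot{\varphi}_{r},\bar{\varphi}$ that are uniform in $r$ (using $m_{r}\leq\bar{M}$ and $m_{r}-1+\lambda_{0}\geq1+\lambda_{0}>0$), combined with the $O_{p}(1)$ boundedness of the residual averages from (a) and a Slutsky/product argument. Your explicit remark that (a) must be reapplied with absolute values and exponents $p_{1}-s_{1},p_{2}-s_{2}$ is a point the paper glosses over, but otherwise the two arguments coincide.
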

\begin{proof}
(a) With both $m_{r}$ and $1(D_{r}=j)\psi(m_{r})$ being finite and
$0\leqslant p_{1}+p_{2}\leqslant4$, Assumptions \ref{assume:epsilon}
and \ref{assume:alpha} imply that $E\left[|1(D_{r}=j)\psi(m_{r})\sum_{i=1}^{m_{r}}\ddot{u}_{ir}^{p_{1}}\bar{u}_{r}^{p_{2}}|^{1+\eta_{\mu}}\right]\leq C_{\mu}<\infty$
uniformly in $r$ for some constant $C_{\mu}$ and some $\eta_{\mu}>0$,
and that $1(D_{r}=j)\psi(m_{r})\sum_{i=1}^{m_{r}}\ddot{u}_{ir}^{p_{1}}\bar{u}_{r}^{p_{2}}$
are independently distributed across $r$. The claim thus follows
from Theorem 5.4.1 and Corollary(ii) to that theorem in \citet{chung_course_2001}. 

(b) Under Assumption \ref{assume:z}, the elements of $\bar{z}_{r}$
and $\ddot{z}_{ir}$ are uniformly bounded in absolute value by some
constants $0<C_{Z}<\infty$. Under Assumptions \ref{assume:lambda}
and \ref{assume:n}, $|\ddot{\varphi}|$ and $|\bar{\varphi}|$ are
uniformly bounded by some constants $0<C_{\varphi}<\infty$. Let $|\ddot{\phi}_{r}|_{1}$,
$|\beta_{0}|_{1}$, $|\beta_{0}-\hat{\beta}|_{1}$ be the ${\displaystyle \ell}_{1}$
norm of $\ddot{\phi}_{r}$, $\beta_{0}$ and $\beta_{0}-\hat{\beta}$
respectively. Observe that $m_{r}-1+\lambda_{0}\geqslant\epsilon_{\lambda}$
for some $\epsilon_{\lambda}>0$, and thus
\begin{align*}
|\ddot{\phi}_{r}|_{1} & =|\frac{m_{r}-1+\hat{\lambda}}{m_{r}-1+\lambda_{0}}\beta_{0}-\hat{\beta}|_{1}\\
 & =|\frac{(\hat{\lambda}-\lambda_{0})\beta_{0}}{m_{r}-1+\lambda_{0}}+(\beta_{0}-\hat{\beta})|_{1}\\
 & \leqslant|\hat{\lambda}-\lambda_{0}||\beta_{0}|_{1}\frac{1}{\epsilon_{\lambda}}+|\beta_{0}-\hat{\beta}|_{1}.
\end{align*}
Therefore, 
\begin{align*}
 & |\psi(m_{r})||\ddot{z}_{ir}\ddot{\phi}_{r}|^{s_{1}}|\ddot{\varphi}_{r}\ddot{u}_{ir}|^{p_{1}-s_{1}}|\bar{z}_{r}\bar{\phi}|^{s_{2}}|\bar{\varphi}\bar{u}_{r}|^{p_{2}-s_{2}}\\
\leqslant & C_{\psi}C_{Z}^{s_{1}+s_{2}}C_{\varphi}^{(p_{1}+p_{2}-s_{1}-s_{2})}(|\hat{\lambda}-\lambda_{0}||\beta_{0}|_{1}\frac{1}{\epsilon_{\lambda}}+|\beta_{0}-\hat{\beta}|_{1})^{s_{1}}|\bar{\phi}|^{s_{2}}|\ddot{u}_{ir}|^{p_{1}-s_{1}}|\bar{u}_{r}|^{p_{2}-s_{2}},
\end{align*}
and
\begin{align*}
 & \left|\frac{1}{R_{j}}\sum_{r=1}^{R}1(D_{r}=j)\psi(m_{r})\sum_{i=1}^{m_{r}}(\ddot{z}_{ir}\ddot{\phi}_{r})^{s_{1}}(\ddot{\varphi}_{r}\ddot{u}_{ir})^{p_{1}}(\bar{z}_{r}\bar{\phi})^{s_{2}}(\bar{\varphi}\bar{u}_{r})^{p_{2}}\right|\\
\leqslant & \frac{1}{R_{j}}\sum_{r=1}^{R}1(D_{r}=j)\sum_{i=1}^{m_{r}}|\psi(m_{r})||\ddot{z}_{ir}\ddot{\phi}_{r}|^{s_{1}}|\ddot{\varphi}_{r}\ddot{u}_{ir}|^{p_{1}-s_{1}}|\bar{z}_{r}\bar{\phi}|^{s_{2}}|\bar{\varphi}\bar{u}_{r}|^{p_{2}-s_{2}}\\
\leqslant & C_{\psi}C_{Z}^{s_{1}+s_{2}}C_{\varphi}^{(p_{1}+p_{2}-s_{1}-s_{2})}(|\hat{\lambda}-\lambda_{0}||\beta_{0}|_{1}\frac{1}{\epsilon_{\lambda}}+|\beta_{0}-\hat{\beta}|_{1})^{s_{1}}|\bar{\phi}|^{s_{2}}\frac{1}{R_{j}}\sum_{r=1}^{R}1(D_{r}=j)\sum_{i=1}^{m_{r}}|\ddot{u}_{ir}|^{p_{1}-s_{1}}|\bar{u}_{r}|^{p_{2}-s_{2}}.
\end{align*}
Note that $\bar{\phi}\rightarrow_{p}0$. With $s_{1}\geqslant0$,
$s_{2}\geqslant0$ and $s_{1}+s_{2}\geqslant1$, $(|\hat{\lambda}-\lambda_{0}||\beta_{0}|_{1}\frac{1}{\epsilon_{\lambda}}+|\beta_{0}-\hat{\beta}|_{1})^{s_{1}}|\bar{\phi}|^{s_{2}}\rightarrow_{p}0$.
By part (a) of the Lemma, $\frac{1}{R_{j}}\sum_{r=1}^{R}1(D_{r}=j)\sum_{i=1}^{m_{r}}|\ddot{u}_{ir}|^{p_{1}-s_{1}}|\bar{u}_{r}|^{p_{2}-s_{2}}$
is bounded in probability. Consequently the equation above converges
to 0 in probability.

(c) We can rewrite $\ddot{\varphi}_{r}=1+\ddot{\varsigma}_{r}$, and
$\bar{\varphi}=1+\bar{\varsigma}$, where $\ddot{\varsigma}_{r}=\frac{\hat{\lambda}-\lambda_{0}}{m_{r}-1+\lambda_{0}}$
, $\bar{\varsigma}=-\frac{\hat{\lambda}-\lambda_{0}}{1-\lambda_{0}}$.
Since $m_{r}-1+\lambda_{0}>\epsilon_{\lambda}$ for some $\epsilon_{\lambda}>0$
and $|\hat{\lambda}-\lambda_{0}|<2$, both $\ddot{\varsigma}_{r}$
and $\bar{\varsigma}$ are uniformly bounded in absolute value and
there exists some constant $0<C_{\varsigma}<\infty$ such that $|\ddot{\varsigma}_{r}|\leqslant C_{\varsigma}|\hat{\lambda}-\lambda_{0}|$
and $|\bar{\varsigma}_{r}|\leqslant C_{\varsigma}|\hat{\lambda}-\lambda_{0}|$.
Next observe that by the mean-value theorem, for $p\geqslant1$ we
have $(1+x)^{p}=1+px(1+\tilde{x})^{p-1}$ where $\tilde{x}$ lies
between $x$ and 0. The equation also holds trivially for $p=0$.
Consequently, 
\begin{align*}
|\ddot{\varphi}_{r}^{p_{1}}\bar{\varphi}^{p_{2}}-1| & =\left|\left(1+p_{1}\ddot{\varsigma}_{r}(1+\tilde{\ddot{\varsigma}})^{p_{1}-1}\right)\left(1+p_{2}\bar{\varsigma}_{r}(1+\tilde{\bar{\varsigma}})^{p_{2}-1}\right)-1\right|\\
 & =\left|p_{1}p_{2}\ddot{\varsigma}_{r}(1+\tilde{\ddot{\varsigma}})^{p_{1}-1}\bar{\varsigma}_{r}(1+\tilde{\bar{\varsigma}})^{p_{2}-1}+p_{1}\ddot{\varsigma}_{r}(1+\tilde{\ddot{\varsigma}})^{p_{1}-1}+p_{2}\bar{\varsigma}_{r}(1+\tilde{\bar{\varsigma}})^{p_{2}-1}\right|\\
 & \leqslant p_{1}p_{2}|\ddot{\varsigma}_{r}\bar{\varsigma}_{r}||(1+\tilde{\ddot{\varsigma}})^{p_{1}-1}(1+\tilde{\bar{\varsigma}})^{p_{2}-1}|+p_{1}|\ddot{\varsigma}_{r}||(1+\tilde{\ddot{\varsigma}})^{p_{1}-1}|+p_{2}|\bar{\varsigma}_{r}||(1+\tilde{\bar{\varsigma}})^{p_{2}-1}|
\end{align*}
where $\tilde{\ddot{\varsigma}}$ lies between $\ddot{\varsigma}$
and 0, $\tilde{\bar{\varsigma}}$ lies between $\bar{\varsigma}$
and 0, and thus $\tilde{\ddot{\varsigma}}$ and $\tilde{\bar{\varsigma}}$
are both uniformly bounded in absolute value. Therefore $|(1+\tilde{\ddot{\varsigma}})^{p_{1}-1}(1+\tilde{\bar{\varsigma}})^{p_{2}-1}|$,
$|(1+\tilde{\ddot{\varsigma}})^{p_{1}-1}|$ and $|(1+\tilde{\bar{\varsigma}})^{p_{2}-1}|$
are all uniformly bounded. Therefore there exists some constant $0<C_{p}<\infty$
such that
\begin{align*}
|\ddot{\varphi}_{r}^{p_{1}}\bar{\varphi}^{p_{2}}-1| & \leqslant C_{p}|\hat{\lambda}-\lambda_{0}|,
\end{align*}
and

\begin{align*}
 & \left|\frac{1}{R_{j}}\sum_{r=1}^{R}1(D_{r}=j)\psi(m_{r})\sum_{i=1}^{m_{r}}\left((\ddot{\varphi}_{r}\ddot{u}_{ir})^{p_{1}}(\bar{\varphi}\bar{u}_{r})^{p_{2}}-\ddot{u}_{ir}^{p_{1}}\bar{u}_{r}^{p_{2}}\right)\right|\\
\leqslant & \frac{1}{R_{j}}\sum_{r=1}^{R}1(D_{r}=j)\sum_{i=1}^{m_{r}}\left|\psi(m_{r})\right|\left|\ddot{\varphi}_{r}^{p_{1}}\bar{\varphi}^{p_{2}}-1\right|\left|\ddot{u}_{ir}^{p_{1}}\bar{u}_{r}^{p_{2}}\right|\\
\leqslant & C_{\varphi}C_{p}|\hat{\lambda}-\lambda_{0}|\frac{1}{R_{j}}\sum_{r=1}^{R}1(D_{r}=j)\sum_{i=1}^{m_{r}}\left|\ddot{u}_{ir}^{p_{1}}\bar{u}_{r}^{p_{2}}\right|.
\end{align*}
By part (a) of the lemma, $\frac{1}{R_{j}}\sum_{r=1}^{R}1(D_{r}=j)\sum_{i=1}^{m_{r}}\left|\ddot{u}_{ir}^{p_{1}}\bar{u}_{r}^{p_{2}}\right|$
is bounded in probability. The lemma then follows.
\end{proof}
\begin{lem}
\label{lem:Aux1}Suppose Assumption \ref{assume:z} holds. Let 
\[
A_{N}(\theta)=\diag_{r=1}^{R}\left\{ p(m_{r},\theta)I_{m_{r}}^{\ast}+s(m_{r},\theta)J_{m_{r}}^{\ast}\right\} ,
\]
 where, for $2\leq m_{r}\leq\bar{M}$, the scalar functions $p(m_{r},\theta)$
and $s(m_{r},\theta)$ are positive and continuous on the compact
parameter space $\Theta$. Let $S_{N}(\theta)=N^{-1}Z^{\prime}A_{N}(\theta)Z$,
then there exist positive constants $\underline{c}$ and $\underline{C}$
that do not depend on $\theta$ and $N$ such that
\begin{equation}
0<\underline{c}\leq\lambda_{\min}\left[S_{N}(\theta)\right]\leq\lambda_{\max}\left[S_{N}(\theta)\right]\leq\underline{C}<\infty.\label{eq:AR1}
\end{equation}
Furthermore
\begin{equation}
\sup_{\theta\in\Theta}\left\vert S_{N}(\theta)-S(\theta)\right\vert \rightarrow0\text{ as }N\rightarrow\infty,\label{eq:AR2}
\end{equation}
where $S(\theta)=\sum_{m=2}^{\bar{M}}\sum_{j=1}^{J}[p(m,\theta)\ddot{\varkappa}_{m,j}+s(m,\theta)\overline{\varkappa}_{m,j}]$.
The elements of $S(\theta)$ are continuous on $\Theta$, and 
\begin{equation}
0<\underline{c}\leq\lambda_{\min}\left[S(\theta)\right]\leq\lambda_{\max}\left[S(\theta)\right]\leq\underline{C}<\infty.\label{eq:AR3}
\end{equation}
\end{lem}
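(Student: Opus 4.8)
The plan is to exploit the block structure $p(m_{r},\theta)I_{m_{r}}^{\ast}+s(m_{r},\theta)J_{m_{r}}^{\ast}$ together with the fact that $I_{m_{r}}^{\ast}$ and $J_{m_{r}}^{\ast}$ are symmetric, idempotent, orthogonal, and sum to $I_{m_{r}}$. First I would observe that, because $m_{r}$ takes values in the finite set $\{m:2\le m\le\bar{M}\}$ (Assumption \ref{assume:n}) and $\Theta$ is compact (Assumption \ref{assume:lambda}), continuity and positivity of $p(\cdot,\theta)$ and $s(\cdot,\theta)$ yield constants $0<c\le p(m,\theta),s(m,\theta)\le C<\infty$, uniform in $m$ and $\theta$. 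Writing $pI_{m_{r}}^{\ast}+sJ_{m_{r}}^{\ast}-cI_{m_{r}}=(p-c)I_{m_{r}}^{\ast}+(s-c)J_{m_{r}}^{\ast}$ and similarly for the upper side gives, in the Loewner order, $cI_{m_{r}}\preceq p(m_{r},\theta)I_{m_{r}}^{\ast}+s(m_{r},\theta)J_{m_{r}}^{\ast}\preceq CI_{m_{r}}$, hence block by block $cI_{N}\preceq A_{N}(\theta)\preceq CI_{N}$ for every $\theta\in\Theta$.

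For the upper bound in \eqref{eq:AR1}, the inequality $A_{N}(\theta)\preceq CI_{N}$ gives $S_{N}(\theta)=N^{-1}Z^{\prime}A_{N}(\theta)Z\preceq C\,N^{-1}Z^{\prime}Z$, so $\lambda_{\max}[S_{N}(\theta)]\le C\lambda_{\max}(N^{-1}Z^{\prime}Z)$. Under Assumption \ref{assume:z}(a) the entries of $Z$ are uniformly bounded in absolute value, so each entry of $N^{-1}Z^{\prime}Z$ is bounded by a constant independent of $N$, and therefore so are its eigenvalues; this produces $\underline{C}$. For the lower bound, $A_{N}(\theta)\succeq cI_{N}$ gives $S_{N}(\theta)\succeq c\,N^{-1}Z^{\prime}Z$. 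Decomposing $N^{-1}Z^{\prime}Z=\sum_{j=1}^{J}\sum_{m=2}^{\bar{M}}N^{-1}\sum_{r\in\mathcal{I}_{m,j}}Z_{r}^{\prime}Z_{r}$ as a sum of positive semidefinite matrices and retaining only the pair $(m,j)$ singled out in Assumption \ref{assume:z}(c), one obtains $N^{-1}Z^{\prime}Z\succeq N^{-1}\sum_{r\in\mathcal{I}_{m,j}}Z_{r}^{\prime}Z_{r}\succeq\underline{\xi}_{Z}I_{k_{Z}}$ for $N$ large, whence $\lambda_{\min}[S_{N}(\theta)]\ge c\,\underline{\xi}_{Z}=:\underline{c}>0$. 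The one point requiring care here is that Assumption \ref{assume:z}(c) holds for $N$ sufficiently large, so \eqref{eq:AR1} is to be read for such $N$; this is harmless for the subsequent asymptotic arguments.

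For \eqref{eq:AR2} I would invoke the closed-form expression \eqref{AppCon3}, which in the present notation reads $S_{N}(\theta)=\sum_{j=1}^{J}\sum_{m=2}^{\bar{M}}\big(p(m,\theta)N^{-1}\sum_{r\in\mathcal{I}_{m,j}}\ddot{Z}_{r}^{\prime}\ddot{Z}_{r}+s(m,\theta)N^{-1}\sum_{r\in\mathcal{I}_{m,j}}m\,\bar{z}_{r}^{\prime}\bar{z}_{r}\big)$. Subtracting $S(\theta)=\sum_{j,m}[p(m,\theta)\ddot{\varkappa}_{m,j}+s(m,\theta)\bar{\varkappa}_{m,j}]$, taking norms, and using $|p|,|s|\le C$ together with the triangle inequality over the finitely many $(m,j)$ pairs bounds $\sup_{\theta\in\Theta}\|S_{N}(\theta)-S(\theta)\|$ by $C$ times a finite sum of terms each tending to zero by Assumption \ref{assume:z}(b); this gives the uniform convergence. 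Continuity of $S(\theta)$ on $\Theta$ is then immediate, since $S(\theta)$ is a finite linear combination of the continuous functions $p(m,\cdot),s(m,\cdot)$ with the fixed matrix coefficients $\ddot{\varkappa}_{m,j},\bar{\varkappa}_{m,j}$. Finally \eqref{eq:AR3} follows from \eqref{eq:AR1} by passing to the limit: $\lambda_{\min}$ and $\lambda_{\max}$ are continuous on symmetric matrices and $S_{N}(\theta)\to S(\theta)$ for each fixed $\theta$, so the bounds $\underline{c}\le\lambda_{\min}[S_{N}(\theta)]\le\lambda_{\max}[S_{N}(\theta)]\le\underline{C}$ are inherited by $S(\theta)$. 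I do not expect a serious obstacle; the delicate step is the lower bound, which genuinely relies on Assumption \ref{assume:z}(c) — parts (a) and (b) alone would not rule out $N^{-1}Z^{\prime}Z$ becoming near-singular — and on the fact that the block decomposition preserves positive semidefiniteness.
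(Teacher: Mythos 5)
Your proposal is correct and follows essentially the same route as the paper's proof: uniform bounds $0<c\le p,s\le C$ from compactness and finiteness of the group sizes, the identity $Z_{r}^{\prime}I_{m_{r}}^{\ast}Z_{r}+Z_{r}^{\prime}J_{m_{r}}^{\ast}Z_{r}=Z_{r}^{\prime}Z_{r}$ combined with Assumption \ref{assume:z}(c) for the lower eigenvalue bound and boundedness of the elements of $Z$ for the upper bound, the triangle inequality with Assumption \ref{assume:z}(b) for the uniform convergence, and continuity of eigenvalues for \eqref{eq:AR3}. The only (harmless) difference is packaging: you first sandwich $A_{N}(\theta)$ as $cI_{N}\preceq A_{N}(\theta)\preceq CI_{N}$ and reduce to $N^{-1}Z^{\prime}Z$, whereas the paper bounds the Rayleigh quotient term by term over the $(m,j)$ blocks; your reading of the lower bound as holding for $N$ sufficiently large matches the assumption as stated.
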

\begin{rem}
It follows from the uniform convergence of $S_{N}(\theta)$ and the
continuity of $S(\theta)$ that if $\hat{\theta}_{N}\rightarrow_{p}\theta_{0}$,
then $\left\vert S_{N}(\hat{\theta}_{N})-S(\theta_{0})\right\vert \rightarrow_{p}0\text{ as }N\rightarrow\infty$.
\end{rem}
\begin{proof}
Observe that by the Bolzano-Weierstrass' extreme value theorem there
exist positive constants $c$ and $C$, which do not depend on $\theta$,
such that 
\[
0<c\leq p(m_{r},\theta),s(m_{r},\theta)\leq C<\infty.
\]
Since $m_{r}$ only takes on finitely many values the constants $c$
and $C$ can be chosen such that the above inequalities hold for all
$m$. By Assumption \ref{assume:z} we have $0<\underline{\xi}_{Z}\leq\lambda_{\min}[N^{-1}\sum_{r\in\mathcal{I}_{m,j}}\ddot{Z}_{r}^{\prime}\ddot{Z}_{r}+N^{-1}\sum_{r\in\mathcal{I}_{m,j}}m\bar{z}_{r}^{\prime}\bar{z}_{r}]$
f\textcolor{black}{or some pair of $(m,j)$. Sin}ce the elements of
$Z$ are bounded in absolute value it follows further that there exists
a finite constant $\overline{\xi}_{Z}$ such that for all pairs of
$(m,j)$, $\lambda_{\max}[N^{-1}\sum_{r\in\mathcal{I}_{m,j}}\ddot{Z}_{r}^{\prime}\ddot{Z}_{r}]\leq\overline{\xi}_{Z}<\infty$
and $\lambda_{\max}[N^{-1}\sum_{r\in\mathcal{I}_{m,j}}m\bar{z}_{r}^{\prime}\bar{z}_{r}]\leq\overline{\xi}_{Z}<\infty$;
see, e.g., \citet{johnson_matrix_1985}, Lemma 5.6.10 and the equivalence
of matrix norms. Next observe that 
\begin{eqnarray*}
S_{N}(\theta) & = & N^{-1}\sum_{r=1}^{R}\left(p(m_{r},\theta)Z_{r}^{\prime}I_{m_{r}}^{\ast}Z_{r}+s(m_{r},\theta)Z_{r}^{\prime}J_{m_{r}}^{\ast}Z_{r}\right)\\
 & = & \sum_{m=2}^{\bar{M}}\sum_{j=1}^{J}\left(p(m,\theta)N^{-1}\sum_{r\in\mathcal{I}_{m,j}}\ddot{Z}_{r}^{\prime}\ddot{Z}_{r}+s(m,\theta)N^{-1}\sum_{r\in\mathcal{I}_{m,j}}m\bar{z}_{r}^{\prime}\bar{z}_{r}\right).
\end{eqnarray*}

Consequently
\begin{eqnarray*}
 &  & \lambda_{\min}\left[S_{N}(\theta)\right]=\inf_{\phi\in R^{k_{Z}}}\frac{\phi^{\prime}S_{N}(\theta)\phi}{\phi^{\prime}\phi}\\
 & \geq & \sum_{m=2}^{\bar{M}}\sum_{j=1}^{J}\inf_{\phi\in R^{k_{Z}}}\frac{\phi^{\prime}\left(p(m,\theta)N^{-1}\sum_{r\in\mathcal{I}_{m,j}}\ddot{Z}_{r}^{\prime}\ddot{Z}_{r}+s(m,\theta)N^{-1}\sum_{r\in\mathcal{I}_{m,j}}m\bar{z}_{r}^{\prime}\bar{z}_{r}\right)\phi}{\phi^{\prime}\phi}\\
 & \geq & c\sum_{m=2}^{\bar{M}}\sum_{j=1}^{J}\inf_{\phi\in R^{k_{Z}}}\frac{\phi^{\prime}\left(N^{-1}\sum_{r\in\mathcal{I}_{m,j}}\ddot{Z}_{r}^{\prime}\ddot{Z}_{r}+N^{-1}\sum_{r\in\mathcal{I}_{m,j}}m\bar{z}_{r}^{\prime}\bar{z}_{r}\right)\phi}{\phi^{\prime}\phi}\\
 & \geq & c\underline{\xi}_{Z}>0
\end{eqnarray*}
 and 
\begin{eqnarray*}
 &  & \lambda_{\max}\left[S_{N}(\theta)\right]\leq\sup_{\phi\in R^{k_{Z}}}\frac{\phi^{\prime}S_{N}(\theta)\phi}{\phi^{\prime}\phi}\\
 & \leq & \sum_{m=2}^{\bar{M}}\sum_{j=1}^{J}\left(p(m,\theta)\sup_{\phi\in R^{k_{Z}}}\frac{\phi^{\prime}[N^{-1}\sum_{r\in\mathcal{I}_{m,j}}\ddot{Z}_{r}^{\prime}\ddot{Z}_{r}]\phi}{\phi^{\prime}\phi}+s(m,\theta)\sup_{\phi\in R^{k_{Z}}}\frac{\phi^{\prime}[N^{-1}\sum_{r\in\mathcal{I}_{m,j}}m\bar{z}_{r}^{\prime}\bar{z}_{r}]\phi}{\phi^{\prime}\phi}\right)\\
 & \leq2 & \sum_{m=2}^{\bar{M}}\sum_{j=1}^{J}C\overline{\xi}_{Z}<\infty.
\end{eqnarray*}
 This proves the first part of the lemma. Next observe that
\begin{eqnarray*}
 &  & \sup_{\theta\in\Theta}\left\vert S_{N}(\theta)-S(\theta)\right\vert \\
 & \leq & \sup_{\theta\in\Theta}\sum_{m=2}^{\bar{M}}\sum_{j=1}^{J}\left(p(m,\theta)\left\vert N^{-1}\sum_{r\in\mathcal{I}_{m,j}}\ddot{Z}_{r}^{\prime}\ddot{Z}_{r}-\ddot{\varkappa}_{m,j}\right\vert +s(m,\theta)\left\vert N^{-1}\sum_{r\in\mathcal{I}_{m,j}}m\bar{z}_{r}^{\prime}\bar{z}_{r}-\bar{\varkappa}_{m,j}\right\vert \right)\\
 & \leq & C\sum_{m=2}^{\bar{M}}\sum_{j=1}^{J}\left(\left\vert N^{-1}\sum_{r\in\mathcal{I}_{m,j}}\ddot{Z}_{r}^{\prime}\ddot{Z}_{r}-\ddot{\varkappa}_{m,j}\right\vert +\left\vert N^{-1}\sum_{r\in\mathcal{I}_{m,j}}m\bar{z}_{r}^{\prime}\bar{z}_{r}-\bar{\varkappa}_{m,j}\right\vert \right)\rightarrow0
\end{eqnarray*}
by Assumption \ref{assume:z}. Clearly $S(\theta)$ is continuous
given the assumptions maintained w.r.t. $p(m_{r},\theta)$ and $s(m_{r},\theta)$.
Recall that by Assumption \ref{assume:z} we have $0<\underline{\xi}_{Z}\leq\lambda_{\min}[N^{-1}\sum_{r\in\mathcal{I}_{m,j}}\ddot{Z}_{r}^{\prime}\ddot{Z}_{r}+N^{-1}\sum_{r\in\mathcal{I}_{m,j}}m\bar{z}_{r}^{\prime}\bar{z}_{r}]$
\textcolor{black}{for some }pair of $(m,j)$. Therefore and since
the eigenvalues of a matrix are continuous functions of the elements
of the matrix \textcolor{black}{we have for some }pair of $(m,j)$
\begin{eqnarray*}
 & 0<\underline{\xi}_{Z} & \leq\lim_{N\rightarrow\infty}\lambda_{\min}\left[N^{-1}\sum_{r\in\mathcal{I}_{m,j}}\ddot{Z}_{r}^{\prime}\ddot{Z}_{r}+N^{-1}\sum_{r\in\mathcal{I}_{m,j}}m\bar{z}_{r}^{\prime}\bar{z}_{r}\right]\\
 &  & =\lambda_{\min}\left[\lim_{N\rightarrow\infty}N^{-1}\sum_{r\in\mathcal{I}_{m,j}}\ddot{Z}_{r}^{\prime}\ddot{Z}_{r}+\lim_{N\rightarrow\infty}N^{-1}\sum_{r\in\mathcal{I}_{m,j}}m\bar{z}_{r}^{\prime}\bar{z}_{r}\right]\\
 &  & =\lambda_{\min}\left[\ddot{\varkappa}_{m,j}+\overline{\varkappa}_{m,j}\right],
\end{eqnarray*}
\textcolor{black}{and we have for all pairs of $(m,j)$,}

\begin{eqnarray*}
 &  & \lambda_{\max}(\ddot{\varkappa}_{m,j})=\lambda_{\max}[\lim_{N\rightarrow\infty}N^{-1}\sum_{r\in\mathcal{I}_{m,j}}\ddot{Z}_{r}^{\prime}\ddot{Z}_{r}]=\lim_{N\rightarrow\infty}\lambda_{\max}[N^{-1}\sum_{r\in\mathcal{I}_{m,j}}\ddot{Z}_{r}^{\prime}\ddot{Z}_{r}]\leq\overline{\xi}_{Z}<\infty,\\
 &  & \lambda_{\max}(\overline{\varkappa}_{m,j})=\lambda_{\max}[\lim_{N\rightarrow\infty}N^{-1}\sum_{r\in\mathcal{I}_{m,j}}m\bar{z}_{r}^{\prime}\bar{z}_{r}]=\lim_{N\rightarrow\infty}\lambda_{\max}[N^{-1}\sum_{r\in\mathcal{I}_{m,j}}m\bar{z}_{r}^{\prime}\bar{z}_{r}]\leq\overline{\xi}_{Z}<\infty.
\end{eqnarray*}
The remainder of the proof of (\ref{eq:AR3}) is analogous to the
proof of (\ref{eq:AR1}).

\newpage{}
\end{proof}

\section{Proof of Lemma \ref{lem:ID_1} \label{sec:prooflemma2}}

We first consider Scenario (i) . By assumption there are two groups
$r$ and $s$ such that $m_{r}\neq m_{s}$ and $E\left[\epsilon_{ir}^{2}|m_{r},D_{r}\right]/E\left[\epsilon_{is}^{2}|m_{s},D_{s}\right]=\sigma_{\epsilon0,D_{r}}^{2}/\sigma_{\epsilon0,D_{s}}^{2}=1$
. Now consider 
\begin{align}
E\left[\chi_{r}^{w}(\theta)|m_{r},D_{r}\right] & =\frac{(m_{r}-1+\lambda)^{2}E\left[\ddot{Y}_{r}^{\prime}\ddot{Y}_{r}\right]}{(m_{r}-1)^{2}}-\sigma_{\epsilon,D_{r}}^{2}(m_{r}-1)=0.\label{eq:vare_yy_id}
\end{align}
Using $E\left[\ddot{Y}_{r}^{\prime}\ddot{Y}_{r}\right]=(m_{r}-1)^{2}/\left(m_{r}-1+\lambda_{0}\right)^{2}\sigma_{\epsilon0,D_{r}}^{2}$
gives 
\[
E\left[\chi_{r}^{w}(\theta)|m_{r},D_{r}\right]=(m_{r}-1)[\frac{(m_{r}-1+\lambda)^{2}\sigma_{\epsilon0,D_{r}}^{2}}{\left(m_{r}-1+\lambda_{0}\right)^{2}}-\sigma_{\epsilon,D_{r}}^{2}]=0
\]
which leads to the equation 
\begin{equation}
(m_{r}-1+\lambda)^{2}\sigma_{\epsilon0,D_{r}}^{2}=\sigma_{\epsilon,D_{r}}^{2}\left(m_{r}-1+\lambda_{0}\right)^{2}.\label{eq:vare_id}
\end{equation}
 Now use the moment condition for groups $r$ and $s$ and noting
that $\sigma_{\epsilon0,D_{r}}^{2}/\sigma_{\epsilon0,D_{s}}^{2}=\sigma_{\epsilon,D_{r}}^{2}/\sigma_{\epsilon,D_{s}}^{2}=1$
It follows that 
\begin{equation}
\frac{(m_{r}-1+\lambda)^{2}}{(m_{s}-1+\lambda)^{2}}=\frac{\left(m_{r}-1+\lambda_{0}\right)^{2}}{\left(m_{s}-1+\lambda_{0}\right)^{2}}.\label{eq:lambda_ID}
\end{equation}
Clearly the equation in \eqref{eq:lambda_ID} holds for $\lambda=\lambda_{0}.$
The RHS is constant in $\lambda.$ The LHS is a monotonic function
in $\lambda.$ To see this, compute the derivative $\partial h\left(\lambda\right)/\partial\lambda$
of 
\[
h\left(\lambda\right)=\frac{(m_{r}-1+\lambda)^{2}}{(m_{s}-1+\lambda)^{2}}
\]
given by 
\begin{align*}
\frac{\partial h\left(\lambda\right)}{\partial\lambda} & =\frac{2(m_{r}-1+\lambda)}{(m_{s}-1+\lambda)^{2}}-\frac{2(m_{r}-1+\lambda)^{2}(m_{s}-1+\lambda)}{(m_{s}-1+\lambda)^{4}}\\
 & =\frac{2(m_{r}-1+\lambda)(m_{s}-1+\lambda)\left(m_{s}-m_{r}\right)}{(m_{s}-1+\lambda)^{4}}.
\end{align*}
 Since $\lambda>-1$, $m_{r}>1$ and $m_{s}>1$ then $\textrm{sign}\left(\partial h\left(\lambda\right)/\partial\lambda\right)=\textrm{sign}\left(m_{s}-m_{r}\right).$
This implies that \eqref{eq:lambda_ID} can only have one solution
when $m_{r}\neq m_{s}$. Thus $E\left[\chi_{r}^{w}(\theta)|m_{q},D_{q}\right]=0$,
$q=r,s$ alone identifies $\lambda$ under Scenario (i). Plugging
$\lambda=\lambda_{0}$ into \eqref{eq:vare_id} and noting that $m_{r}-1+\lambda_{0}>0$
shows that $\sigma_{\epsilon0,D_{r}}^{2}$is identified. By assumption,
$\sigma_{\epsilon0,D_{r}}^{2}=\sigma_{\epsilon0,D_{s}}^{2}$ such
that $\sigma_{\epsilon0,D_{s}}^{2}$ is also identified. The remaining
moments $E\left[\chi_{r}^{w}(\theta)|m_{q},D_{q}\right]=0$, $q\neq r,s$
now determine the remaining parameters $\sigma_{\epsilon,j}^{2}.$

Now consider
\begin{eqnarray*}
E\left[\chi_{r}^{b}(\theta)|m_{r},D_{r}\right] & = & (1-\lambda)^{2}E\bar{y}_{r}^{2}-\sigma_{\alpha}^{2}-\frac{\sigma_{\epsilon,D_{r}}^{2}}{m_{r}}\\
 & = & \frac{\left(1-\lambda\right)^{2}}{\left(1-\lambda_{0}\right)^{2}}\left(\sigma_{\alpha,0}^{2}+\frac{\sigma_{\epsilon0,D_{r}}^{2}}{m_{r}}\right)-\left(\sigma_{\alpha}^{2}+\frac{\sigma_{\epsilon,D_{r}}^{2}}{m_{r}}\right)
\end{eqnarray*}

We already established that $E\left[\chi_{q}^{w}(\theta)|m_{q},D_{q}\right]=0$,
$q=r,s$ imply that $\lambda=\lambda_{0}$, $\sigma_{\epsilon,D_{q}}^{2}=\sigma_{\epsilon0,D_{q}}^{2}$
for $q=r,s$.Then, for $\lambda=\lambda_{0},\sigma_{\epsilon,D_{r}}^{2}=\sigma_{\epsilon0,D_{r}}^{2}$
\begin{equation}
E\left[\chi_{r}^{b}(\theta)|m_{r},D_{r}\right]=\frac{\left(1-\lambda\right)^{2}}{\left(1-\lambda_{0}\right)^{2}}\left(\sigma_{\alpha,0}^{2}+\frac{\sigma_{\epsilon0,D_{r}}^{2}}{m_{r}}\right)-\left(\sigma_{\alpha}^{2}+\frac{\sigma_{\epsilon,D_{r}}^{2}}{m_{r}}\right)=0\label{eq:sigmaalpha}
\end{equation}
reduces to $\sigma_{\alpha}^{2}=\sigma_{\alpha,0}^{2}$, and thus
also $\sigma_{\alpha,0}^{2}$ is identified. If there are additional
groups with sizes different from $m_{r}$ and $m_{s}$ then moment
conditions related to these groups constitute overidentifying restrictions.

Now consider Scenario (ii). By assumption, $m_{r}=m_{s}=m$ and $\sigma_{\epsilon0,D_{r}}^{2}\neq\sigma_{\epsilon0,D_{s}}^{2}$.
Thus 
\begin{align*}
 & E\left[\nu_{r}(\theta)|m_{r},D_{r}\right]-E\left[\nu_{s}(\theta)|m_{s},D_{s}\right]\\
= & (1-\lambda)^{2}\left(E\left[\bar{y}_{r}^{2}|m,D_{r}\right]-E\left[\bar{y}_{s}^{2}|m,D_{s}\right]\right)\\
- & (m-1+\lambda)^{2}\left(E\left[\frac{\ddot{Y}_{r}^{\prime}\ddot{Y}_{r}}{m(m-1)^{3}}|m,D_{r}\right]-E\left[\frac{\ddot{Y}_{s}^{\prime}\ddot{Y}_{s}}{m(m_{s}-1)^{3}}|m,D_{s}\right]\right).\\
= & \left[\frac{(1-\lambda)^{2}}{(1-\lambda_{0})^{2}}-\frac{(m-1+\lambda)^{2}}{\left(m-1-\lambda_{0}\right)^{2}}\right]\frac{\sigma_{\epsilon0,D_{r}}^{2}-\sigma_{\epsilon0,D_{s}}^{2}}{m}
\end{align*}
observing that, 
\begin{align*}
E\left[\bar{y}_{r}^{2}|m,D_{r}\right]-E\left[\bar{y}_{s}^{2}|m,D_{s}\right] & =\frac{\sigma_{\alpha0}^{2}+\frac{\sigma_{\epsilon0,D_{r}}^{2}}{m}}{(1-\lambda_{0})^{2}}-\frac{\sigma_{\alpha0}^{2}+\frac{\sigma_{\epsilon0,D_{s}}^{2}}{m}}{(1-\lambda_{0})^{2}}=\frac{\sigma_{\epsilon0,D_{r}}^{2}-\sigma_{\epsilon0,D_{s}}^{2}}{m(1-\lambda_{0})^{2}},
\end{align*}
and 
\[
E\left[\frac{\ddot{Y}_{r}^{\prime}\ddot{Y}_{r}}{m(m-1)^{3}}|m,D_{r}\right]-E\left[\frac{\ddot{Y}_{s}^{\prime}\ddot{Y}_{s}}{m(m-1)^{3}}|m,D_{s}\right]=\frac{1}{m\left(m-1-\lambda_{0}\right)^{2}}\left(\sigma_{\epsilon0,D_{r}}^{2}-\sigma_{\epsilon0,D_{s}}^{2}\right).
\]
Since $\sigma_{\epsilon0,D_{r}}^{2}-\sigma_{\epsilon0,D_{s}}^{2}\neq0$
it follows that $E\left[\nu_{r}(\theta)|m_{r},D_{r}\right]-E\left[\nu_{s}(\theta)|m_{s},D_{s}\right]=0$
implies
\[
\frac{(m-1+\lambda)^{2}}{(1-\lambda)^{2}}=\frac{(m-1+\lambda_{0})^{2}}{(1-\lambda_{0})^{2}}.
\]
Define $c=\frac{(m-1+\lambda_{0})^{2}}{(1-\lambda_{0})^{2}}$ , then
the equation is equivalent to
\[
(m-1+\lambda)^{2}=c\left(1-\lambda\right)^{2}
\]
which in turn is equivalent to the following polynomial in $\lambda$:,
\[
\left(m-1\right)^{2}-c+2\left(\left(m-1\right)+c\right)\lambda+(1-c)\lambda^{2}=0.
\]
Clearly, $\lambda=\lambda_{0}$ is a solution. Consider the derivative
\begin{align*}
\frac{\partial\left(\frac{(m-1+\lambda)^{2}}{(1-\lambda)^{2}}\right)}{\partial\lambda} & =\frac{2(m-1+\lambda)}{(1-\lambda)^{2}}+\frac{2(m-1+\lambda)^{2}}{(1-\lambda)^{3}}\\
 & =\frac{2\left((m-1+\lambda)\left(1-\lambda\right)+(m-1+\lambda)^{2}\right)}{(1-\lambda)^{3}}\\
 & =2(m-1+\lambda)\frac{\left(\left(1-\lambda\right)+(m-1+\lambda)\right)}{(1-\lambda)^{3}}\\
 & =\frac{2(m-1+\lambda)m}{(1-\lambda)^{3}}>0.
\end{align*}
Since $m\geqslant2$ and $\lambda\in\left(-1,1\right)$ it follows
that $(m-1+\lambda)m>0$ and $1-\lambda>0$ such that the derivative
is positive for all values of $\lambda$ on the parameter space. This
implies that $\lambda=\lambda_{0}$ is the only solution to the moment
condition. Once $\lambda$ is identified, $E\left[\nu_{r}(\theta)\right]=0$
identifies $\sigma_{\alpha}^{2}$ as
\begin{align*}
\sigma_{\alpha}^{2} & =E\left[(1-\lambda_{0})^{2}\bar{y}_{r}^{2}-\frac{\ddot{Y}_{r}^{\prime}\ddot{Y}_{r}}{m_{r}(m_{r}-1)^{3}}\right]=\sigma_{\alpha0}^{2}.
\end{align*}
Finally, note that $\nu_{r}(\theta)\text{ }$ is a function of $\chi_{r}(\theta)$,
and thus the moment conditions $E[\chi_{r}(\theta_{0})|m_{r},D_{r}]=0$
are sufficient to identify the parameter $\lambda$ and $\sigma_{\alpha}^{2}$.

Identification of the remaining parameters $\sigma_{\epsilon,j}^{2}$
follows trivially from an inspection of $\chi_{r}^{w}(\theta)$ as
once $\lambda$ is identified, $\sigma_{\epsilon,j}^{2}$ is identified
from $\chi_{r}^{w}(\theta)$ recalling that by Assumption \ref{assume:epsilon}
there exists some $r$ such that $D_{r}=j$.

\section{The CV estimator of \citet{graham_identifying_2008}\label{sec:cv}}

In this appendix we interpret the CV estimator of \citet{graham_identifying_2008}
as based on moment conditions developed in Section \ref{sec:graham}.
Specifically, we show that the identification results in \citet{graham_identifying_2008}
can be seen as an adapted version of Scenario (ii) of Lemma \ref{lem:ID_1}.

The peer effects model of \citet{graham_identifying_2008} can be
written as 
\begin{equation}
y_{ir}=v_{r}+\epsilon_{ir}+\left(\gamma-1\right)\bar{\epsilon}_{r},\label{eq:Graham_Model}
\end{equation}
where $\bar{\epsilon}_{r}=m_{r}^{-1}\sum_{i=1}^{m_{r}}\epsilon_{ir}$
is the group average of unobserved characteristics. The parameter
$\gamma$ captures the peer effect. Taking group averages on both
sides of \eqref{eq:Graham_Model}, we get $\bar{\epsilon}_{r}=\frac{1}{\gamma}(\bar{y}_{r}-v_{r})$.
Plugging back into \eqref{eq:Graham_Model}, and letting $\tilde{\lambda}=1-1/\gamma$
as well as $\alpha_{r}=v_{r}/\gamma$, we get the following structural
model 
\begin{equation}
y_{ir}=\tilde{\lambda}\bar{y}_{r}+\alpha_{r}+\epsilon_{ir}.\label{eq:full-mean}
\end{equation}
The specification differs from our main model in \eqref{eq:Graham_Mod_L-M-O}
in that it uses the full group mean $\bar{y}_{r}$ rather than the
leave-out-mean $\bar{y}_{(-i)r}$. The leave-out-mean specifications
are often preferred in the literature, see for example \citet{angrist_perils_2014}
for a discussion. Defining $\tilde{W}_{m_{r}}=\frac{1}{m_{r}}\iota_{m_{r}}\iota_{m_{r}}^{\prime}$,
we can rewrite \eqref{eq:full-mean} in matrix form as
\[
Y_{r}=\tilde{\lambda}\tilde{W}_{m_{r}}Y_{r}+\alpha\iota_{m_{r}}+\epsilon_{r}.
\]
Using the same notation as in Section \ref{sec:graham}, it can be
shown that $\bar{y}_{r}=\bar{u}_{r}/(1-\tilde{\lambda})$ with $\bar{u}_{r}=\alpha_{r}+\bar{\epsilon}_{r}$,
and $\ddot{Y}_{r}=\ddot{U}_{r}=\ddot{\epsilon}_{r}$. Note that in
the context of Model (\ref{eq:full-mean}) the results in \citet{manski_identification_1993},
\citet{kelejian_estimation_2006} or \citet{bramoulle_identification_2009}
show that $\tilde{\lambda}$ cannot be identified by instrumenting
$\tilde{W}_{m_{r}}Y_{r}$ with $\tilde{W}_{m_{r}}^{2}Z_{r}$ when
$\tilde{W}_{m_{r}}Z_{r}$ is included as a covariate, observing that
$\tilde{W}_{m_{r}}^{2}=\tilde{W}_{m_{r}}$.

To isolate or identify the social interaction effect \citet{graham_identifying_2008}
imposes restrictions on the unobservables and group size. Graham considers
the case when $J=2$ and $D_{r}$ is a categorical variable for small/regular
classes, which are coded as $D_{r}=1$ whenever $m_{r}\geqslant\bar{m}$
for some constant $\bar{m}$ and $D_{r}=2$ otherwise. Assumptions
1.1 and 1.2 in \citet{graham_identifying_2008} amount to Assumptions
1 and 2 in Section \ref{sec:graham}. Assumption 1.3 in \citet{graham_identifying_2008}
imposes that $E\left[\ddot{Y}_{r}^{\prime}\ddot{Y}_{r}|D_{r}=1\right]\neq E\left[\ddot{Y}_{r}^{\prime}\ddot{Y}_{r}|D_{r}=2\right].$
Observing that $E\left[\ddot{Y}_{r}^{\prime}\ddot{Y}_{r}|D_{r}=d\right]=\sigma_{\epsilon,d}^{2}(m_{r}-1)$
the latter condition is seen to be satisfied if Scenario (ii) in Lemma
\ref{lem:ID_1} holds true. In addition, the condition is also true
if $m_{r}$ varies and the idiosyncratic errors $\epsilon_{ir}$ are
homoscedastic as in Scenario (i).

The parameter $\gamma^{2}=1/(1-\tilde{\lambda})^{2}$ is identified
under Assumptions 1.1-1.3, see Proposition 1.1 in \citet{graham_identifying_2008}.
Below we prove the proposition by adapting the proof for Scenario
(ii) in Lemma \ref{lem:ID_1} to the full-mean specification, thus
verifying that the CV estimator is a special case of the moment based
estimators studied in Section \ref{sec:graham}.

Under the full-mean specification of \citet{graham_identifying_2008},
our moment conditions in \eqref{eq:mwb} change correspondingly to

\begin{equation}
\tilde{\chi}_{r}^{w}(\theta)=\ddot{Y}_{r}^{\prime}\ddot{Y}_{r}-(m_{r}-1)\sigma_{\epsilon,D_{r}}^{2},\label{eq:mw-2}
\end{equation}
\begin{equation}
\tilde{\chi}_{r}^{b}(\theta)=(1-\tilde{\lambda})^{2}\bar{y}_{r}^{2}-\sigma_{\alpha}^{2}-\frac{\sigma_{\epsilon,D_{r}}^{2}}{m_{r}}.\label{eq:mb-2}
\end{equation}
 The combined moment condition in \eqref{eq:nu} is now 
\begin{equation}
\tilde{\nu}_{r}(\tilde{\lambda},\sigma_{\alpha}^{2})=\tilde{\chi}_{r}^{b}(\theta)-\frac{\tilde{\chi}_{r}^{w}(\theta)}{m_{r}(m_{r}-1)}=(1-\tilde{\lambda})^{2}\bar{y}_{r}^{2}-\sigma_{\alpha}^{2}-\frac{\ddot{Y}_{r}^{\prime}\ddot{Y}_{r}}{m_{r}(m_{r}-1)}.\label{eq:momgraham}
\end{equation}
Note that by design $E(\tilde{\nu}_{r}(\tilde{\lambda},\sigma_{\alpha}^{2})|m_{r},D_{r})=0$
at the true parameter vector $\theta_{0}$.

The restriction that group effect variances are homoscedastic can
be exploited by taking differences $E\left[\nu_{r}(\tilde{\lambda},\sigma_{\alpha}^{2})|m_{r},D_{r}=1\right]-E\left[\nu_{r}(\tilde{\lambda},\sigma_{\alpha}^{2})|m_{r},D_{r}=2\right]$
to eliminate $\sigma_{\alpha}^{2}$. This implies the following population
equation for $\left(1-\tilde{\lambda}\right)^{2}$
\begin{equation}
\frac{1}{(1-\tilde{\lambda})^{2}}=\frac{E\left[\bar{y}_{r}^{2}|m_{r},D_{r}=1\right]-E\left[\bar{y}_{r}^{2}|m_{r},D_{r}=2\right]}{E\left[\frac{\ddot{Y}_{r}^{\prime}\ddot{Y}_{r}}{m_{r}(m_{r}-1)}|m_{r},D_{r}=1\right]-E\left[\frac{\ddot{Y}_{r}^{\prime}\ddot{Y}_{r}}{m_{r}(m_{r}-1)}|m_{r},D_{r}=2\right]},\label{eq:wald}
\end{equation}
which is a modified version of \eqref{eq:Wald_L-O-M}. The Wald estimate
for $\tilde{\lambda}$ can then be calculated from the sample analog
of the right-hand side above. Two points are worth noting. First,
identification is possible under Scenario (ii) of Lemma \ref{lem:ID_1},
i.e., when there exists some $m_{r}=m_{s}$ and $\sigma_{\epsilon,D_{r}}^{2}\neq\sigma_{\epsilon,D_{s}}^{2}$
. This confirms the applicability of Lemma \ref{lem:ID_1} to the
full-mean specification. Second, due to the full-mean specification,
identification is also possible even when $\sigma_{\epsilon,1}^{2}=\sigma_{\epsilon,2}^{2}$
as long as there is variation in $m_{r}.$ So, in the case of homoscedasticity,
group size variation alone is enough for the CV estimator to identify
$\tilde{\lambda}$. It can be shown that the score of a Gaussian likelihood
estimator is a function of $\tilde{\chi}_{r}(\theta)=\left(\tilde{\chi}_{r}^{w}(\theta),\tilde{\chi}_{r}^{b}(\theta)\right)^{\prime}$
and therefore the Gaussian maximum likelihood estimator shares the
same identification properties.

It is well known that identification for the case of peer effects
captured by full group means is difficult, see \citet{manski_identification_1993},
\citet{bramoulle_identification_2009} or \citet{angrist_perils_2014}.
In the case of the conditional variance restrictions or likelihood
approaches considered here, this manifests itself in the fact that
$\left(1-\tilde{\lambda}\right)^{2}$ but not $\tilde{\lambda}$ is
identified without additional constraints on the parameter space.
An inspection of \eqref{eq:wald} shows that while $\gamma^{2}=1/\left(1-\tilde{\lambda}\right)^{2}$
is identified, the sign of $\gamma=1/\left(1-\tilde{\lambda}\right)$
is not identified, unless $\tilde{\lambda}$ is constrained to take
values in $\left(-\infty,1\right)$. The reason is that the function
$1/\left(1-\tilde{\lambda}\right)^{2}$ is monotonically increasing
on the interval $\left(-\infty,1\right)$ and $\left(1-\tilde{\lambda}\right)>0$
for $\tilde{\lambda}\in\left(-\infty,1\right).$ The implied range
for $\gamma$ then is $\left(0,\infty\right)$ and the permissible
parametrizations of the term $\bar{\epsilon}_{r}$ in (\ref{eq:Graham_Model})
is $\left(-1,\infty\right).$ For the latter, the positive values
are most relevant in peer effects applications. 

\newpage{}

\section{\label{app:Proofthm}Proofs of Theorems \ref{theorem:Consistency},
\ref{theorem:AsymptoticNormlity}, and \ref{thm:ValidInference}}

In this section, we collect the proof of Theorem \ref{theorem:Consistency}
in Sections \ref{sec:Proof-of-cons} and \ref{sec:Proof-of-cons-2}.
This theorem establishes the identification and the consistency of
the quasi-maximum likelihood estimator. Then we provide a proof of
Theorem \ref{theorem:AsymptoticNormlity} and Theorem \ref{thm:ValidInference}
in Section \ref{sec:Proof-of-asydis} and Section \ref{sec:Proof-consistency34},
respectively. These theorems establish the asymptotic distribution
of the QMLE and the consistency of our estimators for the third and
fourth moments.

\subsection{Proof of Theorem \ref{theorem:Consistency}(a) \label{sec:Proof-of-cons}}

For the un-concentrated log likelihood function in \eqref{eq:logL},
let 
\[
\bar{R}(\theta,\beta)=\lim_{N\rightarrow\infty}E\left[\frac{1}{N}\textrm{ln}L(\theta,\beta)\right].
\]
Let $\bar{\beta}(\theta)$ be the maximizer of $\bar{R}(\theta,\beta)$
with respect to $\beta$,

\[
\bar{R}(\theta,\bar{\beta}(\theta))=\max_{\beta}\bar{R}(\theta,\beta),
\]
and let
\[
\bar{Q}^{\ast\ast}(\theta)=\bar{R}(\theta,\bar{\beta}(\theta)).
\]
For the concentrated log likelihood function in \eqref{eq:QN}, let
$\bar{Q}^{\ast}(\theta)=\lim_{N\rightarrow\infty}E\left[Q_{N}(\theta)\right]$.
To prove that $\theta_{0}$ is identifiably unique, it suffices to
show that Condition \ref{cond:lim}(a) and Condition \ref{cond:ind}
below hold; cp., e.g., Definition 3.1 of identifiable uniqueness and
the subsequent discussion in \citet{potscher_basic_1991}. In fact,
under Condition \ref{cond:lim}(a) and Condition \ref{cond:ind} the
identifiable uniqueness of the parameter vector $\theta_{0}$ is equivalent
with $\theta_{0}$ being asymptotically identified in the sense that
it is the unique maximizer of $\bar{Q}^{\ast}(\theta)$.
\begin{condition}
\label{cond:lim}(a) The non-stochastic functions $\bar{Q}^{\ast}(\theta)$
and $\bar{Q}^{\ast\ast}(\theta)$ exist, and $\bar{Q}^{\ast}(\theta)=\bar{Q}^{\ast\ast}(\theta)$
are continuous and finite ;

(b) As $N$ goes to infinity, $\sup_{\theta\in\Theta}\left|E\left[Q_{N}(\theta)\right]-\bar{Q}^{\ast}(\theta)\right|\rightarrow0$.
\end{condition}
\begin{condition}
\label{cond:ind}The parameter space $\Theta$ is compact, the true
value $\theta_{0}$ is the unique maximizer of $\bar{Q}^{\ast}(\theta)$
(and hence $\bar{Q}^{\ast\ast}(\theta)$) on $\Theta$ and $\bar{\beta}(\theta_{0})=\beta_{0}$.
\end{condition}
Condition \ref{cond:lim}(b) is used for the proof of consistency
that is presented in Section \ref{sec:Proof-of-cons-2} below. Given
Condition \ref{cond:lim}(b) and the identifiable uniqueness of the
true parameter vector consistency of the QMLE follows immediately
from, e.g., Lemma 3.1 in Poetscher and Prucha (1997), p. 16.

We combine conditions \ref{cond:lim}(a) and \ref{cond:lim}(b) as
they can be established together.

$\blacksquare$ \textbf{Verification of Condition} \ref{cond:lim}:
To prove that $\bar{Q}^{\ast}(\theta)=\lim_{N\rightarrow\infty}E\left[Q_{N}(\theta)\right]$
exists, it is readily seen that 
\begin{eqnarray}
E\left[Q_{N}(\theta)\right] & = & -\frac{\ln(2\pi)}{2}+\frac{1}{2N}\ln|(I-\lambda W)^{2}\Omega(\theta)^{-1}|\label{AppCon4}\\
 &  & -\frac{1}{2N}\tr\left\{ (I-\lambda W)^{\prime}M_{Z}(\theta)(I-\lambda W)(E\left[YY^{\prime}\right])\right\} \nonumber \\
 & = & -\frac{\ln(2\pi)}{2}+\frac{1}{2N}\ln|(I-\lambda W)^{2}\Omega(\theta)^{-1}|\nonumber \\
 &  & -\frac{1}{2N}\tr\left[(I-\lambda W)^{\prime}M_{Z}(\theta)(I-\lambda W)(I-\lambda_{0}W)^{-1}\left(\Omega_{0}+Z\beta_{0}\beta_{0}^{\prime}Z^{\prime}\right)(I-\lambda_{0}W)^{-1}\right]\nonumber \\
 & = & \bar{Q}_{N}^{(1)}(\theta)+\bar{Q}_{N}^{(2)}(\theta)+\bar{Q}_{N}^{(3)}(\theta),\nonumber 
\end{eqnarray}
with
\begin{eqnarray}
\bar{Q}_{N}^{(1)}(\theta) & = & -\frac{\textrm{ln}(2\pi)}{2}+\frac{1}{2N}\ln|(I-\lambda W)^{2}\Omega(\theta)^{-1}|-\frac{1}{2N}\tr\left[(I-\lambda_{0}W)^{-2}(I-\lambda W)^{2}\Omega(\theta)^{-1}\Omega_{0}\right],\label{AppCon5}\\
\bar{Q}_{N}^{(2)}(\theta) & = & -\frac{1}{2N}\tr\left[\beta_{0}^{\prime}Z^{\prime}(I-\lambda_{0}W)^{-1}(I-\lambda W)M_{Z}(\theta)(I-\lambda W)(I-\lambda_{0}W)^{-1}Z\beta_{0}\right],\nonumber \\
\bar{Q}_{N}^{(3)}(\theta) & = & \frac{1}{2N}\tr\left[(I-\lambda_{0}W)^{-2}(I-\lambda W)^{2}\Omega(\theta)^{-1}Z(Z^{\prime}\Omega(\theta)^{-1}Z)^{-1}Z^{\prime}\Omega(\theta)^{-1}\Omega_{0}\right],\nonumber 
\end{eqnarray}
recalling that $M_{Z}(\theta)=\Omega(\theta)^{-1}-\Omega(\theta)^{-1}Z(Z^{\prime}\Omega(\theta)^{-1}Z)^{-1}Z^{\prime}\Omega(\theta)^{-1}$
and that the matrices $(I-\lambda_{0}W),$$(I-\lambda W)$, $\Omega(\theta)^{-1}$
and $\Omega_{0}$ all commute.

We show that the limits of $\bar{Q}_{N}^{(1)}(\theta)$, $\bar{Q}_{N}^{(2)}(\theta)$
and $\bar{Q}_{N}^{(3)}(\theta)$ exist, in reverse order. Observe
that
\[
\bar{Q}_{N}^{(3)}(\theta)=\frac{1}{2N}\tr\left[\left(\frac{1}{N}Z^{\prime}\Omega(\theta)^{-1}\Omega_{0}(I-\lambda_{0}W)^{-2}(I-\lambda W)^{2}\Omega(\theta)^{-1}Z\right)\left(\frac{1}{N}Z^{\prime}\Omega(\theta)^{-1}Z\right)^{-1}\right].
\]
Both matrices in square brackets are of the form considered in (\ref{AppCon3})
with $p(m_{r},D_{r},\theta)$ and $s(m_{r},D_{r},\theta)$ satisfying
the assumptions of Lemma \ref{lem:Aux1}. Thus their elements, and
in turn the trace, are bounded in absolute value by respective constants
that do not depend on $\theta$ and $N$. Consequently $\sup_{\theta\in\Theta}\bar{Q}_{N}^{(3)}(\theta)\leq\textrm{const}/N\rightarrow0$
as $N\rightarrow\infty$ and $\lim_{N\rightarrow\infty}\bar{Q}_{N}^{(3)}(\theta)=0$.

Second, observe that
\begin{align}
2\bar{Q}_{N}^{(2)}(\theta) & =\beta_{0}^{\prime}\left(\frac{1}{N}Z^{\prime}(I-\lambda_{0}W)^{-2}(I-\lambda W)^{2}\Omega(\theta)^{-1}Z\right)\beta_{0}\nonumber \\
 & -\beta_{0}^{\prime}\left\{ \left(\frac{1}{N}Z^{\prime}(I-\lambda_{0}W)^{-1}(I-\lambda W)\Omega(\theta)^{-1}Z\right)\left(\frac{1}{N}Z^{\prime}\Omega(\theta)^{-1}Z\right)^{-1}\right.\nonumber \\
 & \times\left.\left(\frac{1}{N}Z^{\prime}\Omega(\theta)^{-1}(I-\lambda W)(I-\lambda_{0}W)^{-1}Z\right)\right\} \beta_{0}.\label{eq:Qr2-1}
\end{align}
In light of (\ref{AppCon1}) and (\ref{AppCon2}) and using Lemma
\ref{lem:Aux1} we see that $\sup_{\theta\in\Theta}|\bar{Q}_{N}^{(2)}(\theta)-\bar{Q}^{(2)\ast}(\theta)|\rightarrow_{p}0$,
where
\begin{eqnarray}
\bar{Q}^{(2)\ast}(\theta) & = & \frac{1}{2}\beta_{0}^{\prime}\left(\varUpsilon_{1}(\theta)-\varUpsilon_{2}(\theta)\Upsilon_{3}^{-1}(\theta)\varUpsilon_{2}(\theta)\right)\beta_{0},\nonumber \\
\varUpsilon_{1}(\theta) & = & \sum_{j=1}^{J}\sum_{m=2}^{\bar{M}}\left(\frac{\phi_{W}^{2}(m,\theta)}{\phi_{W}^{2}(m,\theta_{0})\phi_{\Omega}(m,j,\theta)}\ddot{\varkappa}_{m,j}+\frac{\psi_{W}^{2}(m,\theta)}{\psi_{W}^{2}(m,\theta_{0})\psi_{\Omega}(m,j,\theta)}\bar{\varkappa}_{m,j}\right),\label{eq:ups1}\\
\varUpsilon_{2}(\theta) & = & \sum_{j=1}^{J}\sum_{m=2}^{\bar{M}}\left(\frac{\phi_{W}(m,\theta)}{\phi_{W}(m,\theta_{0})\phi_{\Omega}(m,j,\theta)}\ddot{\varkappa}_{m,j}+\frac{\psi_{W}(m,\theta)}{\psi_{W}(m,\theta_{0})\psi_{\Omega}(m,j,\theta)}\bar{\varkappa}_{m,j}\right),\label{eq:ups2}\\
\varUpsilon_{3}(\theta) & = & \sum_{j=1}^{J}\sum_{m=2}^{\bar{M}}\left(\frac{1}{\phi_{\Omega}(m,j,\theta)}\ddot{\varkappa}_{m,j}+\frac{1}{\psi_{\Omega}(m,j,\theta)}\bar{\varkappa}_{m,j}\right),\label{eq:ups3}
\end{eqnarray}
and where $\bar{Q}^{(2)\ast}(\theta)$ is finite and continuous on
$\Theta$ by Lemma \ref{lem:Aux1}.

Third, observe that
\begin{eqnarray*}
\bar{Q}_{N}^{(1)}(\theta) & = & -\frac{\textrm{ln}(2\pi)}{2}-\frac{1}{2N}\textrm{ln}|(I-\lambda_{0}W)^{-2}\Omega_{0}|\\
 &  & +\frac{1}{2N}\textrm{ln}|(I-\lambda_{0}W)^{-2}(I-\lambda W)^{2}\Omega(\theta)^{-1}\Omega_{0}|-\frac{1}{2N}\tr\left[(I-\lambda_{0}W)^{-2}(I-\lambda W)^{2}\Omega(\theta)^{-1}\Omega_{0}\right]\\
 & = & C_{N}+\frac{1}{2}\sum_{j=1}^{J}\sum_{m=2}^{\bar{M}}\frac{R_{m,j}}{N}\textrm{ln}|G(m,j,\theta)|-\frac{1}{2}\sum_{j=1}^{J}\sum_{m=2}^{\bar{M}}\frac{R_{m,j}}{N}\tr[G(m,j,\theta)]
\end{eqnarray*}
with
\begin{eqnarray}
G(m,j,\theta) & = & (I_{m}-\lambda W_{m})^{2}\Omega_{m,j}(\theta)^{-1}(I_{m}-\lambda_{0}W_{m})^{-2}\Omega_{m,j0}\nonumber \\
 & = & \frac{\phi_{W}^{2}(m,\theta)\phi_{\Omega}(m,j,\theta_{0})}{\phi_{W}^{2}(m,\theta_{0})\phi_{\Omega}(m,j,\theta)}I_{m}^{\ast}+\frac{\psi_{W}^{2}(m,\theta)\psi_{\Omega}(m,j,\theta_{0})}{\psi_{W}^{2}(m,\theta_{0})\psi_{\Omega}(m,j,\theta)}J_{m}^{\ast},\nonumber \\
 & = & \frac{\sigma_{\epsilon0,j}^{2}}{\sigma_{\epsilon,j}^{2}}(\frac{m-1+\lambda}{m-1+\lambda_{0}})^{2}I_{m}^{\ast}+\frac{(\sigma_{\epsilon0,j}^{2}+m\sigma_{\alpha0}^{2})}{(\sigma_{\epsilon,j}^{2}+m\sigma_{\alpha}^{2})}(\frac{1-\lambda}{1-\lambda_{0}})^{2}J_{m}^{\ast},\label{eq:Gn}
\end{eqnarray}
and $C_{N}=-\frac{\textrm{ln}(2\pi)}{2}-\frac{1}{2}\sum_{j=1}^{J}\sum_{m=2}^{\bar{M}}\frac{R_{m,j}}{N}\textrm{ln}|(I-\lambda_{0}W_{m})^{-2}\Omega_{m,j0}|$.
Under Assumption \ref{assume:n}, $R_{m,j}/N\rightarrow\omega_{m,j}^{\ast}/m^{\ast}$.
Let $C^{\ast}=\lim_{N\rightarrow\infty}C{}_{N}$ and 
\begin{equation}
\bar{Q}^{(1)\ast}(\theta)=C^{\ast}+\frac{1}{2m^{\ast}}\sum_{j=1}^{J}\sum_{m=2}^{\bar{M}}\omega_{m,j}^{\ast}g(m,j,\theta)\label{eq:EQbar1}
\end{equation}
with
\[
g(m,j,\theta)=\ln|G(m,j,\theta)|-\tr[G(m,j,\theta)],
\]
then clearly $\sup_{\theta\in\Theta}|\bar{Q}_{N}^{(1)}(\theta)-\bar{Q}^{(1)\ast}(\theta)|\rightarrow0$
with $\bar{Q}^{(1)\ast}(\theta)$ finite and continuous on $\Theta$.
In all, $\sup_{\theta\in\Theta}|E\left[Q_{N}(\theta)\right]-\bar{Q}^{\ast}(\theta)|\rightarrow0$,
where
\[
\bar{Q}^{\ast}(\theta)=\bar{Q}^{(1)\ast}(\theta)+\bar{Q}^{(2)\ast}(\theta)=C^{\ast}+\frac{1}{2m^{\ast}}\sum_{j=1}^{J}\sum_{m=2}^{\bar{M}}\omega_{m,j}^{\ast}g(m,j,\theta)+\bar{Q}^{(2)\ast}(\theta),
\]
and where $\bar{Q}^{\ast}(\theta)$ is continuous and finite.

For the un-concentrated likelihood function, 
\begin{align*}
\bar{R}(\theta,\beta) & =\lim_{N\rightarrow\infty}\frac{1}{N}E\left[\ln L_{N}(\theta,\beta)\right]\\
 & =\lim_{N\rightarrow\infty}\frac{1}{N}\left\{ -\frac{N}{2}\ln(2\pi)+\frac{1}{2}\ln|(I-\lambda W)^{2}\Omega(\theta)^{-1}|\right.\\
 & -\frac{1}{2}\tr\left[(I-\lambda_{0}W)^{-2}(I-\lambda W)^{2}\Omega(\theta)^{-1}\Omega_{0}\right]\\
 & \left.-\frac{1}{2}\left((I-\lambda W)(I-\lambda_{0}W)^{-1}Z\beta_{0}-Z\beta\right)^{\prime}\Omega(\theta)^{-1}\left((I-\lambda W)(I-\lambda_{0}W)^{-1}Z\beta_{0}-Z\beta\right)\right\} \\
 & =\lim_{N\rightarrow\infty}\bar{Q}_{N}^{(1)}(\theta)-\lim_{N\rightarrow\infty}\frac{1}{2N}\left(\beta_{0}^{\prime}Z^{\prime}(I-\lambda W)^{2}(I-\lambda_{0}W)^{2}\Omega(\theta)^{-1}Z\beta_{0}\right)\\
 & +\lim_{N\rightarrow\infty}\frac{1}{N}\beta_{0}^{\prime}Z^{\prime}(I-\lambda W)(I-\lambda_{0}W)\Omega(\theta)^{-1}Z\beta-\lim_{N\rightarrow\infty}\frac{1}{2N}\beta^{\prime}Z^{\prime}\Omega(\theta)^{-1}Z\beta\\
 & =\bar{Q}^{(1)\ast}(\theta)-\frac{1}{2}\beta_{0}^{\prime}\varUpsilon_{1}(\theta)\beta_{0}+\beta_{0}^{\prime}\varUpsilon_{2}(\theta)\beta-\frac{1}{2}\beta^{\prime}\varUpsilon_{3}(\theta)\beta,
\end{align*}
where $\varUpsilon_{1}(\theta)$, $\varUpsilon_{2}(\theta)$, $\varUpsilon_{3}(\theta)$
are defined in Equations \eqref{eq:ups1},\eqref{eq:ups2}, and\eqref{eq:ups3}.
Taking the derivative of $\bar{R}(\theta,\beta)$ with respect to
$\beta$, 
\[
\frac{\partial\bar{R}(\theta,\bar{\beta})}{\partial\beta}=\beta_{0}^{\prime}\varUpsilon_{2}(\theta)-\bar{\beta}^{\prime}\varUpsilon_{3}(\theta)=0.
\]
Since $\varUpsilon_{3}(\theta)$ is non-singular by Assumption \ref{assume:z}
and Lemma \ref{lem:Aux1}, 
\begin{equation}
\bar{\beta}(\theta)=\varUpsilon_{3}(\theta)^{-1}\varUpsilon_{2}(\theta)\beta_{0}.\label{eq:bargamma}
\end{equation}
Let $\bar{Q}^{\ast\ast}(\theta)=\bar{R}(\theta,\bar{\beta}(\theta))$
and plug $\bar{\beta}(\theta)$ above back into $\bar{R}(\theta,\beta)$,
\begin{align*}
\bar{Q}^{\ast\ast}(\theta) & =\bar{Q}^{(1)\ast}(\theta)-\frac{1}{2}\beta_{0}^{\prime}\left(\varUpsilon_{1}(\theta)-\varUpsilon_{2}(\theta)\varUpsilon_{3}^{-1}(\theta)\varUpsilon_{2}(\theta)\right)\beta_{0}\\
 & =\bar{Q}^{(1)\ast}(\theta)+\bar{Q}^{(2)\ast}(\theta)=\bar{Q}^{\ast}(\theta).
\end{align*}
Note that the second order derivative 
\[
\frac{\partial^{2}\bar{R}(\theta,\beta)}{\partial\beta\partial\beta^{\prime}}=-\varUpsilon_{3}(\theta)=-\lim_{N\rightarrow\infty}Z^{\prime}\Omega(\theta)^{-1}Z
\]
is negative definite by Assumption \ref{assume:z} and Lemma \ref{lem:Aux1}
uniformly in $\theta$, thus $\bar{\beta}(\theta)$ is the unique
maximizer of $\bar{R}(\theta,\beta)$ over $\beta$. In all, we have
$\bar{Q}^{\ast}(\theta)$ and $\bar{Q}^{\ast\ast}(\theta)$ both exist
and $\bar{Q}^{\ast}(\theta)=\bar{Q}^{\ast\ast}(\theta)$.

\noindent $\blacksquare$ \textbf{Verification of Condition \ref{cond:ind}}
Since $\varUpsilon_{3}(\theta_{0})=\varUpsilon_{2}(\theta_{0})$,
$\bar{\beta}(\theta_{0})=\beta_{0}$ is readily seen. Next we show
that $\theta_{0}$ is the unique global maximizer of $\bar{Q}^{\ast}(\theta)$
on $\Theta$. We first show that $\theta_{0}$ is a global maximizer
of $\bar{Q}^{(2)\ast}(\theta)$. To see this observe that we can rewrite
$Q_{N}^{(2)}(\theta)$ as $Q_{N}^{(2)}(\theta)=-\frac{1}{2N}\tilde{\eta}_{Z}(\theta)^{\prime}\tilde{M}_{Z}(\theta)\tilde{\eta}_{Z}(\theta)$,
where $\tilde{M}_{Z}(\theta)=I-\Omega^{-1/2}Z^{\prime}(Z^{\prime}\Omega(\theta)Z)^{-1}Z\Omega^{-1/2}$
is idempotent and positive semidefinite, and $\tilde{\eta}_{Z}(\theta)=\Omega(\theta)^{-1/2}(I-\lambda W)(I-\lambda_{0}W)^{-1}Z\beta_{0}$.
Thus $Q_{N}^{(2)}(\theta)\leq0$ and consequently also $\bar{Q}^{(2)\ast}(\theta)\leq0$.
Next observe, as is readily checked, that $Q^{(2)\ast}(\theta_{0})=0$.
Therefore $\bar{Q}^{(2)\ast}(\theta)\leqslant\bar{Q}^{(2)\ast}(\theta_{0})$
for all $\theta\in\Theta$.

\noindent To show that $\theta_{0}$ is the unique global maximizer
of $\bar{Q}^{\ast}(\theta)$ it thus suffices to show that $\theta_{0}$
is the unique maximizer of $\sum_{j=1}^{J}\sum_{m=2}^{\bar{M}}\omega_{m,j}^{\ast}g(m,j,\theta)$.
Observe that
\begin{equation}
\sum_{j=1}^{J}\sum_{m=2}^{\bar{M}}\omega_{m,j}^{\ast}g(m,j,\theta)=\sum_{j=1}^{J}\sum_{m=2}^{\bar{M}}\omega_{m,j}^{\ast}\left(\ln|G(m,j,\theta)|-\tr[G(m,j,\theta)]\right)\leqslant-m^{\ast},\label{eq:uniq1}
\end{equation}

\noindent where $m^{\ast}=\sum_{j=1}^{J}\sum_{m=2}^{\bar{M}}\omega_{m,j}^{\ast}m$.\footnote{See Footnote \ref{footnote:max} for details.}
The equality holds if and only if $g(m,j,\theta)=-1$ or equivalently
$G(m,j,\theta)=I_{m}$ for all $m$ and $j$ with $\omega_{m,j}^{\ast}>0.$
Under the two scenarios described by Assumption \ref{assume:id} this
is the case if and only if $\theta=\theta_{0}$, which establishes
that $\theta_{0}$ is the unique maximizer of $\sum_{j=1}^{J}\sum_{m=2}^{\bar{M}}\omega_{m,j}^{\ast}g(m,j,\theta)$.
To see this, observe that in light of (\ref{eq:Gn}) the equality
$G(m,j,\theta)=I_{m}$ only holds if 
\begin{eqnarray}
\left(\frac{m-1+\lambda}{m-1+\lambda_{0}}\right)^{2}\frac{\sigma_{\epsilon0,j}^{2}}{\sigma_{\epsilon,j}^{2}} & = & 1,\label{eq:con1}\\
\frac{(\sigma_{\epsilon0,j}^{2}+m\sigma_{\alpha0}^{2})}{(\sigma_{\epsilon,j}^{2}+m\sigma_{\alpha}^{2})}\left(\frac{1-\lambda}{1-\lambda_{0}}\right)^{2} & = & 1.\label{eq:con2}
\end{eqnarray}
Note that \eqref{eq:con1} and \eqref{eq:con2} are equivalent to
$E\left[\chi_{r}^{w}(\theta)|m_{r}=m,D_{r}=j\right]=0$ and $E\left[\chi_{r}^{b}(\theta)|m_{r}=m,D_{r}=j\right]=0$
respectively, where $\chi_{r}^{w}(\theta)$ and $\chi_{b}^{b}(\theta)$
are defined in \eqref{eq:mwb}. See Equations \eqref{eq:vare_id}
and \eqref{eq:sigmaalpha} in Appendix \ref{sec:prooflemma2}. Thus
mathematically $G(m,j,\theta)=I_{m}$ is equivalent to $E\left[\chi_{r}(\theta)|m_{r}=m,D_{r}=j\right]=0$,
with $\chi_{r}(\theta)=(\chi_{r}^{w}(\theta),\chi_{r}^{b}(\theta))$.
Utilizing Lemma \ref{lem:ID_1}, $\theta=\theta_{0}$ is the only
solution to $E\left[\chi_{r}(\theta)|m_{r}=m,D_{r}=j\right]=0$ and
$E\left[\chi_{r}(\theta)|m_{r}=m^{\prime},D_{r}=j^{\prime}\right]=0$
under Scenarios (i) or (ii).Therefore, $\theta_{0}$ is the unique
global maximizer of $\sum_{j=1}^{J}\sum_{m=2}^{\bar{M}}\omega_{m,j}^{\ast}g(m,j,\theta)$
and thus of $\bar{Q}^{\ast}(\theta)$.

\subsection{Proof of Theorem \ref{theorem:Consistency}(b) \label{sec:Proof-of-cons-2}}

To prove the consistency of the QMLE estimator $\hat{\theta}_{N}$
we utilize Lemma 3.1 of \citet{potscher_basic_1991}. Previously we
have shown that $\theta_{0}$ is the unique maximizer of $\bar{Q}^{\ast}(\theta)$
on $\Theta$, where $\bar{Q}^{\ast}(\theta)$ is finite and continuous.
The compactness of $\Theta$ follows from Assumptions \ref{assume:epsilon},
\ref{assume:alpha}, and \ref{assume:lambda}. To prove consistency
of $\hat{\theta}$, it then suffices to have Condition \ref{cond:conv_theta}.
Since $\bar{\beta}(\theta_{0})=\beta_{0}$, once we have shown that
$\hat{\theta}_{N}\rightarrow_{p}\theta_{0}$, consistency of $\hat{\beta}_{N}(\hat{\theta}_{N})$
follows from Condition \ref{cond:conv_gamma}.
\begin{condition}
\label{cond:conv_theta}As $N\rightarrow\infty$ , $\sup_{\theta\in\Theta}|Q_{N}(\theta)-\bar{Q}^{\ast}(\theta)|\rightarrow_{p}0$.
\end{condition}
\begin{condition}
\label{cond:conv_gamma}As $N\rightarrow\infty$, $\sup_{\theta\in\Theta}|\hat{\beta}_{N}(\theta)-\bar{\beta}(\theta)|\rightarrow_{p}0$.
\end{condition}
\noindent $\blacksquare$ \textbf{Verification of Condition \ref{cond:conv_theta}:}
Verification of Condition \ref{cond:lim} has shown that $\sup_{\theta\in\Theta}|E\left[Q_{N}(\theta)\right]-\bar{Q}^{\ast}(\theta)|\rightarrow0$
as $N\rightarrow\infty$. It remains to show that as $N$ goes to
infinity, $\sup_{\theta\in\Theta}|Q_{N}(\theta)-E\left[Q_{N}(\theta)\right]|\rightarrow_{p}0$
. Upon substitution of $Y=(I-\lambda_{0}W)^{-1}(Z\beta_{0}+U)$ into
(\ref{eq:QN}) we have
\[
Q_{N}(\theta)-E\left[Q_{N}(\theta)\right]=\frac{1}{N}\left(U^{\prime}A_{Q_{N}}(\theta)U+2U^{\prime}A_{Q_{N}}(\theta)Z\beta_{0}-\tr[A_{Q_{N}}(\theta)\Omega_{0}]\right),
\]
where 
\[
A_{Q_{N}}(\theta)=-\frac{1}{2}(I-\lambda_{0}W)^{-1}(I-\lambda W)^{\prime}M_{Z}(\theta)(I-\lambda W)(I-\lambda_{0}W)^{-1}.
\]
The row and column sums in absolute value of $(I-\lambda_{0}W)^{-1}$,
$(I-\lambda W)$, $M_{Z}(\theta)$ and their first derivatives are
all uniformly bounded in absolute value. It now follows from Lemma
\ref{lem:Uniform-convergence} that $Q_{N}(\theta)-E\left[Q_{N}(\theta)\right]\rightarrow_{p}0$
uniformly in $\theta$.

\noindent $\blacksquare$ \textbf{Verification of Condition }\ref{cond:conv_gamma}:

In light of (\ref{eq:deltahathat1}) we have
\begin{eqnarray*}
\hat{\beta}_{N}(\theta) & = & (Z^{\prime}\Omega(\theta)^{-1}Z)^{-1}Z^{\prime}\Omega(\theta)^{-1}(I-\lambda W)Y\\
 & = & \left(N^{-1}Z^{\prime}\Omega(\theta)^{-1}Z\right)^{-1}\left(N^{-1}Z^{\prime}\Omega(\theta)^{-1}(I-\lambda W)(I-\lambda_{0}W)^{-1}Z\right)\beta_{0}\\
 & + & \left(N^{-1}Z^{\prime}\Omega(\theta)^{-1}Z\right)^{-1}\left(N^{-1}Z^{\prime}\Omega(\theta)^{-1}(I-\lambda W)(I-\lambda_{0}W)^{-1}U\right).
\end{eqnarray*}

By \prettyref{lem:Uniform-convergence}, 
\[
\sup_{\theta\in\Theta}\left(N^{-1}Z^{\prime}\Omega(\theta)^{-1}(I-\lambda W)(I-\lambda_{0}W)^{-1}U\right)\rightarrow_{p}0.
\]
Also $\left(N^{-1}Z^{\prime}\Omega(\theta_{N})^{-1}Z\right)^{-1}$
is uniformly bounded in absolute value. By Lemma \ref{lem:Aux1},
\[
\sup_{\theta\in\Theta}\left(\left(N^{-1}Z^{\prime}\Omega(\theta)^{-1}Z\right)^{-1}-\varUpsilon_{3}(\theta)^{-1}\right)\rightarrow0
\]
 and 
\[
\sup_{\theta\in\Theta}\left(N^{-1}Z^{\prime}\Omega(\theta)^{-1}(I-\lambda W)(I-\lambda_{0}W)^{-1}Z-\varUpsilon_{2}(\theta)\right)\rightarrow0.
\]
 In all, we have $\sup_{\theta\in\Theta}|\hat{\beta}_{N}(\theta)-\bar{\beta}(\theta)|\rightarrow_{p}0$.

\subsection{Proof of Theorem \ref{theorem:AsymptoticNormlity}\label{sec:Proof-of-asydis}}

To derive the limiting distribution of the QMLE $\hat{\delta}_{N}=(\hat{\theta}_{N}^{\prime},\hat{\beta}_{N}^{\prime})^{\prime}$
it proves more convenient to work with the unconcentrated log-likelihood
function defined in (\ref{eq:logL}). Applying the mean value theorem,
the first order condition for the QMLE can be written as
\[
0=\frac{1}{N^{1/2}}\frac{\partial\ln L_{N}(\hat{\delta}_{N})}{\partial\delta}=\frac{1}{N^{1/2}}\frac{\partial\ln L_{N}(\delta_{0})}{\partial\delta}+\frac{1}{N}\frac{\partial\ln L_{N}(\check{\delta}_{N})}{\partial\delta\partial\delta}N^{1/2}(\hat{\delta}_{N}-\delta_{0}),
\]
where $\check{\delta}_{N}$ denotes a ``between\textquotedblright{}
value vector. Given that $\hat{\delta}_{N}$ was shown to be consistent,
it follows that also the ``between'' value $\check{\delta}_{N}$
is consistent for $\delta_{0}$. It is not difficult to see that 
\[
\frac{\partial\ln L_{N}(\delta)}{\partial\delta}=\left[\begin{array}{l}
-\tr[(I-\lambda W)^{-1}W]+U(\delta){}^{\prime}\Omega^{-1}WY\\
-\frac{1}{2}\tr[\Omega^{-1}\diag_{r=1}^{R}\left\{ m_{r}J_{m_{r}}^{\ast}\right\} ]+\frac{1}{2}U(\delta){}^{\prime}\Omega^{-1}\diag_{r=1}^{R}\left\{ m_{r}J_{m_{r}}^{\ast}\right\} \Omega^{-1}U(\delta)\\
-\frac{1}{2}\tr[\Omega^{-1}\diag_{r=1}^{R}\{1(D_{r}=1)I_{m_{r}}\}]+\frac{1}{2}U(\delta)^{\prime}\Omega^{-2}\diag_{r=1}^{R}\{1(D_{r}=1)I_{m_{r}}\}U(\delta)\\
\vdots\\
-\frac{1}{2}\tr[\Omega^{-1}\diag_{r=1}^{R}\{1(D_{r}=J)I_{m_{r}}\}]+\frac{1}{2}U(\delta)^{\prime}\Omega^{-2}\diag_{r=1}^{R}\{1(D_{r}=J)I_{m_{r}}\}U(\delta)\\
Z^{\prime}\Omega^{-1}U(\delta)
\end{array}\right]
\]
with $U(\delta)=Y-\lambda WY-Z\beta$, and thus
\begin{equation}
\frac{\partial\ln L_{N}(\delta_{0})}{\partial\delta}=\left[\begin{array}{l}
-\tr[(I-\lambda_{0}W)^{-1}W]+U{}^{\prime}\Omega_{0}^{-1}W(I-\lambda_{0}W)^{-1}(Z\beta_{0}+U)\\
-\frac{1}{2}\tr[\Omega_{0}^{-1}\diag_{r=1}^{R}\left\{ J_{m_{r}}^{\ast}m_{r}\right\} ]+\frac{1}{2}U{}^{\prime}\Omega_{0}^{-1}\diag_{r=1}^{R}\left\{ J_{m_{r}}^{\ast}m_{r}\right\} \Omega_{0}^{-1}U\\
-\frac{1}{2}\tr[\Omega_{0}^{-1}\diag_{r=1}^{R}\{1(D_{r}=1)I_{m_{r}}\}]+\frac{1}{2}U^{\prime}\Omega_{0}^{-2}\diag_{r=1}^{R}\{1(D_{r}=1)I_{m_{r}}\}U\\
\vdots\\
-\frac{1}{2}\tr[\Omega_{0}^{-1}\diag_{r=1}^{R}\{1(D_{r}=J)I_{m_{r}}\}]+\frac{1}{2}U^{\prime}\Omega_{0}^{-2}\diag_{r=1}^{R}\{1(D_{r}=J)I_{m_{r}}\}U\\
Z^{\prime}\Omega_{0}^{-1}U
\end{array}\right].\label{eq:score}
\end{equation}

Furthermore, it is not difficult to see that with $\theta_{2}=\sigma_{\alpha}^{2}$,
$\theta_{2+j}=\sigma_{\epsilon,j}^{2}$, $j=1,...,J$, the elements
of the Hessian matrix are
\begin{eqnarray}
\frac{\partial^{2}\ln L_{N}(\delta)}{\partial\lambda^{2}} & = & -\tr[(I-\lambda W)^{-2}W^{2}]-Y^{\prime}W^{\prime}\Omega(\theta)^{-1}WY,\label{eq:hessian1}\\
\frac{\partial^{2}\ln L_{N}(\delta)}{\partial\lambda\partial\theta_{i}} & = & -U(\delta)^{\prime}\Omega(\theta)^{-1}\frac{\partial\Omega(\theta)}{\partial\theta_{i}}\Omega(\theta)^{-1}WY,\\
\frac{\partial^{2}\ln L_{N}(\delta)}{\partial\theta_{i}\partial\theta_{j}} & = & \frac{1}{2}\tr[\Omega(\theta)^{-2}\frac{\partial\Omega}{\partial\theta_{i}}\frac{\partial\Omega}{\partial\theta_{j}}]\\
 &  & -U(\delta)^{\prime}\Omega(\theta)^{-1}\frac{\partial\Omega(\theta)}{\partial\theta_{i}}\Omega(\theta)^{-1}\frac{\partial\Omega(\theta)}{\partial\theta_{j}}\Omega(\theta)^{-1}U(\delta),\\
\frac{\partial^{2}\ln L_{N}(\delta)}{\partial\theta_{i}\partial\beta} & = & -Z^{\prime}\Omega(\theta)^{-1}\frac{\partial\Omega(\theta)}{\partial\theta_{i}}\Omega(\theta)^{-1}U(\delta),\\
\frac{\partial^{2}\ln L_{N}(\delta)}{\partial\lambda\partial\beta} & = & -Z^{\prime}\Omega(\theta)^{-1}WY,\\
\frac{\partial^{2}\ln L_{N}(\delta)}{\partial\beta\partial\beta^{\prime}} & = & -Z^{\prime}\Omega(\theta)^{-1}Z,\label{eq:hessian2}
\end{eqnarray}
with $i,j=2,3,...,2+J$ and 
\[
\frac{\partial\Omega(\theta)}{\partial\theta_{2}}=\diag_{r=1}^{R}\{J_{m_{r}}^{\ast}m_{r}\},\qquad\frac{\partial\Omega(\theta)}{\partial\theta_{2+j}}=\diag_{r=1}^{R}\{1(D_{r}=j)I_{m_{r}}\}.
\]

Since $Y=(I-\lambda_{0}W)^{-1}(Z\beta_{0}+U)$ and $U(\delta)=(I-\lambda W)(I-\lambda_{0}W)^{-1}(Z\beta_{0}+U)-Z\beta$,
each element of $N^{-1}\partial^{2}\ln L_{N}(\delta)/\partial\delta\partial\delta^{\prime}$
is a linear quadratic form of $U$ or $Z$ in the form of $\frac{1}{N}U^{\prime}A(\theta)U$,
$\frac{1}{N}Z^{\prime}A(\theta)U$, $\frac{1}{N}Z^{\prime}A(\theta)Z$,
and their products with $\beta$: $\frac{1}{N}\beta^{\prime}Z^{\prime}A(\theta)U$,
$\frac{1}{N}Z^{\prime}A(\theta)Z\beta$, $\frac{1}{N}\beta^{\prime}Z^{\prime}A(\theta)Z\beta$,
etc., where $A(\theta)=\diag_{r=1}^{R}\{p(m_{r},D_{r},\theta)I_{m_{r}}^{\ast}+s(m_{r},D_{r},\theta)J_{m_{r}}^{\ast}\}$
satisfies the conditions in \prettyref{lem:Uniform-convergence} and
\prettyref{lem:Aux1}. By these two lemmas, if $\hat{\theta}_{N}\rightarrow_{p}\theta_{0}$,
all three types of linear quadratic forms converge to the limit of
their expected value at $\theta_{0}$ in probability. That is as $N\rightarrow\infty$,
\[
\left|\frac{1}{N}U^{\prime}A(\hat{\theta}_{N})U-\lim_{N\rightarrow\infty}\frac{1}{N}\tr[A(\theta_{0})\Omega_{0}]\right|\rightarrow_{p}0,
\]
\[
|\frac{1}{N}Z^{\prime}A(\theta)U|\rightarrow_{p}0,
\]
\[
\left|\frac{1}{N}Z^{\prime}A(\hat{\theta}_{N})Z-\lim_{N\rightarrow\infty}\frac{1}{N}Z^{\prime}A(\theta_{0})Z\right|\rightarrow_{p}0.
\]
Also, we have $\hat{\beta}_{N}\rightarrow_{p}\beta_{0}$. Thus by
Slutsky's theorem, the products of the linear quadratic forms with
$\hat{\beta}_{N}$ converge in probability to the products of the
expected values with $\beta_{0}$. Therefore, as $\check{\delta}_{N}\rightarrow_{p}\delta_{0}$,

\[
\frac{1}{N}\frac{\partial\ln L_{N}(\check{\delta}_{N})}{\partial\delta\partial\delta^{\prime}}\overset{}{\rightarrow_{p}}\lim_{N\rightarrow\infty}\frac{1}{N}E\left[\frac{\partial^{2}\ln L_{N}(\delta_{0})}{\partial\delta\partial\delta^{\prime}}\right]=-\Gamma_{0},
\]
where the specific structure of $\Gamma_{0}$ is given in Appendix
\ref{sec:Variance-Covariance-Matrix}.

We next show that $N^{1/2}\partial\ln L_{N}(\delta_{0})/\partial\delta\xrightarrow{d}N(0,\Upsilon_{0})$.
Each element of the score function in (\ref{eq:score}) can be written
as a linear quadratic form of $U$ in the form of $U^{\prime}A_{N}(\theta_{0})U+U^{\prime}B_{N}(\theta_{0})Z\beta_{0}+C(\delta_{0})$,
which has zero mean and where the row and column sums of $A_{N}(\theta_{0})$
and $B_{N}(\theta_{0})$ are uniformly bounded in absolute value and
where $C(\delta_{0})$ are constants. Using Theorem \ref{lem:clt},
$N^{-1/2}\partial\ln L_{N}(\delta_{0})/\partial\delta\xrightarrow{d}N(0,\Upsilon_{0})$
with $\Upsilon_{0}=\lim_{N\rightarrow\infty}\frac{1}{N}E[\frac{\partial\ln L_{N}(\delta_{0})}{\partial\delta}\frac{\partial\ln L_{N}(\delta_{0})}{\partial\delta^{\prime}}]$,
whose expression is given in Appendix \ref{sec:Variance-Covariance-Matrix}.
In all, $\sqrt{N}(\hat{\delta}_{N}-\delta_{0})\xrightarrow{d}N(0,\Gamma_{0}^{-1}\Upsilon_{0}\Gamma_{0}^{-1})$
as $N$ goes to infinity.

\subsection{Proof of Theorem \ref{thm:ValidInference}\label{sec:Proof-consistency34}}

The proof of the theorem follows from the definition of $\hat{\Gamma}_{N}$
and $\hat{\Upsilon}_{N}$ in (\ref{eq:Gamma_hat}) and (\ref{eq:Upsilon_hat}),
Lemma \ref{Lemma:consistency_moment34} which is stated below in this
section, Assumptions \ref{assume:n} and \ref{assume:z} and Theorem
\ref{theorem:Consistency} which together imply that $\hat{\Gamma}_{N}\overset{p}{\rightarrow}\Gamma_{0}$
and $\hat{\Upsilon}_{N}\overset{p}{\rightarrow}\Upsilon_{0}$. Since
by Lemma \ref{lem:nd} the matrices $\Upsilon_{0}$ and $\Gamma_{0}$
are full rank, and thus $\Gamma_{0}^{-1}\Upsilon_{0}\Gamma_{0}^{-1}$
is full rank, it follows that $\left(\hat{\Gamma}_{N}^{-1}\hat{\Upsilon}_{N}\hat{\Gamma}_{N}^{-1}\right)^{-1/2}\overset{p}{\rightarrow}\left(\Gamma_{0}^{-1}\Upsilon_{0}\Gamma_{0}^{-1}\right)^{-1/2}$.
The result then follows from the continuous mapping theorem and Theorem
\ref{theorem:AsymptoticNormlity}. 
\begin{lem}
\label{Lemma:consistency_moment34}Suppose Assumptions 1-5 hold, then
$\hat{\mu}_{\alpha}^{(3)}\overset{p}{\rightarrow}\mu_{\alpha0}^{(3)}$,
$\hat{\mu}_{\alpha}^{(4)}\overset{p}{\rightarrow}\mu_{\alpha0}^{(4)}$,
and $\hat{\mu}_{\epsilon,j}^{(3)}\overset{p}{\rightarrow}\mu_{\epsilon0,j}^{(3)}$,
$\hat{\mu}_{\epsilon,j}^{(4)}\overset{p}{\rightarrow}\mu_{\epsilon0,j}^{(4)}$,
for any $j\in\{1,...,J\}$. 
\end{lem}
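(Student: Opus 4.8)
The plan is to show that each feasible estimator equals the corresponding \emph{infeasible} average, whose probability limit is already identified by Lemma \ref{lem:moment34}(a), plus a remainder that is $o_{p}(1)$. First I would record that by Theorem \ref{theorem:Consistency} the QMLE satisfies $\hat{\lambda}_{N}\overset{p}{\rightarrow}\lambda_{0}$, $\hat{\beta}_{N}\overset{p}{\rightarrow}\beta_{0}$, $\hat{\sigma}_{\alpha,N}^{2}\overset{p}{\rightarrow}\sigma_{\alpha0}^{2}$ and $\hat{\sigma}_{\epsilon j,N}^{2}\overset{p}{\rightarrow}\sigma_{\epsilon0,j}^{2}$, so the hypotheses $\hat{\lambda}\overset{p}{\rightarrow}\lambda_{0}$, $\hat{\beta}\overset{p}{\rightarrow}\beta_{0}$ of Lemma \ref{lem:moment34} are in force.

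Next, using $y_{ir}=\lambda_{0}\bar{y}_{(-i)r}+z_{ir}\beta_{0}+u_{ir}$ together with the identities $\bar{u}_{r}=(1-\lambda_{0})\bar{y}_{r}-\bar{z}_{r}\beta_{0}$ and $\ddot{U}_{r}=\frac{m_{r}-1+\lambda_{0}}{m_{r}-1}\ddot{Y}_{r}-\ddot{Z}_{r}\beta_{0}$ established in Sections \ref{sec:graham}--\ref{sec:General-Model}, I would verify the elementary decompositions
\[
\hat{\bar{u}}_{r}=\bar{\varphi}\,\bar{u}_{r}+\bar{z}_{r}\bar{\phi},\qquad \hat{\ddot{u}}_{ir}=\ddot{\varphi}_{r}\,\ddot{u}_{ir}+\ddot{z}_{ir}\ddot{\phi}_{r},
\]
where $\bar{\varphi},\bar{\phi},\ddot{\varphi}_{r},\ddot{\phi}_{r}$ are exactly the quantities defined just before Lemma \ref{lem:moment34}. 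Expanding the products $\hat{\ddot{u}}_{ir}^{\,p_{1}}\hat{\bar{u}}_{r}^{\,p_{2}}$ by the binomial theorem then yields $\ddot{\varphi}_{r}^{\,p_{1}}\bar{\varphi}^{\,p_{2}}\ddot{u}_{ir}^{\,p_{1}}\bar{u}_{r}^{\,p_{2}}$ plus a finite sum of terms of the form $(\ddot{z}_{ir}\ddot{\phi}_{r})^{s_{1}}(\ddot{\varphi}_{r}\ddot{u}_{ir})^{p_{1}-s_{1}}(\bar{z}_{r}\bar{\phi})^{s_{2}}(\bar{\varphi}\bar{u}_{r})^{p_{2}-s_{2}}$ with $s_{1}+s_{2}\geqslant1$. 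Averaging over $r$ (weighted by $1(D_{r}=j)$ and by the bounded, positive functions of $m_{r}$ appearing in $f_{\epsilon,r}^{(3)},f_{\alpha,r}^{(3)},f_{\epsilon,r}^{(4)},f_{\alpha,r}^{(4)}$, which play the role of $\psi(m_{r})$), Lemma \ref{lem:moment34}(b) annihilates every $s_{1}+s_{2}\geqslant1$ term, and Lemma \ref{lem:moment34}(c) replaces $\ddot{\varphi}_{r}^{\,p_{1}}\bar{\varphi}^{\,p_{2}}\ddot{u}_{ir}^{\,p_{1}}\bar{u}_{r}^{\,p_{2}}$ by $\ddot{u}_{ir}^{\,p_{1}}\bar{u}_{r}^{\,p_{2}}$ up to $o_{p}(1)$. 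This disposes of the residual-only parts of $\hat{f}_{\epsilon,r}^{(3)},\hat{f}_{\alpha,r}^{(3)}$ and of the leading parts of $\hat{f}_{\epsilon,r}^{(4)},\hat{f}_{\alpha,r}^{(4)}$, including the $m_{r}=2$ branch of $f_{\epsilon,r}^{(3)}$ (which is just the case $p_{1}=2,p_{2}=1$).

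It then remains to treat the explicit $\sigma^{4}$- and $\sigma^{2}\sigma^{2}$-offsets that appear in $\hat{f}_{\epsilon,r}^{(4)}$ and $\hat{f}_{\alpha,r}^{(4)}$ (e.g.\ $\tfrac{3(m_{r}-1)(2m_{r}-3)}{m_{r}^{3}}\hat{\sigma}_{\epsilon,D_{r}}^{4}$ and $\tfrac{6}{m_{r}}\hat{\sigma}_{\alpha}^{2}\hat{\sigma}_{\epsilon,D_{r}}^{2}$): these carry coefficients bounded in $m_{r}$, so after averaging over $r$ with $D_{r}=j$ they converge by Assumption \ref{assume:n} (existence of $\omega_{m,j}^{\ast}$) and by $\hat{\sigma}^{2}\overset{p}{\rightarrow}\sigma_{0}^{2}$ to the population constants that were subtracted off in the very definition of $f^{(4)}$; Slutsky's theorem then assembles the pieces so that $R_{j}^{-1}\sum_{r}1(D_{r}=j)\hat{f}_{\epsilon,r}^{(l)}\overset{p}{\rightarrow}\mu_{\epsilon0,j}^{(l)}$ and $R^{-1}\sum_{r}\hat{f}_{\alpha,r}^{(l)}\overset{p}{\rightarrow}\mu_{\alpha0}^{(l)}$ for $l=3,4$, the normalizations themselves converging by Assumption \ref{assume:n}. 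I expect the only real work to be the bookkeeping in the third and fourth steps: matching every term produced by the binomial expansions (and by the nested substitution $\sigma_{0}^{2}\mapsto\hat{\sigma}^{2}$ inside the $f^{(4)}$ offsets) to one of the three templates of Lemma \ref{lem:moment34}, and checking that all coefficient functions of $m_{r}$ are bounded and positive so they qualify as the $\psi(m_{r})$ of that lemma. There is no conceptual obstacle once Lemma \ref{lem:moment34} and Theorem \ref{theorem:Consistency} are available.
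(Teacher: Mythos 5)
Your proposal is correct and follows essentially the same route as the paper: it uses the same decompositions $\hat{\bar{u}}_{r}=\bar{\varphi}\bar{u}_{r}+\bar{z}_{r}\bar{\phi}$ and $\hat{\ddot{u}}_{ir}=\ddot{\varphi}_{r}\ddot{u}_{ir}+\ddot{z}_{ir}\ddot{\phi}_{r}$, the binomial expansion handled by Lemma \ref{lem:moment34}(b)--(c), Lemma \ref{lem:moment34}(a) for the infeasible averages, and consistency of the QMLE plus the continuous mapping/Slutsky argument for the estimated $\sigma^{2}$-offsets, which is exactly how the paper treats its nonstochastic $f(m_{r},\theta)$ terms.
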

First note that $f_{\alpha,r}^{(3)},f_{\alpha,r}^{(4)},f_{\epsilon,r}^{(3)}$
and $f_{\epsilon,r}^{(4)}$ can all be rewritten as a linear combination
of finitely many terms in the form of $\psi(m_{r})\sum_{i=1}^{m_{r}}\ddot{u}_{ir}^{p_{1}}\bar{u}_{r}^{p_{2}}$
and some nonstochastic function $f(m_{r},\theta)$, with $\psi(m_{r})$
being a finite function, $p_{1}\geqslant0,p_{2}\geqslant0$ and $p_{1}+p_{2}\leqslant4$,
and $f(m,\theta)$ being continuous in $\theta$. Also note that 
\begin{align*}
\frac{1}{R}\sum_{r=1}^{R}f_{\alpha,r}^{(l)} & =\sum_{j=1}^{J}\frac{R_{j}}{R}\left(\frac{1}{R_{j}}\sum_{r=1}^{R}1(D_{r}=j)f_{\alpha,r}^{(l)}\right)
\end{align*}
 for $l=3$ and 4. Thus our estimates for the third and fourth moments,
$\frac{1}{R}\sum_{r=1}^{R}f_{\alpha,r}^{(l)}$ and $\frac{1}{R_{j}}\sum_{r=1}^{R}1(D_{r}=j)f_{\epsilon,r}^{(l)}$,
$l=3,4$ can all be written as a weighted sum of finitely many terms
of the form 
\begin{equation}
\frac{1}{R_{j}}\sum_{r=1}^{R}1(D_{r}=j)\psi(m_{r})\sum_{i=1}^{m_{r}}\ddot{u}_{ir}^{p_{1}}\bar{u}_{r}^{p_{2}},\label{eq:term1}
\end{equation}
 and 
\begin{equation}
\frac{1}{R_{j}}\sum_{r=1}^{R}1(D_{r}=j)f(m_{r},\theta).\label{eq:term2}
\end{equation}
The former converges to its mean by Lemma \ref{lem:moment34} (a)
and the latter is nonstochastic. Consequently, $\frac{1}{R}\sum_{r=1}^{R}f_{\alpha,r}^{(l)}\rightarrow_{p}\mu_{\alpha0}^{(l)}$
and $\frac{1}{R_{j}}\sum_{r=1}^{R}1(D_{r}=j)f_{\epsilon,r}^{(l)}\rightarrow_{p}\mu_{\epsilon0,j}^{(l)}$
for $l=3,4$ and $j=1,...,J$ as $R$ goes to infinity. For the feasible
counterparts of the terms in \eqref{eq:term1} and \eqref{eq:term2},
note that $\hat{\theta}-\theta_{0}\rightarrow_{p}0$ and $f(m,\theta)$
is continuous in $\theta$. We have
\[
\frac{1}{R_{j}}\sum_{r=1}^{R}1(D_{r}=j)f(m_{r},\hat{\theta})-\frac{1}{R_{j}}\sum_{r=1}^{R}1(D_{r}=j)f(m_{r},\theta_{0})\rightarrow_{p}0
\]
 by the continuous mapping theorem as. It thus remains to show that
\begin{equation}
\frac{1}{R_{j}}\sum_{r=1}^{R}1(D_{r}=j)\psi(m_{r})\sum_{i=1}^{m_{r}}(\hat{\ddot{u}}_{ir}^{p_{1}}\hat{\bar{u}}_{r}^{p_{2}}-\ddot{u}_{ir}^{p_{1}}\bar{u}_{r}^{p_{2}})\rightarrow_{p}0.\label{eq:convp}
\end{equation}

Let $\hat{U}_{r}=(\hat{u}_{1r},...,\hat{u}_{m_{r},r})^{\prime}$,
then\emph{ }
\begin{align*}
\hat{U}_{r} & =(I-\hat{\lambda}W)Y_{r}-Z_{r}\hat{\beta}=(I-\hat{\lambda}W)(I-\lambda_{0}W)^{-1}(Z_{r}\beta_{0}+U_{r})-Z_{r}\hat{\beta}\\
 & =\left((I-\hat{\lambda}W)(I-\lambda_{0}W)^{-1}Z_{r}\beta_{0}-Z_{r}\hat{\beta}\right)+(I-\hat{\lambda}W)(I-\lambda_{0}W)^{-1}U_{r}.
\end{align*}
Note that 
\[
(I-\hat{\lambda}W)(I-\lambda_{0}W)^{-1}=\frac{m_{r}-1+\hat{\lambda}}{m_{r}-1+\lambda_{0}}I_{m_{r}}^{\ast}+\frac{1-\hat{\lambda}}{1-\lambda_{0}}J_{m_{r}}^{\ast},
\]
where $J_{m_{r}}^{\ast}=\iota_{m_{r}}\iota_{m_{r}}^{\prime}/m_{r}$
and $I_{m_{r}}^{\ast}=I_{m_{r}}-J_{m_{r}}^{\ast}$ are two orthogonal
idempotent matrices that generate vectors of group means and vectors
of deviations from the group means. Thus 
\begin{align*}
\hat{\bar{u}}_{r} & =\iota_{m_{r}}^{\prime}\hat{U}_{r}/m_{r}=\bar{z}_{r}(\frac{(1-\hat{\lambda})\beta_{0}}{1-\lambda_{0}}-\hat{\beta})+\frac{1-\hat{\lambda}}{1-\lambda_{0}}\bar{u}_{r}=\bar{z}_{r}\bar{\phi}+\bar{\varphi}\bar{u}_{r},
\end{align*}
where $\bar{\phi}=\frac{(1-\hat{\lambda})\beta_{0}}{1-\lambda_{0}}-\hat{\beta}$,
$\bar{\varphi}=\frac{1-\hat{\lambda}}{1-\lambda_{0}}$, $\bar{z}_{r}=\frac{1}{m_{r}}\sum_{i=1}^{m_{r}}z_{ir}$.
Let $\hat{\ddot{U}}_{r}=(\hat{\ddot{u}}_{1r},...,\hat{\ddot{u}}_{m_{r}r})^{\prime}$,
then 
\[
\hat{\ddot{U}}_{r}=I_{m_{r}}^{\ast}\hat{U}_{r}=\ddot{Z}_{r}\left(\frac{m_{r}-1+\hat{\lambda}}{m_{r}-1+\lambda_{0}}\beta_{0}-\hat{\beta}\right)+\frac{m_{r}-1+\hat{\lambda}}{m_{r}-1+\lambda_{0}}\ddot{U}_{r},
\]
and
\begin{align*}
\ddot{\hat{u}}_{ir} & =\ddot{z}_{ir}\ddot{\phi}_{r}+\ddot{\varphi}_{r}\ddot{u}_{ir},
\end{align*}
where $\ddot{\phi}_{r}=\frac{m_{r}-1+\hat{\lambda}}{m_{r}-1+\lambda_{0}}\beta_{0}-\hat{\beta}$,
$\ddot{\varphi}_{r}=1+\frac{\hat{\lambda}-\lambda_{0}}{m_{r}-1+\lambda_{0}}$,
$\ddot{z}_{ir}=z_{ir}-\bar{z}_{r}$.

In all, 
\[
\hat{\ddot{u}}_{ir}^{p_{1}}\hat{\bar{u}}_{r}^{p_{2}}-\ddot{u}_{ir}^{p_{1}}\bar{u}_{r}^{p_{2}}=(\ddot{z}_{ir}\ddot{\phi}_{r}+\ddot{\varphi}_{r}\ddot{u}_{ir})^{p_{1}}(\bar{z}_{r}\bar{\phi}+\bar{\varphi}\bar{u}_{r})^{p_{2}}-\ddot{u}_{ir}^{p_{1}}\bar{u}_{r}^{p_{2}}.
\]
Given that $p_{1}$ and $p_{2}$ are nonnegative integers with $p_{1}+p_{2}\leqslant4,$
the above equation can be written as a linear combination of terms
of the form $(\ddot{z}_{ir}\ddot{\phi}_{r})^{s_{1}}(\ddot{\varphi}_{r}\ddot{u}_{ir})^{p_{1}-s_{1}}(\bar{z}_{r}\bar{\phi})^{s_{2}}(\bar{\varphi}\bar{u}_{r})^{p_{2}-s_{2}}$,
and $(\ddot{\varphi}_{r}\ddot{u}_{ir})^{p_{1}}(\bar{\varphi}\bar{u}_{r})^{p_{2}}-\ddot{u}_{ir}^{p_{1}}\bar{u}_{r}^{p_{2}}$
with $0\leqslant s_{1}\leqslant p_{1}$, $0\leqslant s_{2}\leqslant p_{2}$
and $s_{1}+s_{2}\geqslant1$. The claim in \eqref{eq:convp} now follows
immediately from Lemma \ref{lem:moment34}(b)-(c).

\section{Variance-Covariance Matrix\label{sec:Variance-Covariance-Matrix}
and Proof of Lemma \ref{lem:nd}}

\subsection{Variance-Covariance Matrix}

Recall that $\Gamma_{0}=$$\lim_{N\rightarrow\infty}-\frac{1}{N}E\left[\frac{\partial^{2}\ln L_{N}(\delta_{0})}{\partial\delta\partial\delta^{\prime}}\right]$
and $\Upsilon_{0}=\lim_{N\rightarrow\infty}\frac{1}{N}E\left[\frac{\partial\ln L_{N}(\delta_{0})}{\partial\delta}\frac{\partial\ln L_{N}(\delta_{0})}{\partial\delta^{\prime}}\right]$.
These matrices are of dimension $(2+J+k_{Z})\times(2+J+k_{Z})$, symmetric,
and underlie the expression for the limiting variance covariance matrix
of the QMLE estimator for $\delta_{0}$. In the following we give
explicit expressions for $\Gamma_{0}$ and $\Upsilon_{0}$. Detailed
derivations are provided in the Online Appendix. We have
\begin{eqnarray}
\Upsilon_{0} & = & \sum_{j=1}^{J}\sum_{m=2}^{\bar{M}}\varphi(m,j)\bar{\Psi}(m,j)\varphi(m,j)^{\prime},\label{eq:Upsilon}
\end{eqnarray}
and 
\begin{equation}
\Gamma_{0}=\sum_{j=1}^{J}\sum_{m=2}^{\bar{M}}\varphi(m,j)\bar{\Psi}_{G}(m,j)\varphi(m,j)^{\prime},\label{eq:Gamma}
\end{equation}
where 
\begin{equation}
\varphi(m,j)=\left(\begin{array}{cccc}
\frac{1}{(m-1+\lambda_{0})\sigma_{\epsilon0,j}^{2}} & -\frac{m}{(1-\lambda_{0})(\sigma_{\epsilon0,j}^{2}+m\sigma_{\alpha0}^{2})} & \frac{1}{(m-1+\lambda_{0})\sigma_{\epsilon0,j}^{2}}\beta_{0}^{\prime} & -\frac{m}{(1-\lambda_{0})(\sigma_{\epsilon0,j}^{2}+m\sigma_{\alpha0}^{2})}\beta_{0}^{\prime}\\
0 & -\frac{m^{2}}{2(\sigma_{\epsilon0,j}^{2}+m\sigma_{\alpha0}^{2})^{2}} & 0 & 0\\
-\frac{1(j=1)}{2\sigma_{\epsilon0,j}^{4}} & -\frac{m1(j=1)}{2(\sigma_{\epsilon0,j}^{2}+m\sigma_{\alpha0}^{2})^{2}} & 0 & 0\\
\vdots & \vdots & \vdots & \vdots\\
-\frac{1(j=J)}{2\sigma_{\epsilon0,j}^{4}} & -\frac{m1(j=J)}{2(\sigma_{\epsilon0,j}^{2}+m\sigma_{\alpha0}^{2})^{2}} & 0 & 0\\
0 & 0 & -\frac{1}{\sigma_{\epsilon0,j}^{2}}I_{k_{Z}} & -\frac{m}{\sigma_{\epsilon0,j}^{2}+m\sigma_{\alpha0}^{2}}I_{k_{Z}}
\end{array}\right),\label{eq:varphim-1}
\end{equation}
which is given in (\ref{eq:varphim}) and repeated here for the convenience
of the reader,
\begin{equation}
\bar{\Psi}_{G}(m,j)=\diag\{2(m-1)\sigma_{\varepsilon0,j}^{4}\frac{\omega_{m,j}^{\ast}}{m^{\ast}},2(\sigma_{\alpha0}^{2}+\frac{\sigma_{\varepsilon0,j}^{2}}{m})^{2}\frac{\omega_{m,j}^{\ast}}{m^{\ast}},\sigma_{\varepsilon0}^{2}\ddot{\varkappa}_{m,j},(\sigma_{\alpha0}^{2}+\frac{\sigma_{\varepsilon0,j}^{2}}{m})\frac{\bar{\varkappa}_{m,j}}{m}\},\label{eq:PsiG}
\end{equation}
\begin{equation}
\bar{\Psi}(m,j)=\left[\begin{array}{cc}
\bar{\Psi}_{11}(m,j) & \bar{\Psi}_{12}(m,j)\\
\bar{\Psi}_{21}(m,j) & \bar{\Psi}_{22}(m,j)
\end{array}\right],\label{eq:Psim}
\end{equation}
with 
\begin{eqnarray*}
\bar{\Psi}_{11}(m,j) & = & \frac{\omega_{m,j}^{\ast}}{m^{\ast}}\left[\begin{array}{cc}
2(m-1)\sigma_{\varepsilon0,j}^{4} & 0\\
0 & 2(\sigma_{\alpha0}^{2}+\frac{\sigma_{\varepsilon0,j}^{2}}{m})^{2}+(\mu_{\alpha0}^{(4)}-3\sigma_{\alpha0}^{4})
\end{array}\right]\\
 &  & +(\mu_{\varepsilon0,j}^{(4)}-3\sigma_{\varepsilon0,j}^{4})\frac{\omega_{m,j}^{\ast}}{m^{\ast}}\left[\begin{array}{cc}
\frac{(m-1)^{2}}{m} & \frac{(m-1)}{m^{2}}\\
\frac{(m-1)}{m^{2}} & \frac{1}{m^{3}}
\end{array}\right],\\
\bar{\Psi}_{21}(m,j) & = & \left[\begin{array}{cc}
0 & 0\\
\frac{m-1}{m}\mu_{\varepsilon0,j}^{(3)}\bar{z}_{m,j}^{\prime}\  & [\mu_{\alpha0}^{(3)}+\frac{1}{m^{2}}\mu_{\epsilon0,j}^{(3)}]\bar{z}_{m,j}^{\prime}
\end{array}\right]=\Psi_{12}^{\prime}(m,j),\\
\bar{\Psi}_{22}(m,j) & = & \left[\begin{array}{cc}
\sigma_{\varepsilon0,j}^{2}\ddot{\varkappa}_{m,j} & 0\\
0 & (\sigma_{\alpha0}^{2}+\sigma_{\varepsilon0,j}^{2}/m)\frac{\bar{\varkappa}_{m,j}}{m}
\end{array}\right].
\end{eqnarray*}
Note that $\bar{\Psi}_{G}(m,j)$ can be obtained by setting $\mu_{\epsilon0,j}^{(4)}-3\sigma_{\epsilon0,j}^{4}=\mu_{\alpha0}^{(4)}-3\sigma_{\alpha0}^{4}=\mu_{\alpha0}^{(3)}=\mu_{\epsilon0,j}^{(3)}=0$
in $\bar{\Psi}(m,j)$. When $\epsilon$ and $\alpha$ are both Gaussian,
$\Upsilon_{0}=\Gamma_{0}$, consistent with what is expected from
the information matrix equality.

\subsection{Proof of the\textcolor{black}{{} Positive Definiteness of $\Upsilon_{0}$
and $\Gamma_{0}$}}

Let $\varphi(m,j)$ , $\bar{\Psi}_{G}(m,j)$ and $\bar{\Psi}(m,j)$
be as defined in \eqref{eq:varphim-1}, \eqref{eq:PsiG} and \eqref{eq:Psim}
respectively. We can partition $\varphi(m,j)$ as $\varphi(m,j)=\left(\begin{array}{cc}
A_{m,j} & B_{m,j}\\
0 & C_{m,j}
\end{array}\right)$, where 
\begin{equation}
A_{m,j}=\left(\begin{array}{cc}
\frac{1}{(m-1+\lambda_{0})\sigma_{\epsilon0,j}^{2}} & -\frac{m}{(1-\lambda_{0})(\sigma_{\epsilon0,j}^{2}+m\sigma_{\alpha0}^{2})}\\
0 & -\frac{m^{2}}{2(\sigma_{\epsilon0,j}^{2}+m\sigma_{\alpha0}^{2})^{2}}\\
-\frac{1(j=1)}{2\sigma_{\epsilon0,j}^{4}} & -\frac{m1(j=1)}{2(\sigma_{\epsilon0,j}^{2}+m\sigma_{\alpha0}^{2})^{2}}\\
\vdots & \vdots\\
-\frac{1(j=J)}{2\sigma_{\epsilon0,j}^{4}} & -\frac{m1(j=J)}{2(\sigma_{\epsilon0,j}^{2}+m\sigma_{\alpha0}^{2})^{2}}
\end{array}\right)\label{eq:Amj}
\end{equation}
is a $(2+J)\times2$ matrix , $B_{m,j}$ is the upper right block,
\begin{equation}
C_{m,j}=[-\frac{1}{\sigma_{\epsilon0,j}^{2}}I_{k_{Z}},-\frac{m}{\sigma_{\epsilon0,j}^{2}+m\sigma_{\alpha0}^{2}}I_{k_{Z}}].\label{eq:Cmj}
\end{equation}
Let $\ell=(\ell_{1},\ell_{2},\ell_{3}^{\prime},\ell_{4}^{\prime})^{\prime}$
be a $(2+2k_{Z})$ dimensional vector, where $\ell_{1}$ and $\ell_{2}$
are scalars and $\ell_{3}$ and $\ell_{4}$ are both $k_{Z}$ dimensional
vectors. To prove Lemma \ref{lem:nd}, we introduce the three lemmas
below.
\begin{lem}
\label{lem:PsiG}Suppose Assumptions 1-5 hold \textup{and} $\omega_{m,j}^{\ast}>0$,
then $\ell^{\prime}\bar{\Psi}_{G}(m,j)\ell=0$ if and only if $\ell_{1}=\ell_{2}=0$,
$\ell_{3}\ddot{\varkappa}_{m,j}\ell_{3}=0$ and $\ell_{4}\bar{\varkappa}_{m,j}\ell_{4}=0$.
\end{lem}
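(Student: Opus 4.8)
The plan is to exploit the block-diagonal structure of $\bar{\Psi}_{G}(m,j)$ given in \eqref{eq:PsiG}. Partitioning $\ell=(\ell_{1},\ell_{2},\ell_{3}^{\prime},\ell_{4}^{\prime})^{\prime}$ conformably with the four diagonal blocks, the quadratic form splits into a sum of four pieces,
\[
\ell^{\prime}\bar{\Psi}_{G}(m,j)\ell=2(m-1)\sigma_{\epsilon0,j}^{4}\tfrac{\omega_{m,j}^{\ast}}{m^{\ast}}\ell_{1}^{2}+2\bigl(\sigma_{\alpha0}^{2}+\tfrac{\sigma_{\epsilon0,j}^{2}}{m}\bigr)^{2}\tfrac{\omega_{m,j}^{\ast}}{m^{\ast}}\ell_{2}^{2}+\sigma_{\epsilon0,j}^{2}\,\ell_{3}^{\prime}\ddot{\varkappa}_{m,j}\ell_{3}+\tfrac{1}{m}\bigl(\sigma_{\alpha0}^{2}+\tfrac{\sigma_{\epsilon0,j}^{2}}{m}\bigr)\ell_{4}^{\prime}\bar{\varkappa}_{m,j}\ell_{4},
\]
so the whole argument reduces to a sign bookkeeping on the four summands.

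Next I would check that each of the four summands is nonnegative. For the two scalar terms this is immediate once one verifies that their coefficients are strictly positive: $\sigma_{\epsilon0,j}^{2}$ is bounded below by a positive constant by Assumption~\ref{assume:epsilon}, $2\leq m\leq\bar{M}$ and $m^{\ast}\geq2$ by Assumption~\ref{assume:n}, $\sigma_{\alpha0}^{2}\geq0$ by Assumption~\ref{assume:alpha}, and $\omega_{m,j}^{\ast}>0$ is the hypothesis of the lemma; hence both $2(m-1)\sigma_{\epsilon0,j}^{4}\omega_{m,j}^{\ast}/m^{\ast}$ and $2(\sigma_{\alpha0}^{2}+\sigma_{\epsilon0,j}^{2}/m)^{2}\omega_{m,j}^{\ast}/m^{\ast}$ are strictly positive. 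For the two covariate terms, $\ddot{\varkappa}_{m,j}$ and $\bar{\varkappa}_{m,j}$ are positive semidefinite because by Assumption~\ref{assume:z}(b) they are the limits of $N^{-1}\sum_{r\in\mathcal{I}_{m,j}}\ddot{Z}_{r}^{\prime}\ddot{Z}_{r}$ and $N^{-1}\sum_{r\in\mathcal{I}_{m,j}}m\bar{z}_{r}^{\prime}\bar{z}_{r}$, each a sum of Gram matrices and therefore p.s.d., and the p.s.d.\ cone is closed; their multipliers $\sigma_{\epsilon0,j}^{2}$ and $(\sigma_{\alpha0}^{2}+\sigma_{\epsilon0,j}^{2}/m)/m$ are again strictly positive by the same bounds. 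Thus $\ell^{\prime}\bar{\Psi}_{G}(m,j)\ell$ is a sum of four nonnegative quantities.

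Finally, since a sum of nonnegative terms is zero exactly when every term is zero, and since the multipliers of $\ell_{1}^{2}$, $\ell_{2}^{2}$, $\ell_{3}^{\prime}\ddot{\varkappa}_{m,j}\ell_{3}$ and $\ell_{4}^{\prime}\bar{\varkappa}_{m,j}\ell_{4}$ are all strictly positive, $\ell^{\prime}\bar{\Psi}_{G}(m,j)\ell=0$ is equivalent to $\ell_{1}=0$, $\ell_{2}=0$, $\ell_{3}^{\prime}\ddot{\varkappa}_{m,j}\ell_{3}=0$ and $\ell_{4}^{\prime}\bar{\varkappa}_{m,j}\ell_{4}=0$, which is the assertion. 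There is no genuine obstacle in this lemma; the only step needing a little care is confirming, under Assumptions~\ref{assume:epsilon}--\ref{assume:z}, that none of the scalar coefficients in the diagonal blocks of $\bar{\Psi}_{G}(m,j)$ degenerates to zero and that the two covariate blocks are honestly positive semidefinite, so that no summand can be ``hidden'' behind a vanishing coefficient.
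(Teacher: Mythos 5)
Your proof is correct and follows essentially the same route as the paper, which simply notes that the result follows from the block-diagonal form of $\bar{\Psi}_{G}(m,j)$ in \eqref{eq:PsiG} together with $\omega_{m,j}^{\ast}>0$ and $\sigma_{\epsilon0,j}^{2}>0$; you have merely spelled out the sign bookkeeping (strict positivity of the scalar coefficients and positive semidefiniteness of $\ddot{\varkappa}_{m,j}$ and $\bar{\varkappa}_{m,j}$ as limits of Gram-matrix sums) that the paper leaves implicit.
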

\begin{lem}
\label{lem:Psi}Suppose Assumptions 1-5 hold and assume further that
\textup{$\mu_{\varepsilon0,j}^{(4)}-\sigma_{\varepsilon0,j}^{4}>(\mu_{\epsilon0,j}^{(3)})^{2}/\sigma_{\varepsilon0,j}^{2}$
and} $\omega_{m,j}^{\ast}>0$, then $\ell^{\prime}\bar{\Psi}(m,j)\ell=0$
if and only if $\ell_{1}=\ell_{2}=0$, $\ell_{3}\ddot{\varkappa}_{m,j}\ell_{3}=0$
and $\ell_{4}\bar{\varkappa}_{m,j}\ell_{4}=0$.
\end{lem}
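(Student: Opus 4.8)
The plan is to work from the explicit block expressions for $\bar{\Psi}_{11}(m,j)$, $\bar{\Psi}_{12}(m,j)$, $\bar{\Psi}_{22}(m,j)$ given above. First note that $\bar{\Psi}(m,j)$ is positive semidefinite: either by direct inspection, or because $\bar{\Psi}(m,j)=\lim_{N\to\infty}N^{-1}\sum_{r\in\mathcal{I}_{m,j}}\cov(\chi_r(\delta_0))$ (the identity underlying the derivation of $\Upsilon_0$), so $\ell'\bar{\Psi}(m,j)\ell=\lim_N N^{-1}\sum_{r\in\mathcal{I}_{m,j}}\var(\ell'\chi_r(\delta_0))\ge0$, where at $\delta_0$ one has $\chi_r^w=\ddot{\epsilon}_r'\ddot{\epsilon}_r-(m-1)\sigma_{\epsilon0,j}^2$, $\chi_r^b=(\alpha_r+\bar{\epsilon}_r)^2-\sigma_{\alpha0}^2-\sigma_{\epsilon0,j}^2/m$, $\chi_r^{zw}=\ddot{Z}_r'\ddot{\epsilon}_r$, $\chi_r^{zb}=\bar{z}_r'(\alpha_r+\bar{\epsilon}_r)$. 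The ``if'' direction is then immediate by substitution: $\ell_1=\ell_2=0$ kills the $\bar{\Psi}_{11}$ and $\bar{\Psi}_{12}$ contributions, while $\ell_3'\ddot{\varkappa}_{m,j}\ell_3=\ell_4'\bar{\varkappa}_{m,j}\ell_4=0$ kills the (block-diagonal) $\bar{\Psi}_{22}$ contribution. For the converse I would first decouple $\ell_3$: since the $\chi^{zw}$-columns of $\bar{\Psi}_{12}$ vanish and $\bar{\Psi}_{22}$ is block diagonal, $\ell'\bar{\Psi}(m,j)\ell=\sigma_{\epsilon0,j}^2\,\ell_3'\ddot{\varkappa}_{m,j}\ell_3+q(\ell_1,\ell_2,\ell_4)$, where $q$ is the quadratic form built from $\bar{\Psi}_{11}$, the $\chi^{zb}$-columns of $\bar{\Psi}_{12}$, and the $\bar{\varkappa}_{m,j}$-block of $\bar{\Psi}_{22}$. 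Both summands are nonnegative ($\ddot{\varkappa}_{m,j}\succeq0$ and $\sigma_{\epsilon0,j}^2>0$; and $q=\lim_N N^{-1}\sum_{r\in\mathcal{I}_{m,j}}\var(\ell_1\chi_r^w+\ell_2\chi_r^b+\ell_4'\chi_r^{zb})\ge0$), so $\ell'\bar{\Psi}(m,j)\ell=0$ forces $\ell_3'\ddot{\varkappa}_{m,j}\ell_3=0$ and $q(\ell_1,\ell_2,\ell_4)=0$.

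The core step is the analysis of $q(\ell_1,\ell_2,\ell_4)=0$. Put $w=\bar{z}_{m,j}\ell_4$. Cauchy--Schwarz applied to $\bar{z}_{m,j}=\lim_N N^{-1}\sum_{r\in\mathcal{I}_{m,j}}\bar{z}_r$ and $\bar{\varkappa}_{m,j}=\lim_N N^{-1}\sum_{r\in\mathcal{I}_{m,j}}m\,\bar{z}_r'\bar{z}_r$, together with $R_{m,j}/N\to\omega_{m,j}^*/m^*$ and $\omega_{m,j}^*>0$, gives $\tfrac1m\ell_4'\bar{\varkappa}_{m,j}\ell_4\ge(m^*/\omega_{m,j}^*)\,w^2$. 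Using this I would split
\[
q(\ell_1,\ell_2,\ell_4)=(\ell_1,\ell_2,w)\,\tilde{Q}\,(\ell_1,\ell_2,w)' + \Bigl(\sigma_{\alpha0}^2+\tfrac{\sigma_{\epsilon0,j}^2}{m}\Bigr)\Bigl(\tfrac1m\ell_4'\bar{\varkappa}_{m,j}\ell_4-\tfrac{m^*}{\omega_{m,j}^*}w^2\Bigr),
\]
where the last term is nonnegative and $\tilde{Q}$ is the $3\times3$ matrix whose upper-left $2\times2$ block is $\bar{\Psi}_{11}(m,j)$, whose remaining off-diagonal entries are $\kappa_1=\tfrac{m-1}{m}\mu_{\epsilon0,j}^{(3)}$ and $\kappa_2=\mu_{\alpha0}^{(3)}+\tfrac1{m^2}\mu_{\epsilon0,j}^{(3)}$, and whose lower-right entry is $(\sigma_{\alpha0}^2+\sigma_{\epsilon0,j}^2/m)\,m^*/\omega_{m,j}^*$. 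If $\tilde{Q}$ is positive definite, then $q=0$ forces $\ell_1=\ell_2=0$ and $w=0$, after which the vanishing of the nonnegative remainder forces $\ell_4'\bar{\varkappa}_{m,j}\ell_4=0$; combined with $\ell_3'\ddot{\varkappa}_{m,j}\ell_3=0$ this is exactly the claim. So everything reduces to showing $\tilde{Q}\succ0$.

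To show $\tilde{Q}\succ0$ I would identify $\tilde{Q}=D\,\Sigma_3\,D$ with $D=\diag\bigl(\sqrt{\omega_{m,j}^*/m^*},\sqrt{\omega_{m,j}^*/m^*},\sqrt{m^*/\omega_{m,j}^*}\bigr)$ invertible and $\Sigma_3$ the covariance matrix of the scalar triple $(\chi_r^w(\delta_0),\chi_r^b(\delta_0),\alpha_r+\bar{\epsilon}_r)$ for a group with $m_r=m,D_r=j$; matching the entries against the formula for $\bar{\Psi}_{11}$ and (for $\kappa_1,\kappa_2$) Lemma \ref{lemma:moment} is a routine check. Then $\tilde{Q}\succ0\iff\Sigma_3\succ0$. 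For $\Sigma_3\succ0$: conditioning on $\epsilon_r$ and using the law of total covariance gives $\Sigma_3\succeq\Sigma_3^0:=\cov\bigl(\chi_r^w,\ \bar{\epsilon}_r^2-\sigma_{\epsilon0,j}^2/m,\ \bar{\epsilon}_r\bigr)$, so it suffices that $\Sigma_3^0\succ0$. Each coordinate of $(\chi_r^w,\bar{\epsilon}_r^2-\sigma_{\epsilon0,j}^2/m,\bar{\epsilon}_r)$ is an affine function of the variables $(\epsilon_{ir}^2-\sigma_{\epsilon0,j}^2)_i$, $(\epsilon_{ir}\epsilon_{i'r})_{i<i'}$, $(\epsilon_{ir})_i$, whose covariance matrix is block diagonal with one $2\times2$ block — diagonal $(\mu_{\epsilon0,j}^{(4)}-\sigma_{\epsilon0,j}^4,\ \sigma_{\epsilon0,j}^2)$, off-diagonal $\mu_{\epsilon0,j}^{(3)}$ — for each $i$, and one scalar $\sigma_{\epsilon0,j}^4>0$ for each pair; this block-diagonal matrix is positive definite \emph{precisely} because $\mu_{\epsilon0,j}^{(4)}-\sigma_{\epsilon0,j}^4>(\mu_{\epsilon0,j}^{(3)})^2/\sigma_{\epsilon0,j}^2$. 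Since for every $m\ge2$ the linear part of the affine map has rank $3$ — the coefficient vectors of $\chi_r^w=\tfrac{m-1}{m}\sum_i(\epsilon_{ir}^2-\sigma_{\epsilon0,j}^2)-\tfrac2m\sum_{i<i'}\epsilon_{ir}\epsilon_{i'r}$ and of $\bar{\epsilon}_r^2-\sigma_{\epsilon0,j}^2/m=\tfrac1{m^2}\sum_i(\epsilon_{ir}^2-\sigma_{\epsilon0,j}^2)+\tfrac2{m^2}\sum_{i<i'}\epsilon_{ir}\epsilon_{i'r}$ are non-proportional on the ``$\epsilon^2$/cross-product'' coordinates, while $\bar{\epsilon}_r$ lives in the complementary ``$\epsilon$'' coordinates — we get $\Sigma_3^0\succ0$, hence $\tilde{Q}\succ0$. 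The main obstacle is exactly this positive-definiteness: the strengthened moment condition does the work of excluding a degenerate two-point distribution of $\epsilon_{ir}$, which would make $\epsilon_{ir}^2$ and $\epsilon_{ir}$ affinely dependent and would render $\tilde{Q}$ (and hence $\bar{\Psi}(m,j)$) singular in a direction not captured by the lemma's right-hand side. The companion Lemma \ref{lem:PsiG} follows by the same scheme but is easier, since under Gaussianity $\bar{\Psi}_G(m,j)$ is already block diagonal with strictly positive $\ell_1,\ell_2$ coefficients, so no $\tilde{Q}$ argument is needed.
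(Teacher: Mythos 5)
Your argument is correct, and every step checks out against the paper's formulas. The identification $\bar{\Psi}(m,j)=\lim_{N}N^{-1}\sum_{r\in\mathcal{I}_{m,j}}E[\chi_{r}(\delta_{0})\chi_{r}(\delta_{0})^{\prime}]$ is consistent with (\ref{eq:Psim}) (I verified the $\var(\chi_{r}^{w})$, $\var(\chi_{r}^{b})$, cross-moment and $\bar{\Psi}_{21},\bar{\Psi}_{22}$ entries), so the decoupling of $\ell_{3}$, the nonnegativity of $q$, and the Cauchy--Schwarz bound $\tfrac{1}{m}\ell_{4}^{\prime}\bar{\varkappa}_{m,j}\ell_{4}\geq(m^{\ast}/\omega_{m,j}^{\ast})w^{2}$ (which uses $R_{m,j}/N\rightarrow\omega_{m,j}^{\ast}/m^{\ast}>0$) are all valid, as is the identity $\tilde{Q}=D\Sigma_{3}D$ with $\Sigma_{3}=\cov(\chi_{r}^{w},\chi_{r}^{b},\bar{u}_{r})$. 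The two genuinely substantive steps also hold: (i) the law-of-total-covariance reduction $\Sigma_{3}\succeq\Sigma_{3}^{0}=\cov(\chi_{r}^{w},\bar{\epsilon}_{r}^{2}-\sigma_{\epsilon0,j}^{2}/m,\bar{\epsilon}_{r})$ neatly strips out all $\alpha$-moment terms (so no sign condition on $\mu_{\alpha0}^{(4)}-3\sigma_{\alpha0}^{4}$ is needed and $\sigma_{\alpha0}^{2}=0$ is harmless), and (ii) the block-diagonal covariance of $(\epsilon_{ir}^{2}-\sigma_{\epsilon0,j}^{2},\epsilon_{ir}\epsilon_{i^{\prime}r},\epsilon_{ir})$ together with the full row rank of the coefficient map (non-proportionality of the $\chi^{w}$ and $\bar{\epsilon}^{2}$ rows for every $m\geq2$) delivers $\Sigma_{3}^{0}\succ0$ exactly under the strengthened condition $\mu_{\epsilon0,j}^{(4)}-\sigma_{\epsilon0,j}^{4}>(\mu_{\epsilon0,j}^{(3)})^{2}/\sigma_{\epsilon0,j}^{2}$; your remark that this rules out the degenerate two-point case in which $\epsilon^{2}$ and $\epsilon$ are affinely dependent is exactly the right interpretation of that hypothesis. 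Note, however, that the paper relegates its own proof of this lemma (and of Lemma \ref{lem:rank}) to the Online Appendix, which is not reproduced here, so I cannot certify whether your route coincides with theirs; your covariance-based argument is in any case a complete and self-contained proof consistent with how the paper derives $\Upsilon_{0}$ from the score representation $\partial\ln L_{r}(\delta_{0})/\partial\delta=-\varphi(m_{r},D_{r})\chi_{r}(\delta_{0})$, and the only cosmetic caveat is that a final write-up should replace the ``I would'' plan language with the explicit verification of the entry-matching for $\tilde{Q}=D\Sigma_{3}D$, which you correctly flag as routine.
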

Let $\{(m,j)|\omega_{m,j}^{\ast}>0\}$ be the set of all pairs of
$(m,j)$ such that $\omega_{m,j}^{\ast}>0$, and index its elements
with $p=1,...,\bar{P}$. We therefore have $\omega_{m_{p},j_{p}}^{\ast}>0$
for $p=1,...,\bar{P}$. Note that for all $j=1,...,J$, there exists
some $p$ such that $j_{p}=j$. This is because for each $j$ there
exists some $m$ such that $\omega_{m,j}^{\ast}>0$, observing that
$\omega_{j}^{\ast}=\sum_{m=2}^{\bar{M}}\omega_{m,j}^{\ast}>0$ all
$j$, and $m\leqslant\bar{M}$ is bounded. The set of all $A_{m,j}$
defined in \eqref{eq:Amj} with $\omega_{m,j}^{\ast}>0$ is $\{A_{m,j}|\omega_{m,j}^{\ast}>0\}=\{A_{m_{1},j_{1}},...,A_{m_{\bar{P}},j_{\bar{P}}}\}$.
The Lemma below states that the column by column concatenation of
all matrices in this set has full row rank.
\begin{lem}
\label{lem:rank}Suppose Assumptions 1-6 hold, then the matrix $\Phi=[A_{m_{1},j_{1}}...,A_{m_{\bar{P}},j_{\bar{P}}}]$
has full row rank.
\end{lem}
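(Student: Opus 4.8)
The plan is to prove the equivalent statement that the \emph{left} null space of $\Phi$ is trivial: if $v=(v_1,v_2,v_3,\ldots,v_{2+J})^\prime\in\mathbb{R}^{2+J}$ satisfies $v^\prime A_{m_p,j_p}=0$ for every $p=1,\ldots,\bar P$ (equivalently, for every pair $(m,j)$ with $\omega_{m,j}^\ast>0$), then $v=0$. Abbreviating $a_j=\sigma_{\epsilon0,j}^2$ and $s=\sigma_{\alpha0}^2$ and reading off the two columns of $A_{m,j}$ in \eqref{eq:Amj}, the first-column equation is
\begin{equation}
\frac{v_1}{(m-1+\lambda_0)\,a_j}=\frac{v_{2+j}}{2a_j^2},\qquad\text{i.e.}\qquad v_{2+j}=\frac{2a_j}{m-1+\lambda_0}\,v_1,\label{eq:rankcol1}
\end{equation}
and, substituting \eqref{eq:rankcol1} into the second-column equation and simplifying, the second-column equation becomes
\begin{equation}
v_2=-2v_1\!\left(\frac{s}{1-\lambda_0}+a_j\,h(m)\right),\qquad h(m):=\frac{1}{m(1-\lambda_0)}+\frac{1}{m(m-1+\lambda_0)}.\label{eq:rankcol2}
\end{equation}
Assumptions \ref{assume:lambda} and \ref{assume:n} guarantee $1-\lambda_0>0$, $m-1+\lambda_0>0$ and (Assumption \ref{assume:epsilon}) $a_j>0$, so all divisions are legitimate and $h(m)>0$.

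The heart of the argument is to deduce $v_1=0$ from Assumption \ref{assume:id}, mirroring the two scenarios of Lemma \ref{lem:ID_1}. Under scenario (a) there are admissible pairs $(m,j)$, $(m^\prime,j^\prime)$ with $m\neq m^\prime$ and $a_j=a_{j^\prime}=:a$; evaluating \eqref{eq:rankcol2} at both pairs and subtracting gives $0=-2v_1\,a\,(h(m)-h(m^\prime))$. Since group sizes are integers in $\{2,\ldots,\bar M\}$ and each summand of $h$ is strictly decreasing there—$m(m-1+\lambda_0)$ has derivative $2m+\lambda_0-1>0$ for $m\geq2$, so its reciprocal is strictly decreasing, as is $1/[m(1-\lambda_0)]$—we have $h(m)\neq h(m^\prime)$, hence $v_1=0$. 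Under scenario (b) there are admissible pairs $(m,j)$, $(m,j^\prime)$ with a common size and $a_j\neq a_{j^\prime}$; subtracting \eqref{eq:rankcol2} at these two pairs yields $0=-2v_1\,h(m)\,(a_j-a_{j^\prime})$, and since $h(m)>0$, again $v_1=0$. (The strict monotonicity of $h$ and this subtraction step play exactly the role that the monotonicity argument around \eqref{eq:lambda_ID} and the polynomial step of Appendix \ref{sec:prooflemma2} play in the proof of Lemma \ref{lem:ID_1}.)

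Once $v_1=0$, \eqref{eq:rankcol2} forces $v_2=0$, and \eqref{eq:rankcol1} forces $v_{2+j}=0$ for every category $j$ for which some pair $(m,j)$ has $\omega_{m,j}^\ast>0$. By Assumption \ref{assume:n}(c), $\omega_j^\ast=\sum_{m=2}^{\bar M}\omega_{m,j}^\ast>0$ for every $j\in\{1,\ldots,J\}$, so such a pair exists for each $j$; hence $v_3=\cdots=v_{2+J}=0$ and therefore $v=0$. This shows the rows of $\Phi$ are linearly independent, i.e. $\Phi$ has full row rank, which is the assertion of the lemma.

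I expect the only nontrivial point to be the clean verification of the strict monotonicity (hence injectivity) of $h$ on the admissible sizes and the bookkeeping confirming that the two ``subtraction'' identities reduce exactly to $v_1$ times a nonzero scalar; everything else is routine linear algebra. A secondary item requiring care is that \emph{every} category index must be reached by some pair with positive limiting share—this is precisely where Assumption \ref{assume:n}(c) is invoked, rather than just Assumption \ref{assume:id}, which by itself only supplies the two pairs used to kill $v_1$.
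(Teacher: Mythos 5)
Your proof is correct: the left-null-space reduction, the two column identities (which I verified reduce exactly to your \(v_{2+j}=2a_j v_1/(m-1+\lambda_0)\) and \(v_2=-2v_1\bigl(\tfrac{s}{1-\lambda_0}+a_j h(m)\bigr)\)), the strict monotonicity of \(h\) on \(m\ge 2\) under Assumption 3, and the use of Assumption 6 to kill \(v_1\) and of Assumption 4(c) to reach every category row are all sound. The paper relegates its own proof to the Online Appendix, but your argument is precisely the route the lemma's role suggests (it is invoked exactly through the statement that \(A_{m,j}^{\prime}\alpha_1=0\) for all admissible pairs forces \(\alpha_1=0\)), so this is essentially the intended approach.
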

Lemma \ref{lem:PsiG} follows easily from \eqref{eq:PsiG}, observing
that $\omega_{m,j}^{\ast}>0$, $\sigma_{\epsilon0,j}^{2}>0$. The
proofs of Lemma \ref{lem:Psi} and Lemma \ref{lem:rank} are given
in the Online Appendix.

We can now utilize the above lemmas to prove that under the maintained
assumptions $\Gamma_{0}$ is positive definite, and that the matrix
$\Upsilon_{0}$ is positive definite for $\mu_{\varepsilon0,j}^{(4)}-\sigma_{\varepsilon0,j}^{4}>(\mu_{\epsilon0,j}^{(3)})^{2}/\sigma_{\varepsilon0,j}^{2}$.
We present a proof for the positive definiteness of $\Upsilon_{0}$.
The proof for the positive definiteness of $\Gamma_{0}$ is analogous.

Let $\alpha=(\alpha_{1}^{\prime},\alpha_{2}^{\prime})$ be a $(2+J+k_{Z})$
vector, where $\alpha_{1}$ is a $(2+J)\times1$ vector and $\alpha_{2}$
is a $k_{Z}\times1$ vector. To show that $\Upsilon_{0}$ is positive
definite is equivalent to showing that $\alpha^{\prime}\Upsilon\alpha=0$
if and only if $\alpha_{1}=0$ and $\alpha_{2}=0$. Observe that 
\begin{align*}
\alpha^{\prime}\Upsilon_{0}\alpha & =\sum_{j=1}^{J}\sum_{m=2,\omega_{m,j}^{\ast}>0}^{\bar{M}}\alpha^{\prime}\varphi(m,j)\bar{\Psi}(m,j)\varphi(m,j)^{\prime}\alpha\\
 & =\sum_{j=1}^{J}\sum_{m=2,\omega_{m,j}^{\ast}>0}^{\bar{M}}\ell_{m,j}^{\prime}\bar{\Psi}(m,j)\ell_{m,j},
\end{align*}
where 
\[
\ell_{m,j}=\varphi(m,j)^{\prime}\alpha=\left(\begin{array}{c}
A_{m,j}^{\prime}\alpha_{1}\\
B_{m,j}^{\prime}\alpha_{1}+C_{m,j}^{\prime}\alpha_{2}
\end{array}\right),
\]
with $A_{m,j}$ and $C_{m,j}$ defined in Equations \eqref{eq:Amj}
and \eqref{eq:Cmj}. It thus suffices to show that $\ell_{m,j}^{\prime}\bar{\Psi}(m,j)\ell_{m,j}=0$
for all $m,j$ with $\omega_{m,j}^{\ast}>0$ if and only if $\alpha_{1}=0$
and $\alpha_{2}=0$. Given Lemma \ref{lem:Psi}, if $\omega_{m,j}^{\ast}>0$
and $\ell_{m,j}^{\prime}\bar{\Psi}(m,j)\ell_{m,j}=0$ then $A_{m,j}^{\prime}\alpha_{1}=0$.
Lemma \ref{lem:rank} indicates that for $A_{m,j}^{\prime}\alpha_{1}=0$
to hold for all $m$ and $j$, we must have $\alpha_{1}=0$. With
$\alpha_{1}=0$, 
\begin{align*}
\ell_{m,j} & =\left(\begin{array}{c}
0\\
C_{m,j}^{\prime}\alpha_{2}
\end{array}\right)=-\left(\begin{array}{c}
0\\
\frac{1}{\sigma_{\epsilon0,j}^{2}}\alpha_{2}\\
\frac{m}{\sigma_{\epsilon0,j}^{2}+m\sigma_{\alpha0}^{2}}\alpha_{2}
\end{array}\right).
\end{align*}
Utilizing Lemma \ref{lem:Psi} again and noting that then $\frac{1}{\sigma_{\epsilon0,j}^{2}}>0$
and $\frac{m}{\sigma_{\epsilon0,j}^{2}+m\sigma_{\alpha0}^{2}}>0$,
we have $\alpha_{2}^{\prime}\ddot{\varkappa}_{m,j}\alpha_{2}=0$ and
$\alpha_{2}^{\prime}\bar{\varkappa}_{m,j}\alpha_{2}=0$ for all $m$
and $j$. Consequently, 
\[
\alpha_{2}^{\prime}\sum_{j=1}^{J}\sum_{m=2}^{\bar{M}}(\ddot{\varkappa}_{m,j}+\bar{\varkappa}_{m,j})\alpha_{2}=0.
\]
 This gives $\alpha_{2}=0$ as $\sum_{j=1}^{J}\sum_{m=2}^{\bar{M}}(\ddot{\varkappa}_{m,j}+\bar{\varkappa}_{m,j})$
is positive definite under Assumption \ref{assume:z}. In all, $\alpha^{\prime}\Upsilon_{0}\alpha=0$
if and only if $\alpha=(\alpha_{1}^{\prime},\alpha_{2}^{\prime})=0$
hence $\Upsilon_{0}$ is positive definite. The proof of the positive
definiteness of $\Gamma_{0}$ follows similarly.

\begin{thebibliography}{52}
\expandafter\ifx\csname natexlab\endcsname\relax\def\natexlab#1{#1}\fi
\providecommand{\url}[1]{\texttt{#1}}
\providecommand{\href}[2]{#2}
\providecommand{\path}[1]{#1}
\providecommand{\DOIprefix}{doi:}
\providecommand{\ArXivprefix}{arXiv:}
\providecommand{\URLprefix}{URL: }
\providecommand{\Pubmedprefix}{pmid:}
\providecommand{\doi}[1]{\href{http://dx.doi.org/#1}{\path{#1}}}
\providecommand{\Pubmed}[1]{\href{pmid:#1}{\path{#1}}}
\providecommand{\bibinfo}[2]{#2}
\ifx\xfnm\relax \def\xfnm[#1]{\unskip,\space#1}\fi
\bibitem[{Angrist(2014)}]{angrist_perils_2014}
\bibinfo{author}{Angrist, J.D.}, \bibinfo{year}{2014}.
\newblock \bibinfo{title}{The perils of peer effects}.
\newblock \bibinfo{journal}{Labour Economics} \bibinfo{volume}{30},
  \bibinfo{pages}{98--108}.
\bibitem[{Anselin(1988)}]{anselin_spatial_1988}
\bibinfo{author}{Anselin, L.}, \bibinfo{year}{1988}.
\newblock \bibinfo{title}{Spatial {{Econometrics}}: {{Methods}} and
  {{Models}}}. volume~\bibinfo{volume}{4}.
\newblock \bibinfo{publisher}{{Springer Science \& Business Media}}.
\bibitem[{Anselin(2010)}]{anselin_thirty_2010}
\bibinfo{author}{Anselin, L.}, \bibinfo{year}{2010}.
\newblock \bibinfo{title}{Thirty years of spatial econometrics}.
\newblock \bibinfo{journal}{Papers in Regional Science} \bibinfo{volume}{89},
  \bibinfo{pages}{3--25}.
\bibitem[{Booij et~al.(2017)Booij, Leuven and Oosterbeek}]{booij_ability_2017}
\bibinfo{author}{Booij, A.S.}, \bibinfo{author}{Leuven, E.},
  \bibinfo{author}{Oosterbeek, H.}, \bibinfo{year}{2017}.
\newblock \bibinfo{title}{Ability {{Peer Effects}} in {{University}}:
  {{Evidence}} from a {{Randomized Experiment}}}.
\newblock \bibinfo{journal}{The Review of Economic Studies}
  \bibinfo{volume}{84}, \bibinfo{pages}{547--578}.
\bibitem[{Boucher et~al.(2014)Boucher, Bramoull{\'e}, Djebbari and
  Fortin}]{boucher_peers_2014}
\bibinfo{author}{Boucher, V.}, \bibinfo{author}{Bramoull{\'e}, Y.},
  \bibinfo{author}{Djebbari, H.}, \bibinfo{author}{Fortin, B.},
  \bibinfo{year}{2014}.
\newblock \bibinfo{title}{Do {{Peers Affect Student Achievement}}? {{Evidence}}
  from {{Canada Using Group Size Variation}}}.
\newblock \bibinfo{journal}{Journal of Applied Econometrics}
  \bibinfo{volume}{29}, \bibinfo{pages}{91--109}.
\bibitem[{Bramoull{\'e} et~al.(2009)Bramoull{\'e}, Djebbari and
  Fortin}]{bramoulle_identification_2009}
\bibinfo{author}{Bramoull{\'e}, Y.}, \bibinfo{author}{Djebbari, H.},
  \bibinfo{author}{Fortin, B.}, \bibinfo{year}{2009}.
\newblock \bibinfo{title}{Identification of peer effects through social
  networks}.
\newblock \bibinfo{journal}{Journal of Econometrics} \bibinfo{volume}{150},
  \bibinfo{pages}{41--55}.
\bibitem[{Cai and Szeidl(2018)}]{cai_interfirm_2018}
\bibinfo{author}{Cai, J.}, \bibinfo{author}{Szeidl, A.}, \bibinfo{year}{2018}.
\newblock \bibinfo{title}{Interfirm {{Relationships}} and {{Business
  Performance}}*}.
\newblock \bibinfo{journal}{The Quarterly Journal of Economics}
  \bibinfo{volume}{133}, \bibinfo{pages}{1229--1282}.
\bibitem[{Carrell et~al.(2009)Carrell, Fullerton and West}]{carrell_does_2009}
\bibinfo{author}{Carrell, S.E.}, \bibinfo{author}{Fullerton, R.L.},
  \bibinfo{author}{West, J.E.}, \bibinfo{year}{2009}.
\newblock \bibinfo{title}{Does {{Your Cohort Matter}}? {{Measuring Peer
  Effects}} in {{College Achievement}}}.
\newblock \bibinfo{journal}{Journal of Labor Economics} \bibinfo{volume}{27},
  \bibinfo{pages}{439--464}.
\bibitem[{Carrell et~al.(2013)Carrell, Sacerdote and
  West}]{carrell_natural_2013}
\bibinfo{author}{Carrell, S.E.}, \bibinfo{author}{Sacerdote, B.I.},
  \bibinfo{author}{West, J.E.}, \bibinfo{year}{2013}.
\newblock \bibinfo{title}{From {{Natural Variation}} to {{Optimal Policy}}?
  {{The Importance}} of {{Endogenous Peer Group Formation}}}.
\newblock \bibinfo{journal}{Econometrica} \bibinfo{volume}{81},
  \bibinfo{pages}{855--882}.
\bibitem[{Chamberlain(1980)}]{chamberlain_analysis_1980}
\bibinfo{author}{Chamberlain, G.}, \bibinfo{year}{1980}.
\newblock \bibinfo{title}{Analysis of {{Covariance}} with {{Qualitative
  Data}}}.
\newblock \bibinfo{journal}{The Review of Economic Studies}
  \bibinfo{volume}{47}, \bibinfo{pages}{225--238}.
\bibitem[{Chetty et~al.(2011)Chetty, Friedman, Hilger, Saez, Schanzenbach and
  Yagan}]{chetty_how_2011}
\bibinfo{author}{Chetty, R.}, \bibinfo{author}{Friedman, J.N.},
  \bibinfo{author}{Hilger, N.}, \bibinfo{author}{Saez, E.},
  \bibinfo{author}{Schanzenbach, D.W.}, \bibinfo{author}{Yagan, D.},
  \bibinfo{year}{2011}.
\newblock \bibinfo{title}{How {{Does Your Kindergarten Classroom Affect Your
  Earnings}}? {{Evidence}} from {{Project Star}}}.
\newblock \bibinfo{journal}{The Quarterly Journal of Economics}
  \bibinfo{volume}{126}, \bibinfo{pages}{1593--1660}.
\bibitem[{Chung(2001)}]{chung_course_2001}
\bibinfo{author}{Chung, K.L.}, \bibinfo{year}{2001}.
\newblock \bibinfo{title}{A Course in Probability Theory}.
\newblock \bibinfo{publisher}{{Academic press}}.
\bibitem[{Cliff and Ord(1973)}]{cliff_spatial_1973}
\bibinfo{author}{Cliff, A.D.}, \bibinfo{author}{Ord, J.K.},
  \bibinfo{year}{1973}.
\newblock \bibinfo{title}{Spatial Autocorrelation}. volume~\bibinfo{volume}{5}.
\newblock \bibinfo{publisher}{{Pion London}}.
\bibitem[{Cliff and Ord(1981)}]{cliff_spatial_1981}
\bibinfo{author}{Cliff, A.D.}, \bibinfo{author}{Ord, J.K.},
  \bibinfo{year}{1981}.
\newblock \bibinfo{title}{Spatial Processes: Models \& Applications}.
  volume~\bibinfo{volume}{44}.
\newblock \bibinfo{publisher}{{Pion London}}.
\bibitem[{Dhrymes(1978)}]{dhrymes_mathematics_1978}
\bibinfo{author}{Dhrymes, P.J.}, \bibinfo{year}{1978}.
\newblock \bibinfo{title}{Mathematics for Econometrics}.
\newblock \bibinfo{type}{Technical Report}. {Springer}.
\bibitem[{Duflo et~al.(2011)Duflo, Dupas and Kremer}]{duflo_peer_2011}
\bibinfo{author}{Duflo, E.}, \bibinfo{author}{Dupas, P.},
  \bibinfo{author}{Kremer, M.}, \bibinfo{year}{2011}.
\newblock \bibinfo{title}{Peer {{Effects}}, {{Teacher Incentives}}, and the
  {{Impact}} of {{Tracking}}: {{Evidence}} from a {{Randomized Evaluation}} in
  {{Kenya}}}.
\newblock \bibinfo{journal}{The American Economic Review}
  \bibinfo{volume}{101}, \bibinfo{pages}{1739--1774}.
\bibitem[{Duflo and Saez(2003)}]{duflo_role_2003}
\bibinfo{author}{Duflo, E.}, \bibinfo{author}{Saez, E.}, \bibinfo{year}{2003}.
\newblock \bibinfo{title}{The {{Role}} of {{Information}} and {{Social
  Interactions}} in {{Retirement Plan Decisions}}: {{Evidence}} from a
  {{Randomized Experiment}}*}.
\newblock \bibinfo{journal}{The Quarterly Journal of Economics}
  \bibinfo{volume}{118}, \bibinfo{pages}{815--842}.
\bibitem[{Dufour(1997)}]{dufour_impossibility_1997}
\bibinfo{author}{Dufour, J.M.}, \bibinfo{year}{1997}.
\newblock \bibinfo{title}{Some {{Impossibility Theorems}} in {{Econometrics
  With Applications}} to {{Structural}} and {{Dynamic Models}}}.
\newblock \bibinfo{journal}{Econometrica} \bibinfo{volume}{65},
  \bibinfo{pages}{1365--1387}.
\bibitem[{Fafchamps and Quinn(2018)}]{fafchamps_networks_2018}
\bibinfo{author}{Fafchamps, M.}, \bibinfo{author}{Quinn, S.},
  \bibinfo{year}{2018}.
\newblock \bibinfo{title}{Networks and {{Manufacturing Firms}} in {{Africa}}:
  {{Results}} from a {{Randomized Field Experiment}}}.
\newblock \bibinfo{journal}{The World Bank Economic Review}
  \bibinfo{volume}{32}, \bibinfo{pages}{656--675}.
\bibitem[{Frijters et~al.(2019)Frijters, Islam and
  Pakrashi}]{frijters_heterogeneity_2019}
\bibinfo{author}{Frijters, P.}, \bibinfo{author}{Islam, A.},
  \bibinfo{author}{Pakrashi, D.}, \bibinfo{year}{2019}.
\newblock \bibinfo{title}{Heterogeneity in peer effects in random dormitory
  assignment in a developing country}.
\newblock \bibinfo{journal}{Journal of Economic Behavior \& Organization}
  \bibinfo{volume}{163}, \bibinfo{pages}{117--134}.
\bibitem[{Garlick(2018)}]{garlick_academic_2018}
\bibinfo{author}{Garlick, R.}, \bibinfo{year}{2018}.
\newblock \bibinfo{title}{Academic {{Peer Effects}} with {{Different Group
  Assignment Policies}}: {{Residential Tracking}} versus {{Random
  Assignment}}}.
\newblock \bibinfo{journal}{American Economic Journal: Applied Economics}
  \bibinfo{volume}{10}, \bibinfo{pages}{345--369}.
\bibitem[{{Goldsmith-Pinkham} and Imbens(2013)}]{goldsmith-pinkham_social_2013}
\bibinfo{author}{{Goldsmith-Pinkham}, P.}, \bibinfo{author}{Imbens, G.W.},
  \bibinfo{year}{2013}.
\newblock \bibinfo{title}{Social {{Networks}} and the {{Identification}} of
  {{Peer Effects}}}.
\newblock \bibinfo{journal}{Journal of Business \& Economic Statistics}
  \bibinfo{volume}{31}, \bibinfo{pages}{253--264}.
\bibitem[{Graham(2008)}]{graham_identifying_2008}
\bibinfo{author}{Graham, B.S.}, \bibinfo{year}{2008}.
\newblock \bibinfo{title}{Identifying {{Social Interactions Through Conditional
  Variance Restrictions}}}.
\newblock \bibinfo{journal}{Econometrica} \bibinfo{volume}{76},
  \bibinfo{pages}{643--660}.
\bibitem[{Guryan et~al.(2009)Guryan, Kroft and Notowidigdo}]{guryan_peer_2009}
\bibinfo{author}{Guryan, J.}, \bibinfo{author}{Kroft, K.},
  \bibinfo{author}{Notowidigdo, M.J.}, \bibinfo{year}{2009}.
\newblock \bibinfo{title}{Peer {{Effects}} in the {{Workplace}}: {{Evidence}}
  from {{Random Groupings}} in {{Professional Golf Tournaments}}}.
\newblock \bibinfo{journal}{American Economic Journal: Applied Economics}
  \bibinfo{volume}{1}, \bibinfo{pages}{34--68}.
\bibitem[{Johnson and Horn(1985)}]{johnson_matrix_1985}
\bibinfo{author}{Johnson, C.R.}, \bibinfo{author}{Horn, R.A.},
  \bibinfo{year}{1985}.
\newblock \bibinfo{title}{Matrix Analysis}.
\newblock \bibinfo{publisher}{{Cambridge university press Cambridge}}.
\bibitem[{Kang(2007)}]{kang_classroom_2007}
\bibinfo{author}{Kang, C.}, \bibinfo{year}{2007}.
\newblock \bibinfo{title}{Classroom peer effects and academic achievement:
  {{Quasi-randomization}} evidence from {{South Korea}}}.
\newblock \bibinfo{journal}{Journal of Urban Economics} \bibinfo{volume}{61},
  \bibinfo{pages}{458--495}.
\bibitem[{Kapoor et~al.(2007)Kapoor, Kelejian and Prucha}]{kapoor_panel_2007}
\bibinfo{author}{Kapoor, M.}, \bibinfo{author}{Kelejian, H.H.},
  \bibinfo{author}{Prucha, I.R.}, \bibinfo{year}{2007}.
\newblock \bibinfo{title}{Panel data models with spatially correlated error
  components}.
\newblock \bibinfo{journal}{Journal of Econometrics} \bibinfo{volume}{140},
  \bibinfo{pages}{97--130}.
\bibitem[{Kelejian and Prucha(1998)}]{kelejian_generalized_1998}
\bibinfo{author}{Kelejian, H.H.}, \bibinfo{author}{Prucha, I.R.},
  \bibinfo{year}{1998}.
\newblock \bibinfo{title}{A {{Generalized Spatial Two-Stage Least Squares
  Procedure}} for {{Estimating}} a {{Spatial Autoregressive Model}} with
  {{Autoregressive Disturbances}}}.
\newblock \bibinfo{journal}{The Journal of Real Estate Finance and Economics}
  \bibinfo{volume}{17}, \bibinfo{pages}{99--121}.
\bibitem[{Kelejian and Prucha(1999)}]{kelejian_generalized_1999}
\bibinfo{author}{Kelejian, H.H.}, \bibinfo{author}{Prucha, I.R.},
  \bibinfo{year}{1999}.
\newblock \bibinfo{title}{A {{Generalized Moments Estimator}} for the
  {{Autoregressive Parameter}} in a {{Spatial Model}}}.
\newblock \bibinfo{journal}{International Economic Review}
  \bibinfo{volume}{40}, \bibinfo{pages}{509--533}.
\bibitem[{Kelejian and Prucha(2001)}]{kelejian_asymptotic_2001}
\bibinfo{author}{Kelejian, H.H.}, \bibinfo{author}{Prucha, I.R.},
  \bibinfo{year}{2001}.
\newblock \bibinfo{title}{On the asymptotic distribution of the {{Moran I}}
  test statistic with applications}.
\newblock \bibinfo{journal}{Journal of Econometrics} \bibinfo{volume}{104},
  \bibinfo{pages}{219--257}.
\bibitem[{Kelejian and Prucha(2002)}]{kelejian_2sls_2002}
\bibinfo{author}{Kelejian, H.H.}, \bibinfo{author}{Prucha, I.R.},
  \bibinfo{year}{2002}.
\newblock \bibinfo{title}{{{2SLS}} and {{OLS}} in a spatial autoregressive
  model with equal spatial weights}.
\newblock \bibinfo{journal}{Regional Science and Urban Economics}
  \bibinfo{volume}{32}, \bibinfo{pages}{691--707}.
\bibitem[{Kelejian and Prucha(2010)}]{kelejian_specification_2010}
\bibinfo{author}{Kelejian, H.H.}, \bibinfo{author}{Prucha, I.R.},
  \bibinfo{year}{2010}.
\newblock \bibinfo{title}{Specification and estimation of spatial
  autoregressive models with autoregressive and heteroskedastic disturbances}.
\newblock \bibinfo{journal}{Journal of Econometrics} \bibinfo{volume}{157},
  \bibinfo{pages}{53--67}.
\bibitem[{Kelejian et~al.(2006)Kelejian, Prucha and
  Yuzefovich}]{kelejian_estimation_2006}
\bibinfo{author}{Kelejian, H.H.}, \bibinfo{author}{Prucha, I.R.},
  \bibinfo{author}{Yuzefovich, Y.}, \bibinfo{year}{2006}.
\newblock \bibinfo{title}{Estimation {{Problems}} in {{Models}} with {{Spatial
  Weighting Matrices Which Have Blocks}} of {{Equal Elements}}*}.
\newblock \bibinfo{journal}{Journal of Regional Science} \bibinfo{volume}{46},
  \bibinfo{pages}{507--515}.
\bibitem[{Kuersteiner and Prucha(2020)}]{kuersteiner_dynamic_2020}
\bibinfo{author}{Kuersteiner, G.M.}, \bibinfo{author}{Prucha, I.R.},
  \bibinfo{year}{2020}.
\newblock \bibinfo{title}{Dynamic {{Spatial Panel Models}}: {{Networks}},
  {{Common Shocks}}, and {{Sequential Exogeneity}}}.
\newblock \bibinfo{journal}{Econometrica} \bibinfo{volume}{88},
  \bibinfo{pages}{2109--2146}.
\bibitem[{Lee(2007)}]{lee_identification_2007}
\bibinfo{author}{Lee, L.F.}, \bibinfo{year}{2007}.
\newblock \bibinfo{title}{Identification and estimation of econometric models
  with group interactions, contextual factors and fixed effects}.
\newblock \bibinfo{journal}{Journal of Econometrics} \bibinfo{volume}{140},
  \bibinfo{pages}{333--374}.
\bibitem[{Lee et~al.(2010)Lee, Liu and Lin}]{lee_specification_2010}
\bibinfo{author}{Lee, L.F.}, \bibinfo{author}{Liu, X.}, \bibinfo{author}{Lin,
  X.}, \bibinfo{year}{2010}.
\newblock \bibinfo{title}{Specification and estimation of social interaction
  models with network structures}.
\newblock \bibinfo{journal}{The Econometrics Journal} \bibinfo{volume}{13},
  \bibinfo{pages}{145--176}.
\bibitem[{Lin(2010)}]{lin_identifying_2010}
\bibinfo{author}{Lin, X.}, \bibinfo{year}{2010}.
\newblock \bibinfo{title}{Identifying {{Peer Effects}} in {{Student Academic
  Achievement}} by {{Spatial Autoregressive Models}} with {{Group
  Unobservables}}}.
\newblock \bibinfo{journal}{Journal of Labor Economics} \bibinfo{volume}{28},
  \bibinfo{pages}{825--860}.
\bibitem[{Liu and Lee(2010)}]{liu_gmm_2010}
\bibinfo{author}{Liu, X.}, \bibinfo{author}{Lee, L.f.}, \bibinfo{year}{2010}.
\newblock \bibinfo{title}{{{GMM}} estimation of social interaction models with
  centrality}.
\newblock \bibinfo{journal}{Journal of Econometrics} \bibinfo{volume}{159},
  \bibinfo{pages}{99--115}.
\bibitem[{Liu et~al.(2014)Liu, Patacchini and Zenou}]{liu_endogenous_2014}
\bibinfo{author}{Liu, X.}, \bibinfo{author}{Patacchini, E.},
  \bibinfo{author}{Zenou, Y.}, \bibinfo{year}{2014}.
\newblock \bibinfo{title}{Endogenous peer effects: Local aggregate or local
  average?}
\newblock \bibinfo{journal}{Journal of Economic Behavior \& Organization}
  \bibinfo{volume}{103}, \bibinfo{pages}{39--59}.
\bibitem[{Manski(1993)}]{manski_identification_1993}
\bibinfo{author}{Manski, C.F.}, \bibinfo{year}{1993}.
\newblock \bibinfo{title}{Identification of {{Endogenous Social Effects}}:
  {{The Reflection Problem}}}.
\newblock \bibinfo{journal}{The Review of Economic Studies}
  \bibinfo{volume}{60}, \bibinfo{pages}{531--542}.
\bibitem[{Mundlak(1978)}]{mundlak_pooling_1978}
\bibinfo{author}{Mundlak, Y.}, \bibinfo{year}{1978}.
\newblock \bibinfo{title}{On the {{Pooling}} of {{Time Series}} and {{Cross
  Section Data}}}.
\newblock \bibinfo{journal}{Econometrica} \bibinfo{volume}{46},
  \bibinfo{pages}{69--85}.
\bibitem[{Newey(1991)}]{newey_uniform_1991}
\bibinfo{author}{Newey, W.K.}, \bibinfo{year}{1991}.
\newblock \bibinfo{title}{Uniform {{Convergence}} in {{Probability}} and
  {{Stochastic Equicontinuity}}}.
\newblock \bibinfo{journal}{Econometrica} \bibinfo{volume}{59},
  \bibinfo{pages}{1161--1167}.
\bibitem[{Neyman and Scott(1948)}]{neyman_consistent_1948}
\bibinfo{author}{Neyman, J.}, \bibinfo{author}{Scott, E.L.},
  \bibinfo{year}{1948}.
\newblock \bibinfo{title}{Consistent {{Estimates Based}} on {{Partially
  Consistent Observations}}}.
\newblock \bibinfo{journal}{Econometrica} \bibinfo{volume}{16},
  \bibinfo{pages}{1--32}.
\bibitem[{Nye et~al.(2004)Nye, Konstantopoulos and Hedges}]{nye_how_2004}
\bibinfo{author}{Nye, B.}, \bibinfo{author}{Konstantopoulos, S.},
  \bibinfo{author}{Hedges, L.V.}, \bibinfo{year}{2004}.
\newblock \bibinfo{title}{How {{Large Are Teacher Effects}}?}
\newblock \bibinfo{journal}{Educational Evaluation and Policy Analysis}
  \bibinfo{volume}{26}, \bibinfo{pages}{237--257}.
\bibitem[{Ord(1975)}]{ord_estimation_1975}
\bibinfo{author}{Ord, K.}, \bibinfo{year}{1975}.
\newblock \bibinfo{title}{Estimation {{Methods}} for {{Models}} of {{Spatial
  Interaction}}}.
\newblock \bibinfo{journal}{Journal of the American Statistical Association}
  \bibinfo{volume}{70}, \bibinfo{pages}{120--126}.
\bibitem[{P{\"o}tscher and Prucha(1991)}]{potscher_basic_1991}
\bibinfo{author}{P{\"o}tscher, B.M.}, \bibinfo{author}{Prucha, I.R.},
  \bibinfo{year}{1991}.
\newblock \bibinfo{title}{Basic structure of the asymptotic theory in dynamic
  nonlineaerco nometric models, part i: Consistency and approximation
  concepts}.
\newblock \bibinfo{journal}{Econometric Reviews} \bibinfo{volume}{10},
  \bibinfo{pages}{125--216}.
\bibitem[{P{\"o}tscher and Prucha(1994)}]{potscher_generic_1994}
\bibinfo{author}{P{\"o}tscher, B.M.}, \bibinfo{author}{Prucha, I.R.},
  \bibinfo{year}{1994}.
\newblock \bibinfo{title}{Generic uniform convergence and equicontinuity
  concepts for random functions}.
\newblock \bibinfo{journal}{Journal of Econometrics} \bibinfo{volume}{60},
  \bibinfo{pages}{23--63}.
\bibitem[{Rivkin et~al.(2005)Rivkin, Hanushek and Kain}]{rivkin_teachers_2005}
\bibinfo{author}{Rivkin, S.G.}, \bibinfo{author}{Hanushek, E.A.},
  \bibinfo{author}{Kain, J.F.}, \bibinfo{year}{2005}.
\newblock \bibinfo{title}{Teachers, {{Schools}}, and {{Academic Achievement}}}.
\newblock \bibinfo{journal}{Econometrica} \bibinfo{volume}{73},
  \bibinfo{pages}{417--458}.
\bibitem[{Sacerdote(2001)}]{sacerdote_peer_2001}
\bibinfo{author}{Sacerdote, B.}, \bibinfo{year}{2001}.
\newblock \bibinfo{title}{Peer {{Effects}} with {{Random Assignment}}:
  {{Results}} for {{Dartmouth Roommates}}}.
\newblock \bibinfo{journal}{The Quarterly Journal of Economics}
  \bibinfo{volume}{116}, \bibinfo{pages}{681--704}.
\bibitem[{Sojourner(2013)}]{sojourner_identification_2013}
\bibinfo{author}{Sojourner, A.}, \bibinfo{year}{2013}.
\newblock \bibinfo{title}{Identification of {{Peer Effects}} with {{Missing
  Peer Data}}: {{Evidence}} from {{Project STAR}}}.
\newblock \bibinfo{journal}{The Economic Journal} \bibinfo{volume}{123},
  \bibinfo{pages}{574--605}.
\bibitem[{Stinebrickner and Stinebrickner(2006)}]{stinebrickner_what_2006}
\bibinfo{author}{Stinebrickner, R.}, \bibinfo{author}{Stinebrickner, T.R.},
  \bibinfo{year}{2006}.
\newblock \bibinfo{title}{What can be learned about peer effects using college
  roommates? {{Evidence}} from new survey data and students from disadvantaged
  backgrounds}.
\newblock \bibinfo{journal}{Journal of Public Economics} \bibinfo{volume}{90},
  \bibinfo{pages}{1435--1454}.
\bibitem[{Zimmerman(2003)}]{zimmerman_peer_2003}
\bibinfo{author}{Zimmerman, D.J.}, \bibinfo{year}{2003}.
\newblock \bibinfo{title}{Peer {{Effects}} in {{Academic Outcomes}}:
  {{Evidence}} from a {{Natural Experiment}}}.
\newblock \bibinfo{journal}{Review of Economics and Statistics}
  \bibinfo{volume}{85}, \bibinfo{pages}{9--23}.

\end{thebibliography}
\end{document}